\title{Noisy Quantum Learning Theory}
\author{
Jordan Cotler\thanks{Email: \url{jcotler@fas.harvard.edu}.} \\
Harvard University
\and
Weiyuan Gong\thanks{Email: \url{wgong@g.harvard.edu}.} \\
Harvard University
\and
Ishaan Kannan\thanks{Email: \url{ishaan@alumni.caltech.edu}.} \\
Harvard University
}
\newcommand{\DN}{\mathcal{D}^{\otimes n}_\lambda}
\newcommand{\nbqp}{{\textnormal{\textsf{NBQP}}}}
\newcommand{\nisq}{{\textnormal{\textsf{NISQ}}}}
\newcommand{\bqp}{{\textnormal{\textsf{BQP}}}}
\newcommand{\zv}{\mathbf{z}}
\renewcommand{\norm}[1]{\|#1\|}
\renewcommand{\i}{\mathrm{i}}
\newcommand\tsup[2][2]{%
 \def\useanchorwidth{T}%
  \ifnum#1>1%
    \stackon[-.5pt]{\tsup[\numexpr#1-1\relax]{#2}}{\scriptscriptstyle\sim}%
  \else%
    \stackon[.5pt]{#2}{\scriptscriptstyle\sim}%
  \fi%
}
\begin{document}
\pagestyle{empty}
{
  \renewcommand{\thispagestyle}[1]{}
  \maketitle

\begin{abstract}

We develop a framework for learning from noisy quantum experiments in which fault-tolerant devices access uncharacterized systems through noisy couplings. Introducing the complexity class $\textsf{NBQP}$ (``noisy BQP''), we model noisy fault-tolerant quantum computers that cannot generally error-correct the oracle systems they query. Using this class, we prove that while noise can eliminate the exponential quantum learning advantages of unphysical, noiseless learners, a superpolynomial gap remains between $\textsf{NISQ}$ and fault-tolerant devices. Turning to canonical learning tasks in noisy settings, we find that the exponential two-copy advantage for purity testing collapses under local depolarizing noise. Nevertheless, we identify a setting motivated by AdS/CFT in which noise-resilient physical structure restores this quantum learning advantage. We then analyze noisy Pauli shadow tomography, deriving lower bounds characterizing how instance size, quantum memory and noise jointly control sample complexity, and design algorithms with parametrically matching scalings. We study similar tradeoffs in quantum metrology, and show that the Heisenberg-limited sensitivity of existing error-correction-based protocols persists only up to a timescale inverse-polynomial in the error rate per probe qubit. Together, our results demonstrate that the primitives underlying quantum-enhanced experiments are fundamentally fragile to noise, and that realizing meaningful quantum advantages in future experiments will require interfacing noise-robust physical properties with available algorithmic techniques.
\end{abstract}

}

\clearpage
\pagestyle{plain}
\pagenumbering{arabic}

\tableofcontents

\newpage

\section{Introduction}
\label{sec:intro}
Quantum learning theory has advanced rapidly in recent years, and one of its key successes is recasting experimental protocols as learning problems, which admits a rich toolkit for determining when and how quantum computation-enhanced experiments can outperform conventional ones~\cite{Aaronson2006Learnability,HaahHarrowJiWuYu2017SampleOptimal,Wang_2017,Aaronson2018ShadowTomography,ArunachalamDeWolf2018QuantumSampleComplexity,Aaronson_2019,classical_shadow,cotler2020quantum,chen2021exponentialseparationslearningquantum,Huang_2021,Aharonov_Cotler_Qi_2022,nisq,Huang_2023}. From this perspective, partially uncharacterized experimental systems serve as oracles providing quantum data, and the experimentalist implements a learning protocol that reveals the system's properties through controlled interactions. When a quantum computer is available, it can be coherently coupled to the system of interest, enabling quantum computation on transduced data. In idealized settings, such quantum computation-enhanced experiments can provide a provable exponential advantage over conventional experimental protocols~\cite{Huang_2021, chen2021exponentialseparationslearningquantum, Huang_2022, Aharonov_Cotler_Qi_2022}.

However, for most known examples of quantum experimental advantage, the quantum computer and its coupling to the experimental system are assumed to be noiseless. While a noisy quantum computer can be error-corrected, the experimental system given to us by Nature does not come embedded into an error-correcting code. We are therefore in a setting where an error-corrected quantum computer can only access the experimental system via a noisy coupling -- one that must also mediate between the logical encoding of the quantum computer and the bare physical degrees of freedom of the experimental system. Moreover, since the experimental system corresponds to a partially unknown state or channel, it cannot generally be embedded into an error-correcting code by the experimentalist \cite{aharonov1996limitationsnoisyreversiblecomputation}. Noise thus fundamentally changes the character of quantum computation-enhanced experiments and may obviate many of their known advantages. 

In this paper, we develop a theory of \emph{noisy quantum learning}: for several prominent examples of quantum computation-enhanced experiments, we establish when noise eliminates quantum advantages and when such advantages persist.

Concretely, noisy quantum learning theory reveals two distinct mechanisms by which noise erodes ideal quantum advantages. First, nearly all known idealized separations rely on learning tasks governed by intrinsically complex, high-weight structures, including states or dynamics that resist succinct classical representations~\cite{ chen2021exponentialseparationslearningquantum, Huang_2022, nöller2025infinitehierarchymulticopyquantum,weiyuan_paulis, King_2025}. For example, the quantum-enhanced sample-complexity gap in estimating physically motivated Pauli observables grows with the weight of the observables \cite{weiyuan_paulis}, and classical simulation of quantum dynamics becomes intractable when high-weight Paulis dominate the evolution \cite{Schollw_ck_2011,begušić2023fastclassicalsimulationevidence}. Our results show, however, that precisely these high-complexity components are the most vulnerable to noise. As a result, the very structures that underwrite ideal quantum advantages also provide the primary mechanism through which noise destroys them.

Second, exponential quantum advantages achieved through multi-copy measurements typically require entanglement across a number of qubits that scales extensively with the problem size~\cite{Huang_2022, chen2021exponentialseparationslearningquantum,gong2024samplecomplexitypurityinner}. Operationally, such algorithms are built from a small set of standard primitives, most notably maximally entangled (Bell-basis) measurements and many-qubit SWAP operations, that underlie idealized superpolynomial speedups in quantum learning, property testing, and computation~\cite{Harrow:2012gwf, Lloyd_2014,bădescu2017quantumstatecertification,  montanaro2018surveyquantumpropertytesting}. In our setting, implementing these primitives on noisy, uncharacterized quantum systems leads to an exponential blow-up in sample complexity. Absent additional structure in the physical system that can be exploited for error correction, the primitives that enable ideal quantum speedups cease to provide robust advantages in the noisy regime.

Making these concepts rigorous requires ruling out adaptive learning protocols that might otherwise compensate for noise. Our framework for proving the inefficiency of noisy quantum learning protocols, even with adaptive strategies, builds on the learning tree formalism of~\cite{chen2021exponentialseparationslearningquantum}, and in particular techniques from~\cite{nisq}. In the latter work, noisy quantum computers \emph{incapable} of fault-tolerant error correction were formalized as the complexity class $\textsf{NISQ}$, and it was shown that there exist oracles $O$ for which $\textsf{NISQ}^O \subsetneq \textsf{BQP}^O$, meaning that even adaptive NISQ protocols are superpolynomially slower than noiseless quantum computers for certain learning problems. Our refinement is to introduce a more physically realistic complexity class, $\textsf{NBQP}$, corresponding to \emph{noisy} quantum computers with a constant error rate per qubit per circuit layer. The noise rate is taken to be below the threshold for fault-tolerant error correction, so that $\textsf{NBQP} = \textsf{BQP}$.  Accordingly, we refer to learning protocols captured by $\textsf{BQP}^O$ as \emph{ideal quantum learning}, and those captured by $\textsf{NBQP}^O$ as \emph{natural quantum learning}, which are depicted in Fig.~\ref{fig:idealnatural}. We demonstrate that for certain natural quantum learning problems, known to have an exponential advantage over conventional experiments with unentangled access, the corresponding oracles $O$ enforce $\textsf{NBQP}^O \subsetneq \textsf{BQP}^O$. A consequence is that noise can obviate the advantage of even adaptive quantum computing-enhanced experiments over conventional experiments.

\begin{figure}[t!]
    \centering
    \includegraphics[width=\linewidth]{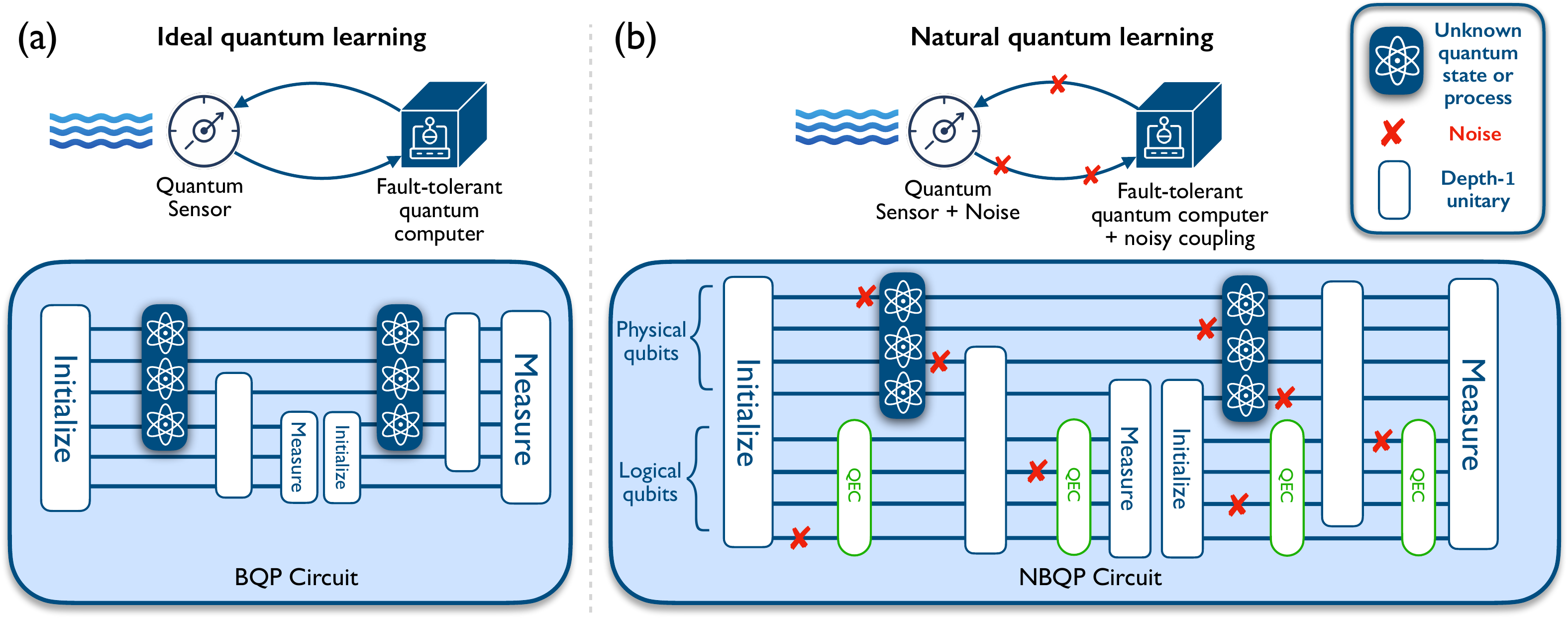}
    \caption{(a) \emph{Ideal quantum learning}: a noiseless quantum computer queries an unknown quantum state or channel through a sensor, yielding noiseless coherent oracle access modeled by $\textsf{BQP}^O$. (b) \emph{Natural quantum learning}: a noisy but fault-tolerant quantum computer queries an unknown quantum state or channel through a sensor, with noisy coupling occurring at the physical-qubit level. The quantum circuit must correct errors on the quantum computer and contend with noise induced by coupling to the experimental system, modelled by $\textsf{NBQP}^O$.}
    \label{fig:idealnatural}
\end{figure}

Given that several canonical results in quantum learning theory break down in noisy settings, what kinds of quantum advantages can still persist in the presence of such noise? Our work suggests two avenues to realize practically meaningful gains in noisy quantum-enhanced experiments. Many physical systems possess endogenous robustness: thermalizing systems have built-in noise resilience \cite{Chesi_2010, Brandao_2019, Kastoryano_2025}, and long-range correlations in many-body quantum systems can be protected by renormalization group structure \cite{happy_2015, Kim_2017, Lake_2025}. A useful insight is that the robustness of the physical system and the fault-tolerant architecture of the quantum computer can meet in the middle, allowing exponential quantum advantages to survive. We demonstrate this through illustrative examples in which a natural error-correcting structure present in a class of quantum states synergizes with the error correction of the quantum computer. Furthermore, we construct natural quantum learning problems for which $\textsf{NISQ}^O \subsetneq \textsf{NBQP}^O$, showing that error-correcting the quantum computer provides a superpolynomial advantage even when the coupling to the experimental system remains noisy.

Even in noisy regimes where asymptotic exponential quantum advantages disappear~\cite{Chen_2024_pauli, seif2024, Oh_2024, Liu_2025}, sub-exponential advantages can persist. Analyzing noisy quantum-enhacned experimental protocols requires tracking how complexity depends jointly on noise rates, quantum resources, and finite instance size, since practical experimental behavior is governed by these factors rather than asymptotic limits. Our Pauli-learning separation illustrates that even for a canonical quantum learning task, noise-sensitive effects yield meaningful quantum advantages that interpolate between idealized exponential separations and polynomial ones as the noise rate scales with instance size. Beyond quantum state learning, we apply noise- and resource-aware analyses to metrology, showing that single-parameter protocols achieving Heisenberg-limited sensitivity through active error correction can sustain this scaling only up to a timescale set by instance size, control speed, and device-probe coupling noise.

\section{Main Results}
\label{sec:results}

In Section~\ref{subsec:oraclesep}, we formalize the class $\nbqp$ and prove two superpolynomial oracle separations: $\nbqp^{O_1} \subsetneq \bqp^{O_1}$ and $\nisq^{O_2} \subsetneq \nbqp^{O_2}$.   Next, we move from complexity separations to concrete noisy learning tasks.  In Section~\ref{subsec:purity_testing}, we study the $\nbqp$ complexity of testing the purity of a quantum state, showing that an ideal exponential quantum advantage is fundamentally obstructed by noise unless the class of states in question has a latent error-correcting structure. We give a concrete example of such a structure in a physically-motivated toy setting: assessing the purity of a black hole microstate in the bulk of a tensor-network model of holographic duality using only noisy measurements of the boundary state. In Section~\ref{subsec:quantumlearningadv}, we analyze Pauli shadow tomography, demonstrating a tradeoff between sample complexity, the noise rate, and the complexity of the estimated observables. Using similar techniques, we show that error-correction-based metrology protocols retain Heisenberg-limited sensitivity only up to a sensing timescale that scales inverse-polynomially with sensor dimension and noise rate. While it is still possible to have a meaningful \emph{polynomial} advantage of certain quantum-enhanced experiments vis-\`{a}-vis conventional experiments in the noisy setting, we find that noise exponentially degrades the most prominent idealized quantum advantages. Technical details of the proofs are deferred to the appendix.

\subsection{Oracle separations between $\textsf{NBQP}$, $\textsf{BQP}$, and $\textsf{NISQ}$}
\label{subsec:oraclesep}

We now make precise the notions introduced above. An ideal quantum computer operates as if each qubit evolves noiselessly between operations; in practice, errors accumulate at each circuit layer. Fault-tolerant quantum computation tells us these errors can be corrected if the error rate per qubit per circuit layer is below a certain threshold, motivating the following complexity class.

\begin{definition}[$\textsf{NBQP}$ complexity class, informal] 
$\textnormal{\textsf{NBQP}}$ contains all problems solvable in polynomial time by a noisy quantum computer with polynomial-size circuits, where all operations are subject to constant depolarizing noise per qubit at a rate below the threshold of known quantum fault-tolerance schemes. 
\end{definition}

\noindent We work with depolarizing noise for concreteness, as it is a standard and well-studied error model. Our lower bounds hold for any noise model at least as strong as depolarizing noise, and our constructions can be adapted to other local stochastic noise models with constant error rate per qubit.

Under the above definition, an $\textsf{NBQP}$ quantum computer can implement error correction and perform recovery after each layer of gates on any register embedded into a code. The threshold theorem~\cite{aharonov1999faulttolerantquantumcomputationconstant} then implies $\textsf{NBQP} = \textsf{BQP}$. However, the situation changes when we introduce oracle access. To make this concrete, consider coupling a fault-tolerant quantum computer to a partially uncharacterized quantum material in the laboratory. Even if the quantum computer itself is fully error-corrected, the interaction with the material occurs at the physical-qubit level and is therefore noisy. We cannot simply transduce the material's state into an error-correcting code; even if this were technologically feasible, the encoding procedure could corrupt the very quantum information we are trying to learn about. The resulting asymmetry — that errors on the computation can be corrected but errors on oracle queries cannot — opens the possibility of nontrivial separations between $\textsf{NBQP}^O$ and $\textsf{BQP}^O$. While error mitigation may be possible for particularly structured oracles, we demonstrate that the inability to perform true error correction on the unknown system can starkly degrade our ability to learn from experiments, leading to superpolynomial oracle separations between $\textsf{NBQP}$ and $\textsf{BQP}$.

\begin{theorem}\label{thm:NBQP_vs_BQP}
    There exists an oracle $O_1$ such that $\textnormal{\textsf{NBQP}}^{O_1}\subsetneq \textnormal{\textsf{BQP}}^{O_1}$.
\end{theorem}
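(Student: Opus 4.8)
The plan is to construct an oracle $O_1$ that encodes a learning problem which is easy for a noiseless fault-tolerant quantum computer but provably hard for any adaptive noisy protocol. Following the strategy pioneered in \cite{chen2021exponentialseparationslearningquantum} and adapted to noisy settings in \cite{nisq}, I would have $O_1$ encode an unknown $n$-qubit quantum state (or unitary) drawn from one of two families: a ``null'' family (e.g.\ the maximally mixed state, or Haar-random pure states) and a ``planted'' family carrying a hidden structure, with the computational task being to distinguish the two cases or to extract a hidden label. The key design principle is that the planted structure should live in a high-weight, globally entangled degree of freedom---precisely the kind that the introduction identifies as both responsible for ideal advantages and maximally fragile to noise. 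A natural candidate is an oracle providing copies of an unknown state $\ket{\psi}$ whose defining property is detectable only through a Bell-basis or SWAP-test primitive acting coherently across all $n$ qubits, so that a noiseless $\bqp^{O_1}$ machine solves the problem with $\mathrm{poly}(n)$ queries by implementing that primitive on error-corrected logical qubits.

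The containment $\bqp^{O_1}$ can solve the problem efficiently is the easy direction: I would exhibit the explicit noiseless algorithm (two-copy Bell measurement or controlled-SWAP) and verify it succeeds with high probability using $\mathrm{poly}(n)$ oracle calls, invoking $\nbqp = \bqp$ for the computation itself since the internal circuit can be error-corrected. The hard direction---showing the problem lies outside $\nbqp^{O_1}$---is where the real work lies. Here I would set up the learning-tree formalism of \cite{chen2021exponentialseparationslearningquantum}: any adaptive noisy protocol making $T$ queries corresponds to a tree whose edges are weighted by the probabilities of measurement outcomes, and distinguishing the two oracle cases requires the induced leaf distributions to have large total-variation distance. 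The crucial step is a noise-contraction bound: because each oracle query interacts with the uncharacterized system at the physical-qubit level under constant-rate depolarizing noise that \emph{cannot} be error-corrected, the high-weight component carrying the hidden signal is exponentially damped. Concretely, a depolarizing channel at rate $p$ suppresses a weight-$w$ Pauli coefficient by a factor $(1-p)^{w}$, so planting the signal at weight $w = \Theta(n)$ forces an exponential-in-$n$ attenuation of any statistic the noisy protocol can compute.

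I would then combine this contraction with a likelihood-ratio / martingale argument along the tree: define the per-edge advantage as the deviation of the outcome distribution under the planted oracle from that under the null oracle, show via the noise bound that each edge contributes at most $2^{-\Omega(n)}$ to the accumulated distinguishing advantage, and conclude by a union/telescoping bound over the tree that any protocol with $T = 2^{o(n)}$ queries has leaf-distribution total-variation distance $o(1)$, hence cannot distinguish the cases. This forces $T = 2^{\Omega(n)}$ for noisy protocols, establishing $\nbqp^{O_1} \subsetneq \bqp^{O_1}$. Throughout, I would follow \cite{nisq} in handling adaptivity: the noise bound must hold \emph{conditionally} on the entire transcript up to the current node, so the contraction cannot be undone by clever adaptive post-selection on earlier outcomes.

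The main obstacle I anticipate is controlling how the constant-rate noise interacts with adaptivity and with the fault-tolerant computation the $\nbqp$ machine is allowed to perform internally. The subtlety is that an $\nbqp^{O_1}$ machine can error-correct its own registers but must route oracle-transduced information through noisy physical couplings; I must show that no adaptive sequence of error-corrected processing steps interspersed with noisy queries can coherently accumulate the weight-$w$ signal faster than the noise destroys it. Making the per-query contraction bound robust to arbitrary intermediate (error-corrected) unitaries---rather than just a single noisy channel in isolation---is the technically delicate point, and I expect it to require carefully tracking how the noise acts on the joint computer-oracle system in a basis adapted to the planted high-weight structure, ensuring the damping factor survives the adaptive tree analysis.
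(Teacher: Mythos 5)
Your proposal takes a genuinely different route from the paper, and as written it contains a real gap. The paper's $O_1$ is the \emph{lifted Simon's oracle}: a classical oracle on $2n$ bits that computes a Simon's function only when the trailing $n$ input bits are $0^n$ and outputs $0$ otherwise. There, hardness for $\nbqp$ comes from noise corrupting the \emph{query register}: the depolarizing layers flanking each oracle call flip some trailing bits with overwhelming probability, so the oracle becomes indistinguishable from the identity channel, and a channel-level hybrid argument (Lemmas~\ref{lemma:tree_id_vs_o} and~\ref{lemma:node_perturbation}) bounds the damage by $N^2 e^{-\Omega(\lambda n)}$; under the identity oracle the two Simon's cases induce identical distributions. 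You instead propose a state-preparation oracle with a planted structure, with hardness coming from noise damping the \emph{returned signal}.

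The gap is in your lower bound. The learning-tree machinery you invoke (per-edge likelihood-ratio contraction plus martingale bounds) is developed, both in the works you cite and in this paper, only for restricted access models: single-copy, two-copy, or $k \le n$ qubits of memory. An $\nbqp^{O_1}$ machine is not so restricted: it can store polynomially many noisy copies in error-corrected registers and apply an arbitrary polynomial-size fault-tolerant circuit jointly to all of them. You flag exactly this issue as the ``technically delicate point'' but do not resolve it, and the paper's concrete lower bounds (Theorems~\ref{thm:purity_testing_lb_informal} and~\ref{thm:multi_advantage}) do not resolve it either, since they are stated only for two-copy and bounded-memory models. Moreover, your SWAP-test/purity-testing instantiation (Haar-random $\ket{\psi}$ vs.\ $\mathds{1}/2^n$) cannot be repaired this way at all: a single noisy copy $\mathcal{D}_\lambda^{\otimes n}[\ketbra{\psi}{\psi}]$ retains purity of order $(1+3(1-\lambda)^2)^n/4^n$, which for small constant $\lambda$ is exponentially larger than $2^{-n}$, so there is no per-copy contraction toward the maximally mixed state, and hardness of that task against unrestricted poly-copy coherent strategies is simply not established.

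Your Pauli-planted instantiation, however, can be completed, and by an argument simpler than the one you sketch. Plant a full-weight Pauli, $P \in \{X,Y,Z\}^{\otimes n}$, so the oracle prepares either $\mathds{1}/2^n$ or $\rho_P = (\mathds{1}+P)/2^n$ on each call. In the $\nbqp$ model (Definition~\ref{def:NBQP_circ}) every oracle layer is followed by a depolarizing layer, and
\begin{equation}
\bigl\| \mathcal{D}_\lambda^{\otimes n}[\rho_P] - \mathds{1}/2^n \bigr\|_1 = (1-\lambda)^{n} \| P/2^n \|_1 = (1-\lambda)^n\,,
\end{equation}
so replacing the oracle calls one at a time by maximally-mixed preparations and using data processing bounds any $\nbqp$ algorithm's distinguishing advantage by $T(1-\lambda)^n$, forcing $T = \Omega\bigl((1-\lambda)^{-n}\bigr)$ regardless of memory, adaptivity, or internal fault tolerance; meanwhile noiseless Bell sampling solves the problem in $\bqp$ with $O(n)$ queries. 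Note that this fix is structurally the same hybrid argument the paper uses, with ``the oracle looks like the identity because the query is corrupted'' replaced by ``the oracle looks like a maximally-mixed source because the signal is damped,'' and that your oracle is a measure-and-prepare channel, so the separation must be phrased through the paper's measure-and-prepare extension of the $\nbqp^O$ formalism rather than with a classical oracle as in the paper's construction.
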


\noindent To prove this separation, we adapt the lifted Simon oracle construction of~\cite{nisq}. Conceptually, the standard Simon oracle is modified so that slight perturbations to its input result in completely uninformative outputs, making the oracle highly nonrobust to $\nbqp$-type noise. More concretely, given a function $f:\{0,1\}^n\to\{0,1\}^n$, consider the usual Simon's promise: either (i) $f$ is injective, or (ii) there exists a nonzero $s \in \{0,1\}^n$ such that $f(x) = f(x\oplus s)$ for all $x$ (a Simon's function).  The lift $\tilde f$ acts on $2n$ input bits but is nontrivial only when the last $n$ bits are $0^n$.  A noiseless quantum computer can simply restrict its queries to strings of the form $x\,0^n$ and then run the standard Simon's algorithm~\cite{Simon} on the first $n$ bits.  Thus the promise problem ``is $f$ injective or a Simon's function?'' lies in $\bqp^{O_1}$ with $O(n)$ queries.

For $\nbqp^{O_1}$, oracle calls act on \emph{physical} qubits, and in our model each such call is preceded and followed by a layer of depolarizing noise.  Even if the algorithm tries to prepare “good’’ queries of the form $x\,0^n$, these noise layers quickly flip some of the trailing $n$ bits, so with high probability the actual query lies outside the special subspace on which $\tilde f$ encodes $f$; on those inputs $O_1$ simply outputs $0$.  In effect, to a noisy learner $O_1$ is almost indistinguishable from a trivial oracle, and exponentially many queries in $n$ are required for such a learner to tell the difference.  Using the hybrid/distinguishability framework of~\cite{nisq}, namely their channel-level hybrid lemma and a node-perturbation argument in our learning-tree model (formalized as Lemmas~\ref{lemma:tree_id_vs_o} and~\ref{lemma:node_perturbation}), we show that any $\lambda$-noisy circuit making $N$ oracle calls has output distribution within $N^2 e^{-\Omega(\lambda n)}$ total variation distance of the distribution obtained by replacing every oracle call with the identity channel.  Under this identity oracle the two promise cases (injective versus Simon's $f$) induce exactly the same distribution, so any $\nbqp^{O_1}$ algorithm with $N = \mathrm{poly}(n)$ cannot achieve constant distinguishing advantage, whereas a $\bqp^{O_1}$ algorithm can.  This yields the strict separation $\nbqp^{O_1}\subsetneq \bqp^{O_1}$ claimed in the theorem.

Despite this limitation, an \textsf{NBQP} machine can implement polynomial-depth fault-tolerant quantum circuits, which are believed to be strictly more powerful than the logarithmic-depth circuits achievable by noisy intermediate-scale (NISQ) devices \cite{fenner2003boundspowerconstantdepthquantum, fanout, Haferkamp_2022, arora2023quantum}.  Leveraging this gap in coherent depth, we prove a superpolynomial oracle separation between $\textsf{NISQ}$ and $\textsf{NBQP}$, showing that fault-tolerant learners retain an advantage over near-term devices even when all access to the experimental system is noisy.

\begin{theorem} \label{Thm:NISQ_vs_NBQP}
    There exists an oracle $O_2$ such that $\textnormal{\textsf{NISQ}}^{O_2}\subsetneq \textnormal{\textsf{NBQP}}^{O_2}$.
\end{theorem}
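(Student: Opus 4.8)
The plan is to exploit the \emph{coherent-depth} gap between the two classes. An $\nbqp$ machine can run error-corrected circuits of polynomial depth, whereas an $\nisq$ machine, lacking error correction, accumulates a constant depolarizing error rate per layer and therefore retains useful quantum coherence only up to depth $O(\log n)$ before its state is driven exponentially close to the maximally mixed state. The separating oracle $O_2$ should thus encode a promise problem that (i) requires $\omega(\log n)$ \emph{coherent} quantum depth to solve, yet (ii) is insensitive to the noisy coupling, so that the fault-tolerant $\nbqp$ learner still succeeds despite querying $O_2$ only at the physical level. The key to reconciling (i) and (ii) is to keep the depth-hardness entirely on the \emph{computation} register --- which $\nbqp$ can error-correct --- while making the oracle itself robust to $\nbqp$-type noise.

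Concretely, I would take $O_2$ to encode, redundantly in a repetition/stabilizer code, a uniformly random instance of a problem drawn from a \emph{depth-hierarchy} family: for instance a recursively composed (recursive-Fourier-sampling or iterated-Simon) instance of recursion depth $d = \mathrm{poly}(n)$, or equivalently the description of a random polynomial-depth circuit whose acceptance probability is to be estimated. Because each symbol of the instance is encoded redundantly and is read on a \emph{classical} (basis-state) input, a noisy query still reveals the correct logical symbol after majority decoding; following the offset trick used for $O_1$, inputs lying outside the designated codespace are mapped to $0$, so the encoding is self-correcting against the depolarizing layers surrounding each physical oracle call. In this way both $\nisq$ and $\nbqp$ can recover the instance, and the separation is forced to live in the subsequent computation rather than in oracle access.

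For the $\nbqp$ upper bound I would have the learner first read out the instance robustly, repeating each physical query $\mathrm{poly}(n)$ times and decoding, and then evaluate it with a polynomial-depth fault-tolerant circuit; the threshold theorem guarantees that the computation register stays coherent across all $d$ layers, so $O_2$ places the problem in $\nbqp^{O_2}$ with $\mathrm{poly}(n)$ queries and depth. For the $\nisq$ lower bound I would first invoke the characterization of $\nisq$ as, up to $1/\mathrm{poly}$ error, a log-depth quantum device augmented by arbitrary classical pre- and post-processing --- the decay of coherence under constant per-layer depolarizing noise caps its effective coherent depth at $O(\log n)$ --- and then apply a depth lower bound showing that no such log-depth-plus-classical machine can solve the chosen depth-hierarchy instance with success probability exceeding $1/2 + o(1)$. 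Ruling out adaptive strategies that interleave short quantum bursts with classical feedback is handled inside the learning-tree formalism, using the channel-level hybrid lemma and the node-perturbation argument (Lemmas~\ref{lemma:tree_id_vs_o} and~\ref{lemma:node_perturbation}) to bound the information any bounded-coherence, noisy branch of the tree can extract.

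The main obstacle I expect is this $\nisq$ lower bound, for two reasons. First, I must prove a genuine \emph{coherent-depth} lower bound for the instance family that is robust to the classical adaptivity $\nisq$ enjoys: naively composing a shallow-circuit lower bound with classical feedback can recover extra effective depth, so the recursion must be arranged so that collapsing any intermediate register provably destroys the information needed to proceed, forcing the $\omega(\log n)$ coherence requirement to survive measurement-and-restart. Second, I must maintain the delicate two-sided balance from the construction: the oracle encoding has to be robust enough that the poly-depth $\nbqp$ learner is unharmed by noisy physical queries, yet not so informative that a log-depth $\nisq$ learner can shortcut the depth requirement by reading out partial answers. Getting both the effective-depth cap for noisy $\nisq$ and the matching depth lower bound for the problem to hold with the same parameters is where the real work lies.
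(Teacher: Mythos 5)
Your high-level strategy is the same as the paper's: exploit the coherent-depth gap, argue that a $\nisq$ machine is effectively $\textsf{BPP}^{\textsf{QNC}_{O(\log n)}}$ (the paper does this by cutting each noisy circuit at a depth threshold and invoking the KL bound $\mathrm{KL}(p\|q)\leq (1-\lambda)^{D}m$ against the maximally mixed state, Lemma~\ref{lemma:nisq_kl}), and pick a depth-hierarchy problem --- the paper uses the $d$-level Shuffling Simon's Problem of~\cite{Chia_2023}, essentially your ``iterated-Simon'' option. The genuine gap is in your oracle construction, and it breaks the $\nbqp$ upper bound. You protect the oracle with a repetition/stabilizer code that is ``read on a classical (basis-state) input'' with majority decoding. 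That protects only \emph{classical} readout. The quantum algorithms for RFS/iterated-Simon derive their entire advantage from \emph{coherent superposition} queries, and majority vote over redundant symbols cannot protect coherence: the depolarizing layers surrounding each physical oracle call decohere the query register no matter how redundantly the oracle's outputs are encoded. So in your construction the fault-tolerant learner is effectively restricted to classical queries, and then it cannot solve the instance efficiently either, since classical-query lower bounds for Simon/RFS-type problems are information-theoretic and hold against unbounded post-processing. Your alternative instantiation (``description of a random polynomial-depth circuit whose acceptance probability is to be estimated'') dodges this but fails on the lower-bound side: the whole instance is readable with polynomially many classical queries, so no query-complexity argument can defeat $\nisq$; you would instead need an unconditional proof that $\textsf{BPP}^{\textsf{QNC}_{O(\log n)}}$ cannot estimate acceptance probabilities of deep circuits, which is an open complexity separation that the oracle cannot supply. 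Your own ``delicate two-sided balance'' worry is exactly this tension, but the proposal does not resolve it.

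The missing idea is the paper's \emph{encoded oracle}: the shuffled Simon oracle is defined to act on the logical codespace $\mathcal{C}$ of a fixed fault-tolerant QEC scheme, $\mathcal{O}^{\mathrm{enc}}_{f,d} = \mathcal{U}_{\mathrm{enc}} \circ (\mathcal{O}_{f,d}\otimes\ketbra{0}{0}^{m-n}) \oplus \mathds{1}_{\mathcal{C}^\perp}$, i.e.~it acts as the logical oracle on $\mathcal{C}$ and \emph{trivially} on $\mathcal{C}^\perp$. Because the oracle respects the code, the $\nbqp$ machine can query it with encoded superposition states and interleave fault-tolerant recovery, so the noiseless depth-$(2d+1)$ algorithm of~\cite{Chia_2023} can be simulated logically (Lemma~\ref{lemma:NBQP_upper}); meanwhile the shuffling structure still defeats any shallow learner via the one-way-to-hiding argument (Lemmas~\ref{lemma:o2h} and~\ref{lemma:QNC_small_success}). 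Two further points: reusing the $O_1$ ``offset trick'' (out-of-subspace inputs mapped to $0$) works against you here --- that trick is precisely what makes $O_1$ lethal to \emph{every} noisy learner, $\nbqp$ included, so $O_2$ must instead act benignly off the codespace; and the lemmas you cite for the lower bound (Lemmas~\ref{lemma:tree_id_vs_o} and~\ref{lemma:node_perturbation}) are the $O_1$ tools --- the correct ingredients for this separation are the depth-cut KL argument plus the shadow-shuffling bound, not the oracle-versus-identity hybrid, which would erase the $\nbqp$ advantage along with the $\nisq$ one.
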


\noindent For the proof of this theorem, we explicitly leverage the fundamental gap between $\nisq$ and $\nbqp$ machines: the latter has ability to perform quantum error correction, and thus can execute deep quantum circuits. Following~\cite{Chia_2023}, we start from the $d$-level Shuffling Simon’s Problem, a variant of Simon's problem challenging for shallow quantum circuits to solve. Here, a Simon function $f$ on $n$ bits is embedded and randomly permuted inside a much larger domain so that only a hidden subset of inputs carries any information about the secret $s$.  We then define an \emph{encoded} shuffling oracle $\mathcal{O}^{\mathrm{enc}}_{f,d}$ that acts like this shuffled Simon oracle on a logical code space of a fixed fault-tolerant quantum error correction scheme, and trivially outside it.  By Theorem~4.11 of~\cite{Chia_2023}, there is a noiseless depth-$O(d)$ circuit that recovers $s$, and an \textsf{NBQP} machine can simulate this circuit fault-tolerantly (for the noise rate $\lambda$ below threshold) with only polynomial overhead, and so $\textsf{Enc-}d\textsf{-SSP}\in\nbqp^{O_2}$. 

The lower bound against \textsf{NISQ} proceeds in two steps.  First, building on the analysis of~\cite{Chia_2023}, we show that any $\textsf{BPP}^{\textsf{QNC}_d}$ algorithm has exponentially small success probability on the encoded problem: even if it makes polynomially many depth-$d$ quantum queries, the domain in which the Simon's function is embedded is so large that the algorithm is very unlikely to query the relevant subspace. While classical advice between bounded-depth subroutines can reveal successively more information about this ``hidden domain", the number of permutations applied to the Simon's function is, by construction, too large for any $\textsf{BPP}^{\textsf{QNC}_d}$ algorithm to locate the domain. Mathematically, we carry out this argument by demonstrating that, with high probability, all oracle queries made by the algorithm can be replaced by ``shadow'' queries that agree with the true oracle outside a small wrapper set containing the hidden domain, but output no information on the domain itself; even after this swap, the output distribution of the algorithm is hardly altered. Upon this replacement, the algorithm learns nothing about $s$, as it no longer has access to the embedded Simon's function (Lemma~\ref{lemma:QNC_small_success}), and can thus do no better than random guessing. This establishes an exponentially small success probability for any $\textsf{BPP}^{\textsf{QNC}_d}$ algorithm.

Second, we show that any \textsf{NISQ} algorithm making polynomially many oracle calls can be simulated, up to vanishing total variation distance, by such a $\textsf{BPP}^{\textsf{QNC}_d}$ algorithm: we cut each noisy circuit at a fixed depth threshold and use KL-divergence bounds from~\cite{nisq} to argue that replacing deeper noisy circuits by shallow ones only changes the leaf distribution of the learning tree by $o(1)$.  Combining these two ingredients and applying Le Cam's two-point method, we obtain that no \textsf{NISQ} algorithm with $\mathrm{poly}(n)$ queries can recover the Simon's secret with success probability at least $2/3$, whereas an \textsf{NBQP} algorithm can, yielding the oracle separation $\textsf{NISQ}^{O_2}\subsetneq \textsf{NBQP}^{O_2}$.

\subsection{Purity testing in noisy quantum experiments} \label{subsec:purity_testing}
So far our results have established oracle separations between relativized complexity classes. We now turn to more concrete noisy learning tasks, asking how sample complexity scales with coherent quantum memory and access to joint measurements, and whether exponential quantum advantages can survive constant local noise. Our first result in this direction, Theorem \ref{thm:purity_testing_lb_informal}, addresses a canonical task in quantum learning theory and property testing, namely testing the purity of a quantum state, which has been highlighted as one of the few examples exhibiting quantum advantages in both sample and computational complexity \cite{montanaro2018surveyquantumpropertytesting, bubeck2020entanglementnecessaryoptimalquantum, Huang_2021, Aharonov_Cotler_Qi_2022,Huang_2022, chen2021exponentialseparationslearningquantum, weiyuan_paulis,gong2024samplecomplexitypurityinner}. In the ideal setting, distinguishing an 
$n$-qubit maximally mixed state from a fixed pure state requires $\Omega(2^{n})$ samples, whereas permitting joint measurements on two copies reduces the task to only $O(1)$ samples and constant computational time. This separation is powered by a quantum subroutine that appears throughout the literature on quantum speedups in learning, property testing, and computation: the multi-qubit \textsf{SWAP} operation~\cite{Harrow:2012gwf,Lloyd_2014,bădescu2017quantumstatecertification,Cincio_2018,montanaro2018surveyquantumpropertytesting,Huang_2022}. Here, we show that this super-exponential quantum speedup is completely degraded by only a single layer of noise. 

\begin{theorem}[No quantum advantage for purity testing in the presence of noise, informal] \label{thm:purity_testing_lb_informal}
Any algorithm that can test the purity of a quantum state using noisy two-copy measurements requires at least order $c(\lambda)^n$ samples, where $c(\lambda)>1$ is a constant depending only on the noise rate.
\end{theorem}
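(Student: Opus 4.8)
The plan is to cast the task as a two-hypothesis distinguishing problem inside the learning-tree formalism of \cite{chen2021exponentialseparationslearningquantum}, and to show that a single depolarizing layer suppresses the only usable signal—the two-copy (SWAP) overlap—to a quantity that decays exponentially in $n$. Concretely, I take the null hypothesis $H_0$ to be the maximally mixed state $I/2^n$ and the alternative $H_1$ to be a Haar-random pure state $\psi$; in each of the $N$ rounds the learner receives two fresh copies, each passed through one layer of local depolarizing noise $\DN$, and applies an arbitrary adaptive two-copy POVM. The key structural point is that $\E_\psi[\DN(\psi)] = I/2^n$, so the two ensembles have identical single-copy marginals and all distinguishing power must come from two-copy correlations. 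Writing $\sigma_0 = (I/2^n)^{\otimes 2}$ and $\tilde\psi = \DN(\psi)$, I reduce to bounding the chi-squared divergence $Z-1$ between the $H_0$ leaf distribution $p^0$ and the $H_1$-averaged leaf distribution $\bar p = \E_\psi[p^\psi]$, using $\mathrm{TV}(p^0,\bar p) \le \tfrac12\sqrt{Z-1}$ with $Z = \E_{\ell\sim p^0}[(\bar p(\ell)/p^0(\ell))^2]$.

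First I would exploit that under $H_0$ the measured state is $\sigma_0$ in every round irrespective of history, so the outcome path is an i.i.d.\ process and the likelihood ratio $L_t = \E_\psi[\prod_{u\le t}(1+\delta^\psi_{s_u})]$, with $\delta^\psi_s = \mathrm{Tr}[F_s(\tilde\psi^{\otimes 2}-\sigma_0)]/\mathrm{Tr}[F_s\sigma_0]$, is a $p^0$-martingale started at $1$. Its orthogonal-increment decomposition gives $Z-1 = \sum_{t=1}^N \E_{p^0}[(\Delta L_t)^2]$, and each increment equals the doubled expectation $\E_{\psi,\psi'}\E_{p^0}[M^\psi_{t-1}M^{\psi'}_{t-1}E_{u_t}(\psi,\psi')]$ governed by the per-node cross term $E_u(\psi,\psi') = \sum_s \mathrm{Tr}[F_s\Delta_\psi]\,\mathrm{Tr}[F_s\Delta_{\psi'}]/\mathrm{Tr}[F_s\sigma_0]$ for two independent draws, where $\Delta_\psi = \tilde\psi^{\otimes 2}-\sigma_0$. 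The spectral input that makes the bound work is the noise suppression of the averaged difference operator: since $\E_\psi[\psi^{\otimes 2}] = (I+\mathrm{SWAP})/(2^n(2^n+1))$ and depolarizing noise acts diagonally in the Pauli basis, $\Delta := \E_\psi[\Delta_\psi]$ is built from the \emph{noisy} swap $\mathcal{D}_\lambda^{\otimes 2n}(\mathrm{SWAP}) = 2^{-n}\sum_P (1-\lambda)^{2|P|}\,P\otimes P$, whose high-weight Pauli content is exponentially damped. A direct computation gives $\mathrm{Tr}[\Delta^2] = (\beta_\lambda^n-1)/(4^n(2^n+1)^2)$ with $\beta_\lambda = 1+3(1-\lambda)^4 < 4$, so the optimal per-round chi-squared available to any fixed measurement obeys $4^n\,\mathrm{Tr}[\Delta^2] \le (\beta_\lambda/4)^n$. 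Summing $N$ such increments and demanding constant total-variation advantage forces $N = \Omega((4/\beta_\lambda)^n)$, yielding $c(\lambda) = 4/\beta_\lambda > 1$ for every $\lambda>0$, with $c(\lambda)\to 1$ as $\lambda\to 0$, consistent with the noiseless $O(1)$ upper bound.

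The hard part is adaptivity. The favorable estimate $(\beta_\lambda/4)^n$ appears only after averaging the two independent copies $\psi,\psi'$; for a single state the per-measurement chi-squared $4^n\,\mathrm{Tr}[\Delta_\psi^2]\approx(\beta_\lambda'/4)^n$ with $\beta_\lambda' = (1+3(1-\lambda)^2)^2$ can be exponentially \emph{large}, because a learner who knew $\psi$ could aim its POVM at the eigenvectors of $\Delta_\psi$. An adaptive learner might try to approximate this by using early outcomes to update a posterior over $\psi$ and steering later measurements toward the informative subspace; formally this replaces $\Delta$ in the increment bound by the posterior-tilted operator $\Delta^{(t-1)} = \E_{\psi\sim\mu_{t-1}}[\tilde\psi^{\otimes 2}]-\sigma_0$, which is \emph{not} uniformly small. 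Hence the per-node excess cannot be bounded in the worst case, and a naive Cauchy--Schwarz that decouples $\psi$ from $\psi'$ is fatal, since it reintroduces the large diagonal term. I expect this to be the main obstacle, and I would resolve it as in the single-copy tree arguments of \cite{chen2021exponentialseparationslearningquantum} adapted to the noisy setting of \cite{nisq}: either truncate the exponentially rare event that $\psi$ has atypically large noisy purity and absorb it into the total-variation budget, or bound the doubled increment $\E_{\psi,\psi'}[M^\psi_{t-1}M^{\psi'}_{t-1}E_{u_t}(\psi,\psi')]$ self-consistently, showing that the posterior cannot concentrate enough to enlarge $\mathrm{Tr}[(\Delta^{(t-1)})^2]$ without the likelihood factors $M$ being correspondingly penalized---equivalently, without the learner having already spent $\Omega((4/\beta_\lambda)^n)$ samples. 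Carrying out this posterior-concentration bound cleanly, while keeping the suppression factor $(1-\lambda)^{2|P|}$ explicit, is where the genuine two-copy (rather than single-copy) difficulty of the lower bound resides.
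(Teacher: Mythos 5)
You have the right constant and the right key computation---$\tr\bigl((\mathcal{D}_\lambda^{\otimes 2n}[\textnormal{\textsf{SWAP}}_n])^2\bigr) = (1+3(1-\lambda)^4)^n$, giving $c(\lambda)=4/(1+3(1-\lambda)^4)$, which is exactly the paper's bound---and you have correctly located the central obstacle: the per-node divergence for a \emph{fixed} $\psi$ is exponentially large, so node-wise bounds on the many-vs.-one problem fail once the martingale increments at time $t$ are tilted by the accumulated likelihood factors. But the proposal does not resolve this obstacle; it defers it to two speculative strategies, neither carried out and neither obviously workable. The truncation idea misses the point: the problem is not that $\psi$ is atypically pure under the Haar measure (indeed $\tr[\Delta_\psi^2]$ is exponentially larger than $\tr[\Delta^2]$ for \emph{typical} $\psi$), but that the posterior weighting couples $\psi$ to the measurement record. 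Moreover, your framework choice of bounding $Z=\mathbb{E}_{p^0}[L^2]$ (the chi-squared divergence of the Haar mixture) is itself fragile: even for non-adaptive strategies one has $\mathbb{E}_{p^0}[L^2]=\mathbb{E}_{\psi,\psi'}\prod_t(1+E_t(\psi,\psi'))$, a product of exponentially many factors whose expectation can blow up while the total variation distance stays small. This is precisely why the learning-tree literature, and this paper, work with one-sided tail bounds on $L$ (items 2 and 3 of the paper's toolbox) rather than its second moment.

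The paper's proof uses a mechanism that makes adaptivity harmless rather than confronting the posterior. Fix an entire root-to-leaf path and Haar-average the path likelihood ratio in one shot, using $\mathbb{E}_\psi[\ketbra{\psi}{\psi}^{\otimes 2T}]\propto S_{2T}$, the symmetrizer over all $2T$ copies consumed along the path. Then apply the supermultiplicativity inequality $\tr(\rho_x\otimes\rho_y\, S_{x+y})\ge \tr(\rho_x S_x)\,\tr(\rho_y S_y)$ (Lemma 16 of \cite{weiyuan_paulis}) to factor the path-averaged ratio into a product over rounds of terms $\tr\!\bigl(F_s^t\,\mathcal{D}_\lambda^{\otimes 2n}[S_2]\bigr)/\tr(F_s^t)$, at the cost only of a benign $(1-4T^2/2^n)$ prefactor. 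These factors involve the \emph{fixed} positive operator $S_2$, so the whole task reduces to a one-vs.-one distinguishing problem: the fixed state $\mathcal{D}_\lambda^{\otimes 2n}[S_2/(2^n(2^n+1))]$ versus $\mathds{1}_{2n}/2^{2n}$. With both hypothesis states fixed there is nothing for an adaptive learner to aim at, the node-wise bound $\sum_s \tr(F_s A)^2/\tr(F_s)\le \tr(A^2)$ holds uniformly over POVMs, and your $(1+3(1-\lambda)^4)^n$ computation closes the argument via the martingale toolbox. This path-level symmetrization-plus-factorization step is the missing idea; without it (or an equivalent), your plan is only valid for the non-adaptive, first-increment case.
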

\noindent Our proof begins with the following hypothesis testing problem: given copies of a state $\rho$, guaranteed to be either a maximally mixed state, or a fixed pure state sampled from the Haar measure on $n$-qubits, can we identify the ground truth with probability at least $2/3$? We then construct a learning tree model for algorithms which can make adaptive, joint measurements on $\rho\otimes\rho$, interleaved with classical computation. As in the $\nbqp$ model, each pair $\rho\otimes \rho$ is corrupted by a depolarizing channel $\mathcal{D}_\lambda^{\otimes 2n}$ before performing an \textit{arbitrary} $2n$-qubit POVM; note that the latter part permits noiseless quantum circuits of any depth. Hence, our lower bound holds against any model of noisy computation stronger than a single layer of depolarizing noise applied at the outset. 

Our analysis proceeds by bounding the likelihood ratio between the leaf distributions generated by this learning tree when its measurement outcomes arise from the maximally mixed state versus from a fixed, Haar-random pure state. To do so, we focus on a fixed root-to-leaf path, utilizing the tree's multilinear structure to reinterpret the learning problem. Under our reinterpretation, it suffices to bound the concentration of the output distributions of intermediate nodes in the tree, which we achieve by proving several technical lemmas regarding the noise channel and using the martingale formalism from \cite{weiyuan_paulis}. Because each node concentrates by a factor exponentially small in the noise rate, every experiment is highly uninformative, and we obtain our lower bound.

\begin{figure}[t!]
    \centering
    \includegraphics[width=\linewidth]{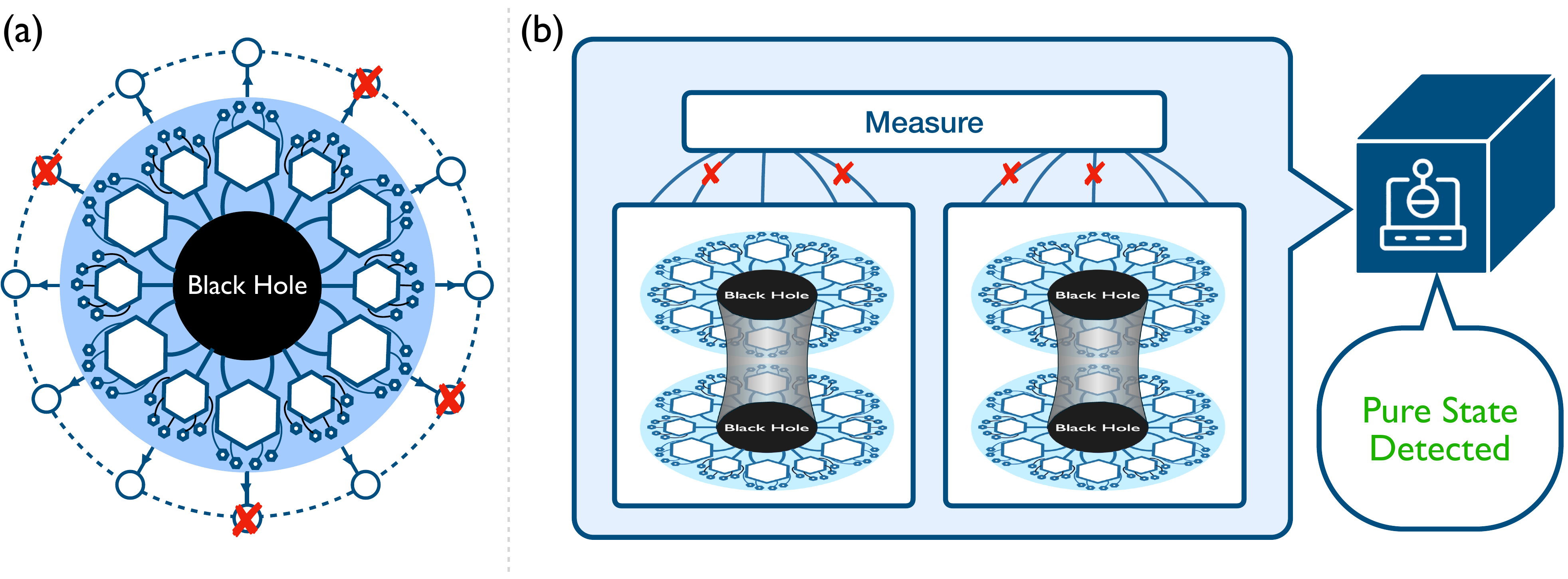}
    \vspace{2pt}
\caption{(a) \textit{Holographic encoding of a black hole}: a HaPPY tensor network with a central bulk region encoding a black hole state, shown as a black disk. The outer dangling legs represent boundary qubits of the dual CFT; red crosses indicate qubits lost to an erasure channel acting independently on each boundary site.
(b) \textit{Quantum-enhanced purity test}: two noisy boundary copies are first approximately decoded back toward the bulk and then used as input to a joint measurement (e.g.~a two-copy \textsf{SWAP} test) implemented by a quantum device. This protocol distinguishes whether the bulk black hole state is pure or mixed using only a constant number of copies, even in the presence of boundary erasures.}
    \label{fig:happy_code}
\end{figure}

While we have shown that assessing the purity of a generic quantum state is fundamentally obstructed by noise, there are natural classes of ``noise-robust'' states for which purity testing remains feasible. A toy but instructive example arises in tensor-network models of the AdS/CFT correspondence \cite{Maldacena_1999}, such as the HaPPY code \cite{happy_2015}. In essence, a HaPPY tensor network $T$ defines an isometric encoding $T : \mathcal{H}_{\text{bulk}} \to \mathcal{H}_{\text{bdy}}$ from a collection of ``bulk'' logical qudits on a finite hyperbolic lattice to ``boundary'' physical qudits arranged on a one-dimensional ring. In the AdS/CFT interpretation, $\mathcal{H}_{\text{bulk}}$ models a code subspace of low-energy quantum-gravitational degrees of freedom on a spatial slice of AdS, while $\mathcal{H}_{\text{bdy}}$ models the Hilbert space of the dual CFT on the boundary circle. The quantum error-correcting structure of $T$ discretely implements bulk-boundary reconstruction: local bulk operators can be reconstructed on many overlapping boundary regions, and the radial direction of the network plays the role of a renormalization group scale~\cite{Jahn_2021}.

In our setting, we contract all bulk legs with $\ket{0}$ states except those on an inner disk, and we use the remaining bulk legs on that disk to model a black hole (see Fig.~\ref{fig:happy_code}(a)). Concretely, on this inner disk we insert a state $\rho_{\text{BH}}$ on the corresponding bulk Hilbert space, which is either a fixed Haar-random pure state (a single microstate) or the maximally mixed state (a toy microcanonical ensemble). The resulting boundary state is then a holographic encoding of either a pure or mixed black hole in the bulk, and our operational task is to decide, from noisy measurements of the boundary CFT degrees of freedom alone, which case holds. Because the black hole degrees of freedom are protected by the HaPPY code and only indirectly exposed to noise through the boundary, a suitably designed SWAP-test-type procedure acting on (approximately) decoded bulk modes (see Fig.~\ref{fig:happy_code}(b)) is partially robust to noise, and we show that this protection suffices to recover a quantum advantage. This is captured in the theorem below.

\begin{theorem}[Purity testing for holographic black holes in the HaPPY code, informal] \label{thm:happy_code_informal}
Consider a holographic HaPPY tensor network of total radius $R$ with no uncontracted bulk legs, and remove all tiles within a smaller radius $r$, so that the resulting uncontracted bulk legs are replaced by either a fixed Haar-random pure state (a toy black hole microstate) or the maximally mixed state on the corresponding bulk Hilbert space (a toy black hole microcanonical ensemble). Given copies of the resulting boundary state, there exists a quantum-enhanced protocol which, even in the presence of constant-strength erasure error on every qubit at each circuit layer, uses only a constant number of copies together with joint measurements to distinguish these two cases with high probability.  By contrast, any conventional experiment restricted to single-copy measurements requires at least $2^{\exp(\Omega(r))}$ copies to do so.
\end{theorem}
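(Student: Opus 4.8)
The plan is to exhibit a concrete quantum-enhanced protocol achieving the upper bound and then prove the single-copy lower bound via the purity-testing hardness already established.

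\textbf{Upper bound (quantum advantage survives).} I would exploit the defining feature of the HaPPY code: the erasure-correction property. The black hole bulk legs on the inner disk of radius $r$ are encoded into the boundary by the isometry $T$, and the code's correction threshold means that local boundary erasures can be corrected as long as the erased fraction stays below the code's threshold and the bulk region of interest lies in the \emph{entanglement wedge} of the surviving boundary qubits. The protocol proceeds in three stages on each pair of boundary copies. First, apply the HaPPY erasure-recovery channel to (approximately) decode the boundary state back toward the bulk, reconstructing the inner-disk bulk state $\rho_{\text{BH}}$ up to a fidelity error that shrinks with the radial depth $R-r$ separating the black hole from the boundary; the key point is that each layer of constant-strength erasure only degrades a constant fraction of qubits, and the concatenated/holographic structure suppresses the logical error exponentially in the number of tensor layers. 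Second, having two such (approximately) decoded bulk copies, perform the standard two-copy \textsf{SWAP} test, whose acceptance probability is $\tfrac{1}{2}(1 + \operatorname{Tr}(\rho_{\text{BH}}^2))$; this cleanly distinguishes $\operatorname{Tr}(\rho_{\text{BH}}^2) = 1$ (pure microstate) from $\operatorname{Tr}(\rho_{\text{BH}}^2) = 2^{-k}$ (maximally mixed on $k$ bulk qudits). Third, repeat a constant number of times to boost the distinguishing probability. I would argue that the decoding error is small enough that the \textsf{SWAP}-test gap remains bounded below by a constant, so $O(1)$ copies suffice.

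\textbf{Lower bound (conventional experiments fail).} For the single-copy bound I would reduce to Theorem~\ref{thm:purity_testing_lb_informal}, or rather to its noiseless single-copy analogue: distinguishing a Haar-random pure state from the maximally mixed state on an $m$-qubit system using only single-copy (unentangled) measurements requires $\Omega(2^{m})$ samples, where $m$ is the effective boundary dimension carrying the black-hole information. The subtlety is that the boundary state is not literally maximally mixed versus pure on all boundary qubits — it is the image under $T$ of either $\rho_{\text{BH}}$ pure or mixed, with the remaining bulk legs fixed to $\ket{0}$. I would show that, because $T$ is an isometry and the inner disk of radius $r$ supports $k = \exp(\Omega(r))$ bulk qudits (the number of tiles grows exponentially with hyperbolic radius), the two boundary ensembles look like a pure-versus-maximally-mixed problem on an effective Hilbert space of dimension $2^{\exp(\Omega(r))}$. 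Invoking the single-copy lower bound then yields the stated $2^{\exp(\Omega(r))}$ sample complexity.

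\textbf{Main obstacle.} The hard part will be making the approximate-decoding step rigorous and quantitative: I must control how the constant-rate erasure noise, applied at \emph{every} circuit layer rather than once, propagates through the recovery map, and show the residual logical infidelity on $\rho_{\text{BH}}$ stays below the constant threshold needed for the \textsf{SWAP}-test gap to survive. This requires a threshold-type argument for the HaPPY code under independent per-site erasure — establishing that the black-hole disk remains inside the correctable entanglement wedge of the surviving boundary with high probability, and bounding the accumulated error from noise applied during the decoding circuit itself. A secondary difficulty is formalizing the ``effective dimension'' $2^{\exp(\Omega(r))}$ so that the reduction to the single-copy purity-testing lower bound is tight, which rests on counting the bulk qudits enclosed by radius $r$ in the hyperbolic tiling and verifying the isometry faithfully embeds their purity distinction into the boundary statistics.
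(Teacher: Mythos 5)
Your proposal follows essentially the same route as the paper: the upper bound via the HaPPY code's erasure-correction guarantee (the paper instantiates this with the greedy decoder of the HaPPY paper, giving failure probability $\sim 4^{r-1}(12\lambda)^{\varphi^{R-r}}$ for $\lambda < 1/48$) composed with a two-copy \textsf{SWAP} test, and the lower bound by pulling single-copy boundary POVMs back through the bulk-boundary isometry to reduce to single-copy purity testing on the $L_r = \Theta(4^{r-1})$-qubit bulk space. The only slight imprecision is quoting the single-copy purity-testing bound as $\Omega(2^m)$ rather than $\Omega(2^{m/2})$, which is immaterial inside the $2^{\exp(\Omega(r))}$ scaling.
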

\noindent This result should be viewed as a holographic counterpart of our noisy purity-testing lower bound. On the one hand, the HaPPY code converts local boundary erasures into highly suppressed logical errors on the bulk black hole degrees of freedom: as long as $R \gtrsim r + O(\log r)$ and the erasure rate is below threshold, a greedy decoder can approximately recover $\rho_{\mathrm{BH}}$ from the noisy boundary state with failure probability that is exponentially small in $4^{R-r}$. More broadly, it is believed (but not known) that HaPPY-type holographic codes may admit fully fault-tolerant realizations against local noise \cite{numerics_happy_decoding_2020, Jahn_2021, Farrelly_2021}; in our setting we only appeal to their rigorously understood erasure-correction properties. Composing the decoding map with a two-copy \textsf{SWAP} test on the recovered bulk region therefore reproduces, up to small decoding errors, the ideal two-copy purity test and yields a constant-copy quantum protocol. On the other hand, any protocol restricted to single-copy measurements on the boundary reduces, via bulk-boundary isometry and data-processing, to single-copy purity testing on an $L_r$-qubit system, where $L_r = \Theta(4^{r-1})$ is the number of bulk legs in the excised region. Our general lower bound for noisy single-copy purity testing then implies a sample complexity of order $2^{\Theta(L_r)} = 2^{\exp(\Theta(r))}$, establishing the separation claimed in the theorem.

Our HaPPY code example illustrates the first of two strategies for useful noisy quantum learning, namely targeting physical systems with latent error-resilient properties.

\subsection{Noise-dependent advantage in Pauli shadow tomography and metrology}
\label{subsec:quantumlearningadv}

To further understand how noise reshapes quantum learning advantages, we now consider the well-studied problem of Pauli shadow tomography, namely estimating expectation values of (potentially mutually noncommuting) Pauli observables, which in ideal settings exhibits an exponential memory-sample tradeoff \cite{chen2021exponentialseparationslearningquantum}. Any noiseless conventional protocol restricted to single-copy measurements, or to $k < n$ qubits of quantum memory, requires a number of samples exponential in $n - k$, whereas a quantum-enhanced learner with two-copy access (i.e.~$n$ ancillary qubits of quantum memory) succeeds with only $O(n)$ samples using Bell-basis measurements; these two settings are depicted in Fig.~\ref{fig:quantum_memory}(a) (memoryless or small-memory single-copy experiments) and Fig.~\ref{fig:quantum_memory}(b) (architectures with a quantum memory register enabling multi-copy measurements). In Theorem~\ref{thm:multi_advantage}, we refine this separation in the $\nbqp$ setting, quantifying the noise-dependence of lower bounds with and without ancillary quantum memory. Notably, even when $n$ qubits of memory are provided, enabling two-copy measurements, we show that order $(1-\lambda)^{-n}$ samples are necessary. We give a noisy 2-copy algorithm achieving this scaling up to a constant factor in the exponent, establishing a quantum advantage in noisy Pauli tomography that depends polynomially on the noise rate, and negating the exponential speedup of ideal Bell-measurement-based strategies.

\begin{theorem} [Complexity of noisy Pauli shadow tomography, informal] \label{thm:multi_advantage}
    In the presence of constant depolarizing noise per qubit, any quantum algorithm without ancillary quantum memory which can identify a Pauli-structured state with high probability requires order $2^nf(\lambda)^n$ measurements, where $f(\lambda) \in [1, \infty)$ for $\lambda \in [0, 1]$. Given an additional $k\leq n$ qubits of quantum memory, $\Omega(2^{n-k}(1-\lambda)^{-n})$ samples are still required. When $k=n$, there exists a quantum-enhanced learning algorithm with access to noisy two-copy measurements solving the Pauli-identification task using $\tilde{O}((1-\lambda)^{-4n})$ samples.
\end{theorem}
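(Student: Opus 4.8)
The plan is to handle all three parts with a single hard instance analyzed in the noisy learning-tree framework of~\cite{chen2021exponentialseparationslearningquantum}, together with the likelihood-ratio martingale used for Theorem~\ref{thm:purity_testing_lb_informal}. Take the \emph{Pauli-structured state} $\rho_P = 2^{-n}(I+P)$ indexed by a Pauli $P$ drawn uniformly from a family $\mathcal P$ of \emph{high-weight} non-identity Paulis, and let every copy reaching the learner first pass through the local depolarizing channel $\DN$, so the accessible state is $\DN(\rho_P) = 2^{-n}\bigl(I+(1-\lambda)^{w(P)}P\bigr)$ with $w(P)$ the Pauli weight. Restricting $\mathcal P$ to high weight is the whole point: it forces the learner to contend with the noise factor $(1-\lambda)^{w(P)} = (1-\lambda)^{\Theta(n)}$ and removes the escape route of concentrating measurements on lightly attenuated low-weight Paulis. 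Both lower bounds then reduce to a two-point test between the maximally mixed null and a random $\rho_P$ (identification is at least as hard as this test), and I would model any adaptive protocol --- memoryless or with $k$ qubits of quantum memory --- as a tree whose edges are POVM outcomes.

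For the memoryless bound I would bound the total-variation distance between the leaf distributions under $\rho_P$ and under the maximally mixed state along a fixed root-to-leaf branch via the martingale of~\cite{weiyuan_paulis}. At a node with POVM element $F_s$ the relevant increment is $(1-\lambda)^{w(P)}\,\mathrm{tr}(F_sP)/\mathrm{tr}(F_s)$; averaging its square over $P\in\mathcal P$ and using the completeness identity $\sum_{P}\bigl(\mathrm{tr}(F_sP)/\mathrm{tr}(F_s)\bigr)^2 = 2^n\,\mathrm{tr}(F_s^2)/\mathrm{tr}(F_s)^2 \le 2^n$ contracts each increment to order $2^{-n}$ up to the noise attenuation $f(\lambda)^{-n}$, where $f(\lambda)\ge 1$ is the constant obtained by balancing the weight profile of $\mathcal P$ against $(1-\lambda)$. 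Summing increments over a depth-$N$ tree and invoking the two-point method then forces $N = \Omega(2^n f(\lambda)^n)$ for constant distinguishing advantage.

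For the bounded-memory bound I would enlarge the tree so each node may coherently retain $k$ qubits and apply a joint POVM on those $k$ qubits together with a fresh noisy copy. The memory lets the learner effectively Bell-correlate at most $k$ of the $n$ sites --- reading off the Pauli there almost for free --- while the remaining $n-k$ sites behave as in the memoryless analysis, so the completeness step now yields a $2^{-(n-k)}$ contraction in place of $2^{-n}$. Propagating the depolarizing attenuation through the joint measurement gives a per-copy information of order $2^{-(n-k)}(1-\lambda)^{n}$, hence the bound $\Omega\!\bigl(2^{n-k}(1-\lambda)^{-n}\bigr)$; the delicate point is controlling how coherence spread across the memory register interacts with $\DN$ without sacrificing the $(1-\lambda)^{n}$ factor.

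For the matching upper bound I would run the natural noisy two-copy protocol: apply $\DN$ to each of two copies and perform a Bell-basis measurement across the $n$ qubit pairs, identifying outcomes with Pauli labels. Since each copy carries one factor of $(1-\lambda)^{w(P)}$, the outcome distribution is biased toward $P$ by an amount of order $(1-\lambda)^{2n}$ above the uniform background, and resolving a signal of this size by empirical frequencies together with a Chernoff bound and a union bound over $\mathcal P$ (contributing the $\tilde{O}$ logarithmic factors) requires $\tilde{O}\!\bigl((1-\lambda)^{-4n}\bigr)$ samples to recover $P$ with high probability. I expect the main obstacle to be the bounded-memory lower bound: simultaneously ruling out adaptive strategies that distribute their $k$ coherent qubits across many copies and tracking the depolarizing attenuation through arbitrary joint POVMs is exactly where the martingale bookkeeping is most delicate, and it is also the source of the residual gap between the $(1-\lambda)^{-n}$ lower bound and the $(1-\lambda)^{-4n}$ upper bound at $k=n$.
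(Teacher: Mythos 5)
Your memoryless lower bound and your two-copy upper bound follow essentially the paper's own route: node-wise concentration of the squared likelihood ratio in a learning tree (the toolbox from \cite{weiyuan_paulis}), with the Pauli average collapsed via the operator Parseval identity $\sum_P P\otimes P = 2^n\,\textnormal{\textsf{SWAP}}_n$; and, on the algorithmic side, noisy Bell sampling with a signal of size $(1-\lambda)^{2|P|}\geq(1-\lambda)^{2n}$, Hoeffding's inequality, and a union bound over the $4^n-1$ candidate Paulis, giving $\tilde O(n(1-\lambda)^{-4n})$. One imprecision in your memoryless argument: you cannot extract the $2^{-n}$ contraction and a uniform $(1-\lambda)^{2n}$ attenuation from a single family. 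The $2^{-n}$ factor requires averaging over all $4^n$ Paulis, where low-weight terms are barely attenuated (the correct combined answer there is $(f(\lambda)/2)^n$ with $f(\lambda)=1-\lambda+\lambda^2/2$, obtained by pushing the noise onto $\textnormal{\textsf{SWAP}}_n$), while the clean $(1-\lambda)^{2n}$ factor requires restricting to $\{X,Y,Z\}^{\otimes n}$, where the family has size $3^n$, not $4^n$. The paper proves these as two separate bounds and takes their maximum; your ``balance the weight profile of $\mathcal P$'' step needs to be made concrete in exactly this way, since otherwise neither bound alone reproduces the stated $2^n f(\lambda)^n$ scaling across all $\lambda$.

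The genuine gap is the $k$-qubit-memory lower bound, which you yourself flag as the delicate point but do not resolve. Your model --- ``a tree whose edges are POVM outcomes,'' analyzed with the same martingale increments --- is only valid when the information carried between queries is classical; this covers the memoryless model and the paper's bounded-lifetime model, but not persistent coherent memory. With unbounded-lifetime memory, the object propagated down the tree is the unnormalized post-measurement memory state $\Sigma^{\rho}(u)$, the conditional outcome distribution at a node is not a function of the classical transcript, and node-wise likelihood-ratio concentration is no longer even well defined. The paper instead runs a path-based argument following \cite{chen2021exponentialseparationslearningquantum}: the total variation at the leaves is controlled by trace-norm differences $\|\Sigma^{\mathds{1}/2^n}(\ell)-\Sigma^{\rho_P}(\ell)\|_{\tr}$ accumulated along entire root-to-leaf paths, Paulis are split edge-by-edge into ``good'' and ``bad'' via Markov's inequality, and the key computation is a symmetrization identity: $\mathbb{E}_P\bigl[\DN[P]\otimes\DN[P]\bigr]$ factorizes over qubit pairs, each factor decomposes as $2(1-\lambda)\,\Pi_{\mathrm{sym}}+\eta\,\Pi_{\mathrm{anti}}$, and the antisymmetric parts annihilate against the manifestly symmetric tensor built from $E^{\otimes 2}$ and $\textnormal{\textsf{SWAP}}_{>k}$, leaving exactly the $(1-\lambda)^{n}$ factor. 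It is this symmetry cancellation --- not any counting of sites --- that makes the noise contribution decouple from $k$. Your heuristic that the memory ``Bell-correlates at most $k$ sites, reading off the Pauli there almost for free,'' with the remaining $n-k$ sites treated memorylessly, asserts the target per-copy contraction $2^{-(n-k)}(1-\lambda)^n$ but provides no mechanism ruling out adaptive strategies that entangle the memory register across many copies and accumulate information faster; this is precisely what the path-based analysis exists to exclude. (A minor additional note: the paper's formal result is $\Omega\bigl(2^{(n-k)/3}(1-\lambda)^{-n/3}\bigr)$, so the exponents you quote from the informal statement hold only up to constants in the exponent.)
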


\noindent As an immediate corollary, we find a sample complexity separation between our two-copy algorithm and any single-copy strategy for the same Pauli identification task. This separation depends on the noise rate, and interpolates smoothly between the ideal exponential separation for Pauli tomography and a polynomial (in the noise rate) separation for $\Theta(1)$ local noise. A related separation is presented in Ref.~\cite{huang2022foundations}; that work considers diamond-norm gate noise and access to perfect state copies, a weaker noise model than $\nbqp$, and gives a lower bound which considers only noiseless conventional experiments.

\begin{corollary} [Noise-dependent quantum advantage in Pauli identification]
\label{cor:1} 
    Let $N_{SC}$ be the optimal sample complexity for any single-copy algorithm for the $n$-qubit Pauli identification task from Theorem \ref{thm:multi_advantage}, and let $N_{TC}$ be the sample complexity of the two-copy algorithm given in Theorem \ref{thm:multi_advantage}. Then $N_{SC} = \Omega\big(N_{TC}^{a(\lambda)}\big)$, where $a(\lambda)$ is a function of noise rate such that when $\lambda = \Theta(1)$ and $n$ is large, $a(\lambda) = \Theta(\lambda^{-1})$, and as $\lambda\rightarrow 0$, $N_{SC} = \exp(\Omega(n))$ while $N_{TC} = \textnormal{poly}(\Theta(n))$.
\end{corollary}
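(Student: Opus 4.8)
The plan is to reduce the corollary to an elementary asymptotic comparison of the two estimates supplied by Theorem~\ref{thm:multi_advantage}: the optimal single-copy complexity obeys the lower bound $N_{SC} = \Omega(2^n f(\lambda)^n)$, while the explicit two-copy algorithm achieves $N_{TC} = \tilde{O}\big((1-\lambda)^{-4n}\big)$ (the sample complexity grows with $n$, so the exponent is negative). Both quantities are exponential in $n$ with an explicit $\lambda$-dependent base, so the natural separation exponent is the ratio of their exponential growth rates,
\[
a(\lambda) \;:=\; \frac{\log\!\big(2 f(\lambda)\big)}{4\,\log\frac{1}{1-\lambda}}.
\]
First I would verify the defining identity that makes this choice tight: raising the two-copy bound to the power $a(\lambda)$ gives $(1-\lambda)^{-4n\,a(\lambda)} = \exp\!\big(4n\,a(\lambda)\log\tfrac{1}{1-\lambda}\big) = (2f(\lambda))^n$, which matches the exponential base of the single-copy lower bound exactly. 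Hence $\log N_{SC}$ and $a(\lambda)\log N_{TC}$ agree to leading order in $n$.

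The only delicate point, and the step I expect to be the main (if modest) obstacle, is the direction of the inequality in $N_{SC} = \Omega\big(N_{TC}^{a(\lambda)}\big)$. Because $N_{TC} = \tilde O(\cdot)$ carries a polylogarithmic prefactor, $N_{TC}^{a(\lambda)}$ retains a polylogarithmic (hence subexponential) factor on top of the common base $(2f(\lambda))^n$, whereas the lower bound on $N_{SC}$ comes with only a constant prefactor; a literal $\Omega$ with the critical exponent would therefore be spoiled by the polylog. I would resolve this by stating the bound for any fixed exponent strictly below $a(\lambda)$, for which the exponential base of $N_{SC}$ strictly dominates that of $N_{TC}^{a}$ and absorbs all subexponential factors, or equivalently by phrasing the conclusion at the level of exponential rates, $\liminf_{n\to\infty}\frac{\log N_{SC}}{\log N_{TC}} \ge a(\lambda)$. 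With that bookkeeping in place the core estimate is purely algebraic.

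Finally I would establish the two asymptotic claims. For the scaling of the exponent at small noise, I use $\log\frac{1}{1-\lambda} = \lambda + O(\lambda^2)$ together with $f(0)=1$ (so that the single-copy base correctly reduces to the ideal value $2^n$ at zero noise, and $f$ stays bounded near $\lambda=0$): the numerator of $a(\lambda)$ converges to the positive constant $\log 2$ while the denominator is $4\lambda\,(1+o(1))$, giving $a(\lambda) = \frac{\log 2}{4\lambda}(1+o(1)) = \Theta(\lambda^{-1})$, the claimed behavior for constant (in particular small) $\lambda$. For the $\lambda\to 0$ statement I argue separately that, since $f(\lambda)\ge 1$, one has $N_{SC} = \Omega(2^n f(\lambda)^n) \ge \Omega(2^n) = \exp(\Omega(n))$ uniformly in $\lambda$, while $N_{TC} = \tilde O\big((1-\lambda)^{-4n}\big) \to \tilde O(1)$ becomes subexponential; this recovers the ideal exponential single-copy/two-copy separation in the noiseless limit and explains why that regime is reported as $N_{SC}=\exp(\Omega(n))$ rather than through the (there degenerate) power law. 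Assembling the substitution, the exponent identity, and the two limits yields the corollary.
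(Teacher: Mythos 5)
Your proposal is correct and follows essentially the same route as the paper: take the memoryless single-copy lower bound $\Omega\bigl((2/f(\lambda))^{n}\bigr)$ and the two-copy upper bound $O\bigl(n(1-\lambda)^{-4n}\bigr)$, define $a(\lambda)$ as the ratio of their exponential rates, and read off $a(\lambda)=\Theta(\lambda^{-1})$ for constant $\lambda$ and the $\exp(\Omega(n))$ separation as $\lambda\to 0$. The one point where you diverge is the prefactor bookkeeping you flag as the ``delicate point'': the paper resolves it not by retreating to sub-critical exponents or a $\liminf$ of log-ratios, but by folding the polynomial prefactor of $N_{TC}$ into the definition of the exponent, setting
\begin{equation}
a(\lambda) \;=\; \frac{n\log\bigl(2/f(\lambda)\bigr)}{\log n - 4n\log(1-\lambda)}\,,
\end{equation}
so that $N_{TC}^{a(\lambda)}$ equals $(2/f(\lambda))^{n}$ exactly and the bound $N_{SC}=\Omega\bigl(N_{TC}^{a(\lambda)}\bigr)$ holds literally at the critical exponent; the cost is that $a$ acquires a mild $n$-dependence, which is why the corollary is phrased ``when $\lambda=\Theta(1)$ and $n$ is large.'' Both resolutions are valid, and yours has the advantage of keeping $a$ a function of $\lambda$ alone, at the price of a slightly weaker (rate-level) statement.
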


\noindent Our lower bounds span three models for quantum learning algorithms utilizing single-copy measurements. Namely, we consider (i) a learner without any ancillary quantum memory, (ii) one with $k<n$ qubits of quantum memory, but where the memory must be reset after a constant number of queries and classical advice may be passed between experiments, and (iii) $k\leq n$ qubits of memory with unbounded lifetime. To utilize the learning tree formalism, we introduce another hypothesis testing problem (the Pauli identification task discussed above): given copies of the state $\rho = (\mathds{1} + P)/\tr(\mathds{1}+P)$, can we distinguish between the case where $P$ is sampled uniformly from all $4^n-1$ non-identity Paulis and where $P$ is the $n$-qubit identity matrix? 

\begin{figure}[t!]
    \centering
    \includegraphics[width=\linewidth]{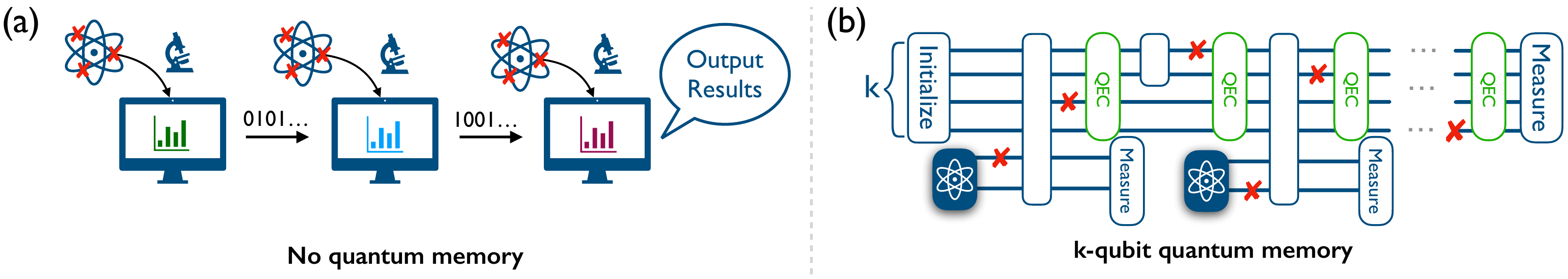}
    \vspace{1pt}
    \caption{(a) \textit{Memoryless protocol}: each noisy copy of the unknown state is measured immediately and only classical bit strings are stored, so different copies are never jointly entangled in the device.
(b) \textit{Protocol with $k$-qubit quantum memory}: a register of $k$ qubits is initialized, repeatedly interacts with fresh noisy copies of the state, and is stabilized by intermittent QEC cycles (green boxes) before a final joint measurement. Red crosses indicate local noise events on the physical qubits.}
    \label{fig:quantum_memory}
\end{figure}

We next define learning trees for each experimental model. For models (i) and (ii), the output of each node of the tree is a classical bitstring. For both, we show that the distribution over bitstrings induced by the results of each experiment, given access to a $\mathds{1}+P$-type state, concentrate around the distribution resulting from experiments which instead query a maximally mixed state. Hence, every experiment is uninformative in distinguishing the two hypotheses, and exponentially many measurements are required to achieve an algorithm output distribution under one hypothesis that is far in total variation from the other (Theorem \ref{thm:memoryless_lb} and Theorem \ref{thm:c,m_lower_bound}). The relationship between concentration of node-wise distributions and the final leaf output distribution is given by the martingale formalism developed in \cite{weiyuan_paulis}. In model (ii), obtaining this bound requires quantifying the correlation between copies of the unknown state in terms of the size of the quantum memory and its lifetime, which we accomplish using a reformulation of the learning model in terms of Matrix Product States (Lemma \ref{lemma:dc_moments}). 

For model (iii), nodes of the tree are joined by the state of the ancillary memory register after every measurement rather than a simple classical bitstring. Thus, we require a stronger approach: bounding the variation in entire root-to-leaf paths in the learning tree, following the approach from \cite{chen2021exponentialseparationslearningquantum}. Using a probabilistic argument, we control the total variation of the leaf output distribution by bounding the number of paths which diverge substantially from the output distribution induced by the maximally mixed input state. To account for depolarizing noise in every copy of the state, we simplify the tensor-network analysis of \cite{chen2021exponentialseparationslearningquantum} via a technical argument which provides a cleaner form of the action of our noise channel. With this approach, we find that the contribution of the noise to the sample complexity decouples from the number of memory qubits; hence, even given $n$ ancillary memory qubits, $\Omega((1-\lambda)^{-n/3})$ samples are still required in the noisy regime. We provide a two-copy algorithm which leverages noisy Bell sampling to match this asympotic lower bound up to constants in the exponent, and give a single-copy classical shadow algorithm for the broader Pauli shadow tomography which accounts for noise by using shallow-circuit unitary ensembles.  Up to polynomial factors and constants in the exponents, our upper and lower bounds for the Pauli-identification task exhibit the same exponential dependence on $n$, $k$, and the noise rate $\lambda$, matching the known noiseless bounds from \cite{chen2021exponentialseparationslearningquantum,weiyuan_paulis} as $\lambda \to 0$ and diverging to infinity as $\lambda \to 1$, where the task becomes information-theoretically impossible.

From the standpoint of near-term experiments, the most relevant regime is often not the asymptotic limit $n\to\infty$ at fixed noise. Instead, the instance size $n$ is effectively fixed by hardware, and the central question is how performance changes as other parameters, especially the noise rate $\lambda$, are improved. Our proof of Corollary~\ref{cor:1} supports this viewpoint: for any fixed $n$, it gives an explicit $\lambda$-dependent separation between the optimal single-copy sample complexity $N_{SC}$ and the two-copy sample complexity $N_{TC}$, and thus provides a quantitative condition on $\lambda$ for obtaining a meaningful two-copy advantage. This perspective is consistent with the experimental results of Ref.~\cite{Huang_2022}, which indicate that at modest instance sizes and sufficiently low noise, two-copy Pauli shadow protocols can substantially outperform single-copy baselines.

Our noise- and resource-aware analyses extend beyond state learning. Figure~\ref{fig:HL_sensing} contrasts ideal error-corrected metrology, which can sustain Heisenberg-limited scaling indefinitely, with the $\nbqp$ setting, where noisy probe-control couplings confine Heisenberg-limited performance to a finite sensing window before a crossover to the Standard Quantum Limit. In Theorem~\ref{thm:HL_sensing_informal}, we bound this precision-time tradeoff of Heisenberg-limited (HL) single-parameter metrology protocols under an effective noise model similar to \nbqp.

\begin{figure}[t!]
    \centering
    \includegraphics[width=\linewidth]{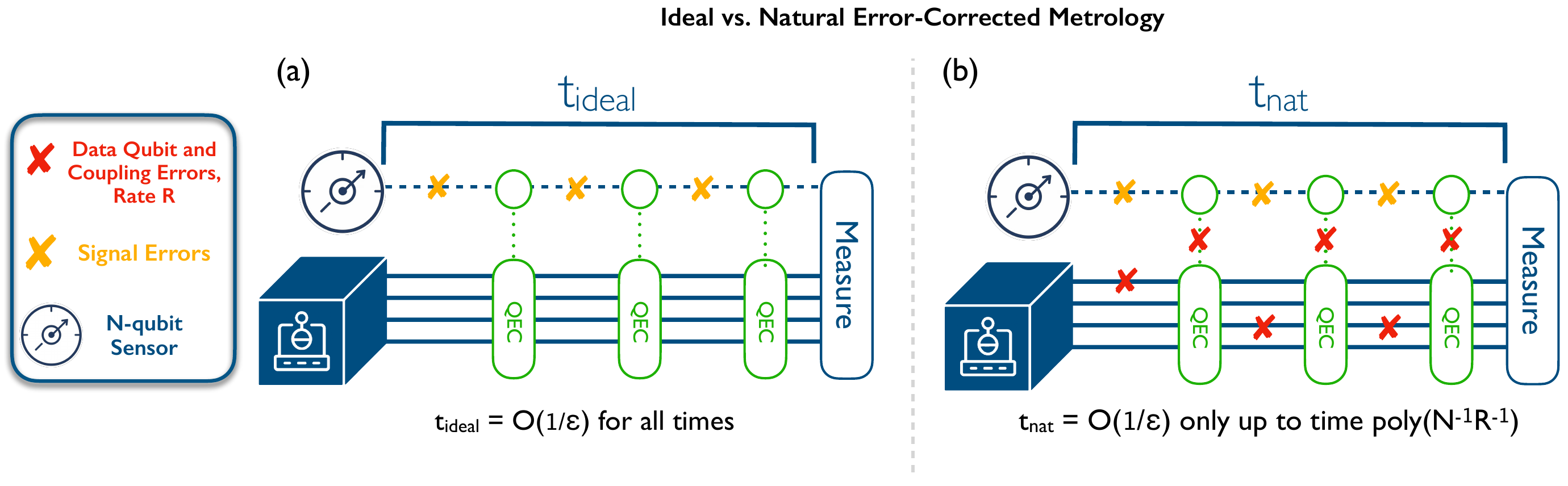}
    \caption{(a) \textit{Ideal error-corrected metrology}: interleaved quantum error correction (QEC) cycles protect an $N$-qubit probe during interrogation, sustaining Heisenberg-limited scaling in a noiseless sensing-control interface. (b) \textit{Natural error-corrected metrology}: constant-rate faults at the sensing-control interface (red crosses, rate $R$) accumulate and restrict Heisenberg-limited performance to a finite sensing window (scaling to leading order as $\mathrm{poly}(N^{-1}R^{-1})$), beyond which the protocol crosses over to the Standard Quantum Limit.}
    \label{fig:HL_sensing}
\end{figure}

\begin{theorem}[A threshold for Heisenberg-limited metrology, informal] \label{thm:HL_sensing_informal}
Consider a single-parameter quantum sensing protocol in which a Hamiltonian $H(\omega)$ acts on an $N$-qubit probe, achieving Heisenberg-limited sensitivity when the sensing–control interface is noiseless. Under depolarizing noise of rate $R(\tau, \eta)$ (determined by the speed of control $\tau$ and native error strength $\eta$) at this interface, and without additional error mitigation, the protocol achieves the Heisenberg limit only up to a total evolution time inverse-polynomial in $NR$, beyond which it crosses over to the Standard Quantum Limit.
\end{theorem}
\noindent This result illustrates the value of noise-aware analysis beyond state preparation and measurement, namely that SPAM-robust metrology protocols can still fail under interstitial noise. It also highlights the connection between noisy quantum learning theory and practical metrology, demonstrating that even with access to quantum error correction, Heisenberg-limited metrological gains are delicate and tightly coupled to hardware constraints. The formal statement (Theorem \ref{thm:HL_sensing}) expresses the threshold sensing time in terms of control speed, native noise strength, and quantum memory size. When all parameters except $N$ are $\Theta(1)$, HL sensitivity is essentially completely lost, as coherence-destroying errors accumulate on the same timescale as quantum control is executed in a single round. More generally, our bounds clarify the tradeoff between experimental resources and quantify how error rates must scale with $n$ to sustain an HL protocol over meaningful total time. For instance, an effective noise strength of $\eta \sim 1/N$ and $N\eta\tau \ll 1$ enables HL sensitivity for $O(1)$ time.

\section{Outlook}
\label{sec:outlook}

We have studied quantum-enhanced experiments in the presence of noise through the lens of quantum learning theory. Our results show that many of the most striking idealized quantum advantages in learning, property testing, and metrology disappear, or become inapplicable, once interstitial noise and unprotected oracle systems are taken into account. At the same time, we identified settings where advantages survive, often in a weakened but still meaningful form, and clarified how these surviving advantages depend on noise strength, memory resources, and problem structure.

Our work suggests several concrete directions for further development of noisy quantum learning theory. First, progress will require engaging more deeply with the \emph{physics} of quantum many-body systems, rather than treating oracles as abstract. Many natural systems possess built-in robustness: renormalization-group structure and locality can protect long-range correlations~\cite{Kim_2017,Furuya_2022,Furuya_RGAQECC, goldman2024lindbladianexactrenormalizationdensity,Lake_2025}, and thermalizing dynamics can generate noise-resilient macroscopic observables \cite{Kastoryano_2025, Brandao_2019, Chesi_2010}. It will be important to identify and characterize classes of states, channels, and observables whose relevant features are intrinsically stable under realistic error models, and to turn such structure into quantitatively sharp, physically natural examples of noisy quantum advantage.

Second, most known exponential quantum advantages in learning rest on highly entangled, global measurements, such as large-scale SWAP tests or Bell-basis measurements \cite{Aharonov_Cotler_Qi_2022, huang2022foundations, chen2021exponentialseparationslearningquantum}, that are simultaneously fragile to noise and misaligned with the local-control constraints of real devices which can couple to and manipulate experimental systems. This points toward a theory of noisy quantum learning under \emph{restricted resources}, in which allowed operations may be shallow, geometrically local, few-qubit, or constrained to a small number of probe systems. Quantum probe tomography \cite{chen2025quantumprobetomography} is an example of this philosophy, wherein a small number of probes interacting locally with a large system can still extract global information under noise. It will be important to understand systematically which restricted-access models admit robust (even if only polynomial) advantages over conventional experiments, and how those advantages trade off against architectural constraints.

There is a close connection between noisy quantum learning and quantum sensing. Many sensing protocols can be viewed as learning problems about Hamiltonian parameters or state observables under stringent access and noise constraints \cite{Zhou_2018,Huang_2023,Hu_2025}. Some ambitious proposals, such as quantum computation-enhanced sensing based on deep coherent control \cite{Huang_2023, Hu_2025} or Grover-type oracles \cite{allen2025quantumcomputingenhancedsensing}, lose their asymptotic advantage in the \textsf{NBQP} noise model. This raises a quantitative question rather than a purely asymptotic one: \emph{how} does the achievable quantum advantage degrade as a function of the noise rate, circuit depth, and available memory? Corollary~\ref{cor:1}, together with related work such as Ref.~\cite{huang2022foundations}, suggest that in realistic noise regimes one should expect noise-dependent polynomial improvements rather than noise-independent exponential ones, but those polynomial advantages can still be practically meaningful.

Taken together, these directions aim at a more operational understanding of what fault-tolerant quantum computers can teach us about real-world quantum systems that are neither fully protected nor perfectly characterized. Our results indicate that quantum advantages do remain available in this regime, but they are more delicate, more problem-dependent, and more tightly coupled to physical structure than in idealized oracle models. Developing a mature noisy quantum learning theory along these lines should inform both the design of near- and medium-term experiments and the long-term role of quantum computers as scientific instruments.

\subsection*{Acknowledgments}

The authors thank Sitan Chen, Soonwon Choi, Harald Putterman, Ruohan Shen and Nikita Romanov for valuable discussions. JC is supported by an Alfred P.~Sloan Foundation Fellowship. WG is supported by NSF Grant CCF-2430375. The authors do not declare any competing interests.

\appendix

\vspace{2.5em}
\noindent \textbf{\LARGE{}Appendices}
\vspace{0.25em}

\paragraph{Roadmap for Appendices.} In Appendix~\ref{sec:related}, we review salient prior work on quantum sensing, learning, and computation in noisy settings. In Appendix~\ref{app:defns}, we recall the definitions of several hybrid, relativized models of quantum computation used in our complexity-theoretic separations, and define the complexity class $\nbqp$. In Appendix \ref{Appendix:basic_quantum_prelim}, we review technical preliminaries on quantum information theory as well as quantum learning lower bounds. In Appendix \ref{app: complexity_sep}, we demonstrate our two superpolynomial oracle separations, proving Theorems \ref{thm:NBQP_vs_BQP} and \ref{Thm:NISQ_vs_NBQP}, which comprise the complexity-theoretic portion of our results. In Appendix \ref{appendix:purity_testing_lb}, we demonstrate that the well-studied exponential quantum speedup for purity testing completely degrades under local noise, proving Theorem \ref{thm:purity_testing_lb_informal}. We then turn to a physically-motivated reformulation of the purity-testing problem, demonstrating that the breakdown we showed previously can be rectified for the task of detecting a black hole microstate in the bulk of a tensor-network model for holographic duality, since the system has intrinsic error-correction properties. In Appendix \ref{app:quantum_advantage}, we address the degradation of Pauli shadow tomography with noisy experiments, proving Theorem \ref{thm:multi_advantage}. In particular, we prove tight sample complexity lower bounds under three models of noisy quantum experiments, and provide noisy single-copy and two-copy algorithms for the task. In Appendix~\ref{app:deferred}, we prove several technical lemmas stated in earlier sections. 

\section{Related Work}
\label{sec:related}

\paragraph{Quantum sensing enhanced by quantum information processing.}
Quantum sensing has a natural interpretation in the language of noisy quantum learning theory, since sensing targets are often Hamiltonian coefficients (e.g.\ field amplitudes or phases) or observables of quantum states (such as order parameters of quantum materials). In a future fault-tolerant setting, the most general sensing protocols could use a sensor to couple the experimental system, viewed as an oracle, to a quantum computer, thereby providing oracle access for an \textsf{NBQP} computation.  Many works have established Heisenberg-limited (HL) scaling for traditional sensing methods such as Ramsey spectroscopy and interferometry in idealized settings \cite{D_Ariano_2003, De_Martini_2003, Giovannetti_2004, Kok_2004}. However, under standard noise models, such as Markovian reservoirs or photon loss channels, the HL asymptotic scaling of these methods often reverts to the standard quantum limit (SQL) \cite{Yu_2004, Bellomo_2007,Monras_2007, Smirne_2016,Haase_2018, Jiao_2023}. 

Quantum information processing has been used to either recover HL sensitivity in noisy environments or surpass HL in ideal settings. Quantum error correction on sensor qubits, introduced in \cite{Kessler_2014}, underlies much of the first direction. Subsequent work has developed error-correction protocols that recover asymptotic HL scaling for particular noise channels, such as dephasing or bosonic loss, and in specific architectures such as trapped-ion sensors \cite{2017NatCo...8.1822R,Layden_2019}. Ref.~\cite{Zhou_2018} provides a necessary and sufficient condition for such recoverability, showing that when the noise operators span the observables of interest, HL scaling typically cannot be restored and the protocol reverts to the SQL. In the \textsf{NBQP} model, we therefore expect many error-corrected sensing protocols to fail to retain HL asymptotics, since e.g.~the depolarizing channel has a Kraus decomposition consisting of all $4^n$ Pauli strings. This intuition is captured in our Theorem \ref{thm:HL_sensing}, in which the well-known error-correction-based metrology scheme from \cite{Zhou_2018} is degraded to the SQL beyond a sensing time threshold. A concurrent work \cite{sahu2026achievingheisenberglimitusing} demonstrates that for sufficiently structured signals and noise channels, quantum error correction can recover HL sensitivity for noise rates below a threshold. In such specialized settings, our noisy HL sensing window can be extended indefinitely, at the expense of introducing a new threshold condition on the error rate. Practically, error-mitigation and approximate error-correction techniques remain promising for metrological gains despite asymptotic limitations \cite{Jeske_2014, Zhou_2020, Rossi_2020}, and careful co-design of error-correcting codes and sensing architectures with well-characterized noise can still enable HL sensing.

Sensing beyond the HL does not yet fit into a single unified framework. A valuable non-entanglement resource in some beyond-HL proposals is quadrature squeezing: by compressing uncertainty in a relevant quadrature while enlarging it in an irrelevant one, and measuring only the former, interferometric protocols can achieve super-HL scaling in idealized regimes \cite{Pezz__2008, Grote_2013, Oh_2024}. More recently, \cite{allen2025quantumcomputingenhancedsensing} combined Grover-type quantum speedups with sensing by searching for an ambient signal over discrete frequency bins, giving another beyond-HL framework that leverages deep quantum circuits. This work engineers a Grover phase oracle by wrapping the unknown signal with a quantum signal processing transform \cite{Martyn_2021}. In the $\nbqp$ noise model where calls to this oracle are interleaved with noise, this speedup breaks down due to known results regarding Grover's algorithm with noisy oracles \cite{ambainis_grover, rosmanis2023quantumsearchnoisyoracle, rosmanis2024addendumquantumsearchnoisy}. Refining this algorithm for practical applications will require novel error-mitigation strategies both within the oracle construction and surrounding the oracle queries. An important practical follow-up to our work is to connect our noise-aware analysis of learning from uncharacterized systems with these beyond-HL protocols, which are largely developed only in ideal settings.

\paragraph{Noise-robust quantum learning.}
Our Theorems~\ref{thm:NBQP_vs_BQP} and~\ref{thm:purity_testing_lb_informal} show that many examples of quantum advantage in learning from uncharacterized systems break down in the presence of noise. At their core, several advantages based on highly entangled multi-copy measurements, such as protocols built from \textsf{SWAP} tests or Bell-basis sampling, are intrinsically fragile to interstitial noise. This suggests that quantum learning algorithms which remain robust for generic inputs will need the number of entangling two-qubit gates to be independent of instance size. Such algorithms naturally employ measurements with limited entanglement, shallow circuits, or coherence-boosting resources other than entanglement. These properties are especially desirable in shadow tomography of realistic many-body systems.

In a related vein, the classical shadows algorithm \cite{classical_shadow} with the standard Clifford ensemble is not noise-robust: \cite{Koh_2022} shows that under product depolarizing and amplitude-damping noise, its sample complexity for estimating Pauli operators scales exponentially with operator weight. By contrast, randomized measurement protocols using only single-qubit local control (and hence generating no entanglement) are expected to be significantly more noise-robust, since corruption of a constant fraction of qubits remains localized. Examples include quantum overlapping tomography \cite{cotler2020quantum} and classical shadows with a unitary 1-design ensemble.

Hamiltonian learning from time dynamics is another quantum learning task where noise robustness is essential and closely tied to quantum metrology. Several works have proposed algorithms achieving $O(1/\epsilon)$, Heisenberg-limited scaling in time complexity that are robust to state-preparation and measurement (SPAM) errors \cite{Huang_2023, Hu_2025}. However, these protocols require deep circuits that interleave many layers of quantum control with queries to the unknown Hamiltonian. Moreover, \cite{Hu_2025} proves that such deep quantum control is \emph{necessary} for Heisenberg-limited, ansatz-free Hamiltonian learning. Under interstitial noise, accumulated errors then destroy coherence unless the Hamiltonian coincidentally acts on the logical codespace of an error-correcting code. In a different direction,~\cite{chen2025quantumprobetomography} introduces quantum probe tomography, where a constant number of probes couple to a few sites of a large system to extract its parent Hamiltonian. While restricted to structured Hamiltonian classes, this probe setting requires neither probe entanglement nor deep control, yielding an end-to-end noise-robust, Hamiltonian learning strategy.  However, it is not yet known if there are versions of quantum probe tomography which achieve the Heisenberg limit.  In any case, such protocols highlight that it is prudent for practical Hamiltonian learning protocols to leverage the substantial structural constraints of physical Hamiltonians.      

Quantum learning in bosonic continuous-variable systems has also recently attracted attention. Bosonic statistics allow squeezing operations that reduce canonical quadrature variance below the standard Heisenberg limit. Leveraging this, \cite{fanizza2025efficientlearningbosonicgaussian} provides an algorithm for learning bosonic Gaussian unitaries, while \cite{Oh_2024} studies bosonic random displacement channels; both algorithms exhibit a sample complexity that shrinks with the amount of squeezing used to prepare input states. Moreover, \cite{Oh_2024} demonstrates that the effects of realistic photon-loss, measurement, and crosstalk errors on estimation uncertainty are suppressed by a factor that grows exponentially with the squeezing parameter. Beyond such examples, the broader role of non-entanglement resources in quantum learning remains poorly understood.

Even without formal noise-robustness guarantees, small-scale experiments have demonstrated quantum learning-enhanced efficiency. For instance,~\cite{Cotler_2019} use a strategy based on \textsf{SWAP} tests to estimate low-temperature properties of a Bose-Hubbard model in an optical lattice. To measure local observables at low temperatures, one needs only perform a variant of the \textsf{SWAP} test on a constant number of sites. However, if the target low temperature is separated from the physical temperature by a phase transition, or if one wishes to access the phase transition directly, then one may need to perform a version of the \textsf{SWAP} test on a number of sites that scales with $n$. Our exponential lower bound on noisy two-copy purity testing in Appendix~\ref{appendix:purity_testing_lb} implicitly shows that such general \textsf{SWAP}-test-based strategies are exponentially hampered by local errors. We therefore expect a similar degradation for quantum virtual cooling when virtually cooling to at or below a phase transition, although the details will depend on the particular observables being measured and on whether the state-observable pair is intrinsically noise robust. To elaborate on this last point, it is possible that certain kinds of (macroscopic) observables are insensitive to certain kinds of local perturbations or errors, allowing reliable learning despite noise at every logical layer of a given quantum learning protocol. Clarifying the relationship between experimental learnability and the robustness of (macroscopic) observables to local errors remains an open question.

Finally, \cite{huang2022foundations} develops a theoretical framework for quantum learning with noisy quantum devices, in a setting distinct from ours. They study noisy device learning from noisy access, whereas we consider learning uncharacterized and noisy systems with a fault-tolerant quantum computer. Their work analyzes a task similar to the Dec-IP problem from Definition~\ref{def:dec_ip_problen}, which we use to establish quantum advantage with noisy access. However, \cite{huang2022foundations} considers access to perfect copies of the unknown state and diamond-error noise in the action of each gate for the upper bound; moreover, the given lower bounds neglect noise.

\paragraph{Quantum computation with noisy oracles.} The $\nbqp$ model also characterizes quantum computational problems making calls to noisy oracles, beyond the context of learning from quantum experiments. Ref.~\cite{Cross_2015} shows that a modification of the Bernstein-Vazirani problem retains a quantum speedup when only the output of the oracle is corrupted by depolarizing noise; it is simple to see that, under interstitial noise, this speedup vanishes. 

Several works study Grover's search algorithm \cite{grover1996fastquantummechanicalalgorithm} with various models of oracle noise, including phase error \cite{Shenvi_2003}, simple oracle call failure \cite{ regev2012impossibilityquantumspeedupfaulty, ambainis_grover}, and phase inversions \cite{long_grover}, all finding that the time complexity under constant-strength noise reverts to asymptotically linear in the database size. Furthermore,~\cite{nisq} demonstrates such a slowdown for \textsf{NISQ} algorithms. Most aligned with our work, \cite{rosmanis2023quantumsearchnoisyoracle} shows that if Grover oracle calls are sandwiched between layers of constant-strength global depolarizing or dephasing noise with all other operations perfect (a stronger computational model than \textsf{NBQP}), then the speedup once again breaks down, and in \cite{rosmanis2024addendumquantumsearchnoisy} the result is strengthened to hold when depolarizing noise is applied only to a single qubit before and after the oracle query. While Grover oracles for database search tasks may eventually be implemented fault-tolerantly, these results suggest that achieving asymptotic metrological gains from the recently proposed Grover-enhanced sensing strategy in \cite{allen2025quantumcomputingenhancedsensing} may be difficult when ambient noise acts on the sensor qubits, effectively implementing an error channel before each oracle query.

We remark that it is crucial to consider interstitial noise rather than noise applied only after oracle queries. Physically, if an experimental probe cannot be easily embedded into an error-correction scheme, the noise incurred when coupling to a quantum computer is unavoidably propagated through the computation; this includes errors occuring \textit{before} the application of the oracle. Mathematically, note that our Theorem \ref{Thm:NISQ_vs_NBQP} separating $\textsf{NISQ}$ and $\textsf{NBQP}$ relies on the fact that due to noise before the oracle query, it is exponentially unlikely that we successfully query the oracle within the relevant subspace; if instead errors only occurred after the oracle, a simple majority-vote strategy would obviate the separation. The aforementioned conception of interstitial errors is at the heart of our remark on ``logical locality" in Appendix \ref{app:remark}.

\section{Definitions} \label{app:defns}
In this Section, we recall the definition of the complexity class \textsf{NISQ}, then formally define the complexity class \textsf{NBQP}.

We begin by recounting the definition of a classical oracle.
\begin{definition}[Classical oracle]
A classical oracle $O$ is a function from $\{0, 1\}^n\rightarrow \{0, 1\}^m$ for $n,m \in \mathbb{N}$. The quantum instantiation of $O$ is the unitary $U_O$ acting on computational basis states $\ket{x}\!, \ket{y}$ as $U_O\ket{x}\ket{y} = \ket{x}\ket{y\oplus O(x)}$.
\end{definition}

\noindent For the definition of the $\textsf{NISQ}$ complexity class, we will need to make reference to a noise model for quantum computation. The most convenient is constant depolarizing noise per qubit, although as emphasized in~\cite{nisq} other noise models are suitable as well.  To fix notation, a single-qubit depolarizing channel will be defined as follows.
\begin{definition}[Single-qubit depolarizing channel] The single-qubit depolarizing channel with strength $\lambda\in [0, 1]$ is
\begin{equation}
    \mathcal{D}_\lambda(\rho) = (1-\lambda)\rho + \lambda\, \frac{I}{2}
\end{equation}
for any single-qubit density matrix $\rho$. The depolarizing channel is self-adjoint with respect to the Hilbert-Schmidt inner product, and on a general 2-by-2 matrix $A$, (the adjoint of) $\mathcal{D}$ acts as
\begin{equation}
    \mathcal{D}_\lambda(X) = (1-\lambda)A + \lambda\,\frac{\tr(A)\,I}{2}\,.
\end{equation}
\end{definition}

An additional useful primitive is sampling a noisy quantum circuit which has access to a classical oracle.  We make this precise below.
\begin{definition}[Sampling a noisy quantum circuit with classical oracle access]
Let $O$ be a classical oracle taking $n$-bit inputs. We denote the sampling of a noisy quantum circuit call with access to $O$ by $\text{\rm NQC}^O_\lambda(n', \{U\})$; this object takes in an integer $n' \geq n$, and a sequence of $T$ $n'$-qubit unitaries $\{U\} = \{U_1,...,U_T\}$ where each $U_i$ is either a depth-1 circuit or equal to $U_O\otimes I_{2^{n'-n'}}$, and outputs a random $n'$-bitstring $s$ sampled from the distribution 
\begin{equation}
    p(s) = \bra{s}\DN(U_T\DN (U_{T-1} \cdots (U_2\DN(U_1\DN(\ketbra{0^n}{0^n})U_1^\dagger)U_2^\dagger) \cdots U_T^\dagger)\ket{s}\,.
\end{equation}
This is the probability distribution induced by measuring the outcome of the circuit in the computational basis.  Each call to $\text{\rm NQC}^O_\lambda$ takes time $\Theta(T)$, wherein each call to the oracle takes unit time.
\end{definition}

\noindent With this definition, we can define a \textsf{NISQ} algorithm with oracle access.

\begin{definition}[\textsf{NISQ} algorithm with oracle access]
A \textnormal{\textsf{NISQ}}$_\lambda$ algorithm $A_\lambda^O$ with access to a classical oracle $O$ on $n$ bits is a probabilistic polynomial-time classical algorithm with $\mathrm{poly}(n')$ memory for $n' \geq n$ such that: \textnormal{(i)} the algorithm can query $O$ on any $n$-bit input, and \textnormal{(ii)} the algorithm can call noisy quantum circuits $\textnormal{NQC}^O_\lambda(n', \{U\})$ of at most polynomial depth.  The algorithm $A_\lambda^O$ runs in total time $T_c + \sum_i T_q^{(i)}$, where $T_c$ is the run time of the classical computation in the algorithm and $T_q^{(i)}$ is the (at most) polynomial running time of the $i$-th call to a noisy quantum circuit. Since the overall algorithm must run in polynomial time, the number of such calls (i.e.~the range of the index $i$) is itself bounded by a polynomial in $n'$.
\end{definition}

Using the notion of a \textsf{NISQ}$_\lambda$ algorithm with oracle access, we now define the corresponding (functional) relativized complexity class. Informally, \textsf{NISQ}$^O$ consists of all functions that can be computed with bounded error and in polynomial time by a \textsf{NISQ} algorithm for some fixed noise rate $\lambda > 0$, using at most polynomially many qubits.

\begin{definition}[\textsf{NISQ}$^O$ complexity class]
Let $f : \{0,1\}^\star \rightarrow \{0,1\}^\star$ be a (total) function. We say that $f$ is in \textnormal{\textsf{NISQ}}$^O$ if there exist a constant $\lambda > 0$, a polynomial $p(\cdot)$, and a \textnormal{\textsf{NISQ}}$_\lambda$ algorithm $A_\lambda^O$ (as in the previous definition) such that, for every input $x \in \{0,1\}^\star$ of length $n = |x|$:
\begin{enumerate}
    \item All calls made by $A_\lambda^O$ to noisy quantum circuits are of the form $\textnormal{NQC}^O_\lambda(n', \{U\})$ with $n \leq n' \leq p(n)$ and with circuit depth at most $p(n)$;
    \item The total running time of $A_\lambda^O(x)$ (its classical computation plus all noisy circuit calls) is at most $p(n)$, and its output has length at most $p(n)$;
    \item $A_\lambda^O(x)$ outputs $f(x)$ with probability at least $2/3$.
\end{enumerate}
\end{definition}
\noindent As is standard in complexity theory, we formulate \textnormal{\textsf{NISQ}}$^O$ here as a functional class; the associated decision version is obtained by restricting $f$ to have single-bit outputs and interpreting that bit as the yes/no answer.

In the $\textsf{NISQ}$ model, each use of the quantum device consists of a single noisy circuit call $\textnormal{NQC}^O_\lambda(n',\{U\})$ followed by a complete measurement of all qubits in the computational basis; the only memory across calls is classical. In particular, the algorithm cannot perform intermediate quantum measurements, reset a subset of qubits, or introduce fresh clean ancillas in the middle of a coherent computation. Quantum error-correction and fault-tolerance schemes rely precisely on such operations: one repeatedly measures stabilizers, discards or resets noisy ancillas, and thereby pumps entropy out of the encoded state. Since none of this is available in the \textsf{NISQ} model, one can prove (see e.g.~\cite{nisq}) that the effective noise on the quantum state cannot be suppressed over time, and the class is inherently incapable of implementing full fault-tolerant quantum error correction.

Next we turn to defining \textsf{NBQP}, which is a less restrictive model of noisy quantum quantum computation that does enable fault-tolerant quantum error correction.  We begin by defining a noisy quantum algorithm with oracle access.

\begin{definition}[Noisy quantum algorithm with oracle access]
\label{def:NBQP_circ}
Let $U_O$ be a quantum oracle acting on $n$ qubits. A $\lambda$-noisy quantum algorithm $Q_\lambda^O$ with access to $U_O$ is a uniform family of quantum channels $\{C_n^{U_O}\}_{n\in\mathbb{N}}$, where for each input length $n$ the channel $C_n^{U_O}$ acts on $n' \leq \mathrm{poly}(n)$ qubits. We refer to $n'$ as the total number of qubits used by the algorithm on inputs of size $n$.  Each $C_n^{U_O}$, which we call a $\lambda$-noisy quantum circuit, has the form
\begin{equation}
C_n^{U_O}[\rho] = V_{k,n}\,\mathcal{D}_\lambda\bigl( V_{k-1,n}\,\mathcal{D}_\lambda\bigl(\cdots V_{2,n}\,\mathcal{D}_\lambda\bigl(V_{1,n}\,\rho\,V_{1,n}^\dagger\bigr)V_{2,n}^\dagger \cdots \bigr) V_{k-1,n}^\dagger \bigr) V_{k,n}^\dagger,
\end{equation}
for some integer $k \leq \mathrm{poly}(n)$. For each $t = 1,\dots,k$, the unitary $V_{t,n}$ is either \textnormal{(i)} a depth-1 unitary on the $n'$ qubits, or \textnormal{(ii)} an oracle layer of the form $U_O \otimes I_{2^{n'-n}}$, where $U_O$ acts on some chosen subset of $n$ of the $n'$ qubits and the identity acts on the remaining $n' - n$ qubits.  For fixed $n$, the output state of $Q_\lambda^O$ is $C_n^{U_O}\bigl[\ketbra{0^{n'}}{0^{n'}}\bigr]$, and the runtime of the algorithm on inputs of length $n$ is $\Theta(k)$.
\end{definition}
\noindent When $\lambda$ is small, below the threshold of known fault-tolerant quantum error-correction schemes, Definition~\ref{def:NBQP_circ} permits the mid-circuit measurements, state preparations, and ancilla resets needed to implement full quantum error correction. In particular, one can encode information into a code subspace and perform (effectively) error-free logical quantum computation using fault-tolerant implementations of the required logical gates. However, a general oracle $O$ need not preserve any chosen code space or admit a fault-tolerant implementation. Thus, even though an $\textsf{NBQP}$ algorithm can protect most of its internal computation, a noisy quantum algorithm that seeks to learn properties of an arbitrary $O$ may still be fundamentally more limited than a noiseless quantum algorithm that seeks to do the same.  With this in mind, we define the \textsf{NBQP}$^O$ complexity class.

\begin{definition}[\textsf{NBQP}$^O$ complexity class]\label{def:NBQP}
Let $f : \{0,1\}^\star \rightarrow \{0,1\}^\star$ be a (total) function. We say that $f$ is in \textnormal{\textsf{NBQP}}$^O$ if there exist a constant $\lambda > 0$, a polynomial $p(\cdot)$, and a $\lambda$-noisy quantum algorithm $Q_\lambda^O$ with access to $U_O$ (as in Definition~\ref{def:NBQP_circ}) such that, for every input $x \in \{0,1\}^\star$ of length $n = |x|$:
\begin{enumerate}
    \item The algorithm $Q_\lambda^O$ on input $x$ acts on $n' \leq p(n)$ qubits, uses at most $p(n)$ layers (i.e.~$k \leq p(n)$ in Definition~\ref{def:NBQP_circ}), and hence has total running time at most $p(n)$;
    \item Measuring all qubits of the output state $C_n^{U_O}\bigl[\ketbra{0^{n'}}{0^{n'}}\bigr]$ in the computational basis yields a classical bit string of length at most $p(n)$;
    \item The resulting measurement outcome equals $f(x)$ with probability at least $2/3$.
\end{enumerate}
\end{definition}
\noindent As with \textnormal{\textsf{NISQ}}$^O$, we have formulated \textnormal{\textsf{NBQP}}$^O$ as a functional class.

\begin{remark}
We will sometimes denote $\textnormal{\textsf{NISQ}}^O$ and $\textnormal{\textsf{NBQP}}^O$ by $\textnormal{\textsf{NISQ}}_\lambda^O$ and $\textnormal{\textsf{NBQP}}_\lambda^O$ when we want to make a noise rate $\lambda$ explicit.
\end{remark}

\begin{remark}[Property testing with measure-and-prepare oracles]
In many applications, the oracle $O$ is a measure-and-prepare channel that, on each invocation, produces an $n$-qubit state $\rho_O$. In Definition~\ref{def:NBQP_circ}, we can replace the unitary oracle layers $U_O$ by applications of a measure-and-prepare channel.

For a two-property testing problem, suppose that for each $n$ there are two disjoint classes of states $\mathcal{P}_0(n)$ and $\mathcal{P}_1(n)$, and we are promised that $\rho_O$ belongs to exactly one of them. For each such oracle $O$, define a function $f^O : \{0,1\}^\star \to \{0,1\}$ by
\begin{align}
f^O(1^n) =
\begin{cases}
0 & \textnormal{if } \rho_O \in \mathcal{P}_0(n)\\
1 & \textnormal{if } \rho_O \in \mathcal{P}_1(n)
\end{cases}\,,
\end{align}
where the input is simply the unary string $1^n$ encoding the system size. An \textnormal{\textsf{NBQP}}$^O$ algorithm for this property-testing task is then a $\lambda$-noisy quantum algorithm $Q_\lambda^O$ that, on input $1^n$ and with access to the measure-and-prepare oracle $O$, outputs $f^O(1^n)$ with probability at least $2/3$ using only polynomial resources. Thus, standard two-property quantum state testing problems naturally fit into our functional notion of \textnormal{\textsf{NBQP}}$^O$.
\end{remark}

When not taken relative to an oracle, Definitions~\ref{def:NBQP_circ} and~\ref{def:NBQP} yield \textsf{NBQP} = \textsf{BQP}. In particular, if the noise rate $\lambda$ is chosen to be below a fault-tolerant quantum error correction threshold, all physical errors can be made fault-tolerantly correctable, so the model is computationally equivalent to standard noiseless quantum computation. In contrast, in the relativized setting the oracle in \textsf{NBQP}$^O$ represents an unknown quantum state or process that we wish to probe experimentally. In this regime, \textsf{NBQP}$^O$ naturally captures quantum learning and property-testing tasks, and it is precisely the errors occurring during oracle interactions (which cannot, in general, be coherently corrected) that lead to the separations we establish.

To prove Theorem \ref{Thm:NISQ_vs_NBQP} from the main text we will need to define complexity classes of hybrid algorithms with access to bounded-depth noiseless quantum computation. 

Following the convention in \cite{Chia_2023},
\begin{definition}[\textsf{QNC}$_d$]
A \textnormal{\textsf{QNC}}$_d$ circuit family is a collection $\mathcal{C} = \{\mathcal{C}_n\}_{n\in \mathbb{N}}$, where each $\mathcal{C}_n$ is a quantum circuit on $n$ input qubits (and possibly additional ancillas initialized to $\ket{0}$) consisting of $d$ layers of depth-1 unitaries. On input $x \in \{0,1\}^n$, we prepare the first $n$ qubits in $\ket{x}$, all ancillas in $\ket{0}$, apply the $d$ layers, and then measure all (or a designated subset of) qubits in the computational basis to obtain a classical bit string $s$. The circuit $\mathcal{C}_n$ has depth $d$ and thus runs in time $\Theta(d)$.

We say that a (total) function $f : \{0,1\}^\star \to \{0,1\}^\star$ is in \textnormal{\textsf{QNC}}$_d$ if there exists a \textnormal{\textsf{QNC}}$_d$ circuit family $\{\mathcal{C}_n\}$ such that, for every input $x \in \{0,1\}^n$, the output $s$ of $\mathcal{C}_n$ on input $x$ equals $f(x)$ with probability at least $2/3$.
\end{definition}
\noindent \textnormal{\textsf{QNC}}$_d^O$ circuit families are defined analogously to Definition~\ref{def:NBQP_circ}, except that the total number of layers, counting both depth-1 unitary layers and oracle layers using $U_O$, is bounded by $d$.

To formalize hybrid classical-quantum algorithms that may make adaptive, bounded-depth noiseless quantum queries, we introduce a \textsf{BPP}-type model with access to \textsf{QNC}$_d^O$ circuits as subroutines.

\begin{definition}[\textsf{BPP}$^{\textsf{QNC}_d}$ algorithm with oracle access]
A $\textnormal{(\textsf{BPP}}^{\textnormal{\textsf{QNC}}_d}\textnormal{)}^O$ algorithm $H^O$ with access to a classical oracle $O$ is a probabilistic polynomial-time classical algorithm that, on input $x \in \{0,1\}^n$,
\begin{enumerate}
\item Can query $O$ on any (polynomially bounded) classical string; and
\item May, at any point during its computation, specify and call a (possibly different) $\textnormal{\textsf{QNC}}_d^O$ circuit on some number of input qubits $n' \leq \mathrm{poly}(n)$ to obtain a classical output bit string.
\end{enumerate}
We refer to each such execution of a $\textnormal{\textsf{QNC}}_d^O$ circuit as a \emph{quantum query}. Let $Q$ denote the number of quantum queries made on input $x$, and let $T_c$ be the classical running time of $H^O$ excluding the time spent inside these quantum subroutines. Since each $\textnormal{\textsf{QNC}}_d^O$ circuit has depth at most $d$, a single quantum query takes time $\Theta(d)$, and hence the total running time of $H^O$ is $T_c + Q \cdot \Theta(d)$; that is, the sum of its classical running time and the time spent in its (adaptively chosen) bounded-depth quantum subroutine calls.
\end{definition}

\section{Preliminaries}
\subsection{Quantum information theory toolkit} \label{Appendix:basic_quantum_prelim}
Here we collect the standard notions from quantum information theory used in this work, and record several lemmas capturing their key properties. Throughout, we use $\mathds{1}_m$ to denote the $2^m \times 2^m$ identity matrix, and use $\mathds{1}$ or $\mathds{1}_n$ interchangeably for the $2^n \times 2^n$ case.
\begin{definition}[POVM]
An $n$-qubit Positive Operator-Valued Measure (POVM) is given by a set of matrices $\{F_s\}$ such that all $F_s$ are positive semi-definite and $\sum_s F_s = \mathds{1}_{n}$. Given a density matrix $\rho$, when we say we measure $\{F_s\}$ on $\rho$, we obtain the classical outcome $s$ sampled from the distribution $\textnormal{Pr}[s] = \tr(F_s\rho)$.
\end{definition}
\noindent A well-known fact is that the classical outcome distribution of an arbitrary POVM can be simulated by a POVM consisting only of rank-1 matrices:
\begin{lemma}[Simulating arbitrary POVMs with rank-1 POVMs, e.g.~Lemma 4.8 in \cite{chen2021exponentialseparationslearningquantum}]
If we neglect the post-measurement quantum state, the outcome distribution of any arbitrary $k$-qubit POVM can be simulated (using classical postprocessing) by a POVM of the form $\{w_s2^n\ketbra{\psi}{\psi}\}$, where $\ket{\psi}$ is a pure quantum state and $\sum_s w_s = 1$. 
\end{lemma}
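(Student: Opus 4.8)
The plan is to show that any POVM $\{F_s\}$ can be refined into a rank-1 POVM whose outcomes, after a deterministic classical relabeling, reproduce the original outcome distribution $\Pr[s] = \tr(F_s\rho)$ for every $\rho$. The key structural fact is that each positive semidefinite operator admits a spectral (or more generally, any PSD) decomposition into rank-1 pieces, so the refinement is simply a matter of splitting each $F_s$ into rank-1 terms and recording which original label each term came from.

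\textbf{Step 1: Spectral decomposition of each POVM element.} For each outcome $s$, diagonalize $F_s = \sum_{j} \mu_{s,j}\,\ketbra{\phi_{s,j}}{\phi_{s,j}}$ with $\mu_{s,j}\ge 0$ and $\ket{\phi_{s,j}}$ orthonormal eigenvectors. Since $\sum_s F_s = \mathds{1}_k$, taking the trace gives $\sum_{s,j}\mu_{s,j} = 2^k$. First I would introduce the doubly-indexed collection $\{\mu_{s,j}\ketbra{\phi_{s,j}}{\phi_{s,j}}\}_{s,j}$, which is itself a valid POVM (each term is PSD and they sum to $\mathds{1}_k$) consisting entirely of rank-1 operators.

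\textbf{Step 2: Rewrite in the normalized form.} Define weights $w_{s,j} = \mu_{s,j}/2^k$, so that $\sum_{s,j} w_{s,j} = 1$, and set $\ket{\psi_{s,j}} = \ket{\phi_{s,j}}$. Then each POVM element can be written as $w_{s,j}\,2^k\,\ketbra{\psi_{s,j}}{\psi_{s,j}}$, matching the claimed form. (Here $2^k$ replaces the $2^n$ in the statement, with $k$ the number of qubits; the normalization follows from $\tr(\mathds{1}_k) = 2^k$.) Relabeling the composite index $(s,j)$ as a single outcome index gives a rank-1 POVM of exactly the advertised shape.

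\textbf{Step 3: Classical post-processing recovers the original distribution.} The refined measurement outputs the pair $(s,j)$ with probability $\tr\big(w_{s,j} 2^k \ketbra{\psi_{s,j}}{\psi_{s,j}}\,\rho\big) = \tr(\mu_{s,j}\ketbra{\phi_{s,j}}{\phi_{s,j}}\,\rho)$. Applying the deterministic map $(s,j)\mapsto s$ that forgets the second coordinate, the probability of reporting outcome $s$ is $\sum_j \tr(\mu_{s,j}\ketbra{\phi_{s,j}}{\phi_{s,j}}\rho) = \tr(F_s\rho) = \Pr[s]$, which is precisely the original outcome distribution. This is the step that uses the freedom to "neglect the post-measurement quantum state": we only need the outcome statistics to agree, not the post-measurement state, so the coarse-graining by classical postprocessing is legitimate.

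I do not anticipate a serious obstacle here, as this is essentially a bookkeeping lemma: the only thing to be careful about is matching conventions (the factor $2^n$ versus $2^k$, and ensuring the weights $w_s$ sum to one after the global trace normalization), and clearly stating that the equivalence is only at the level of the \emph{outcome distribution}, since the rank-1 refinement does not reproduce the same post-measurement states as the original $\{F_s\}$. The one point worth emphasizing is that the construction works uniformly for all input states $\rho$ simultaneously, because the identity $\sum_j \mu_{s,j}\ketbra{\phi_{s,j}}{\phi_{s,j}} = F_s$ is an operator identity independent of $\rho$.
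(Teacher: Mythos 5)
Your proof is correct and is essentially the same argument as the one underlying the cited Lemma 4.8 of \cite{chen2021exponentialseparationslearningquantum} (which the paper invokes without reproving): spectrally decompose each $F_s$ into rank-1 pieces, renormalize the weights so they sum to one, and coarse-grain the refined outcome $(s,j)$ back to $s$ by classical postprocessing. You also correctly flag that the $2^n$ in the statement should read $2^k$ for a $k$-qubit POVM, and that the simulation is only at the level of outcome statistics, which is exactly the sense in which the paper uses the lemma.
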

\noindent This lemma tells us that in quantum learning tasks where we discard a state after measurement, we only need to consider rank-1 POVMs.

An object we use to prove lower bounds for protocols with bounded memory are Matrix Product States (MPS). 
\begin{definition}[MPS]
    An $n$-qubit, $c$-qudit matrix product state (MPS) with bond dimension $k$ is a quantum state of the form
    \begin{equation}
        \ket{\psi} = \sum_{\{s\}}\tr[A_1^{(s_1)}A_2^{(s_2)} \cdots A_c^{(s_c)}]\ket{s_1 \cdots s_c}
    \end{equation}
    where every $s_i\in \{0,...,2^n-1\}$, $A_1^{(s_1)}$ is a $1\times k$ matrix, $A_i^{s_i}$ is $k\times k$, and $A_c^{(s_c)}$ is $k\times 1$ for all $i$.
    The set of all states of this form is $\textnormal{MPS}(n, k, c)$. For any $r\in [c]$, such an MPS can be written as
    \begin{equation}
        \ket{\psi} = \sum_{i=1}^{2^k} \sqrt{\lambda_k}\ket{\alpha_i}\otimes \ket{\beta_i}
    \end{equation}
    where the sets $\{\alpha_i\}$, $\{\beta_i\}$ are orthonormal bases supported on the first $r$ and last $c-r$ qudits. Then the set of all $n$-qubit MPS of bond dimension $k$ is $\textnormal{MPS}(n, k) =\bigcup_{c=2}^n \textnormal{MPS}(n, k, c)$.
\end{definition}
The bond dimension of an MPS captures the amount of irreducible entanglement contained within every qudit subsystem. In this work, we will consider models of learning in which an algorithm can entangle copies of an unknown state with a quantum memory register multiple times, generating entanglement between the two. When this is done sequentially, the memory register acts to simulate virtual entanglement between many copies of the state, even when no explicit quantum gate is applied simultaneously to the copies. The following definition and lemma formalize this concept.
\begin{definition}[$\mathcal{M}_{k}^{cn}$ POVMs]
    The set $\mathcal{M}_{k}^{cn}$ is the set of POVMs of the form $\{2^{cn}w_s\ketbra{L_s}{L_s}\}$, with the requirement that all $\ket{L_s}\in \textnormal{MPS}(cn, n+k)$. 
\end{definition}
\begin{lemma}[$\mathcal{M}_{k}^{cn}$ POVMs are $c$-query $k$-qubit circuits, Section 8.1 in \cite{weiyuan_paulis}] 
    Consider a quantum algorithm with access to copies of an $n$-qubit state $\rho$ and $k$ additional qubits of quantum memory initialized in the state $\Sigma_0$. The algorithm can perform quantum gates on individual copies of $\rho$ and the $k$-qubit memory register, but cannot jointly process multiple copies of $\rho$ at once (as in Figure \ref{fig:quantum_memory}(a)). Suppose a quantum circuit run by the algorithm measures at most $c$ copies of $\rho$. Then the outcome of any such algorithm is equivalent to measuring some POVM from $\mathcal{M}_k^{cn}$ on the state $\rho^{\otimes c}\otimes \Sigma_0$.
\end{lemma}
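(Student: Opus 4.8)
The plan is to unfold the sequential interaction into a tensor network and read off a rank-one matrix product POVM directly, rather than invoking the generic rank-one simulation lemma as a black box. First I would put the algorithm into a canonical form. By Stinespring dilation and the principle of deferred measurement, and crucially because no gate may touch two copies at once, every operation factors through the $k$-qubit memory register $M$: any information carried from copy $t$ to copy $t+1$ must pass through $M$. Hence I may assume without loss of generality that the protocol consists of $c$ rounds, where round $t$ appends a fresh copy $\rho$ on a register $S_t$, applies a unitary $V_t$ on $M\otimes S_t$ (adaptively chosen as a function of the past outcomes $s_1,\dots,s_{t-1}$), and then measures $S_t$ projectively with outcome $s_t$, followed by a final rank-one measurement $\{\ketbra{g_{s_f}}{g_{s_f}}\}$ on $M$. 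General POVMs and intermediate classical randomness are folded into enlarged registers and into the composite outcome label $s=(s_1,\dots,s_c,s_f)$.

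The key observation is that, \emph{once we condition on the full outcome string} $s$, each round contributes a single operator $\tilde V^{(t)}_{s_t} := (\bra{s_t}_{S_t}\otimes I_M)\,V_t : M\otimes S_t \to M$, so the outcome probability collapses to a single amplitude with no residual mixture:
\[
\Pr[s] = \tr_M\!\Big[\ketbra{g_{s_f}}{g_{s_f}}\;\tilde V^{(c)}_{s_c}\cdots \tilde V^{(1)}_{s_1}\,\big(\Sigma_0\otimes\rho^{\otimes c}\big)\,(\tilde V^{(1)}_{s_1})^\dagger\cdots(\tilde V^{(c)}_{s_c})^\dagger\Big].
\]
Defining $\bra{W_s} := \bra{g_{s_f}}_M\,\tilde V^{(c)}_{s_c}\cdots\tilde V^{(1)}_{s_1}$ as a covector on $M\otimes S_1\otimes\cdots\otimes S_c$, this is precisely $\Pr[s]=\bra{W_s}(\Sigma_0\otimes\rho^{\otimes c})\ket{W_s}$, so the effective POVM element is the rank-one operator $F_s=\ketbra{W_s}{W_s}$. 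Rank-one-ness is therefore automatic here, a consequence of having a single operator per round after conditioning on $s$, and does not require the generic simulation lemma. Moreover $\sum_s F_s=\mathds{1}$ follows by telescoping: summing over $s_f$ uses completeness of $\{\ketbra{g_{s_f}}{g_{s_f}}\}$ on $M$, and then each successive sum $\sum_{s_t}(\bra{s_t}_{S_t}\otimes I_M)^\dagger V_t^\dagger(\cdots)V_t(\bra{s_t}_{S_t}\otimes I_M)$ contracts to $\mathds{1}_{M S_t}$ by unitarity of $V_t$, so $\{F_s\}$ is a genuine POVM.

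Next I would identify $\ket{W_s}=(\tilde V^{(1)}_{s_1})^\dagger\cdots(\tilde V^{(c)}_{s_c})^\dagger\ket{g_{s_f}}$ as a sequentially generated matrix product state: each $(\tilde V^{(t)}_{s_t})^\dagger : M\to M\otimes S_t$ is exactly an MPS ``emission'' tensor that outputs one physical leg $S_t$ (local dimension $2^n$) while threading the $k$-qubit memory as the virtual bond. Consequently the Schmidt rank of $\ket{W_s}$ across every cut that separates the copies is at most $\dim(M)=2^k\le 2^{n+k}$, so $\ket{W_s}\in\mathrm{MPS}(cn,n+k)$. Writing $\ket{L_s}=\ket{W_s}/\|W_s\|$ and absorbing the norms into weights $w_s$ then places $\{F_s\}$ in the required form $\{2^{cn}w_s\ketbra{L_s}{L_s}\}\in\mathcal{M}_k^{cn}$, where $\Sigma_0$ is either retained as the extra memory leg of the measured state $\rho^{\otimes c}\otimes\Sigma_0$ (normalization $2^{cn+k}$) or spectrally decomposed $\Sigma_0=\sum_j q_j\ketbra{e_j}{e_j}$ into a classical mixture of left-boundary vectors $\ket{W_s^{(j)}}=\bra{e_j}_M\ket{W_s}$ absorbed into the outcome label (normalization $2^{cn}$, with $\sum_s w_s=1$ following again by telescoping).

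The main obstacle is bookkeeping rather than anything conceptual. I must verify that adaptivity causes no difficulty, which it does not: fixing the leaf $s$ fixes every $V_t$ and hence every emission tensor, so $\ket{W_s}$ is a bona fide MPS for each leaf even though the tensors vary from leaf to leaf. The genuinely careful parts are (i) the reduction of arbitrary intermediate operations to the unitary-plus-projective normal form, after which the MPS structure is forced by the fact that all cross-round information is funneled through the bounded memory, and (ii) handling the memory register's boundary role consistently, so that the bond dimension $\le 2^{n+k}$, the normalization prefactor $2^{cn}$, and the class $\mathrm{MPS}(cn,n+k)$ all line up with the stated definition of $\mathcal{M}_k^{cn}$.
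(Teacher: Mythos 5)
The paper never actually proves this lemma: it is imported verbatim from Section 8.1 of \cite{weiyuan_paulis}, so there is no in-paper argument to compare against. Judged on its own terms, your proof is a correct, self-contained reconstruction of the standard argument and matches the cited one in spirit: unfold the sequential protocol, condition on the full classical transcript so that each round contributes a single operator threading the $k$-qubit memory, observe that the resulting POVM element is a rank-one projector onto a sequentially generated MPS whose virtual bond is the memory, and verify completeness by telescoping. Your bond-dimension bound ($2^k$) is in fact stronger than the $2^{n+k}$ demanded by the definition of $\mathcal{M}_k^{cn}$, and your two treatments of $\Sigma_0$ (kept as a boundary leg, or spectrally decomposed into the outcome label) both land in the required normal form.

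Two points deserve tightening. First, your claimed normal form (unitary $V_t$ on $M\otimes S_t$ followed by a projective measurement of $S_t$) is not literally without loss of generality: Stinespring dilation and deferred measurement require ancillas, and the model gives you none beyond the $k$ memory qubits, so a naive dilation would inflate the bond and threaten the very bound you need. The fix is the mechanism you already use elsewhere: condition on a refined outcome record, under which an arbitrary interleaving of gates, intermediate measurements, and discards within round $t$ becomes a single Kraus operator $K^{(t)}_{s_t}\colon \mathcal{H}_M\otimes\mathcal{H}_{S_t}\to\mathcal{H}_M$; your probability formula, the telescoping identity (now via $\sum_{s_t}(K^{(t)}_{s_t})^\dagger K^{(t)}_{s_t}=\mathds{1}$, i.e.\ trace preservation rather than unitarity of $V_t$), and the MPS emission structure all go through verbatim with $\tilde V^{(t)}_{s_t}$ replaced by $K^{(t)}_{s_t}$. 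Second, making the final measurement on $M$ rank one does invoke the generic refinement trick (spectrally decompose each POVM element and coarse-grain classically), which is exactly the simulation lemma you claim to avoid; it costs nothing, but it should be acknowledged rather than dismissed. Neither issue changes the conclusion, and the argument as repaired is sound.
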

Another common tool we use are identities relating the Pauli operators to permutations.
\begin{definition}[Pauli Operators]
The single-qubit Pauli operators are
\begin{equation}
    X = \begin{pmatrix}0 & 1\\ 1 & 0\end{pmatrix},\quad
    Y = \begin{pmatrix}0 & -i \\i & 0\end{pmatrix}, \quad
    Z = \begin{pmatrix}1 & 0\\ 0 & -1\end{pmatrix},\quad
    I = \begin{pmatrix}1 & 0\\ 0 & 1\end{pmatrix},\quad
\end{equation}
The $n$-qubit Pauli group $\mathcal{P}_n$ is the set of $4^n$ elements $\{I, X, Y, Z\}^{\otimes n}$. We also denote $I^{\otimes n}$, the $2^n$-dimensional identity matrix, by $\mathds{1}_n$.
\end{definition}
\noindent Then the following are standard facts about Pauli matrices which we use without proof.
\begin{fact}\label{fact:pauli_to_swap}
    Let $\textnormal{\textsf{SWAP}}_n := \textnormal{\textsf{SWAP}}^{\otimes n}$ be the swap operator acting on $2n$ qubits. Then the following identities hold:
    \begin{align}
        \sum_{P\in \mathcal{P}_n}&P\otimes P = 2^n\textnormal{\textsf{SWAP}}_n\,,  \quad
        \sum_{P\in \{X,Y,Z\}^{\otimes n}}P\otimes P = (2\textnormal{\textsf{SWAP}}_1 - \mathds{1}_2^{\otimes 2})^{\otimes n}\,.
    \end{align}
\end{fact}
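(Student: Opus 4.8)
The plan is to reduce both identities to a single pair of qubits and then tensor up, exploiting the fact that Pauli operators factorize across qubits. First I would establish the two single-qubit base cases by direct computation of the relevant $4\times 4$ matrices. Writing $\textsf{SWAP}_1$ in the computational basis and adding the four matrices $I\otimes I$, $X\otimes X$, $Y\otimes Y$, $Z\otimes Z$ entrywise, one checks that the diagonal entries become $(2,0,0,2)$ and the only nonzero off-diagonal entries are the two in the $\ket{01},\ket{10}$ block (each equal to $2$), so that
\begin{equation}
I\otimes I + X\otimes X + Y\otimes Y + Z\otimes Z = 2\,\textsf{SWAP}_1.
\end{equation}
Subtracting the identity term then gives the traceless base case $X\otimes X + Y\otimes Y + Z\otimes Z = 2\,\textsf{SWAP}_1 - \mathds{1}_2^{\otimes 2}$, where $\mathds{1}_2^{\otimes 2}=I\otimes I$ is the two-qubit identity.

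The second step is a factorization argument. Writing a general $n$-qubit Pauli as $P = P_1\otimes\cdots\otimes P_n$ with each $P_i\in\{I,X,Y,Z\}$, the operator $P\otimes P$ on the left-hand side acts on the $2n$ qubits in ``copy-grouped'' order (qubits $1,\dots,n$ of the first copy, then $1,\dots,n$ of the second). I would apply the tensor-factor permutation $\pi$ that regroups these into ``site-grouped'' order, pairing qubit $i$ of the first copy with qubit $i$ of the second, under which
\begin{equation}
\pi\,(P\otimes P)\,\pi^{-1} = (P_1\otimes P_1)\otimes(P_2\otimes P_2)\otimes\cdots\otimes(P_n\otimes P_n).
\end{equation}
Since the summation ranges independently over each factor $P_i$, the total sum factorizes as a tensor product of single-qubit sums. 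Applying the first base identity to each factor yields $(2\,\textsf{SWAP}_1)^{\otimes n} = 2^n\,\textsf{SWAP}^{\otimes n}$, and conjugating back by $\pi^{-1}$ gives $2^n\,\textsf{SWAP}_n$, proving the first claim. Applying the second base identity to each factor, with the sum now restricted to $P_i\in\{X,Y,Z\}$, gives $(2\,\textsf{SWAP}_1 - \mathds{1}_2^{\otimes 2})^{\otimes n}$, proving the second.

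The only point requiring care, and the main (modest) obstacle, is the bookkeeping for the tensor-factor reordering in the first identity: one must verify that the same permutation $\pi$ that regroups $P\otimes P$ into site-grouped form also conjugates $\textsf{SWAP}^{\otimes n}$ (a product of single-qubit swaps, each exchanging qubit $i$ of the two copies) into the register-level operator $\textsf{SWAP}_n$, so that the factorized right-hand side is consistent with the stated $2^n\,\textsf{SWAP}_n$. For the second identity no such reconciliation is needed, since its right-hand side is already written in the site-grouped form $(\,\cdot\,)^{\otimes n}$ and one may simply omit the final conjugation. Once $\pi$ is fixed consistently on both sides, both identities follow directly from the two base cases.
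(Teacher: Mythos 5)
Your proof is correct. Note that the paper itself offers no proof to compare against: the statement is introduced with the remark that these are ``standard facts about Pauli matrices which we use without proof.'' Your argument -- direct $4\times 4$ verification of the single-qubit base cases $\sum_{P\in\{I,X,Y,Z\}}P\otimes P = 2\,\textsf{SWAP}_1$ and $\sum_{P\in\{X,Y,Z\}}P\otimes P = 2\,\textsf{SWAP}_1-\mathds{1}_2^{\otimes 2}$, followed by factorization of the sum over Pauli strings into a tensor product of per-site sums -- is the standard route and is complete. Your flagged point of care is also the genuinely nontrivial part: the left-hand sides are naturally written in copy-grouped order (all $n$ qubits of the first copy, then all $n$ of the second), while the factorized right-hand sides live in site-paired order, and the two readings agree precisely because the register swap conjugates under the regrouping permutation $\pi$ to $\textsf{SWAP}_1^{\otimes n}$, i.e.\ the full swap of two $n$-qubit registers is the product of the $n$ transpositions exchanging qubit $i$ of one copy with qubit $i$ of the other. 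The paper's definition $\textsf{SWAP}_n := \textsf{SWAP}^{\otimes n}$ already tacitly makes this identification, and its later uses of the fact (e.g.\ Lemmas~\ref{lemma:depol_on_swap} and~\ref{lemma:depol_on_swap_2}, where $\mathcal{D}_\lambda\otimes\mathcal{D}_\lambda$ is applied site by site to each $\textsf{SWAP}_1$ factor) rely on exactly the site-grouped reading that your permutation makes explicit, so your bookkeeping is consistent with how the fact is deployed downstream.
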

\noindent We also make use of the \textsf{SWAP} trick:
\begin{lemma}[\textsf{SWAP} trick]
Given an $n$-qubit density matrix $\rho$ and a subset of qubits $I\subseteq [n]$,
\begin{equation}
    \tr(\textnormal{\textsf{SWAP}}_I (\rho\otimes\rho)) = \tr(\tr_{[n] \textnormal{\textbackslash} I}(\rho)^2)
\end{equation}
where $\textnormal{\textsf{SWAP}}_I$ is the $\textnormal{\textsf{SWAP}}_{|I|}$ operator acting on the sites specified by $I$.
\end{lemma}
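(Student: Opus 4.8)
The plan is to reduce the partial \textsf{SWAP} identity to the elementary full-system version $\tr[\textsf{SWAP}(\sigma\otimes\tau)]=\tr(\sigma\tau)$ by tracing out the untouched registers first. Write $J:=[n]\setminus I$ and factor each copy's Hilbert space as $\mathcal{H}_I\otimes\mathcal{H}_J$, so that the two copies live on $\mathcal{H}_{I_A}\otimes\mathcal{H}_{J_A}\otimes\mathcal{H}_{I_B}\otimes\mathcal{H}_{J_B}$. After reordering the tensor factors into $(\mathcal{H}_{I_A}\otimes\mathcal{H}_{I_B})\otimes(\mathcal{H}_{J_A}\otimes\mathcal{H}_{J_B})$, the operator $\textsf{SWAP}_I$ becomes $\textsf{SWAP}_{I_A I_B}\otimes\mathds{1}_{J_A J_B}$; that is, it exchanges the $I$-registers of the two copies and acts trivially on the $J$-registers.

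Next I would peel off the $J$ registers using the elementary identity $\tr_J[(A_I\otimes\mathds{1}_J)\,O]=A_I\,\tr_J(O)$, valid for any operator $A_I$ on $\mathcal{H}_I$ and any $O$ on $\mathcal{H}_I\otimes\mathcal{H}_J$ (expand $O$ as a sum of product operators to verify). Applying this with $A_I=\textsf{SWAP}_{I_A I_B}$ and using $\tr=\tr_{I_A I_B}\circ\tr_{J_A J_B}$ gives
\begin{equation}
\tr[\textsf{SWAP}_I(\rho\otimes\rho)]=\tr_{I_A I_B}\!\big[\textsf{SWAP}_{I_A I_B}\cdot\tr_{J_A J_B}(\rho\otimes\rho)\big].
\end{equation}
Because $\rho\otimes\rho$ is a product across the two copies and the partial trace over $J_A$ (resp.\ $J_B$) touches only copy $A$ (resp.\ $B$), this remaining partial trace factorizes as $\tr_{J_A J_B}(\rho\otimes\rho)=(\tr_J\rho)\otimes(\tr_J\rho)=\rho_I\otimes\rho_I$, where $\rho_I:=\tr_{[n]\setminus I}(\rho)$ is the reduced state on the $I$-qubits.

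Finally I would invoke the full \textsf{SWAP} trick on the $I$-subsystem alone, namely $\tr_{I_A I_B}[\textsf{SWAP}_{I_A I_B}(\rho_I\otimes\rho_I)]=\tr(\rho_I^2)$, which is exactly the claimed right-hand side. If the elementary version is not taken as given, it can be established in one line by a basis computation: writing $\rho_I=\sum_{a,b}(\rho_I)_{ab}\ket{a}\!\bra{b}$ and using $\textsf{SWAP}\ket{a}\ket{a'}=\ket{a'}\ket{a}$, the diagonal (trace) terms force the two factors to contract as $\sum_{a,b}(\rho_I)_{ab}(\rho_I)_{ba}=\tr(\rho_I^2)$.

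The computation is entirely routine, so there is no substantive obstacle; the only points requiring care are bookkeeping ones. First, one must reorder the four tensor factors correctly so that $\textsf{SWAP}_I$ genuinely factorizes as $\textsf{SWAP}_{I_A I_B}\otimes\mathds{1}_{J_A J_B}$ across the $I$/$J$ split, rather than accidentally scrambling the register labels. Second, one must remember that $\rho$ generically correlates its $I$- and $J$-registers, so $\rho\otimes\rho$ is \emph{not} a product across the $I$/$J$ bipartition; it is only the \emph{copy-wise} product structure that lets $\tr_{J_A J_B}(\rho\otimes\rho)$ factor into $\rho_I\otimes\rho_I$. Tracking these two distinct tensor structures separately is the only place an error could creep in.
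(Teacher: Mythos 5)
Your proof is correct. The paper states this lemma as a standard fact without providing a proof, so there is nothing to compare against; your argument is the standard one — reorder the tensor factors so that $\textsf{SWAP}_I$ factorizes as $\textsf{SWAP}_{I_AI_B}\otimes\mathds{1}_{J_AJ_B}$, trace out the untouched $J$-registers (which factorizes copy-wise into $\rho_I\otimes\rho_I$ precisely because $\rho\otimes\rho$ is a product across the two copies, as you correctly emphasize), and then invoke the elementary identity $\tr[\textsf{SWAP}(\sigma\otimes\tau)]=\tr(\sigma\tau)$ on the reduced states. All steps, including the cautionary remarks about the two distinct tensor structures, are sound and complete.
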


We will make use of standard statistical divergences between discrete probability distributions: The total variation distance $d_{\text{\rm TV}}(p, q) = 1/2 \sum_x |p(x)-q(x)|$, the $\chi$-squared divergence $\chi^2(p\|q) = \sum_x (p(x)-q(x))^2/q(x)$, and the Kullback-Leibler (KL) divergence $\text{KL}(p\|q) = \sum_x p(x)\log(p(x)/q(x))$. 

\subsection{Tree representations for learning lower bounds}
A powerful tool for proving information-theoretic lower bounds on the sample complexity of quantum learning tasks is the tree representation of a quantum learning algorithm. In this work, we bring together the learning tree formalisms proposed in previous works for memoryless algorithms, bounded-memory algorithms with limited memory lifetime, and bounded-memory algorithms with unbounded coherent lifetime \cite{weiyuan_paulis, chen2021exponentialseparationslearningquantum, nisq}. These frameworks fall under the following definition.

\begin{definition}[General tree representation for a quantum learning algorithm]\label{def:general_learning_tree}
A quantum learning algorithm with access to a fixed $n$-qubit state $\rho$, $m$ qubits of quantum memory, and $\textnormal{poly}(n)$ classical memory can be represented as a rooted tree $\mathcal{T}$ with the following properties:
\begin{itemize}
    \item Each node $u$ in $\mathcal{T}$ is associated with a POVM $\{M_s^u\}$ on $k \leq n+m$ qubits, which may be drawn from some subset of all possible $n+m$-qubit POVMs.
    \item Each node $u$, at depth $d$ in the tree, has an associated probability $p_\rho(u)$ denoting the probability that the state of the algorithm is represented by $u$ after $d$ measurements.
    \item Each non-leaf node $u$ is joined to its children by edges $e_{u, s}$, where $s$ corresponds to the classical outcome of the measurement performed at $u$. For a child node $v$, the transition rule is given by
    \begin{equation}
        p_\rho(v) = p_\rho(u) \,\tr(\rho(M_s^u\otimes \mathds{1}_{n+m-k}) )\,,
    \end{equation}
    where the identity acts on any qubits not measured under the POVM. 
    \item Every root-to-leaf path has $T$ edges. 
\end{itemize}
To specify a learning tree, we specify the size of the quantum register and the set of allowed POVMs at each node.
\end{definition}

Our general strategy in proving lower bounds on the sample complexity of a learning task is using a reduction to a hypothesis distinguishing task. An algorithm for learning a property of some quantum state to high accuracy can always be used to distinguish two quantum states with different values of the property; hence, a bound on the cost of distinguishing implies one on the cost of learning.

\begin{definition}[Many-vs.-one distinguishing problem]
Given a quantum state $\rho$, suppose the following two events are realized with equal probability.
\begin{itemize}
    \item $\rho$ is the maximally mixed state on $n$ qubits, $\mathds{1}/2^n$.
    \item $\rho$ is sampled from some known distribution $\mathcal{D}$ over a specified set of candidate states $\{\rho_i\}$. 
\end{itemize}
A many-vs.-one distinguishing task is to decide which event occurred with high accuracy.
\end{definition}
We will leverage learning trees to prove query lower bounds for property testing problems (such as the many-vs.~one distinguishing problem) using  Le Cam's two point method~\cite{Yu1997LeCam}. The potential outcomes of a quantum learning algorithm for many-vs.-one distinguishing will be stored in classical memory, and encoded in the leaf nodes of the learning tree $\mathcal{T}$. Some leaves will correspond to outputting the maximally-mixed hypothesis, while others will guess the alternative event. If the two distributions over leaves $\ell$, namely $p_{\rho_x}(\ell)$ (given $\rho_x$ sampled from $\mathcal{D}$) and $p_{\mathds{1}/2^n}(\ell)$ (given the maximally mixed state) are very close to one another, our learning algorithm cannot successfully distinguish the two hypotheses with high accuracy. Formally:
\begin{lemma}[Le Cam's Two-Point Method]\label{lemma:le_cam}
Given a many-vs.-one distinguishing problem and learning tree $\mathcal{T}$ representing a quantum algorithm for this problem, the probability that the algorithm selects the correct hypothesis is upper bounded by
\begin{equation}
    \frac{1}{2}\sum_{\ell\in \textnormal{leaf}(\mathcal{T})} \Big|\mathbb{E}_\mathcal{D}[p_{\rho_x}(\ell)] - p_{\mathds{1}/2^n}(\ell)\Big|
\end{equation}
This divergence is the total variation distance $d_{\text{\rm TV}}(\mathbb{E}_\mathcal{D}[p_{\rho_x}(\ell)],p_{\mathds{1}/2^n}(\ell))$.
\end{lemma}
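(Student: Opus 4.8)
The plan is to recast the many-vs.-one distinguishing task as an ordinary two-point (binary) hypothesis test between two fixed probability distributions over the leaves of $\mathcal{T}$, and then apply the elementary bound on the success probability of such a test. First I would fix the two induced leaf distributions. Under the null event ($\rho = \mathds{1}/2^n$) the algorithm halts at leaf $\ell$ with probability $P_0(\ell) := p_{\mathds{1}/2^n}(\ell)$, which is a genuine probability distribution over $\textnormal{leaf}(\mathcal{T})$ because the transition rule of Definition~\ref{def:general_learning_tree} makes the children probabilities of any node sum to that of the node itself (using $\sum_s M_s^u = \mathds{1}$). Under the alternative event, $\rho$ is first drawn from $\mathcal{D}$ and only then fed into the tree, so the leaf distribution is the mixture $P_1(\ell) := \mathbb{E}_{\mathcal{D}}[p_{\rho_x}(\ell)]$. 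Since each $p_{\rho_x}(\cdot)$ is itself a distribution over leaves, the convex average $P_1$ is again normalized, so the entire composite alternative collapses into a single effective distribution and the problem becomes: from one observed leaf $\ell$, decide whether it was drawn from $P_0$ or $P_1$ under the uniform prior.

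Next I would observe that the leaf is a sufficient statistic for the decision: the algorithm's guess is (without loss of generality, since randomization only convexly averages deterministic rules) a function of $\ell$, hence induces a partition $\textnormal{leaf}(\mathcal{T}) = L_0 \sqcup L_1$ into leaves declared ``null'' and ``alternative.'' The success probability is then
\begin{equation}
p_{\mathrm{succ}} = \tfrac12\sum_{\ell\in L_0}P_0(\ell) + \tfrac12\sum_{\ell\in L_1}P_1(\ell).
\end{equation}
Substituting $\sum_{\ell\in L_0}P_0(\ell) = 1 - \sum_{\ell\in L_1}P_0(\ell)$ rearranges this to $p_{\mathrm{succ}} = \tfrac12 + \tfrac12\sum_{\ell\in L_1}\big(P_1(\ell) - P_0(\ell)\big)$, so the distinguishing advantage $2p_{\mathrm{succ}} - 1$ equals $\sum_{\ell\in L_1}\big(P_1(\ell)-P_0(\ell)\big)$. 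This is maximized by the Bayes (likelihood-ratio) rule $L_1 = \{\ell : P_1(\ell) > P_0(\ell)\}$, which yields exactly $\sum_\ell \big(P_1(\ell)-P_0(\ell)\big)_+ = d_{\text{\rm TV}}(P_1, P_0)$. Therefore every decision rule obeys
\begin{equation}
2p_{\mathrm{succ}} - 1 \;\le\; d_{\text{\rm TV}}\big(\mathbb{E}_{\mathcal{D}}[p_{\rho_x}(\ell)],\, p_{\mathds{1}/2^n}(\ell)\big) \;=\; \tfrac12\sum_{\ell\in\textnormal{leaf}(\mathcal{T})}\Big|\mathbb{E}_{\mathcal{D}}[p_{\rho_x}(\ell)] - p_{\mathds{1}/2^n}(\ell)\Big|,
\end{equation}
which is the claimed total-variation bound; equivalently $p_{\mathrm{succ}} \le \tfrac12 + \tfrac12\, d_{\text{\rm TV}}(P_1,P_0)$, once the distinguishing advantage is identified with the total variation distance between the two leaf distributions.

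This lemma is a standard packaging of Le Cam's method, so there is no genuinely hard step; the one point I would state explicitly is the legitimacy of replacing the composite alternative (a random $\rho_x \sim \mathcal{D}$) by the single mixture distribution $P_1$. This reduction rests only on convexity — a convex average of probability distributions is a probability distribution — and not on any special structure of $p_\rho$ in $\rho$. I would flag, however, that the multilinear (path-product) structure of $p_\rho(\ell)$ in the measurement functionals is exactly what later permits node-by-node control of $P_1(\ell) - P_0(\ell)$ via the martingale formalism, which is where the real work of each sample-complexity lower bound in the paper lives; Le Cam's lemma merely converts that per-node concentration into a statement about the global success probability.
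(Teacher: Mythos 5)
Your proof is correct, and there is in fact no paper proof to compare it against: the paper states this lemma as a known result, citing Le Cam/Yu and the learning-tree literature, so your write-up supplies the standard argument that the citation points to. Your route — collapsing the composite alternative into the single mixture $P_1(\ell) = \mathbb{E}_{\mathcal{D}}[p_{\rho_x}(\ell)]$ by convexity, reducing any (randomized) decision rule to a deterministic partition of the leaves, and identifying the optimal advantage with $\sum_{\ell}\bigl(P_1(\ell)-P_0(\ell)\bigr)_+ = d_{\mathrm{TV}}(P_1,P_0)$ — is exactly the canonical two-point method. One point you flag deserves emphasis, because it is a genuine (if harmless) defect in the lemma as stated: taken literally, ``success probability $\le d_{\mathrm{TV}}$'' is false, since random guessing succeeds with probability $1/2$ even when $d_{\mathrm{TV}}=0$. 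What your argument actually establishes is $p_{\mathrm{succ}} \le \tfrac12 + \tfrac12\, d_{\mathrm{TV}}(P_1,P_0)$, equivalently that the advantage $2p_{\mathrm{succ}}-1$ is bounded by $d_{\mathrm{TV}}$; your closing sentence correctly reinterprets the lemma this way. This reading is also the one the paper implicitly uses downstream, since every application (the oracle separations, the purity-testing bound, the Pauli tomography bounds) only invokes the implication that $d_{\mathrm{TV}} = o(1)$ rules out success probability $\ge 2/3$, which both versions deliver.
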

\noindent For a review of information-theoretic lower bound techniques including Le Cam's method, see e.g. \cite{Yu1997LeCam} and its application to learning trees in~\cite{chen2021exponentialseparationslearningquantum}. Our goal will be to argue that for the total variation bound in Lemma \ref{lemma:le_cam} to become appreciably large, the tree depth $T$ must grow exponentially in $n$. To do so, it suffices to bound the one-sided likelihood ratio.  Let us first define this ratio, and then subsequently explain why bounding it is useful for us.
\begin{definition}[One-sided likelihood ratio]
Assume a learning tree $\mathcal{T}$ and a many-vs.-one distinguishing problem with distribution $\mathcal D$. For every $\ell\in \textnormal{leaf}(\mathcal{T})$, the one-sided likelihood ratio is
\begin{equation}
    L(\ell) = \frac{\mathbb{E}_{\rho_x\sim \mathcal{D}}[p_{\rho_x}(\ell)]}{p_{\mathds{1}/2^n}(\ell)}\,.
\end{equation}
Given a concrete instance of $\rho_x$, we also define fixed-state edge and leaf likelihood ratios:
\begin{equation}
    L_{\rho_x}(s|u) = \frac{p_{\rho_x}(s|u)}{p_{\mathds{1}/2^n}(s|u)}\,, \quad
        L_{\rho_x}(\ell) = \frac{p_{\rho_x}(\ell)}{p_{\mathds{1}/2^n}(\ell)}
\end{equation}
\end{definition}

With these definitions in place, we regularly draw upon the following toolbox for quantum learning lower bounds.
\begin{lemma}[Toolbox for learning tree lower bounds]\label{lemma:toolbox}
Suppose $\mathcal{T}$ is a learning tree for a many-vs.-one distinguishing problem with distribution $\mathcal D$.
    \begin{enumerate}
        \item \textnormal{(Lemma 5.4 in \cite{chen2021exponentialseparationslearningquantum})} If $L(\ell) \geq 1-\delta$ for all $\ell\in \textnormal{leaf}(\mathcal{T})$, then we have the bound $d_{\text{\rm TV}}(\mathbb{E}_\mathcal{D}[p_{\rho_x}(\ell)],p_{\mathds{1}/2^n}(\ell)) \leq \delta$.
        \item \textnormal{(Lemma 7 in \cite{weiyuan_paulis})} If the condition
        \begin{equation}
            \mathbb{E}_{\rho\sim \mathcal D}\,\mathbb{E}_{s\sim p^{\mathds{1}/2^n}(s|u)} \left[\left(L(s|u) - 1\right)^2\right]\leq \delta
        \end{equation}
        is satisfied for all $u$ in $\mathcal{T}$, then:
        \begin{enumerate}
          \item For any $\rho_x$, there exists a constant $C>0$ such that $\textnormal{Pr}_{\ell\sim p^{\mathds{1}/2^n}(\ell)}[L_{\rho_x}(\ell)\!\leq\! 0.9] \leq 0.1 + C\delta T$.
          \item  $\mathcal{T}$ must have depth $T >\Omega(1/\delta)$ for the algorithm to succeed with probability $>2/3$. 
        \end{enumerate}
        \item \textnormal{Lemma 6 in \cite{weiyuan_paulis}} For any $\delta \in (0, 1)$, $$d_{\text{\rm TV}}(\mathbb{E}_\mathcal{D}[p_{\rho_x}(\ell)],p_{\mathds{1}/2^n}(\ell)) \leq \textnormal{Pr}_{\ell\sim p^{\mathds{1}/2^n}(\ell)}[L(\ell) \leq 1-\delta] + \delta\,.$$
        \end{enumerate}
\end{lemma}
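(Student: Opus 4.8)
The plan is to handle the three parts in increasing order of difficulty, since Parts~1 and~3 are elementary manipulations of the total variation distance, while Part~2 is the genuine martingale estimate that carries the weight of the lemma.

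\textbf{Parts 1 and 3.} First I would rewrite the total variation distance in one-sided form. Setting $q(\ell) = \mathbb{E}_{\mathcal{D}}[p_{\rho_x}(\ell)]$ and $p(\ell) = p_{\mathds{1}/2^n}(\ell)$, both are probability distributions over $\textnormal{leaf}(\mathcal{T})$, so $q(\ell)-p(\ell) = p(\ell)\,(L(\ell)-1)$ and, because the two masses are equal,
\[
d_{\text{\rm TV}}(q,p) = \sum_{\ell:\,L(\ell)<1} p(\ell)\,(1-L(\ell)).
\]
For Part~1, the hypothesis $L(\ell)\geq 1-\delta$ forces $1-L(\ell)\leq\delta$ on every leaf in the sum, so the bound is at most $\delta\sum_\ell p(\ell)=\delta$. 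For Part~3, I would split the same sum according to whether $L(\ell)\leq 1-\delta$ or $1-\delta<L(\ell)<1$: on the second set $1-L(\ell)<\delta$ contributes at most $\delta$, and on the first set I crudely bound $1-L(\ell)\leq 1$ and collect $\sum_{\ell:\,L(\ell)\leq 1-\delta}p(\ell) = \Pr_{\ell\sim p}[L(\ell)\leq 1-\delta]$. Adding the two pieces yields the claimed inequality.

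\textbf{Part 2.} The key observation is that for a fixed candidate $\rho_x$ the leaf ratio $L_{\rho_x}(\ell)$ is the terminal value of a nonnegative martingale under the reference measure $p_{\mathds{1}/2^n}$. Indeed, along a root-to-leaf path the ratio factorizes as $L_{\rho_x}(\ell)=\prod_{(u,s)} L_{\rho_x}(s|u)$, and each edge ratio has conditional mean $\mathbb{E}_{s\sim p_{\mathds{1}/2^n}(s|u)}[L_{\rho_x}(s|u)] = \sum_s p_{\rho_x}(s|u) = 1$, so the depth-indexed process $M_d = L_{\rho_x}(u_d)$ is a martingale with $M_0=1$. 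I would then track its second moment: with the tower property and the identity $\mathbb{E}_s[L^2] = 1 + \mathbb{E}_s[(L-1)^2]$, the second moment obeys $\mathbb{E}[M_{d+1}^2\mid\mathcal{F}_d] = M_d^2\,(1+V_u)$, where $V_u = \mathbb{E}_{s\sim p_{\mathds{1}/2^n}(s|u)}[(L_{\rho_x}(s|u)-1)^2]$. Feeding in the per-node bound $V_u\leq\delta$ gives
\[
\mathbb{E}_{p_{\mathds{1}/2^n}}[M_T^2]\leq(1+\delta)^T\leq e^{\delta T}, \qquad \Var(M_T)=\mathbb{E}[M_T^2]-1 = O(\delta T)
\]
for $\delta T = O(1)$. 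A one-sided Chebyshev (Cantelli) inequality applied to the mean-one variable $M_T$ then yields $\Pr_{\ell\sim p_{\mathds{1}/2^n}}[L_{\rho_x}(\ell)\leq 0.9]\leq C\delta T$, which implies the stated (weaker) bound $0.1+C\delta T$ of claim~(a). For claim~(b), rather than route through (a) I would use the chi-squared bound directly: by Jensen, $\mathbb{E}_p[L(\ell)^2]\leq \mathbb{E}_{\rho_x}\mathbb{E}_p[L_{\rho_x}(\ell)^2]\leq e^{\delta T}$, so $d_{\text{\rm TV}}(q,p)\leq\tfrac12\sqrt{\mathbb{E}_p[L(\ell)^2]-1}=O(\sqrt{\delta T})$; combined with Le Cam's method (Lemma~\ref{lemma:le_cam}), success probability exceeding $2/3$ forces $d_{\text{\rm TV}}=\Omega(1)$, hence $\delta T=\Omega(1)$, i.e.\ $T=\Omega(1/\delta)$.

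\textbf{Main obstacle.} The delicate point is the second-moment control in Part~2, and specifically the passage between the \emph{averaged} per-node hypothesis $\mathbb{E}_{\rho\sim\mathcal{D}}\mathbb{E}_s[(L_\rho(s|u)-1)^2]\leq\delta$ and the \emph{per-instance} recursion above, which wants the node-wise variance bound $V_u\leq\delta$ to hold for the specific $\rho_x$ driving the martingale. For the symmetric candidate families to which we apply the toolbox (e.g.\ the Pauli-identification ensemble) $V_u$ is independent of the candidate, so the averaged bound is already uniform and the per-state recursion goes through verbatim. In the general case I would instead carry the expectation over $\rho_x\sim\mathcal{D}$ through the recursion and control $\mathbb{E}_{\rho_x}\mathbb{E}_{p_{\mathds{1}/2^n}}[(L_{\rho_x}(\ell)-1)^2]$, the price being that the pointwise statement~(a) is replaced by its $\mathcal{D}$-average; this coupling between the weight $M_d^2$ and the node variance $V_u$ is the only step requiring genuine care. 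The remaining hypotheses are benign: nonnegativity of $M_T$ (needed for Cantelli) is automatic since every $L_{\rho_x}$ is a ratio of probabilities.
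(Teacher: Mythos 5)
Parts 1 and 3 of your write-up are correct, and they are the easy parts (note the paper never proves this lemma at all --- all three items are imported from the cited references --- so your proposal is being judged on its own merits). The genuine gap is in Part 2, at exactly the spot you label the ``main obstacle'' and then wave away. The hypothesis bounds only the $\mathcal{D}$-averaged node variance $\mathbb{E}_{\rho\sim\mathcal D}[V_u(\rho)]\le\delta$, while your recursion $\mathbb{E}[M_{d+1}^2\mid\mathcal F_d]=M_d^2(1+V_{u_d})$ needs $V_u(\rho_x)\le\delta$ for the particular $\rho_x$ driving the martingale. Your first patch is simply false: for the Pauli-identification ensemble, $V_u(\rho_P)$ depends strongly on $P$ --- if the POVM at $u$ is a computational-basis measurement, then $L_{\rho_P}(s|u)\equiv 1$ for $P=X^{\otimes n}$ (its diagonal vanishes, so the outcome distribution is uniform and $V_u(\rho_P)=0$), whereas $V_u(\rho_P)>0$ for diagonal $P$; symmetry of the ensemble does not make $V_u$ candidate-independent. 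Your second patch (``carry $\mathbb{E}_{\rho_x}$ through the recursion'') is blocked by precisely the coupling you name but do not resolve: iterating the recursion produces cross terms $\mathbb{E}_\rho\mathbb{E}_p[M_d^2\,V_{u_d}(\rho)]$ in which the weight $M_d^2$ and the variance $V_{u_d}(\rho)$ are correlated functions of $\rho$; the hypothesis controls $\mathbb{E}_\rho[V_u(\rho)]$ node by node but says nothing about this weighted average, and $M_d$ can be exponentially large, so no Cauchy--Schwarz step closes it. The same unproved inequality, $\mathbb{E}_\rho\mathbb{E}_p[L_\rho(\ell)^2]\le e^{\delta T}$, is also the linchpin of your route to claim (b), so both halves of Part 2 are open. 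This matters for the paper: claim (b) is invoked with genuinely averaged hypotheses (the memoryless and $(c,k)$ Pauli lower bounds), while claim (a) is only ever applied when $\mathcal D$ is a point mass (the purity-testing reformulation), which is the one case where your argument goes through.

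The missing idea is truncation by optional stopping, which also explains the additive $0.1$ in claim (a) that your Cantelli bound never produces --- a telltale sign the intended proof has a different structure. For each $\rho$, stop the martingale at $\sigma_\rho=\min\{d: M^\rho_d\ge 10\}$. Doob's maximal inequality for a nonnegative mean-one martingale gives $\Pr_p[\sigma_\rho\le T]\le 1/10$; that is the $0.1$. Before stopping, the weight is capped, $M_d^2\,\mathbf{1}[\sigma_\rho>d]\le 100$, so orthogonality of martingale increments gives $\mathbb{E}_\rho\mathbb{E}_p[(M^\rho_{T\wedge\sigma_\rho}-1)^2]=\sum_d\mathbb{E}_\rho\mathbb{E}_p\big[\mathbf{1}[\sigma_\rho>d]\,M_d^2\,V_{u_d}(\rho)\big]\le 100\sum_d\mathbb{E}_p\big[\mathbb{E}_\rho[ V_{u_d}(\rho)]\big]\le 100\,\delta T$, where the averaged hypothesis now suffices because under the reference measure $p$ the path (hence $u_d$) does not depend on $\rho$. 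Chebyshev then yields $\mathbb{E}_{\rho\sim\mathcal D}\,\Pr_p[L_\rho(\ell)\le 0.9]\le 0.1+C\delta T$, which is the honest, $\mathcal D$-averaged form of claim (a); the literal ``for any $\rho_x$'' form (with a universal $C$) genuinely requires the per-instance hypothesis and is how the paper uses it. Claim (b) then follows without your chi-square step: by convexity, $d_{\mathrm{TV}}(\mathbb{E}_\rho[p_\rho],p)\le\mathbb{E}_\rho[d_{\mathrm{TV}}(p_\rho,p)]$; applying your (correct) Part 3 argument instance-by-instance with threshold $0.9$ gives $d_{\mathrm{TV}}\le \mathbb{E}_\rho\Pr_p[L_\rho(\ell)\le 0.9]+0.1\le 0.2+C\delta T$, and Le Cam's bound then forces $\delta T=\Omega(1)$, i.e.\ $T=\Omega(1/\delta)$, for success probability $2/3$.
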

\noindent The first lemma in our toolbox states that a lower bound on the one-sided likelihood ratio for all possible outcomes of our algorithm is sufficient to bound the total variation distance. While powerful in simple cases, this lemma requires us to track the likelihood ratio over all paths in the tree. The second and third lemmas relax this requirement in different ways.

The second lemma tells us that if the likelihood ratio induced by \textit{intermediate nodes} of the tree sharply concentrates (indicating that a single experiment carries little information about the unknown state), the ratio over leaves is likely close to $1$, and the tree depth must then be large to distinguish the hypotheses. Finally, our third lemma allows us to bound the total variation distance by a \textit{probabilistic} leaf likelihood ratio. That is, if we consider a fixed but arbitrary root-to-leaf path and argue that, with high probability the likelihood ratio on \textit{that path} is close to $1$, we can utilize Lemma \ref{lemma:le_cam}. This approach is useful when the bound on intermediate nodes fails, which may happen when a small, nonzero number of intermediate nodes has large likelihood ratio fluctuations despite only comprising a vanishing fraction of the eventual outcomes. For a detailed overview of these ``edge-based" and ``path-based" approaches, see \cite{chen2021exponentialseparationslearningquantum}.

\subsection{Symplectic stabilizer formalism}\label{sec:symplectic_paulis}
It is well known that Clifford unitaries on $n$ qubits are precisely those operations that preserve the standard symplectic form on $\mathbb{F}_2^{2n}$. In this work, we briefly recall the basic notation from this formalism.

Any $n$-qubit Pauli operator $P$ can be represented by a binary vector $p = (p^x \mid p^z) \in \mathbb{F}_2^{2n}$, where $p^x, p^z \in \mathbb{F}_2^n$, such that
\begin{equation}
P = i^{p^x \cdot p^z} X^{p^x} Z^{p^z}\,.
\end{equation}
Products of Pauli operators are governed by the symplectic inner product
\begin{equation}
\langle p, q \rangle = p^x \cdot q^z - p^z \cdot q^x\,,
\end{equation}
with all arithmetic performed over $\mathbb{F}_2$. We also define the phase term $\langle p \rangle = p^x \cdot p^z$.

In this notation, the POVM elements of the $n$-qubit Bell measurement, written in the Pauli basis, are
\begin{equation}
\label{eq:Bell_POVM}
\Pi_s = \frac{1}{4^n}\sum_{P \in \mathcal{P}_n} (-1)^{\langle s, p \rangle + \langle p \rangle} \, P \otimes P\,,
\end{equation}
where each $\Pi_s$ is labeled by a symplectic bitstring $s \in \mathbb{F}_2^{2n}$, and for each Pauli $P$ in the sum we write $p \in \mathbb{F}_2^{2n}$ for its corresponding symplectic representation. Equivalently, one may regard the sum as running over all $p \in \mathbb{F}_2^{2n}$.

The following measurement subroutine will be very useful.
\begin{definition}[Bell measurement subroutine]
\label{def:BM_subroutine}
The Bell POVM on $2n$ qubits, $\{\Pi_s\}_{s \in \mathbb{F}_2^{2n}}$, has $4^n$ elements, each characterized by a bitstring $s \in \mathbb{F}_2^{2n}$ and defined by~\eqref{eq:Bell_POVM}. The subroutine \textnormal{\textsc{BellMeasure}}$(k,\rho)$ applies the Bell POVM on $2k$ qubits to a state $\rho$ on $2k$ qubits, and returns a bitstring $s \in \mathbb{F}_2^{2k}$ sampled from the distribution
\begin{equation}
\mathrm{Pr}[s] = \mathrm{tr}(\Pi_s \rho)\,.
\end{equation}
\end{definition}

\section{Complexity-Theoretic Separations}
\label{app: complexity_sep}
\subsection{Separating \textsf{NISQ} and \textsf{NBQP}}
The fact that $\textsf{NISQ}\subseteq \textsf{NBQP}$ is immediate, since $\textsf{NBQP}_\lambda$ can already run any $\lambda$-noisy quantum circuit. Given a classical step in a $\textsf{NISQ}$ algorithm, $\textsf{NBQP}$ can perfectly simulate the classical computation on a $\lambda$-noisy quantum register for any constant $\lambda<\lambda_\text{th}$, where $\lambda_\text{th}$ is the threshold for a constant-rate fault-tolerant quantum error correction (FT-QEC) scheme as in $\cite{aharonov1999faulttolerantquantumcomputationconstant}$, with polynomial overhead. Now we will show that this inclusion is actually strict relative to an oracle, proving Theorem \ref{Thm:NISQ_vs_NBQP}.

\subsubsection{Encoded Shuffling Simon's Problem}

The oracle we use to separate $\textsf{NISQ}$ and $\textsf{NBQP}$ is a modification of the Shuffling Simon's Problem from \cite{Chia_2023}. First we give some preliminary definitions.

\begin{definition}[Simon's function]
A two-to-one function $f:\mathbb{Z}_2^n\rightarrow \mathbb{Z}_2^m$ is a Simon's function if there is some hidden secret $s\in \mathbb{Z}_2^n$ such that for every $x\in \mathbb{Z}_2^n$, $f(x) = f(x\oplus s)$.
\end{definition}
The standard Simon's search problem is: given oracle access to a Simon's function $f$, recover the hidden string $s$. We will consider a more difficult task in which the domain of the function $f$ is unknown.

\begin{definition}[$d$-Level Shuffling of $f$]
\label{def:dLevelShuffling}
Consider a function $f:\mathbb{Z}_2^n\rightarrow \mathbb{Z}_2^m$. A d-level shuffling of $f$ is a sequence of functions $\mathcal{F} = (f_0, f_1, ..., f_d)$ with the following property. For all $i = 0, 1, ..., d-1$, $f_i$ is a permutation acting on $\mathbb{Z}_2^{n(d+2)}$. Let $S_{d}$ denote the image of $f_{d-1}\circ \cdots \circ f_0$ acting upon the first $2^n$ lexicographically-ordered bitstrings in $\mathbb{Z}_2^{n(d+2)}$. Then, $f_d$ is defined so that for all $x\in S_{d}$, we have
$$f_d(x) = f((f_{d-1}\circ \cdots \circ f_0)^{-1}(x))$$
where the lexicographically-ordered bitstrings in $ \mathbb{Z}_2^{n(d+2)}$ are indentified with $\mathbb{Z}_2^{n}$ in the natural way. For any $x\notin S_{d}$, $f_d(x) = \ \perp$, outputting no logical information. Let $\textnormal{\textbf{SHUF}}(f, d)$ denote all d-level shufflings of $f$, and let $D(f, d)$ denote the distribution over $\textnormal{\textbf{SHUF}}(f, d)$ obtained by sampling all $f_0,...,f_{d-1}$ uniformly at random from the set of permutations. 
\end{definition}
A $d$-level shuffling of $f$ embeds the domain of $f$ in a space $2^{n(d+1)}$ times larger, then permutes the embedding to hide the domain. In the problem we construct, we will need quantum oracle access to a $d$-level shuffling defined according to the following.

\begin{definition}[$d$-Level Quantum shuffling oracle] \label{def:shuffling_oracle}
Fix a function $f:\mathbb{Z}_2^n\rightarrow \mathbb{Z}_2^m$. The quantum shuffling oracle $\mathcal{O}_{f, d}$ is the quantum channel that acts on an input state $\ket{\psi}$ of the form 
\begin{equation}
    \ket{\psi} = \left(\bigotimes_{i=0}^d \ket{i ,x_i}\right) \otimes \ket{y}
\end{equation}
with all $x_i\in \mathbb{Z}_2^{n(d+2)}$ as
\begin{equation}
    \mathcal{O}_{f, d}(\ketbra{\psi}{\psi}) = \mathbb{E}_{\mathcal{F}\sim D(f, d)}[\mathcal{F}\ketbra{\psi}{\psi}\mathcal{F}^\dagger]
\end{equation}
where, in a slight abuse of notation, we let $\mathcal{F}$ be a unitary corresponding to a shuffling which acts as
\begin{equation}
     \mathcal{F}\ket{\psi} = \bigotimes_{i=0}^d \ket{i ,x_i}\ket{y\oplus f_i(x_i)}
\end{equation}
Note that $\mathcal{F}$ is classically sampled from $D(f, d)$, and $O_{f, d}$ applies the same shuffling $\mathcal{F}$ upon each query. Here and below, we assign $f_i(x_i)$ a fixed bit string outside of all legitimate outputs when $f_i(x_i)=\perp$ to make $\ket{y\oplus f_i(x_i)}$ a valid quantum state.
\end{definition}

We will now use shuffling to construct a version of Simon's problem that is challenging for a quantum device with bounded depth, as done in \cite{Chia_2023}, but for which an $\textsf{NBQP}$ algorithm can learn the hidden secret even when noise is applied to the quantum shuffling oracle.

\begin{definition}[Encoded $d$-level shuffled Simon's oracle]
Let $f:\mathbb{Z}_2^n\rightarrow \mathbb{Z}_2^n$ be a Simon function, and let $d$ denote the shuffling level. Moreover, let $\mathsf{QEC}$ be a fixed, uniform fault-tolerant quantum error correction (FT-QEC) scheme with encoding unitary channel $\mathcal{U}_{\mathrm{enc}} = U_{\mathrm{enc}}^\dagger (\cdot){U}_{\mathrm{enc}}$ acting on $m(n) = \textnormal{poly}(n)$ physical qubits, of depth at most $\ell = \ell(n)$, and threshold $\lambda_{\mathrm{th}}$. These parameters are chosen such that \textnormal{\textsf{QEC}} can protect a logical quantum circuit on $n$ qubits of depth $3d$ with probability at least $0.99$. Let $\mathcal{C} := \text{\rm span}\{\mathcal{U}_{\mathrm{enc}}\ket{x}\ket{0}^{\otimes m-n}: x\in \{0, 1\}^n\}$ be the encoded logical basis, and let $\mathcal{C}^\perp$ be the orthogonal complement to the logical codespace.
    
    For each $i\in\{0,\ldots,d\}$ define the \emph{encoded oracle channel}
    \begin{align}
    \mathcal{O}^{\mathrm{enc}}_{f, d} := \mathcal{U}_{\mathrm{enc}} \circ (\mathcal{O}_{f, d}\otimes\ketbra{0}{0}^{m-n})\ \oplus \mathds{1}_{\mathcal{C}^\perp}\,,
    \end{align}
    where $\mathcal{O}_{f, d}$ is the standard quantum shuffling oracle for $f$.
\end{definition}

The encoded oracle acts on the logical codespace of an FT-QEC scheme the same way the standard oracle acts on computational basis states, and acts trivially on states that lie outside the codespace. With this, we can present the oracle problem we use to separate $\textsf{NISQ}$ from $\textsf{NBQP}$.

\begin{definition}[Encoded $d$-Shuffling Simon's Problem]\label{def:encoded-codehash}
Let $f:\mathbb{Z}_2^n\rightarrow \mathbb{Z}_2^n$ be an unknown Simon function with random secret $s$ and $d = \textnormal{poly}(n)$. Given quantum access to the encoded d-level shuffled Simon's oracle $\mathcal{O}^{\mathrm{enc}}_{f, d}$, the goal of the Encoded $d$-Shuffling Simon's Problem $\textnormal{(\textsf{Enc-}d\textsf{-SSP})}$ is to learn $s$.
\end{definition}

The following lemma is immediate.
\begin{lemma}\label{lemma:NBQP_upper}
$\textnormal{\textsf{Enc-}}d\textnormal{\textsf{-SSP}} \in \textnormal{\textsf{NBQP}}^{\mathcal{O}_{f, d}}$.
\end{lemma}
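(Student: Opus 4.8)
The plan is to reduce to the noiseless bounded-depth solver of \cite{Chia_2023} and then lift it to the logical level of the fault-tolerant scheme $\mathsf{QEC}$ used to define the encoded oracle. First I would invoke (the analog of) Theorem~4.11 of \cite{Chia_2023}: given black-box access to the unencoded shuffling oracle $\mathcal{O}_{f,d}$, there is a \emph{noiseless} quantum circuit $\mathcal{A}$ of depth $O(d)$ that recovers the secret $s$ with probability at least $2/3$, using $\mathrm{poly}(n)$ qubits since $d=\mathrm{poly}(n)$. Its depth is dictated by the need to sequentially compose the $d$ shuffling levels so as to expose the embedded Simon's function, after which standard Simon post-processing extracts $s$.

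Next I would describe the $\nbqp$ simulation of $\mathcal{A}$. Since the depolarizing rate $\lambda$ is below the threshold $\lambda_{\mathrm{th}}$ of $\mathsf{QEC}$, an $\nbqp$ machine runs $\mathcal{A}$ at the logical level: each logical qubit is encoded into the $m(n)=\mathrm{poly}(n)$ physical qubits of $\mathsf{QEC}$, every non-oracle gate of $\mathcal{A}$ is replaced by its fault-tolerant gadget, and QEC rounds (stabilizer extraction plus recovery) are interleaved between gates. Each oracle call of $\mathcal{A}$ is implemented by a single call to $\mathcal{O}^{\mathrm{enc}}_{f,d}$, which by construction acts on the codespace $\mathcal{C}$ exactly as $\mathcal{O}_{f,d}$ acts on bare computational-basis inputs (up to the fixed encoding $\mathcal{U}_{\mathrm{enc}}$) and as the identity on $\mathcal{C}^\perp$. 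Thus, at the logical level, the simulation reproduces $\mathcal{A}$ call-for-call.

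The remaining point is to certify that this logical simulation is accurate under the $\nbqp$ noise model, in which every layer---including the oracle layers $\mathcal{O}^{\mathrm{enc}}_{f,d}$---is bracketed by depolarizing channels $\mathcal{D}_\lambda$. Here I would apply the threshold theorem \cite{aharonov1999faulttolerantquantumcomputationconstant}: the logical circuit has depth $O(d)\le 3d$, which is exactly the regime for which $\mathsf{QEC}$ is guaranteed (by the parameter choice in the definition of $\mathcal{O}^{\mathrm{enc}}_{f,d}$) to protect an $n$-qubit logical computation with success probability at least $0.99$. The crucial structural input is that $\mathcal{O}^{\mathrm{enc}}_{f,d}$ acts as the identity on $\mathcal{C}^\perp$: a correctable physical error present before an oracle call therefore remains a correctable error afterward, so the encoded oracle behaves as a genuine fault-tolerant logical gate rather than scrambling errors out of the correctable set. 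Consequently the standard single-fault-to-correctable-error analysis applies verbatim to the oracle layers, and the depolarizing layers $\mathcal{D}_\lambda$ surrounding each call are suppressed by the subsequent QEC round.

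Combining these ingredients, the logical output of the $\nbqp$ simulation agrees with the output of $\mathcal{A}$ except with probability at most $0.01$, so a constant number of independent repetitions followed by a majority vote recovers $s$ with probability above $2/3$, using $\mathrm{poly}(n)$ physical qubits and $\mathrm{poly}(n)$ depth. This yields the claimed membership of $\textsf{Enc-}d\textsf{-SSP}$ in $\nbqp$ relative to the encoded shuffling oracle. I expect the only genuinely delicate step to be the middle one---verifying that the $\nbqp$ depolarizing layers bracketing the oracle call stay correctable---and this is precisely what the $\mathds{1}_{\mathcal{C}^\perp}$ summand in the definition of $\mathcal{O}^{\mathrm{enc}}_{f,d}$ is engineered to guarantee; an arbitrary oracle lacking this property could propagate errors unboundedly and defeat the fault-tolerance argument.
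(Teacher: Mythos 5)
Your proposal is correct and follows essentially the same route as the paper's proof: invoke Theorem~4.11 of \cite{Chia_2023} for a noiseless depth-$O(d)$ solver, then simulate it fault-tolerantly via the threshold theorem \cite{aharonov1999faulttolerantquantumcomputationconstant}, using the fact that $\mathsf{QEC}$ is parameterized to protect logical circuits of depth $3d$. Your additional observations---that the $\mathds{1}_{\mathcal{C}^\perp}$ summand keeps the encoded oracle compatible with the fault-tolerance analysis, and that a final majority vote restores the $2/3$ success threshold---are details the paper leaves implicit, but they do not change the argument.
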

\begin{proof}
    Theorem 4.11 of \cite{Chia_2023} gives a quantum algorithm that solves the \textnormal{\textsf{Enc-}}$d$\textnormal{\textsf{-SSP}} problem with a noiseless quantum circuit of depth $2d+1$. For any constant $\lambda<\lambda_{\textnormal{th}}$, a $\lambda$-noisy quantum circuit can use quantum fault-tolerance \cite{aharonov1999faulttolerantquantumcomputationconstant} to simulate this noiseless algorithm with logarithmic overhead in the depth of the quantum circuit, recalling that the encoding scheme \textsf{QEC} can, with high probability, execute a circuit of depth $3d> 2d+1$ without logical errors.
\end{proof}

In the remainder of this section, we show that no $\textsf{NISQ}$ algorithm making polynomially many queries can solve \textsf{Enc-}d\textsf{-SSP} with probability $\geq 2/3$. The proof strategy will be to first show that any $\textsf{BPP}^{\textsf{QNC}_d}$ has a success probability exponentially small in $n$, following the argument in \cite{Chia_2023}. Then we argue that for $d$ and a number of queries both polynomial in $n$, the outcome distribution of any \textsf{NISQ} algorithm can be simulated by a $\textsf{BPP}^{\textsf{QNC}_d}$ algorithm to vanishingly small total variation distance.  

\subsubsection{Bounding quantum success probability with classical combinatorics}
We first prove necessary lemmas presented in \cite{Chia_2023} for $d\textsf{-SSP}$, now for the encoded version of the problem. The definitions we present are largely adapted from \cite{arora2023quantum, Chia_2023}. 

Note that all information about $f$ is contained within $S_{d}$, and an algorithm trying to learn the secret $s$ must perform queries in this unknown subspace. Looking at the preimage of $S_{d}$ under $f_{d-1}$, then the preimage of this subpace under $f_{d-2}$, and iteratively continuing to look at preimages under each random permutation until $f_0$, would reveal to us the subset of the initial query subspace which gets mapped to $S_{d}$. Hence, all of these preimages form ``hidden domains," such that an algorithm that does not query within the hidden domains can never learn anything about $f$ by construction. This intuition is formalized by the following definition.

\begin{definition}[Hidden domains and wrappers] Fix a function $f: \mathbb{Z}_2^n\rightarrow\mathbb{Z}_2^n$ and let $\mathcal{F} = (f_0,...,f_d)$ be some $d$-level shuffling of $f$. Let $S_0 = \{0, 1,...,2^n-1\}$ denote the first $2^n$ lexicographically ordered bitstrings in $\mathbb{Z}^{n(d+2)}$. For $i = 1,2,...,d$ the hidden domain $S_i$ is given by $f_i\circ f_{i-1}\circ \cdots \circ f_0(S_0)$.

The level-$k$ hidden wrappers $S_0^{(k)},...,S_d^{(k)}$ for $k = 0, 1,... d$ are defined as follows.
\begin{enumerate}
    \item $S_i^{(0)} = \mathbb{Z}^{n(d+2)}$, the entire enlarged space, for all $i = 0, 1, ..., d$.
    \item For $k = 1,...,d$, for $j = k,...,d$, the wrapper set $S_j^{(k)}$ is chosen to satisfy $S_j^{(k)} \subset S_j^{(k-1)}$ such that $|S_j^{(k)}|/ |S_j^{(k-1)}| \leq 2^{-n}$, $S_j\subseteq S_j^{(k)}$, and $f_j(S_{j-1}^{(k)}) = S_j^{(k)}$.
\end{enumerate}   
For our purpose, define the collection of level-k hidden wrappers $\bar{S}^{(k)} = (S_k^{(k)}, S_{k+1}^{(k)},...,S_d^{k})$
\end{definition}
Hidden domains, intuitively, track the small subset of queries which reveal actual information about $f$. Hidden wrappers are nested sets of exponentially growing size, all containing the corresponding hidden domain. These wrappers have the additional property that under the permutation $f_i$, the $i-1^{st}$ hidden wrapper is mapped onto the $i^{th}$ for every level of the nesting. The domains are the crucial hidden information, while the wrappers are proof tools we utilize in conjunction with \textit{shadow shufflings}, defined next.
\begin{definition}[Shadow shuffling and functions]
Fix a $d$-level shuffling $\mathcal{F} = (f_0,...,f_d)$ of $f$. Then fix collections of level-k hidden wrappers $\bar{S}^{(k)}$ for all $k = 0, 1,...,d$. The shadow shuffling $\mathcal{F}_{\text{\rm sh}}^{(k)}$ is the sequence
\begin{equation}
\mathcal{F}_{\text{\rm sh}}^{(k)} = (f_0,...,f_k,g_{k+1}^{(k)},..., g_d^{(k)})
\end{equation}
where each $g_i$ is a shadow function, defined as 
\begin{align}
g_i^{(k)}(x) = \begin{cases}
 f_i(x) & \text{\rm if } x \notin S_i^{(k)} \\
\perp & \text{\rm else }
\end{cases}.
\end{align}
\end{definition}
\noindent The shadow shuffling $\mathcal{F}_{\text{\rm sh}}^{(k)}$ is a map that, when any of its first $k$ shadow functions are queried within a hidden wrapper, provides no information about the secret $s$; hence, if an algorithm only had access to $\mathcal{F}_{\text{\rm sh}}^{(k)}$, it would need to already know something about the hidden domain $S_k$ in order to proceed. Using this technology, we can argue that any $\textsf{BPP}^{\textsf{QNC}_d}$ algorithm can have its queries to the shuffler replaced by shadow shufflers, without changing its output distribution substantially; but such an algorithm, by construction, knows nothing about the domain of $f_d$.

To proceed, let us formally state the action of the quantum unitary corresponding to a fixed shuffling Simon's function $\mathcal{F} = (f_0,...,f_d)$ as in Definition \ref{def:shuffling_oracle}. When referring to a particular shuffling $\mathcal{F}$ rather than the classical mixture oracle $O_{f,d}$, we utilize the encoded unitary shuffling oracle $\mathcal{F}^{\textnormal{Enc}}$ defined analogously as
\begin{equation}\label{eq:f_oracle}
    \mathcal{F}^{\textnormal{Enc}} = \mathcal{U}_{\textnormal{Enc}} \circ (\mathcal{F} \otimes \ketbra{0}{0}^{m-n}) \oplus \mathds{1}_{\mathcal{C}^\perp} 
\end{equation}

Then we are armed with all the necessary definitions to prove a lemma that will allow us to bound the success probability of a quantum algorithm using purely classical combinatorics over the underlying classical vector space on $n(d+2)$-bitstrings. Using this, we will no longer need to concern ourselves with the enlarged code Hilbert space.

\begin{lemma}[Oneway-to-Hiding (O2H) Lemma for Encoded Oracle] \label{lemma:o2h}
    Pick some $k, d \in \mathbb{N}$ with $k<d$. 
    \begin{enumerate}
    \item Let $\mathcal{F} = (f_0,...,f_d)$ be a d-level shuffling Simon's function, and fix collections of $k$-level hidden wrappers $\bar{S}^{(k)}$ for $k= 1,...,d$. Then let $\mathcal{F}_{\text{\rm sh}}^{(k)}$ be the shadow of $\mathcal{F}$.
    \item Suppose $\textnormal{Pr}[x\in S_i^{(k)}|x \in S_i^{(k-1)}] \leq p$ for all $i, k$. 
    \item Define any input state with density matrix $\rho$, and any unitary $U$ which makes $q$ queries to $\mathcal{F}$, such that $\rho$ and $U$ are uncorrelated to $\bar{S}^{(k)}$ and $\mathcal{F}$ restricted to $\bar{S}^{(k)}$.
    \end{enumerate}
    We let $\mathcal{F}^{\textnormal{Enc}}$ denote the corresponding encoded quantum oracle as defined in Definition \ref{def:shuffling_oracle} and Eq.~\eqref{eq:f_oracle}, with encoded quantum shadow $\mathcal{F}_{\text{\rm sh}}^{(k), \textnormal{Enc}}$. Moreover we we take $\mathcal{U}(\rho) := U \rho U^\dagger$.  Let $\Pi_s$ denote a projector onto encoded computational basis string $s$. Then 
    \begin{equation}\label{eq:o2h}
        \left|\tr[\Pi_s \mathcal{F}^{\textnormal{Enc}} \mathcal{U} (\rho)] - \tr[\Pi_s \mathcal{F}_{\text{\rm sh}}^{(k), \textnormal{Enc}} \mathcal{U} (\rho)]\right| \leq \sqrt{2qp}\,.
    \end{equation}
    While we only need $\Pi_s$ to be a computational basis measurement, this inequality holds for any generic POVM elements. We have taken the oracle and $U$ to act as channels in Eq.~\eqref{eq:o2h}.
\end{lemma}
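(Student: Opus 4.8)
The plan is to prove the One-way-to-Hiding bound \eqref{eq:o2h} by reducing the encoded statement to a statement about a single unitary oracle acting on the underlying classical vector space, and then invoking the standard O2H technique based on the difference projector between $\mathcal{F}$ and its shadow $\mathcal{F}_{\text{\rm sh}}^{(k)}$. First I would observe that by the structure of the encoded oracle in \eqref{eq:f_oracle}, the channels $\mathcal{F}^{\textnormal{Enc}}$ and $\mathcal{F}_{\text{\rm sh}}^{(k),\textnormal{Enc}}$ agree everywhere on $\mathcal{C}^\perp$ (both act as the identity there) and differ only on the logical codespace $\mathcal{C}$, where they act as the respective unitaries conjugated by $\mathcal{U}_{\mathrm{enc}}$. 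Since $\mathcal{U}_{\mathrm{enc}}$ is a fixed isometric encoding that commutes with the measurement projector structure (the encoded computational basis projectors $\Pi_s$ pull back to bare projectors under $U_{\mathrm{enc}}$), the left-hand side of \eqref{eq:o2h} equals the same quantity computed for the bare oracles $\mathcal{F}$ versus $\mathcal{F}_{\text{\rm sh}}^{(k)}$ on the logical register, up to the trivial action on $\mathcal{C}^\perp$ which cancels identically. This reduces the claim to the unencoded O2H bound.

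The core step is then the standard O2H argument applied to the bare oracles. I would note that $\mathcal{F}$ and $\mathcal{F}_{\text{\rm sh}}^{(k)}$ differ only when an input $x$ to one of the functions $f_{k+1},\dots,f_d$ lies in the corresponding hidden wrapper $S_i^{(k)}$; by the definition of the shadow functions $g_i^{(k)}$, they coincide for all $x \notin S_i^{(k)}$. Define the projector $\Pi_{\mathrm{diff}}$ onto the span of all computational basis query inputs lying in the relevant wrappers $S_i^{(k)}$. The key operator inequality underlying O2H is that for any state $\ket{\phi}$ the two oracles satisfy $\|(\mathcal{F}-\mathcal{F}_{\text{\rm sh}}^{(k)})\ket{\phi}\| \le 2\,\|\Pi_{\mathrm{diff}}\ket{\phi}\|$. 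Applying a hybrid/telescoping argument over the $q$ queries and using Cauchy--Schwarz, the difference in acceptance probabilities is bounded by $2\sqrt{q}\cdot\sqrt{\mathbb{E}[\,\|\Pi_{\mathrm{diff}}\ket{\phi_j}\|^2\,]}$, where the expectation is over the query index $j$ and the intermediate states $\ket{\phi_j}$.

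It remains to bound the expected weight that the algorithm's query states place on the difference region. This is exactly where hypotheses (2) and (3) enter: because $\rho$ and $U$ are uncorrelated with $\bar{S}^{(k)}$ and with $\mathcal{F}$ restricted to $\bar{S}^{(k)}$, the algorithm has no information distinguishing elements of the wrapper $S_i^{(k)}$ from its complement within $S_i^{(k-1)}$, so conditioned on a query landing in the outer wrapper $S_i^{(k-1)}$, the probability it lands in the inner wrapper $S_i^{(k)}$ is at most $p = \textnormal{Pr}[x\in S_i^{(k)} \mid x\in S_i^{(k-1)}]$. Hence the per-query difference weight is bounded by $p$, and summing over queries yields $\mathbb{E}[\,\|\Pi_{\mathrm{diff}}\ket{\phi_j}\|^2\,] \le p$, so the overall bound becomes $2\sqrt{q}\cdot\sqrt{p}\cdot\tfrac{1}{\sqrt 2} = \sqrt{2qp}$ after tracking constants carefully through the hybrid argument.

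The main obstacle I anticipate is making the independence hypothesis (3) rigorous enough to justify the per-query bound of $p$: one must argue that the algorithm's intermediate state $\ket{\phi_j}$, which may depend on earlier oracle responses, genuinely places no more than $p$ conditional weight on the wrapper region. The clean way to handle this is to average over the random choice of the wrappers (equivalently, over the random permutations defining the shuffling restricted to $\bar{S}^{(k)}$) \emph{before} the algorithm runs, so that by the independence assumption the wrapper $S_i^{(k)}$ is a uniformly random $2^{-n}$-fraction subset of $S_i^{(k-1)}$ that is statistically independent of $\ket{\phi_j}$; then the conditional weight is $p$ in expectation by linearity, and no adaptivity issue arises. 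I would also take care that the $\perp$ outputs are handled consistently with the fixed out-of-range encoding fixed in Definition~\ref{def:shuffling_oracle}, so that the shadow oracle remains a well-defined unitary and the telescoping hybrids are valid; the encoding reduction at the start guarantees the $\mathcal{C}^\perp$ sector contributes nothing to the difference.
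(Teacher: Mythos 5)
You have misread the structure of the lemma, and this produces a genuine gap rather than a fixable constant. In Eq.~\eqref{eq:o2h} the channel $\mathcal{U}$ appears \emph{identically} in both traces: the $q$ oracle queries inside $U$ are made to the true oracle in both branches, and the two executions differ only in the single, final application of $\mathcal{F}^{\textnormal{Enc}}$ versus $\mathcal{F}_{\text{\rm sh}}^{(k),\textnormal{Enc}}$. There is therefore no hybrid over the $q$ queries to perform. The paper's proof exploits exactly this single-swap structure: an encoded flag-unitary decomposition writes $\mathcal{F}^{\textnormal{Enc}}U\ket{\psi}=\ket{\psi_{\text{in}}}+\ket{\psi_{\text{out}}}$ and $\mathcal{F}_{\text{\rm sh}}^{(k),\textnormal{Enc}}U\ket{\psi}=\ket{\perp}+\ket{\psi_{\text{out}}}$ with a \emph{common} out-of-wrapper component and $\braket{\psi_{\text{in}}|\perp}=0$; the fidelity of the shuffling-averaged states is then lower-bounded via concavity and Jensen by $1-\mathbb{E}_{\mathcal{F}}[\|\ket{\psi_{\text{in}}}\|^2]$, the accumulated wrapper weight is bounded by $\mathbb{E}_{\mathcal{F}}[\|\ket{\psi_{\text{in}}}\|^2]\le qp$ by citing Lemma~5.8 of Chia et al.\ (this citation, not a routine averaging step, is where the $q$ adaptive queries and the uncorrelatedness hypothesis are handled), and Fuchs--van de Graaf converts this to trace distance $\sqrt{2qp}$.

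The quantitative failure of your route is that a telescoping hybrid accumulates disturbance in \emph{amplitude}, not probability. What telescoping plus Cauchy--Schwarz actually yields, writing $\ket{\phi_j}$ for the state before the $j$-th query and using your operator inequality, is
\begin{equation}
\left|\tr[\Pi_s \rho_{\text{true}}]-\tr[\Pi_s\rho_{\text{sh}}]\right| \;\le\; 2\sum_{j=1}^{q}\left\|\Pi_{\mathrm{diff}}\ket{\phi_j}\right\| \;\le\; 2\sqrt{q}\left(\sum_{j=1}^{q}\left\|\Pi_{\mathrm{diff}}\ket{\phi_j}\right\|^2\right)^{1/2} \;\le\; 2q\sqrt{p}\,,
\end{equation}
which is weaker than $\sqrt{2qp}$ by a factor of order $\sqrt{q}$. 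Your claimed intermediate bound $2\sqrt{q}\,\bigl(\mathbb{E}_j[\|\Pi_{\mathrm{diff}}\ket{\phi_j}\|^2]\bigr)^{1/2}$, with $\mathbb{E}_j$ an average over the $q$ queries, equals $2\bigl(\sum_j\|\Pi_{\mathrm{diff}}\ket{\phi_j}\|^2\bigr)^{1/2}$ and does not follow from the hybrid argument; it is off by exactly that $\sqrt{q}$, and no bookkeeping of constants recovers it. The $\sqrt{qp}$-type scaling is achievable precisely because only one call is swapped: the error is then controlled by the norm-squared (probability) weight sitting in the wrappers, which accumulates additively over the $q$ true-oracle queries inside $U$, rather than by amplitudes accumulating over swapped calls. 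To repair your proof you would need to adopt this single-swap structure, and for the wrapper-weight bound under adaptive queries either reprove or cite Chia et al.'s Lemma~5.8 --- your independence sketch is an outline of that lemma's content, but it is the hard technical core, not a corollary of linearity.
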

\noindent We defer the proof to Appendix \ref{app:deferred}. This lemma is the crucial step that allows us to relate quantum success probability to classical combinatorics. The O2H bound tells us that the output distribution of a quantum algorithm making a bounded number of queries to the true oracle differs from one querying a shadow oracle which knows nothing about the crucial function $f$ by an amount controlled by the classical probability of picking a bitstring that lies in a smaller hidden wrapper nested within a larger one. Notably, this is completely independent of the FT-QEC scheme. Since every round of nesting makes the space bigger by a factor of $2^n$, O2H intuitively allows us to bound these difference between these output distributions by an exponentially small number. This intuition leads to the following bound on the success probability of an algorithm with limited quantum depth.

\begin{theorem}[Proven in Sections 5, 8 of \cite{Chia_2023}] \label{thm:nisq_sep_master}
Let $f$ be a Simon's function on $n$ bits and let $\mathcal{A}^{\mathcal{O}_\mathcal{F}}$ be any $\textnormal{\textsf{BPP}}^{\textnormal{\textsf{QNC}}_d}$ algorithm with quantum access to some unitary oracle $\mathcal{O}_\mathcal{F}$ corresponding to a randomly chosen shuffling Simon's function $\mathcal{F} = (f_0,...,f_d)$. If the following conditions are satisfied:
\begin{enumerate}
    \item $\mathcal{F}$ is sampled from $D(f, d)$ (see Definition~\ref{def:dLevelShuffling}).
    \item $\mathcal{O}_\mathcal{F}$ is defined in such a way that when the three conditions of Lemma \ref{lemma:o2h} are satisfied, Eq.~\eqref{eq:o2h} holds for any arbitrary POVM element $\Pi_s$. Note that the three conditions are dependent only on the classical data of $\mathcal F$ and hidden wrappers $\bar{S}^{(k)}$ and not on the quantum implementation $\mathcal{O}_\mathcal{F}$.  
\end{enumerate}
Then the probability that $\mathcal{A}^{\mathcal{O}_\mathcal{F}}$ outputs a bitstring $s$ equal to the true hidden bitstring is bounded by $d\sqrt{\textnormal{poly}(n)/2^n}$.
\end{theorem}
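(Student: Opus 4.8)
The plan is to reduce the entire statement to classical combinatorics on the enlarged bitstring space $\mathbb{Z}_2^{n(d+2)}$ by invoking the encoded O2H Lemma~\ref{lemma:o2h}, and then to run the hybrid argument of \cite{Chia_2023} essentially verbatim. The key point is hypothesis~2 of the theorem: the bound $\sqrt{2qp}$ furnished by Lemma~\ref{lemma:o2h} is \emph{completely independent of the FT-QEC encoding} $\mathcal{U}_{\mathrm{enc}}$, since the encoding only conjugates the shuffling oracle and acts as the identity on $\mathcal{C}^\perp$. Once O2H is in hand we may therefore forget the code Hilbert space entirely and reason only about the classical functions $f_0,\dots,f_d$ and the hidden wrappers $\bar{S}^{(k)}$ — which is exactly why the three conditions of Lemma~\ref{lemma:o2h} are phrased as properties of the classical data of $\mathcal{F}$ and $\bar{S}^{(k)}$, never of the quantum implementation $\mathcal{O}_\mathcal{F}$.

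Concretely, I would define hybrid experiments $H_0,\dots,H_d$ in which the algorithm's oracle is the shadow shuffling $\mathcal{F}_{\text{\rm sh}}^{(k)}=(f_0,\dots,f_k,g_{k+1}^{(k)},\dots,g_d^{(k)})$. Since $\mathcal{F}_{\text{\rm sh}}^{(d)}=\mathcal{F}$, hybrid $H_d$ is the true experiment; in $H_0$ every function past $f_0$ returns $\perp$ on the whole space, so the algorithm cannot thread the composition $f_d\circ\cdots\circ f_0$ and learns nothing about $f$ or its secret, giving $\Pr_{H_0}[\text{output }s]\le 2^{-(n-1)}$ by a union bound over the $2^n-1$ nonzero candidates. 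The triangle inequality then yields
\begin{equation}
\Pr_{H_d}[\text{output }s]\le \Pr_{H_0}[\text{output }s]+\sum_{k=1}^{d}\big|\Pr_{H_k}[\text{output }s]-\Pr_{H_{k-1}}[\text{output }s]\big|\,,
\end{equation}
and each telescoping term is controlled by Lemma~\ref{lemma:o2h} with total query count $q=\mathrm{poly}(n)$ (polynomially many depth-$d$ subroutines, each making at most $d$ oracle calls) and conditional wrapper ratio $p\le 2^{-n}$ coming from the shrinkage $|S_j^{(k)}|/|S_j^{(k-1)}|\le 2^{-n}$, so every step costs at most $\sqrt{2qp}=\sqrt{2\,\mathrm{poly}(n)/2^n}$. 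Summing the $d$ steps and absorbing the endpoint into the polynomial gives the advertised $d\sqrt{\mathrm{poly}(n)/2^n}$. The reason one pays the factor $d$, rather than applying O2H in a single shot comparing the real oracle directly to the uninformative shadow, is that the real algorithm is in general \emph{correlated} with the deep hidden structure; uncorrelatedness must instead be bootstrapped one level at a time.

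The main obstacle is verifying inductively the third hypothesis of Lemma~\ref{lemma:o2h} at each hybrid step, namely that before the swap at level $k$ the algorithm's state $\rho$ and query unitary $U$ are uncorrelated with $\bar{S}^{(k)}$ and with $\mathcal{F}$ restricted to $\bar{S}^{(k)}$. This is precisely where the $\textsf{QNC}_d$ depth bound and the $\mathrm{poly}(n)$-bounded classical advice between subroutines do the real work. One first establishes the existence of wrappers with the required nesting $S_j^{(k)}\subset S_j^{(k-1)}$, $S_j\subseteq S_j^{(k)}$, and $f_j(S_{j-1}^{(k)})=S_j^{(k)}$ via a lazy-sampling argument on the uniformly random permutations $f_0,\dots,f_{d-1}$, as in \cite{Chia_2023,arora2023quantum}. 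One then argues that a single depth-$d$ subroutine composes at most $d$ sequential oracle evaluations and so cannot fully trace the $(d+1)$-fold composition, while only $\mathrm{poly}(n)$ classical bits pass between subroutines; hence running against $\mathcal{F}_{\text{\rm sh}}^{(k-1)}$ keeps the algorithm's state independent of the finer data $\bar{S}^{(k)}$, licensing the O2H swap and, in turn, the next inductive step. Closing this induction across the interleaved classical–quantum computation in the encoded model is the technical heart of the proof; but because Lemma~\ref{lemma:o2h} has already stripped away all dependence on the FT-QEC scheme, the remaining argument is identical to the combinatorial analysis of Sections~5 and~8 of \cite{Chia_2023}.
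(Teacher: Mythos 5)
Your proposal is correct and follows essentially the same route as the paper: the paper does not reprove this theorem but imports it from Sections 5 and 8 of \cite{Chia_2023}, and your hybrid argument over shadow shufflings with the O2H lemma applied at each of the $d$ levels, together with the inductive uncorrelatedness bookkeeping, is precisely that argument. Your central observation — that Lemma~\ref{lemma:o2h} strips out all dependence on the FT-QEC encoding, so the remaining analysis reduces to classical combinatorics on $\mathcal{F}$ and the hidden wrappers $\bar{S}^{(k)}$ — is exactly what condition 2 of the theorem statement (and the surrounding discussion) is designed to capture.
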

The formulation of $\textnormal{\textsf{Enc-}}d\textnormal{\textsf{-SSP}}$ assumes $\mathcal{F}\sim D(f, d)$. Moreover, in Appendix \ref{app:deferred} we prove Lemma \ref{lemma:o2h}, implying that the second condition of Theorem \ref{thm:nisq_sep_master} holds for $\mathcal{O}_\mathcal{F} = \mathcal{F}^\textnormal{Enc}$ defined in Eq.~\eqref{eq:f_oracle}. Hence, we immediately obtain the following lemma.
\begin{lemma}\label{lemma:QNC_small_success}
    The success probability of any $\textnormal{\textsf{BPP}}^{\textnormal{{\textsf{QNC}}}_d}$ algorithm with $d = \textnormal{poly}(n)$ for $\textnormal{\textsf{Enc-}}d\textnormal{\textsf{-SSP}}$ is bounded by $O(\textnormal{poly}(n)/2^{n/2})$. 
\end{lemma}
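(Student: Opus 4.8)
The plan is to obtain the lemma as an immediate corollary of Theorem~\ref{thm:nisq_sep_master}, whose heavy combinatorial analysis was already carried out in \cite{Chia_2023}. The only work that remains is to check that the encoded shuffling oracle $\mathcal{F}^{\textnormal{Enc}}$ of Eq.~\eqref{eq:f_oracle} satisfies the two hypotheses of that theorem, and then to substitute $d = \textnormal{poly}(n)$ into the resulting bound.

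First I would confirm Condition~1. By Definition~\ref{def:encoded-codehash}, an instance of $\textnormal{\textsf{Enc-}}d\textnormal{\textsf{-SSP}}$ is generated by sampling $\mathcal{F}\sim D(f,d)$, i.e.\ by drawing the permutations $f_0,\dots,f_{d-1}$ uniformly at random, which is exactly the sampling required by the theorem. In particular, the algorithm's input state and unitaries are fixed without knowledge of the random shuffling, so they are uncorrelated with the hidden wrappers $\bar{S}^{(k)}$ and with $\mathcal{F}$ restricted to those wrappers, matching the third hypothesis of Lemma~\ref{lemma:o2h}.

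Second, I would verify Condition~2, namely that the O2H inequality Eq.~\eqref{eq:o2h} holds for $\mathcal{O}_\mathcal{F} = \mathcal{F}^{\textnormal{Enc}}$. This is precisely the content of Lemma~\ref{lemma:o2h}: replacing $\mathcal{F}^{\textnormal{Enc}}$ by its shadow $\mathcal{F}_{\mathrm{sh}}^{(k),\textnormal{Enc}}$ changes any outcome probability by at most $\sqrt{2qp}$, where $p$ bounds the conditional probability $\textnormal{Pr}[x\in S_i^{(k)}\mid x\in S_i^{(k-1)}]$. The wrapper construction guarantees $|S_i^{(k)}|/|S_i^{(k-1)}|\le 2^{-n}$, and since the algorithm is uncorrelated with the shuffling one has $p\le 2^{-n}$, so the hypotheses hold with the required parameters. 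The essential observation, emphasized just after Lemma~\ref{lemma:o2h}, is that this bound is governed entirely by the classical wrapper combinatorics and is \emph{independent} of the FT-QEC scheme defining the codespace, so the encoding does not weaken the reduction.

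With both hypotheses verified, Theorem~\ref{thm:nisq_sep_master} yields a success probability of at most $d\sqrt{\textnormal{poly}(n)/2^n}$; taking $d=\textnormal{poly}(n)$ and absorbing polynomial factors gives the claimed $O(\textnormal{poly}(n)/2^{n/2})$. I expect the only genuinely delicate step to be Condition~2: one must ensure that lifting the shuffling oracle into the logical codespace introduces no additional distinguishing power, i.e.\ that the quantum-level O2H estimate inherits the classical combinatorial bound verbatim. This is exactly where Lemma~\ref{lemma:o2h} does the work, reducing the quantum distinguishing advantage to the classical probability that a query lands in a nested wrapper; once that lemma is in hand, the present statement follows with no further quantum-information input, and no appeal to Le Cam is needed since the theorem bounds the probability of outputting the correct secret $s$ directly.
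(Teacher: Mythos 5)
Your proposal is correct and follows essentially the same route as the paper: the paper likewise obtains the lemma as an immediate consequence of Theorem~\ref{thm:nisq_sep_master}, noting that Condition~1 holds by the definition of $\textnormal{\textsf{Enc-}}d\textnormal{\textsf{-SSP}}$ and Condition~2 is supplied by Lemma~\ref{lemma:o2h}, then substituting $d = \textnormal{poly}(n)$ into the bound $d\sqrt{\textnormal{poly}(n)/2^n}$. Your additional observations (the uncorrelatedness hypothesis, the FT-QEC independence of the wrapper combinatorics, and the fact that no Le Cam step is needed because the theorem bounds the success probability directly) are all consistent with the paper's reasoning.
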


\subsubsection{Reducing to \textsf{NISQ}}
We can now prove the following lemma.

\begin{lemma}\label{lemma:NISQ_lower}
    $\textnormal{\textsf{Enc-}}d\textnormal{\textsf{-SSP}} \notin \textnormal{\textsf{NISQ}}^{\mathcal{O}_{f, d}}$.
\end{lemma}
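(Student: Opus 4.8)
The goal is to show $\textsf{Enc-}d\textsf{-SSP} \notin \textsf{NISQ}^{\mathcal{O}_{f,d}}$. Having already established in Lemma~\ref{lemma:QNC_small_success} that any $\textsf{BPP}^{\textsf{QNC}_d}$ algorithm succeeds with probability only $O(\mathrm{poly}(n)/2^{n/2})$, the plan is to reduce an arbitrary $\textsf{NISQ}$ algorithm to a $\textsf{BPP}^{\textsf{QNC}_d}$ algorithm while incurring only $o(1)$ loss in total variation distance on the output distribution. Combined with Le Cam's two-point method (Lemma~\ref{lemma:le_cam}), this forces the $\textsf{NISQ}$ success probability below $2/3$, giving the claimed exclusion.

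\paragraph{Setup via the learning tree.} First I would cast an arbitrary $\textsf{NISQ}_\lambda$ algorithm making $\mathrm{poly}(n)$ oracle calls into the learning tree formalism of Definition~\ref{def:general_learning_tree}, where each node corresponds to one noisy circuit call $\textnormal{NQC}^{\mathcal{O}_{f,d}}_\lambda(n',\{U\})$ followed by a computational-basis measurement, with only classical memory passed between calls. The key structural observation is that within each such noisy circuit of depth $D = \mathrm{poly}(n)$, the interstitial depolarizing noise $\mathcal{D}_\lambda$ exponentially suppresses the influence of gates applied many layers before a given oracle query. I would therefore fix a depth threshold $d' = O(d)$ and argue that truncating each noisy circuit to its last $d'$ layers — i.e.\ replacing the deep noisy circuit by a shallow one — changes the leaf distribution of the tree negligibly.

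\paragraph{The cutting argument.} The technical heart is a hybrid argument over circuit layers, using the KL-divergence bounds from~\cite{nisq}. Concretely, I would show that applying a layer of depolarizing noise $\mathcal{D}_\lambda^{\otimes n'}$ drives the state exponentially (in the number of noise layers and $n'$) toward the maximally mixed state, so that the reduced state entering the final $d'$ layers carries essentially no coherent memory of the earlier part of the computation. Quantifying this via the chain rule for KL divergence and Pinsker's inequality, the total variation distance between the true $\textsf{NISQ}$ leaf distribution and the distribution produced by a truncated, depth-$d'$ noiseless-tail circuit is $o(1)$ across all $\mathrm{poly}(n)$ calls. After truncation, each quantum subroutine is a $\textsf{QNC}_{d'}^{\mathcal{O}_{f,d}}$ circuit with $d' = O(d)$, and the classical control between calls is exactly a $\textsf{BPP}$ computation; hence the truncated algorithm is a bona fide $(\textsf{BPP}^{\textsf{QNC}_d})^{\mathcal{O}_{f,d}}$ algorithm.

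\paragraph{Conclusion and main obstacle.} Putting these together, any $\textsf{NISQ}$ algorithm's success probability on $\textsf{Enc-}d\textsf{-SSP}$ is within $o(1)$ of that of a $\textsf{BPP}^{\textsf{QNC}_d}$ algorithm, which by Lemma~\ref{lemma:QNC_small_success} is $O(\mathrm{poly}(n)/2^{n/2})$; thus no $\textsf{NISQ}$ algorithm can achieve success probability $\geq 2/3$, proving the lemma. I expect the main obstacle to be making the cutting argument fully rigorous in the adaptive, oracle-interleaved setting: one must control how noise propagates through oracle layers (not just depth-1 unitary layers) and verify that the noise-suppression bound degrades gracefully when oracle calls appear in the deep, discarded portion of the circuit. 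Care is needed because the oracle $\mathcal{O}_{f,d}$ is a channel rather than a unitary, so the KL-divergence contraction estimates from~\cite{nisq} must be checked to apply uniformly over the mixed-unitary structure of the encoded shuffling oracle, and the hybrid telescoping must be organized so that the accumulated error over all layers and all $\mathrm{poly}(n)$ calls remains $o(1)$.
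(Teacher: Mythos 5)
Your high-level strategy is the same as the paper's: represent the $\textsf{NISQ}$ algorithm as a learning tree, replace each noisy quantum circuit call by a noiseless bounded-depth one at the cost of a small total-variation error controlled by the KL bounds of Lemma~\ref{lemma:nisq_kl}, and then combine Lemma~\ref{lemma:QNC_small_success} with a union bound and Le Cam's method (Lemma~\ref{lemma:le_cam}). However, two concrete steps in your cutting argument fail as written. The first is the inference that a threshold ``$d' = O(d)$'' makes the truncated algorithm ``a bona fide $\textsf{BPP}^{\textsf{QNC}_d}$ algorithm.'' It does not, and the hidden constant is not cosmetic here: $\textsf{Enc-}d\textsf{-SSP}$ is \emph{solvable} by a noiseless quantum circuit of depth $2d+1$ (this is exactly Lemma~\ref{lemma:NBQP_upper}), so the hardness statement of Lemma~\ref{lemma:QNC_small_success} is specific to quantum depth at most $d$ and provably fails for $\textsf{QNC}_{O(d)}$ once the implicit constant reaches $2$. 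Your threshold must be chosen at most $d$, with slack both for the extra depth needed to noiselessly simulate the depolarizing layers of the kept suffix and for the correction described next (e.g.\ $d' \leq d/2$ suffices, since $d = \mathrm{poly}(n)$).

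The second gap is in the truncation error itself. The error of replacing the first $D-d'$ layers by a maximally mixed input is governed by the depth of the \emph{discarded prefix}, not by the total depth: by Lemma~\ref{lemma:nisq_kl} and Pinsker it is of order $(1-\lambda)^{(D-d')/2}\,\mathrm{poly}(n)$, which is $o(1)$ only when $D-d' = \omega(\log n/\lambda)$. For a circuit whose depth exceeds $d'$ by only a few layers, the prefix noise has not had time to scramble anything, and truncating to the last $d'$ layers incurs an error that is not small; so ``truncating each noisy circuit to its last $d'$ layers'' does not change the leaf distribution negligibly. To repair this you need a three-way treatment: keep circuits of depth at most $d'+\Delta$ whole (simulating their noise noiselessly), truncate only circuits deeper than $d'+\Delta$, with $\Delta = \Theta(\log(nN)/\lambda)$ so that the per-node error times the $N = \mathrm{poly}(n)$ nodes is $o(1)$, and verify $d'+\Delta$ plus simulation overhead is still at most $d$. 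The paper's proof sidesteps both issues by making an all-or-nothing replacement keyed to the number of oracle queries: circuits making more than $\bar D$ queries are replaced \emph{entirely} by a measurement of the maximally mixed state (error $(1-\lambda)^{\bar D/2}\mathrm{poly}(n)$, governed by the full depth $\geq \bar D$ of the discarded circuit), while circuits making fewer queries are simulated noiselessly in full, so no partially truncated circuit with a short discarded prefix ever arises.
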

\begin{proof}
    Let $\mathcal{T}$ be the learning tree representation of a \textsf{NISQ}$_\lambda$ algorithm with access to $O_{f, d}$, recalling that $d = \textnormal{poly}(n)$. For every non-leaf node $u$ in the tree, the algorithm either runs a classical algorithm, which cannot make oracle queries, or runs a $\lambda$-noisy quantum circuit on $m =$ poly$(n)$ qubits. Now we will transform $\mathcal{T}$ into a learning tree $\mathcal{T}'$ representing a $\textsf{BPP}^{\textsf{QNC}_d}$ algorithm.
    
    Start with $\mathcal{T}' = \mathcal{T}$. Let $\bar{D}$ be a parameter representing a depth threshold in the noisy quantum circuit. If at node $u$ in $\mathcal{T}$, the noisy quantum circuit performs more than $\bar{D}$ queries to $O_{f, d}$, replace $u$ in $\mathcal{T}'$ with a noiseless quantum circuit that measures the $n$-qubit maximally mixed state in the computational basis. If the circuit makes less than $\bar{D}$ queries, replace $u$ with the noiseless quantum circuit that simulates the $\lambda$-noisy circuit by simulating depolarizing noise at each layer. With $\mathcal{T}'$ defined, we call upon the following lemma from \cite{nisq}:
    \begin{lemma}[In \cite{nisq}, Lemmas D.15 and D.16] \label{lemma:nisq_kl}
        Let $\rho$ be any $m$-qubit state output by a $\lambda$-noisy depth-$D$ quantum circuit and choose any POVM. Let $p, q$ be the distributions induced by measuring $\rho$ or the maximally mixed state with the chosen POVM, respectively. Then $\text{KL}(p\|q)\leq (1-\lambda)^Dm$.
    \end{lemma}
    By Pinsker's inequality and Lemma \ref{lemma:nisq_kl}, the total variation distance between the induced conditional distribution on the children of a node $u$ when making the replacement we described for circuits with more than $\bar{D}$ queries is at most $(1-\lambda)^{\bar{D}/2}O(\text{poly}(n))$. Now, let $N$ be the depth of $\mathcal{T}$ and $\mathcal{T}'$. Then the total variation distance between leaf output distributions of the two trees is bounded by $N(1-\lambda)^{\bar{D}/2}O(\text{poly}(n))$. We take $\bar{D} = \text{poly}(n) - (\text{log}(N)/\log(1-\lambda))$, making this bound $O(\text{poly}(n)(1-\lambda)^{\text{poly}(n)}$). Applying Lemma \ref{lemma:QNC_small_success}, along with a union bound and Lemma \ref{lemma:le_cam}, the success probability of the $\textsf{NISQ}$ algorithm is bounded by $O((2^{-n/2} + (1-\lambda)^{\text{poly}(n)})\text{poly}(n))$. Hence, the \textsf{NISQ} algorithm cannot solve $\textnormal{\textsf{Enc-}d\textsf{-SSP}}$ with probability $\geq 2/3$ whenever $N\leq \Theta(\min(\{2^{n/2}, (1-\lambda)^{-\text{poly}(n)}\})/\textnormal{poly}(n)) = \text{superpoly(n)}$.
\end{proof}
\noindent Lemma \ref{lemma:NBQP_upper} and Lemma \ref{lemma:NISQ_lower} combine to give us Theorem \ref{Thm:NISQ_vs_NBQP}.

\subsection{Separating \textsf{NBQP} and \textsf{BQP}} \label{app:remark}
A noiseless quantum circuit can simulate any $\lambda$-noisy quantum circuit with overhead polynomial in the depth of the circuit by applying depolarizing noise after each layer. Hence, $\nbqp^O\subseteq\bqp^O$ for any quantum oracle $O$. Moreover, the existence of constant-rate FT-QEC implies that without an oracle, $\nbqp = \bqp$. In this Section, we construct an oracle to demonstrate the strict separation in Theorem \ref{thm:NBQP_vs_BQP}.

\subsubsection{Remark on criteria for noise-robust quantum learning algorithms}

Before proceeding with the argument, we make some conceptual remarks. The class $\nbqp^O$ with an oracle is intended to model a fault-tolerant quantum computer that can perform protected quantum computation while interacting with a quantum system in Nature, viewed abstractly as a quantum oracle $O$. Each ``oracle query'' corresponds to implementing some controlled interaction between the quantum computer and this physical system and then reading out its response. Crucially, while the quantum computer's internal registers can be encoded in a known code state and processed using logical gates within a FT-QEC scheme, the oracle system itself is not error-corrected, and its coupling to the computer is subject to physical noise. The relevant quantum information about the system is therefore stored in unprotected physical degrees of freedom, not in the computer's logical qubits. Given this, where can we expect to find meaningful separations between $\nbqp^O$ and $\bqp^O$ machines in quantum learning tasks, highlighting gaps between idealized and physically motivated learning models?

It is plausible that even when the oracle system is uncharacterized, if its behavior is constrained to a small, structured ansatz class, certain simple error patterns remain recoverable. An extreme example would be if the system in Nature was somehow the output of an error-corrected quantum simulation, which is guaranteed to lie within the codespace of a known quantum error-correcting code.  In such settings, a fault-tolerant learner could treat the oracle output as another logical state. Thus, one strategy for a separation between $\nbqp^O$ and $\bqp^O$ is if the oracle system (and its interaction with the computer) is sufficiently unstructured so that it is difficult to design an effective error-correction strategy for it.

A more interesting challenge for a quantum learning algorithm arises when local physical errors on the oracle system, or during its coupling to the computer, can affect the encoded quantum information in a highly nonlocal way. If a few sparse errors on the query register only corrupt a bounded amount of the logical information we are trying to extract, then repetition and majority-vote type procedures can still be used to distill a high-fidelity estimate of the oracle's behavior \cite{buhrman2022quantummajorityvote, iwama_2005, chakraborty_2005, H_yer_2003, suzuki2006robustquantumalgorithmsepsbiased}.
On the other hand, suppose an oracle loads a single computational-basis bitstring encoding a value between $0$ and $2^n - 1$. In this case, a single local error may be enough to completely corrupt the stored information, since an error on the $k$-th qubit can change the measured value by $2^{k-1}$. While this toy example, and the corresponding oracle we construct for our exponential separation, are deliberately designed to impede an $\nbqp$ learner, the underlying intuition is that noise-robust quantum learning theory often requires some notion of logical locality in how information is encoded.

Consequently, it is likely that physical systems that permit learning-enhanced experiments will (a) be restricted to a small ansatz class, or (b) have properties that are largely independent of sparse, local perturbations.

\subsubsection{Hybrid argument}
A key ingredient in the separation will be the following lemma.
\begin{lemma}[Learning from similar oracles, Lemma B.4 in \cite{nisq}]
Consider two quantum channels $\mathcal{O}_1$ and $\mathcal{O}_2$, and suppose that for all pure states $\sigma$, $\|(\mathcal{O}_1 - \mathcal{O}_2)[\sigma]\|_{\textnormal{tr}} \leq \epsilon$. Let $\mathcal{C}_D(\mathcal{O})$ be a quantum circuit of the form
\begin{equation}
    \mathcal{C}_D(\mathcal{O}) = \mathcal{U}_D\circ \mathcal{O}\circ\mathcal{U}_{D-1}\circ \mathcal{O} \circ \cdots \circ \mathcal{U}_1[\ketbra{0}{0}^{\otimes n}]\,.
\end{equation}
Then let $p_1$ and $p_2$ be the classical probability distributions over $n$-bitstring outputs of $\mathcal{C}_D(\mathcal{O}_1)$ and $\mathcal{C}_D(\mathcal{O}_2)$. It holds that $d_{\text{\rm TV}}(p_1, p_2)\leq \epsilon D$.    
\end{lemma}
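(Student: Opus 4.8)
The plan is to run a standard hybrid (telescoping) argument, replacing one oracle call at a time and using the contractivity of trace distance under quantum channels. Before doing so, I would first upgrade the hypothesis from pure states to arbitrary density matrices, since the intermediate states fed into the oracle inside the circuit are generally mixed. This follows from convexity: writing any state as $\rho = \sum_i p_i \ketbra{\psi_i}{\psi_i}$ and using linearity of $\mathcal{O}_1 - \mathcal{O}_2$ with the triangle inequality for $\|\cdot\|_{\textnormal{tr}}$ gives $\|(\mathcal{O}_1-\mathcal{O}_2)[\rho]\|_{\textnormal{tr}} \le \sum_i p_i \|(\mathcal{O}_1-\mathcal{O}_2)[\ketbra{\psi_i}{\psi_i}]\|_{\textnormal{tr}} \le \epsilon$, so the $\epsilon$ bound holds on all inputs, not merely pure ones.

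Next I would define a telescoping family of hybrid circuits. The circuit $\mathcal{C}_D(\mathcal{O})$ contains $D-1$ oracle insertions (so the final bound is really $(D-1)\epsilon \le D\epsilon$). For $j=0,\dots,D-1$, let $\mathcal{C}^{(j)}$ denote the circuit in which the first $j$ oracle calls, counting from the input side, are implemented with $\mathcal{O}_1$ and the remaining calls with $\mathcal{O}_2$, so that $\mathcal{C}^{(0)} = \mathcal{C}_D(\mathcal{O}_2)$ and $\mathcal{C}^{(D-1)} = \mathcal{C}_D(\mathcal{O}_1)$. Two consecutive hybrids $\mathcal{C}^{(j)}$ and $\mathcal{C}^{(j+1)}$ agree on every operation up to and including the preparation of the input $\tau_j$ to the $(j{+}1)$-th oracle call; at that single call one hybrid applies $\mathcal{O}_1$ and the other $\mathcal{O}_2$, producing two states whose trace distance is $\|(\mathcal{O}_1-\mathcal{O}_2)[\tau_j]\|_{\textnormal{tr}} \le \epsilon$ by the mixed-state extension above. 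Every subsequent operation, namely the remaining unitaries $\mathcal{U}_i$ together with the remaining oracle calls, is an \emph{identical} CPTP channel in both hybrids, and trace distance is monotone (non-increasing) under any quantum channel; hence the one-step discrepancy is never amplified, giving $\|\mathcal{C}^{(j)}[\cdot] - \mathcal{C}^{(j+1)}[\cdot]\|_{\textnormal{tr}} \le \epsilon$ for each $j$.

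Finally, the triangle inequality over the telescoping chain yields $\|\mathcal{C}_D(\mathcal{O}_1)[\cdot] - \mathcal{C}_D(\mathcal{O}_2)[\cdot]\|_{\textnormal{tr}} \le (D-1)\epsilon \le D\epsilon$. To pass from output states to measurement statistics, I would observe that computational-basis measurement is itself a quantum channel sending a state to its diagonal, under which trace distance is again non-increasing and for classical (diagonal) states coincides with the total variation distance of the induced distributions; this gives $d_{\text{\rm TV}}(p_1,p_2) \le \epsilon D$ (and in fact a factor of $\tfrac12$ under the $\tfrac12\|\cdot\|_1$ convention, which I would reconcile against the normalization of $\|\cdot\|_{\textnormal{tr}}$ used in the hypothesis). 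I do not expect a genuine obstacle here, as this is a textbook hybrid argument; the only points needing care are the pure-to-mixed extension of the hypothesis and the observation that the not-yet-swapped later oracle calls, being the same CPTP map in both hybrids, contract rather than inflate the per-step error.
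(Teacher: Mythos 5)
Your proof is correct and is the standard hybrid argument; note, however, that the paper does not prove this lemma itself --- it is imported as Lemma B.4 of \cite{nisq} and used as a black box --- so your argument matches the proof technique of that reference rather than anything written in this paper. The details you attend to (the convexity extension of the hypothesis from pure to mixed inputs, the count of $D-1$ rather than $D$ oracle insertions, and the reconciliation of the trace-norm versus total-variation normalization) are exactly the right points of care, and each only introduces slack relative to the stated bound $d_{\text{\rm TV}}(p_1,p_2)\leq \epsilon D$.
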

We find that the ``lifted Simon's oracle" used in \cite{nisq} to separate $\nisq$ from $\bqp$ suffices to separate $\nbqp$ as well. 
\begin{definition}[Lifted Simon's oracle]
Given a function $f:\{0, 1\}^n\rightarrow\{0, 1\}^n$, define the lift $\tilde{f}:\{0, 1\}^{2n}\rightarrow\{0, 1\}^n$ as
\begin{equation}
    \tilde{f}(x) =\begin{cases}
        f(x_{1:n}) &\text{\rm if} \   x_{n+1:2n} = 0^n \\
        0 &\text{\rm else}
    \end{cases}
\end{equation}
Given a Simon's function $f$, we denote the lifted Simon's oracle by $O_{\tilde{f}}$, where $O_{\tilde{f}}$ acts on classical bitstrings as $O_{\tilde{f}}(x) = \tilde{f}(x)$, and abusing notation, the quantum oracle $O_{\tilde{f}}$ acts as $O_{\bar{f}}\ \ket{x}\ket{y} = |x\rangle|y\oplus \tilde{f}(x)\rangle$.    
\end{definition}
\noindent Then the decision lifted Simon's problem is, given oracle access to $O_{\tilde{f}}$, to determine whether a $f$ is 1-to-1 or 2-to-1.

The lifted Simon's problem is in $\bqp^{O_{\tilde{f}}}$, because one can simply set the last $n$ qubits of each query to $\ket{0}^{\otimes n}$ and run the standard Simon's algorithm \cite{Simon}. To show that the lifted Simon's problem is not in $\nbqp^{O_{\tilde{f}}}$, we use an argument similar to the one used by $\cite{nisq}$ to show that it is not in $\nisq^{O_{\tilde{f}}}$. We call on the following lemmas from their work:
\begin{lemma}[Distinguishability of noisy output distributions, Lemma C.22 in \cite{nisq}]\label{lemma:tree_id_vs_o}
Let $\mathcal{C}_n^O$ be a $\lambda$-noisy quantum circuit on $n$ qubits making $N$ oracle queries to some oracle $O$. If $p_{\tilde{f}}$ is the output distribution over bitstrings when $O = \mathcal{O}_{\tilde{f}}$ and $p_{I}$ is the output distribution when $O = \mathds{1}$ acting as the identity, $d_{\text{\rm TV}}(p_{\tilde{f}}, p_I) \leq N\exp(-\Omega(\lambda n))$.  
\end{lemma}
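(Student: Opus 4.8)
The plan is to run a standard hybrid argument: replace the $N$ oracle layers by the identity channel one at a time, and show that each replacement perturbs the output distribution by at most $\exp(-\Omega(\lambda n))$ in total variation. The entire content then reduces to a single-query estimate, fed into the similar-oracles lemma stated just above (Lemma B.4 of \cite{nisq}). The single-query estimate has two ingredients: (i) the difference channel $\mathcal{O}_{\tilde{f}} - \mathds{1}$ is supported only on the flag-$0^n$ subspace, and (ii) the depolarizing layer that the $\nbqp$ noise model inserts immediately before each oracle call makes the weight on that subspace exponentially small.

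First I would isolate the flag register. Since $\tilde f$ is nontrivial only when the trailing $n$ input bits equal $0^n$, let $\Pi_0$ be the projector onto that flag-$0^n$ subspace and $\Pi_1 = \mathds{1}-\Pi_0$. Because $\mathcal{O}_{\tilde f}$ merely XORs $\tilde f(x)$ into the output register and never alters the flag bits, it commutes with $\Pi_0$ and is block-diagonal, $\mathcal{O}_{\tilde f} = U_0 \oplus \mathds{1}_{\Pi_1}$, acting as the ordinary Simon oracle $U_0$ on $\mathrm{range}(\Pi_0)$ and as the identity on $\mathrm{range}(\Pi_1)$. Decomposing any state $\rho$ entering a query into its $\Pi_0\Pi_0$, $\Pi_0\Pi_1$, $\Pi_1\Pi_0$, $\Pi_1\Pi_1$ blocks, the $\Pi_1\Pi_1$ block cancels in $\mathcal{O}_{\tilde f}[\rho]-\rho$. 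A triangle-inequality estimate on the surviving blocks, using $\|U_0 A U_0^\dagger - A\|_{\mathrm{tr}} \le 2\|A\|_{\mathrm{tr}}$ for the diagonal block and the Cauchy--Schwarz bound $\|\Pi_0\rho\Pi_1\|_{\mathrm{tr}} \le \sqrt{\tr(\Pi_0\rho)\,\tr(\Pi_1\rho)} \le \sqrt{\tr(\Pi_0\rho)}$ for the two off-diagonal blocks, yields
\[
\big\|(\mathcal{O}_{\tilde f}-\mathds{1})[\rho]\big\|_{\mathrm{tr}} \;\le\; 2\,\tr(\Pi_0\rho) + 4\sqrt{\tr(\Pi_0\rho)} \;\le\; 6\sqrt{\tr(\Pi_0\rho)}\,.
\]

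Second I would bound $\tr(\Pi_0\rho)$ using the preceding depolarizing layer $\mathcal{D}_\lambda$ (the full-register channel, i.e.\ the tensor power of the single-qubit depolarizing channel). For any input $\sigma$, self-adjointness and unitality of $\mathcal{D}_\lambda$ give $\tr(\Pi_0\,\mathcal{D}_\lambda[\sigma]) = \tr(\mathcal{D}_\lambda[\Pi_0]\,\sigma)$. On each of the $n$ flag qubits, $\mathcal{D}_\lambda[\ketbra{0}{0}] = (1-\tfrac{\lambda}{2})\ketbra{0}{0} + \tfrac{\lambda}{2}\ketbra{1}{1}$, while $\mathcal{D}_\lambda[\mathds{1}] = \mathds{1}$ on all other qubits, so $\mathcal{D}_\lambda[\Pi_0]$ has operator norm $(1-\tfrac{\lambda}{2})^n$. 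Hence $\tr(\Pi_0\,\mathcal{D}_\lambda[\sigma]) \le (1-\tfrac{\lambda}{2})^n \le e^{-\lambda n/2}$, and defining the effective oracle-with-preceding-noise channels $\tilde{\mathcal{O}} := \mathcal{O}_{\tilde f}\circ\mathcal{D}_\lambda$ and $\tilde{\mathds{1}} := \mathds{1}\circ\mathcal{D}_\lambda = \mathcal{D}_\lambda$, the block estimate gives the uniform single-query bound $\sup_\sigma \|(\tilde{\mathcal{O}}-\tilde{\mathds{1}})[\sigma]\|_{\mathrm{tr}} \le 6\,e^{-\lambda n/4} = e^{-\Omega(\lambda n)} =: \epsilon$. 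To finish, I would group the intervening depth-$1$ unitary and noise layers into CPTP maps $W_0,\dots,W_N$, so that the noisy circuit with oracle $\mathcal{O}_{\tilde f}$ reads $W_N\circ\tilde{\mathcal{O}}\circ\cdots\circ\tilde{\mathcal{O}}\circ W_0$ and the identity-oracle circuit replaces each $\tilde{\mathcal{O}}$ by $\tilde{\mathds{1}}$; the similar-oracles lemma (whose telescoping proof only uses that the interleaving maps are trace-distance non-increasing, and so extends verbatim from unitaries to the CPTP maps $W_i$) with per-query error $\epsilon$ gives $d_{\text{\rm TV}}(p_{\tilde f}, p_I) \le N\epsilon = N\exp(-\Omega(\lambda n))$.

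The main obstacle is the single-query estimate, and within it the control of the coherence blocks $\Pi_0\rho\Pi_1$: the naive argument only bounds the diagonal weight $\tr(\Pi_0\rho)$, and one must invoke Cauchy--Schwarz to keep the off-diagonal terms at the same $\sqrt{\tr(\Pi_0\rho)}$ order rather than at order $1$. A secondary point to check carefully is the legitimacy of folding the internal noise layers into the interleaving maps $W_i$, i.e.\ that the cited hybrid lemma indeed goes through when the interleavers are non-unitary CPTP channels (which it does, by the data-processing inequality for trace distance).
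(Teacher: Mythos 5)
Your proof is correct and takes essentially the same route as the paper's source for this lemma (Lemma C.22 of \cite{nisq}, which the paper imports rather than reproves, together with the channel-level ``learning from similar oracles'' hybrid lemma restated just above it): a telescoping hybrid over the $N$ oracle calls, with the per-query error controlled by showing that the depolarizing layer preceding each call leaves only $e^{-\Omega(\lambda n)}$ weight on the flag-$0^n$ subspace, which is the only place $\mathcal{O}_{\tilde f}$ differs from the identity. Your explicit handling of the coherence blocks $\Pi_0\rho\Pi_1$ via Cauchy-Schwarz and the folding of the interstitial noise and unitary layers into trace-distance-nonincreasing CPTP interleavers are precisely the bookkeeping steps the cited proof relies on, so nothing essential is missing.
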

\begin{lemma}[Distinguishability propagation in a learning tree, $\nbqp$ variant of Lemma B.2 in \cite{nisq}] \label{lemma:node_perturbation}
Let $\mathcal{T}$ be a learning tree for an $\nbqp_\lambda$ algorithm making $N$ queries. Suppose at each node $u$ of the tree where an oracle query is made, the $\lambda$-noisy quantum circuit $\mathcal{C}_u$ is replaced by a circuit $\mathcal{C}_u'$ such that the induced distribution over children of $u$ is at most $\epsilon$-far in total variation distance from the original distribution. Let the tree obtained from this procedure be $\mathcal{T}'$. Then the distributions over leaves of $\mathcal{T}, \mathcal{T'}$ are at most $\epsilon N$-far in total variation distance. 
\end{lemma}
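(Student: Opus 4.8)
The plan is a standard hybrid (telescoping) argument that exploits the multilinear product structure of the leaf distribution baked into the tree formalism of Definition~\ref{def:general_learning_tree}. Write $p$ and $q$ for the leaf distributions of $\mathcal{T}$ and $\mathcal{T}'$ respectively. The essential structural fact is that along any root-to-leaf path $u_0\to u_1\to\cdots\to u_N$ the reaching probability factorizes into a product of edge conditionals, $p(\ell)=\prod_{i=0}^{N-1}p(u_{i+1}\mid u_i)$ and likewise for $q$, where each edge conditional $p(\cdot\mid u)=\tr\!\big(\rho(M^u_s\otimes\mathds{1})\big)$ is governed by the \emph{per-node} POVM $\{M^u_s\}$ attached to $u$. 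Two facts are needed before the telescoping: first, replacing $\mathcal{C}_u$ by $\mathcal{C}'_u$ at a query node changes only the POVM sitting at $u$, hence only the conditional $p(\cdot\mid u)$, while leaving every other edge conditional (which is determined by its own unchanged POVM acting on the same fixed $\rho$) intact; second, the replacement must preserve the outcome alphabet, so the set of children of each node, and therefore the rooted-tree topology and the common leaf set on which $p$ and $q$ live, is unchanged. In the $\nbqp$ model the outcomes are bitstrings of bounded length, so the outcome alphabet is indeed fixed. By hypothesis, every query node $u$ obeys $d_{\text{\rm TV}}\big(p(\cdot\mid u),q(\cdot\mid u)\big)\le\epsilon$, and at non-query nodes the two conditionals coincide.

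\textbf{Hybrids.} First I would introduce, for $j=0,1,\dots,N$, the hybrid leaf distribution $P_j$ obtained by using the replaced ($q$-type) transitions on the first $j$ levels of the tree and the original ($p$-type) transitions on the remaining $N-j$ levels, so that $P_0=p$ and $P_N=q$. By the triangle inequality for total variation,
\begin{equation}
d_{\text{\rm TV}}(p,q)\le\sum_{j=0}^{N-1}d_{\text{\rm TV}}(P_j,P_{j+1}),
\end{equation}
so it suffices to bound each consecutive hybrid step by $\epsilon$.

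\textbf{Single hybrid step.} Next I would bound one step. Since $P_j$ and $P_{j+1}$ use identical transitions on every level except level $j$ (both use $q$ on levels $<j$ and both use $p$ on levels $>j$), the leafwise difference factorizes as
\begin{equation}
|P_j(\ell)-P_{j+1}(\ell)|=q(u_j)\,\big|\,p(u_{j+1}\mid u_j)-q(u_{j+1}\mid u_j)\,\big|\,\Big(\prod_{i>j}p(u_{i+1}\mid u_i)\Big),
\end{equation}
where $q(u_j)=\prod_{i<j}q(u_{i+1}\mid u_i)$ is the reaching probability of the level-$j$ node $u_j$ on the path. Summing $|P_j(\ell)-P_{j+1}(\ell)|$ over all leaves, the downstream transitions (levels $>j$) sum to $1$ by normalization of conditional distributions, so
\begin{equation}
\sum_{\ell}|P_j(\ell)-P_{j+1}(\ell)|=\sum_{u_j}q(u_j)\sum_{u_{j+1}}\big|\,p(u_{j+1}\mid u_j)-q(u_{j+1}\mid u_j)\,\big|\le\sum_{u_j}q(u_j)\cdot 2\epsilon=2\epsilon,
\end{equation}
where the inner sum over the children of $u_j$ is exactly $2\,d_{\text{\rm TV}}\big(p(\cdot\mid u_j),q(\cdot\mid u_j)\big)\le 2\epsilon$ (and vanishes at non-query nodes), and the outer sum of reaching probabilities over all level-$j$ nodes equals $1$. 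Thus $d_{\text{\rm TV}}(P_j,P_{j+1})\le\epsilon$, and summing over the $N$ levels gives $d_{\text{\rm TV}}(p,q)\le\epsilon N$, as claimed.

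\textbf{Main obstacle.} The telescoping itself is bookkeeping; the only genuinely substantive step, and the one I would be most careful to justify, is the \emph{locality} claim that a circuit swap at a node perturbs that node's conditional and nothing else. This is where the $\nbqp$ specifics enter: one must verify that the tree formalism of Definition~\ref{def:general_learning_tree} has already linearized the algorithm so that each edge conditional is $\tr(\rho(M^u_s\otimes\mathds{1}))$ for a fixed unknown $\rho$ and a per-node POVM, with the history encoded by the node's position rather than by a dynamically propagated register. Granting this (together with the outcome-alphabet-preservation that keeps the two trees supported on a common leaf set), swapping one node's POVM is a strictly local modification, the hybrid distributions differ only at a single level, and the bound goes through verbatim. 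This mirrors and level-by-level extends the single-circuit estimate of Lemma~\ref{lemma:tree_id_vs_o}, whose $\epsilon$ here plays the role of the per-node perturbation.
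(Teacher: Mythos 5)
Your proposal is correct and follows essentially the same route as the paper's proof: a hybrid argument that swaps one replaced circuit at a time, bounds each single swap by $\epsilon$ using the fact that the reaching probability of the swapped node is unchanged and downstream conditionals sum to one, and then telescopes via the triangle inequality to get $\epsilon N$. The only cosmetic difference is that the paper indexes its hybrids by the $i$-th oracle-querying node along each root-to-leaf path rather than by tree level, which your observation that conditionals coincide at non-query nodes handles equivalently.
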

\begin{proof}
   The number of nodes along any root-to-leaf path in $\mathcal{T}$ making an oracle query is at most $N$. Letting $\mathcal{T}^{(0)} = \mathcal{T}$, we construct intermediate trees $\mathcal{T}^{(i)}$ for $i=1,...,N$ as follows. Start with any root-to-leaf path of $\mathcal{T}^{(i-1)}$, find the $i$-th node that makes an oracle query ( runs some circuit $\mathcal{C}$), and replace it with an $\epsilon$-total variation close circuit $\mathcal{C}'$. Do this for all root-to-leaf paths. The resulting tree is $\mathcal{T}^{(i)}$, and note that $\mathcal{T}' = \mathcal{T}^{(N)}$. By the triangle inequality, it suffices to show that the total variation between leaf distributions of $\mathcal{T}^{(i-1)}$ and $\mathcal{T}^{(i)}$, denoted $p^{(i-1)}, p^{(i)}$, is at most $\epsilon$.

The leaf distribution $p^{(i-1)}$ can be decomposed into a convex combination of distributions conditioned on reaching each node in the $i$-th level. That is, suppose $v$ is the $i$-th oracle-querying node on a particular root-to-leaf path, and $p_v$ is the leaf distribution of $\mathcal{T}^{(i-1)}$ conditioned on reaching $v$. This conditional probability is obtained by running the $\nbqp$ algorithm corresponding to $\mathcal{T}'$ for $i-1$ oracle-querying steps. In $\mathcal{T}^{(i)}$, this node $v$ has the same conditional probability of being reached, but its output distribution is changed only by replacing the circuit at $v$ with one from $\mathcal{T'}$. The induced distribution on leaves from $v$, by assumption, is at most $\epsilon$-far from the distribution in $\mathcal{T}^{(i-1)}$. Since this applies to all $v$ in the $i^{th}$ level, the total variation over leaves, which is a convex combination of at most $\epsilon$-altered conditional distributions, can only change by at most $\epsilon$ per step. A union bound gives the final $\epsilon N$ total variation bound.
\end{proof}
Now we proceed with proving Theorem \ref{thm:NBQP_vs_BQP} by demonstrating that the lifted Simon's problem cannot be solved with polynomial queries by an $\nbqp$ machine.
\begin{theorem}
    No $\nbqp^{O_{\tilde{f}}}$ algorithm making at most \textnormal{poly}$(n)$ queries to ${O_{\tilde{f}}}$ can solve the lifted Simon's problem with probability at least $2/3$.
\end{theorem}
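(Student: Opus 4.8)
The plan is to combine the two lemmas provided, Lemma~\ref{lemma:tree_id_vs_o} and Lemma~\ref{lemma:node_perturbation}, to show that any $\nbqp_\lambda$ algorithm making $N = \mathrm{poly}(n)$ queries cannot distinguish whether the underlying $f$ is injective or a Simon's function. The key idea is that to a noisy learner, the oracle $O_{\tilde f}$ is nearly indistinguishable from the trivial identity channel $\mathds{1}$, because the depolarizing noise layers preceding each oracle call push the query off the special subspace $\{x\,0^n\}$ on which $\tilde f$ carries any information about $f$.

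First I would set up the learning-tree representation $\mathcal{T}$ of an arbitrary $\nbqp_\lambda^{O_{\tilde f}}$ algorithm as in Definition~\ref{def:general_learning_tree}, noting that along any root-to-leaf path at most $N$ nodes make an oracle query. At each such node, the algorithm executes a $\lambda$-noisy circuit $\mathcal{C}_u$ containing one or more calls to $O_{\tilde f}$. I would then invoke Lemma~\ref{lemma:tree_id_vs_o}: for the subcircuit at node $u$ making (at most $N$) oracle queries, replacing $O_{\tilde f}$ by the identity channel $\mathds{1}$ changes the induced distribution over the children of $u$ by at most $\epsilon := N\exp(-\Omega(\lambda n))$ in total variation. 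This is where the physical content of the construction enters: because the trailing $n$ bits of each query are corrupted by depolarizing noise with constant rate $\lambda$, the probability that a query actually lands in the informative subspace $x\,0^n$ is exponentially small in $\lambda n$, so on the overwhelming fraction of queries $O_{\tilde f}$ simply outputs $0^n$ and is therefore indistinguishable from $\mathds{1}$.

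Next I would propagate this node-level closeness to the leaf distribution using Lemma~\ref{lemma:node_perturbation}. Replacing the oracle by the identity at every oracle-querying node yields a tree $\mathcal{T}'$ whose leaf distribution is within $\epsilon N = N^2 \exp(-\Omega(\lambda n))$ total variation of the original. The crucial observation is that under the identity oracle the algorithm never touches $f$ at all, so the two promise cases—$f$ injective versus $f$ a Simon's function—induce \emph{exactly the same} leaf distribution in $\mathcal{T}'$. Applying the triangle inequality, the total variation distance between the leaf distributions in the two promise cases of the original tree $\mathcal{T}$ is therefore at most $2N^2\exp(-\Omega(\lambda n))$.

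Finally I would conclude via Le~Cam's two-point method (Lemma~\ref{lemma:le_cam}): the success probability of any distinguishing algorithm is bounded by $\tfrac12$ plus half the above total variation distance, i.e.\ by $\tfrac12 + N^2\exp(-\Omega(\lambda n))$. For $N = \mathrm{poly}(n)$ and constant $\lambda > 0$ this is $\tfrac12 + o(1) < 2/3$ for large $n$, so no polynomial-query $\nbqp^{O_{\tilde f}}$ algorithm solves the lifted Simon's problem with probability at least $2/3$. The main obstacle I anticipate is the careful bookkeeping in Lemma~\ref{lemma:tree_id_vs_o}: one must confirm that the $\exp(-\Omega(\lambda n))$ per-query bound is robust to \emph{adaptive} query placement and to the fact that the noisy circuit can attempt to coherently ``aim'' its trailing register at $0^n$ before each call. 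The content of the cited channel-level hybrid lemma from~\cite{nisq} is precisely that such aiming cannot beat the exponential suppression from a constant noise rate, and I would lean on that result rather than re-deriving it here.
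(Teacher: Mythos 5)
Your proposal is correct and follows essentially the same argument as the paper's proof: combine Lemma~\ref{lemma:tree_id_vs_o} with Lemma~\ref{lemma:node_perturbation} to replace every oracle call by the identity at a total-variation cost of $N^2\exp(-\Omega(\lambda n))$, observe that under the identity oracle the two promise cases induce identical leaf distributions, and conclude via the triangle inequality and Le~Cam's two-point method (Lemma~\ref{lemma:le_cam}). The only cosmetic difference is that you carry the factor of $2$ from the triangle inequality explicitly, whereas the paper absorbs it into the $\Omega(\cdot)$ notation.
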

\begin{proof}
Let $\mathcal{T}$ be a learning tree for an $\nbqp_\lambda$ algorithm making at most $N$ queries to $O_{\tilde{f}}$. Combining Lemma \ref{lemma:tree_id_vs_o} and \ref{lemma:node_perturbation}, replacing every $\lambda$-noisy quantum circuit in $\mathcal{T}$ that queries $\mathcal{O}_{\tilde{f}}$ with one that queries the identity oracle results in a leaf distribution $p_I$ that is at most $N^2\exp(-\Omega(\lambda n))$ far in total variation from the original $p_{\tilde{f}}$.

Now, note that the distribution $p_{I}$ results from an algorithm that makes no queries to the oracle, so $\mathbb{E}_{f \text{ 1-to-1}}[p_I] =  \mathbb{E}_{f \text{ 2-to-1}}[p_I]$. By the triangle inequality, we see that
$$d_{\text{\rm TV}}(\mathbb{E}_{f \text{ 1-to-1}}[p_{\tilde{f}}], \mathbb{E}_{f \text{ 2-to-1}}[p_{\tilde{f}}]) \leq N^2\exp(-\Omega(\lambda n))\,.$$
Using Lemma \ref{lemma:le_cam}, when $N \leq \Omega(\exp(\lambda n))$, the $\nbqp_\lambda$ algorithm cannot succeed with probability $\geq 2/3$.
\end{proof}
This concludes the proof of Theorem \ref{thm:NBQP_vs_BQP}. Note that we did not need to use the fact that every depth-1 layer of a circuit employed by an $\nbqp$ algorithm is noisy; the oracle constructed is so sensitive to noise on its inputs that a single layer of noise at the outset is sufficient for this exponential separation.

\section{Quantum Advantage in Noisy Purity Testing} \label{appendix:purity_testing_lb}
A well-studied candidate for achieving quantum advantage in learning from entangled measurements is the problem of estimating the purity of a quantum state. Several works demonstrate that without at least $n$ additional qubits of quantum memory, the sample complexity of the simpler \textit{purity testing} problem (distinguishing between a $n$-qubit Haar-random pure state and a maximally mixed state) scales exponentially with $n$ \cite{aharonov1999faulttolerantquantumcomputationconstant, chen2021exponentialseparationslearningquantum, gong2024samplecomplexitypurityinner, weiyuan_paulis}. On the other hand, running a simple \textsf{SWAP} test on two copies of the state a constant number of times suffices to distinguish these cases, because while the inner product of a pure state with itself is $1$, the output of a \textsf{SWAP} test on two copies of a maximally mixed state is exponentially close to zero. In Section  \ref{app:noisy_purity_testing}, we show that this supposed quantum advantage breaks down in the presence of order-1 local depolarizing noise.

Implicitly, this result demonstrates that the \textsf{SWAP} test, a crucial subroutine in many quantum-enhanced algorithms and learning/property-testing protocols  \cite{Huang_2022, Cincio_2018, Lloyd_2014, bădescu2017quantumstatecertification, Harrow:2012gwf, montanaro2018surveyquantumpropertytesting}, is degraded in the presence of noise by a factor exponential in system size. Moreover, no adaptive strategy can be used to circumvent this fact. To recover quantum advantage when performing \textsf{SWAP} operations on entire copies of an uncharacterized quantum state or process, the underlying physical system must exhibit an intrinsic robustness to noise, effectively enabling a form of quantum error-correction.  In Section \ref{app:happy_code}, we construct such a system using a tensor-network model for holographic duality known as the HaPPY code \cite{happy_2015}. The purity testing problem is reformulated as detecting the presence of a bulk black hole microstate from noisy measurements of the dual boundary state.

\subsection{Breakdown of advantage in noisy two-copy purity testing} \label{app:noisy_purity_testing}
In this section, we prove an exponential-in-$n$ lower bound on the sample complexity of purity testing with access to two noisy copies of an unknown $n$-qubit quantum state. Any algorithm for this task can be represented by a learning tree where, at each node, the algorithm can perform arbitrary noisy joint measurements on two copies of the unknown state. We formalize this as follows, making reference to the general learning tree representation from Definition \ref{def:general_learning_tree}.

\begin{definition}[Learning tree for noisy two-copy algorithm]
    Any $\lambda$-noisy quantum algorithm with query access to two noisy copies of the state $\rho$, i.e. $\DN[\rho] \otimes \DN[\rho]$, can be represented by a learning tree $\mathcal{T}$, where at each node $u$ of $\mathcal{T}$ the algorithm can measure an arbitrary $2n$-qubit POVM $M = \{F_s\}_u$. Since every such POVM can be simulated by a rank-1 POVM of the form $\{2^{2n}w_s^u\ketbra{\psi_s^u}{\psi_s^u}\}$ with all $w_s \geq 0$ and $\sum_s w_s = 1$, the transition rule is
\begin{equation}
    p_\rho(v) = p_\rho(u)2^{2n}w_s^u\tr(\DN[\rho]\otimes \DN[\rho]\ketbra{\psi_s^u}{\psi_s^u})\,.
\end{equation}
\end{definition}
We remark that this model is actually more powerful than an algorithm using $\lambda$-noisy quantum circuits, because the POVMs need not include noise layers between every depth-1 unitary in their construction. However, because the set of all $n$-qubit POVMs contains the set of POVMs corresponding to $\lambda$-noisy circuits, any lower bound obtained with this learning tree applies to the interstitial-noise model. Our lower bounds thus hold against any algorithm which experiences only a single layer of depolarizing noise at the outset. We implicitly use this fact in all learning tree lower bounds hereafter.

\begin{theorem}[Formal version of Theorem \ref{thm:purity_testing_lb_informal}] Let $\rho$, with equal probability, be either the maximally mixed state $\mathds{1}/2^n$ or $\ketbra{\psi}{\psi}$ where $\ket{\psi}$ is sampled from the Haar measure over $n$-qubit pure states. Any algorithm with the ability to perform arbitrary measurements on $\DN[\rho]\otimes \DN[\rho]$ requires 
\begin{align}
\Omega\left(\min\!\left\{2^{n/2},\left(\frac{4}{1+3(1-\lambda)^4}\right)^n\right\}\right)
\end{align}
samples to distinguish the two cases with high success probability.
\end{theorem}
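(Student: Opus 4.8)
The plan is to apply the learning-tree toolbox (Lemma~\ref{lemma:toolbox}) to the two-copy learning tree, reducing the purity-testing lower bound to a concentration estimate on the edge likelihood ratios at each node. First I would fix the many-vs.-one structure: the ``one'' hypothesis is $\rho = \mathds{1}/2^n$, and the ``many'' distribution $\mathcal{D}$ is the Haar measure over $n$-qubit pure states $\ketbra{\psi}{\psi}$. For a node $u$ with rank-1 POVM element $2^{2n}w_s^u\ketbra{\psi_s^u}{\psi_s^u}$, the conditional probabilities are $p_\rho(s|u) = 2^{2n}w_s^u\,\tr\!\big(\DN[\rho]^{\otimes 2}\,\ketbra{\psi_s^u}{\psi_s^u}\big)$, so the edge likelihood ratio $L_{\rho}(s|u)$ is a ratio of these two expectation values, and the $w_s^u$ and $2^{2n}$ factors cancel. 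The goal is to verify the hypothesis of Lemma~\ref{lemma:toolbox}(2), namely
\begin{equation}
\E_{\rho\sim\mathcal{D}}\,\E_{s\sim p_{\mathds{1}/2^n}(s|u)}\big[(L(s|u)-1)^2\big] \leq \delta
\end{equation}
with $\delta = O\!\big(\max\{2^{-n/2}, (1-3(1-\lambda)^4/\cdots)\}\big)$ matched to the claimed bound, and then invoke part (2b) to conclude that the tree depth $T$ must be $\Omega(1/\delta)$.

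\textbf{Computing the noisy denominators and the Haar-averaged ratio.}

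The key computation is to evaluate the numerator and denominator of $L(s|u)$ for a fixed measurement vector $\ket{\psi}\equiv\ket{\psi_s^u}$. For the denominator, $\DN[\mathds{1}/2^n] = \mathds{1}/2^n$ (the maximally mixed state is a fixed point of depolarizing noise), so $p_{\mathds{1}/2^n}(s|u) \propto \tr\big((\mathds{1}/2^n)^{\otimes 2}\ketbra{\psi}{\psi}\big) = 4^{-n}$, a constant independent of $\ket{\psi}$. For the numerator I would expand $\DN[\ketbra{\phi}{\phi}]$ in the Pauli basis: writing $\ketbra{\phi}{\phi} = 2^{-n}\sum_P c_P P$ with $c_P = \tr(P\ketbra{\phi}{\phi})$, the depolarizing channel acts as $\DN[P] = (1-\lambda)^{|P|}P$ where $|P|$ is the Pauli weight (number of non-identity tensor factors), since each single-qubit $\mathcal{D}_\lambda$ contributes a factor $(1-\lambda)$ on $X,Y,Z$ and $1$ on $I$. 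The crucial step is then to take the Haar average over $\ket{\phi}$ of the squared ratio; this reduces to computing $\E_{\text{Haar}}[c_P c_Q]$ for pairs of Paulis, which by the standard second-moment (2-design) formula depends only on whether $P = Q$. Carrying this out, the weight-dependent factor $(1-\lambda)^{2|P|}$ gets averaged against the uniform-over-nonidentity-Paulis weight distribution, and the combinatorial sum $\sum_{P} (1-\lambda)^{4|P|}$ (coming from two copies, hence the fourth power) produces the telltale factor $\big(1 + 3(1-\lambda)^4\big)^n/4^n$ after collecting the $1 + 3x$ per-qubit contribution across $n$ qubits. This is precisely where the base $4/(1+3(1-\lambda)^4)$ in the final bound originates.

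\textbf{Assembling the bound and the main obstacle.}

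Once the per-node quantity $\E_{\rho\sim\mathcal{D}}\E_{s}[(L(s|u)-1)^2]$ is shown to be bounded by $\delta = O\!\big((1+3(1-\lambda)^4)/4\big)^{n} + O(2^{-n/2})$, I would feed this into Lemma~\ref{lemma:toolbox}(2b) to get $T = \Omega(1/\delta) = \Omega\!\big(\min\{2^{n/2}, (4/(1+3(1-\lambda)^4))^n\}\big)$, which is exactly the claimed sample-complexity lower bound. The $2^{-n/2}$ term accounts for the fluctuation of the Haar-random state's overlap statistics (the difference between the exact Haar second moment and its leading behavior), ensuring the bound degrades gracefully as $\lambda \to 0$, where it should recover the known noiseless $\Omega(2^{n/2})$ purity-testing lower bound rather than diverging. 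I expect the \textbf{main obstacle} to be the second-moment computation: controlling $\E_{\text{Haar}}[(L(s|u)-1)^2]$ requires carefully handling the cross terms and the $P=Q$ versus $P\neq Q$ cases in the Pauli expansion, and in particular isolating the identity-Pauli ($P = I^{\otimes n}$, which contributes the ``$1$'' that $L$ is centered around) from the weight-$\geq 1$ contributions. The martingale formalism from~\cite{weiyuan_paulis} is what lets us pass from this single-node variance bound to a statement about the full leaf distribution; the delicate point is ensuring that the noise-induced suppression $(1-\lambda)^{4|P|}$ genuinely multiplies across all $n$ qubits so that every weight-$\geq 1$ Pauli is exponentially damped, which is what forces the exponential sample complexity even with the powerful two-copy, arbitrary-depth measurement model.
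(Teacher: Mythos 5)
Your proposal takes a genuinely different route from the paper (a purely node-based bound applied directly to the Haar-vs-mixed problem), but it has a gap at its central computation: a moment-counting error. The hypothesis of Lemma~\ref{lemma:toolbox}(2) requires bounding $\mathbb{E}_{\psi\sim\mathrm{Haar}}\,\mathbb{E}_{s\sim p_{\mathds{1}/2^n}(s|u)}\big[(L_{\psi}(s|u)-1)^2\big]$ with the Haar expectation \emph{outside} the square. Since the two-copy edge likelihood ratio $L_\psi(s|u) \propto \tr\big(F_s\,\DN[\ketbra{\psi}{\psi}]\otimes\DN[\ketbra{\psi}{\psi}]\big)$ is already quadratic in $\ketbra{\psi}{\psi}$, its square is quartic, so this quantity is governed by the \emph{fourth} Haar moment $\mathbb{E}_\psi[\ketbra{\psi}{\psi}^{\otimes 4}]$ — a sum over all $24$ permutation operators in $S_4$ — not by the second-moment quantities $\mathbb{E}_{\mathrm{Haar}}[c_P c_Q]$ you invoke. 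A 2-design computation only controls $\mathbb{E}_s\big[(\mathbb{E}_\psi[L_\psi(s|u)]-1)^2\big]$, i.e.\ the average taken \emph{before} squaring, which by Jensen is a strictly smaller quantity; bounding it does not verify the hypothesis of the toolbox lemma, so the conclusion $T=\Omega(1/\delta)$ does not follow. Whether a careful fourth-moment (Weingarten-type) version of your argument closes — one must control all permutation terms, including ones like $\textsf{SWAP}_{13}\textsf{SWAP}_{24}$ which for Bell-type POVM elements produce exactly the $\big((1+3(1-\lambda)^4)/4\big)^n$ contribution, together with odd-cycle cross terms — is plausible but unestablished in your sketch, and it is precisely the computation your proposal never outlines. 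A secondary issue: your attribution of the $2^{-n/2}$ term to "fluctuation of the Haar second moment" is not right; exact-vs-leading second-moment corrections are $O(2^{-n})$.

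For contrast, the paper sidesteps fourth moments entirely with a two-stage argument. It first applies the \emph{path-based} strategy: fixing a root-to-leaf path and Haar-averaging the entire path likelihood ratio using the exact $2T$-th moment formula $\mathbb{E}_\psi[\ketbra{\psi}{\psi}^{\otimes 2T}] \propto S_{2T}$ (this is where the $2^{n/2}$ cap genuinely originates, via the prefactor $1-4T^2/2^n$). It then uses the superadditivity inequality $\tr(\rho_x\otimes\rho_y\, S_{x+y}) \geq \tr(\rho_x S_x)\tr(\rho_y S_y)$ to factor the path-averaged ratio into per-node factors involving only $S_2 = \mathds{1}_{2n}+\textsf{SWAP}_n$, and reinterprets that product as the likelihood ratio of a \emph{new} two-point distinguishing problem: the fixed state $\mathcal{D}_\lambda^{\otimes 2n}[S_2/\tr(S_2)]$ versus $\mathds{1}_{2n}/2^{2n}$. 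Only then does it apply the node-based toolbox, where the required bound follows from the elementary Cauchy--Schwarz estimate $\sum_s \tr(F_s A)^2/\tr(F_s) \leq \tr(A^2)$ with $A = \mathcal{D}_\lambda^{\otimes 2n}[\textsf{SWAP}_n]$ and the single-site identity $\tr(a^2) = 1+3(1-\lambda)^4$. If you want to salvage your direct approach, the honest cost is carrying out the full $S_4$ computation; the paper's reformulation is what lets it avoid that entirely.
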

\begin{proof}
    Our learning tree toolbox, Lemma \ref{lemma:toolbox}, admits two types of strategies. First, one can bounding the fluctuations of the likelihood ratio at every intermediate node of the learning tree. When this bound is insufficient, the more sophisticated approach requires bounding the likelihood ratio over entire root-to-leaf paths of the tree, and arguing that this bound holds with high probability for almost all paths. Our approach for purity testing will begin with the path-based strategy; however, in doing so we will reformulate the problem in terms of a separate learning tree bound which can be completed using the simpler node-based strategy.

    We begin by fixing a root-to-leaf path of our 2-copy learning tree, which is specified by a set of $2n$-qubit POVMs $\{F_s^t\}_{s,t}$ with $t=1,...,T$ specifying the layer of the tree. The final node on this path will be the leaf node $\ell$. The likelihood ratio along this path is
    \begin{align}
    L_{\psi}(\ell)=\prod_{t=1}^T\frac{\tr\!\left(F_s^t(\DN\otimes\DN)[\ketbra{\psi}{\psi}^{\otimes 2}]\right)}{\tr(F_s^t(\mathds{1}/2^n)^{\otimes 2})}=\frac{\tr\!\left(\bigotimes_{t=1}^TF_s^t\cdot \mathcal{D}_\lambda^{\otimes 2nT}[\ketbra{\psi}{\psi}^{\otimes 2T}]\right)}{\tr(\bigotimes_{t=1}^TF_s^t\cdot(\mathds{1}/2^n)^{\otimes 2T})}.
    \end{align}
    Computing the Haar average, we obtain:
    \begin{align}
    L(\ell)&=\mathbb{E}_{\psi}[L_{\psi}(\ell)]\\
    &=\frac{(2^n)^{2T}}{2^n(2^n+1)...(2^n+2T-1)}\frac{\tr\!\left(\bigotimes_{t=1}^TF_s^t\cdot \mathcal{D}_\lambda^{\otimes 2nT}(S_{2T})\right)}{\tr(\bigotimes_{t=1}^TF_s^t)}\\
    &\geq\left(1-\frac{4T^2}{2^n}\right)\frac{\tr\!\left(\bigotimes_{t=1}^TF_s^t\cdot \mathcal{D}_\lambda^{\otimes 2nT}(S_{2T})\right)}{\prod_{t=1}^T\tr(F_s^t)},
    \end{align}
    where $S_x$ is the sum over all permutation operators on $x$ copies of the $n$-qubit Hilbert space. From~\cite[Lemma 16]{weiyuan_paulis} we have
    \begin{align}
    \tr(\rho_x\otimes\rho_y\cdot S_{x+y})\geq\tr(\rho_xS_x)\tr(\rho_yS_y)
    \end{align}
    for any positive semi-definite matrix $\rho_x$ and $\rho_y$.  Using $\tr(A\otimes B) = \tr(A)\tr(B)$, we have
    \begin{align}
    L(\ell)&\geq\left(1-\frac{4T^2}{2^n}\right)\prod_{t=1}^T\frac{\tr(F_s^t\cdot \mathcal{D}_\lambda^{\otimes 2n}(S_{2}))}{\tr(F_s^t)}\,.
    \end{align}
    
    At this stage, we notice that this expression can be recast as a likelihood ratio of distinguishing two specific states. Since $S_2 = \textsf{SWAP}_n + \mathds{1}_{2n}$ has $\tr(S_2) = 2^n(2^n+1)$ and is positive semi-definite, we see that $\rho_S = S_2/(2^n(2^n+1))$ is a valid quantum state. Defining $\rho_m = \mathds{1}_{2n}/2^{2n}$, we thus have
    \begin{equation}
        L(\ell) \geq \left(1-\frac{4T^2}{2^n}\right)\prod_{t=1}^T\frac{\tr(F_s^t\cdot \mathcal{D}_\lambda^{\otimes 2n}(S_{2}))}{\tr(F_s^t)} \geq \left(1-\frac{4T^2}{2^n}\right)\prod_{t=1}^T\frac{\tr(F_s^t\cdot \mathcal{D}_\lambda^{\otimes 2n}[\rho_S])}{\tr(F_s^t\rho_m)}\,,
    \end{equation}
    where the product in the final expression is the likelihood ratio $L'(\ell)$ for another learning tree $\mathcal{T'}$ distinguishing between input states $\mathcal{D}_\lambda^{\otimes 2n}[\rho_S]$ and $\rho_m$\,. This is the crucial reformulation; now, the learning tree bound on $\mathcal{T}'$ can be completed using the node-based approach. By Lemma \ref{lemma:toolbox}, we have that if for all nodes $u$ in $\mathcal{T}'$,
    \begin{equation}
            \mathbb{E}_{s\sim p^{\rho_m}(s|u)} \left[\left(L'(s|u) - 1\right)^2\right]\leq \delta\,,
    \end{equation}
    then there exists some $C>0$ such that $\textnormal{Pr}_{\ell\sim p^{\rho_m}(\ell)}[L'(\ell) > 0.9] \geq 0.9 - C\delta T$, where $T$ is the depth of $\mathcal{T}'$. Hence, our next step is to bound this node-wise concentration of the likelihood ratio in our new tree $\mathcal{T}'$.
    
    We omit $t$ in the POVM elements for convenience. Then, letting $\gamma = {2^n(2^n+1)}/{2^{2n}}$
    \begin{align}
        \mathbb{E}_{s\sim p^{\rho_m}(s|u)} \left[\left(L'(s|u) - 1\right)^2\right] &= \sum_s \tr(F_s\rho_m)\left(\gamma\,\frac{\tr(F_s\mathcal{D}_\lambda^{\otimes 2n}[S_2])}{\tr(F_s)} - 1\right)^2
        \\
        &\leq 2^{-2n + 1} + 2\sum_s \tr(F_s\rho_m)\left(\frac{\tr(F_s\mathcal{D}_\lambda^{\otimes 2n}[\textsf{SWAP}_n])}{\tr(F_s)
        }\right)^2 \\
        &\leq 2^{-2n + 1} + 2\sum_s \frac{\tr(F_s \mathcal{D}_\lambda^{\otimes 2n}[\textsf{SWAP}_n])^2}{2^{2n}\tr(F_s)}\,.
    \end{align}    
    We can bound the remaining sum as follows. Let $A$ be any Hermitian operator and $\{F_s\}$ any POVM with $F_s\succeq 0$ and $\sum_s F_s=I$. Write $\tr(F_s A)=\tr(\sqrt{F_s}\sqrt{F_s}A)$. By Cauchy-Schwarz for the Hilbert-Schmidt inner product,
    \begin{align}
    \bigl(\tr(F_s A)\bigr)^2
    \le \tr(F_s)\tr\bigl(\sqrt{F_s}A\sqrt{F_s}A\bigr)
    = \tr(F_s)\tr(F_s A^2)\,.
    \end{align}
    Dividing by $\tr(F_s)$ and summing over $s$, we have
    \begin{align}
    \sum_s \frac{(\tr(F_s A))^2}{\tr(F_s)}
    \le \sum_s \tr(F_s A^2)
    = \tr\Bigl((\sum_s F_s)A^2\Bigr)
    = \tr(A^2).
    \end{align}
    Let us apply the above to $A=\mathcal D_\lambda^{\otimes 2n}(\textsf{SWAP}_n)$. Doing so, we find
    \begin{align}
    \sum_s \frac{\bigl(\tr(F_s A)\bigr)^2}{2^{2n}\tr(F_s)}\le\frac{\tr(A^2)}{2^{2n}}.
    \end{align}
    So it suffices to compute $\tr(D_\lambda^{\otimes 2n}[\textsf{SWAP}]^2)$.
    Since $A$ factorizes over each site,
    \begin{align}
    A=D_\lambda^{\otimes 2n}(\textsf{SWAP}_{n})=\bigotimes_{i=1}^n (\mathcal D_\lambda\otimes \mathcal D_\lambda)(\textsf{SWAP}_1) =: \bigotimes_{i=1}^n a\,,
    \end{align}
    with $a$ acting on two qubits (one pair). Hence, $\tr(A^2)=\bigl(\tr(a^2)\bigr)^{n}$. Recalling that $a=\lambda(2-\lambda) \mathds{1} + (1-\lambda)^2 \textsf{SWAP}$, we define
    \begin{align}
    \alpha := (1-\lambda)^2,\quad \beta := \tfrac{1-\alpha}{2}=\tfrac{2\lambda-\lambda^2}{2}\,.
    \end{align}
    With this notation,
    \begin{align}
    a^2=(\beta I+\alpha \textsf{SWAP})^2=(\beta^2+\alpha^2)I+2\alpha\beta\,\textsf{SWAP}\,.
    \end{align}
    Using $\beta=(1-\alpha)/2$, we find
    \begin{align}
    \tr(a^2) = 1+3\alpha^2 = 1+3(1-\lambda)^4\,,
    \end{align}
    giving us
    \begin{align}
    \sum_s \frac{\bigl(\tr(F_s A)\bigr)^2}{2^{2n}\tr(F_s)}\le\frac{\tr(A^2)}{2^{2n}}\leq\left(\frac{1+3(1-\lambda)^4}{4}\right)^n.
    \end{align}
    Substituting into our likelihood ratio bound, we obtain
    \begin{align}
        \mathbb{E}_{s\sim p^{\rho_m}(s|u)} \left[\left(L'(s|u) - 1\right)^2\right] \leq 2^{-2n+1}+2 \left(\frac{1+3(1-\lambda)^4}{4}\right)^n\,.
    \end{align}
    By Lemma \ref{lemma:toolbox}, there is a constant $C$ such that 
    \begin{equation}
        \textnormal{Pr}_{\ell\sim p^{\rho_m}(\ell)}\left[L(\ell) =  \left(1-\frac{4T^2}{2^n}\right)L'(\ell) > 0.9 \left(1-\frac{4T^2}{2^n}\right)\right] \geq 0.9 - CT\left[2^{-2n+1}+2 \left(\frac{1+3(1-\lambda)^4}{4}\right)^n\right].
    \end{equation}
    If the tree has depth
    \begin{equation}
        T < \min\left(\frac{2^{n/2}}{20}, \frac{1}{400C}\left(\frac{4}{1+3(1-\lambda)^4}\right)^n\right),
    \end{equation}
    we find     
    \begin{equation}
        CT\left[2^{-2n+1}+2 \left(\frac{1+3(1-\lambda)^4}{4}\right)^n\right] \leq 0.01\,,
    \end{equation}
    and $0.9 (1-{4T^2}/{2^n}) \geq 0.89$. Applying this limit on $T$, we obtain the bound 
    \begin{equation}
        \textnormal{Pr}_{\ell\sim p^{\rho_m}(\ell)}\left[L(\ell)  > 0.89 \right] \leq 0.89\,.
    \end{equation}
    Applying (3) from Lemma~\ref{lemma:toolbox} with Lemma \ref{lemma:le_cam}, the probability that the algorithm succeeds in distinguishing the pure and maximally mixed cases is upper bounded by $2(1-0.89) = 0.22$. Hence, with this restriction on $T$, no algorithm can succeed with probability at least $2/3$.
    This yields the final lower bound on the tree depth of 
    \begin{align}
    \Omega\left(\min\left\{2^{n/2},\left(\frac{4}{1+3(1-\lambda)^4}\right)^n\right\}\right).
    \end{align}

\end{proof}

\subsection{Black hole detection in holographic duality}\label{app:happy_code}

We have demonstrated that the vanilla purity test, including any (adaptive) strategy based on \textsf{SWAP} tests, is exponentially degraded by local noise. To study when an ideal purity test can remain robust to noise in a concrete, physically motivated setting, we take as a testbed a tensor-network model of holographic duality. In such models, the microscopic degrees of freedom describing a black hole in the bulk are encoded nonlocally into a dual conformal field theory on the boundary. Operationally, an experimentalist coupled only to the boundary system faces the following learning problem: given noisy boundary data, can one tell whether the bulk black hole is in a single pure microstate or in a mixed ensemble?

We instantiate this scenario using the HaPPY code~\cite{happy_2015}, a holographic tensor-network code that implements an isometric bulk-boundary encoding and exhibits quantum error-correcting structure. In our construction, the inner bulk legs of the network encode the black hole degrees of freedom, all remaining bulk legs are fixed in a reference state, and the learner has noisy access only to boundary qubits. The task is then to distinguish, using only such noisy boundary measurements, whether the bulk black hole region is prepared in a single Haar-random pure state or in the maximally mixed state on the same Hilbert space. We now spell out this setup more precisely.

\subsubsection{Defining the task} \label{appendix:defining_happy_code_prob}
In this construction, we work with the so-called \textit{hexagonal} HaPPY code. As depicted in Figure \ref{fig:happy_code}(a), each tensor in the network has six (two-dimensional) legs. In this model, the network is built from a central (level-0) tile, which is contracted to six neighboring level-1 tiles. Each level-1 tile is then contracted to two shared level-2 tiles, and three level-2 tiles which only share a leg with a single level-1 tile (see Figure \ref{fig:happy_code}). The \textit{radius} of the tensor network is the least number of legs between a tile at the boundary of the network and the center of the network.

We then remove all tiles within radius $r$ of the center. This reveals a number of uncontracted internal legs, which are then contracted to either a maximally mixed state or a Haar-random pure state on the correct Hilbert space dimension. For later use, we now quantify the size of this inserted state as a function of $r$.

The hexagonal tiling has two types of tiles: those which are connected to $1$ tile of a lower radius and $5$ tiles of a higher radius, and those connected to $2$ lower and $4$ higher tiles. Letting $x_k$ be the number of $1, 5$-type tiles at radius $k$ and $y_k$ be the number of $2, 4$ tiles, we have the recurrence relations
\begin{align}
    x_k &= 3x_{k-1} + 4y_{k-1} \\
    y_k &= x_{k-1}
\end{align}
with $x_1 = 6, x_2 = 18$. Solving for a closed form in $x$ gives
\begin{equation}
    x_k = \frac{24}{5}4^{k-1} + \frac{6}{5}(-1)^{k-1}\,.
\end{equation}
The number of unconctracted legs upon excising all tiles up to radius $r$ is $5x_k + 4x_{k-1}$ because $y_k = x_{k-1}$, each $1,5$ tile leaves behind $5$ uncontracted legs, and $2,4$ tile leaves behind $4$ uncontracted legs. So the number of legs $L_k$ is
\begin{equation}
    L_k = \frac{144}{5}4^{k-1} + \frac{6}{5}(-1)^{k-1} = \Theta(4^{k-1})\,.
\end{equation}
Hence the number of legs contracted to the inserted black hole will go like $6, 30, 114, 462, ...$ and each leg corresponds to a qubit dimension. So when we say we insert a ``black hole microstate'' of radius $r$ or a ``maximally mixed state'' of radius $r$, this corresponds to a state on $L_r$ qubits, which grows exponentially in the radius. We let $\mathcal{H}_r$ denote the Hilbert space of dimension $2^{L_r}$, corresponding to a radius$-r$ tensor network state

With the model understood we consider the following task. Given a hexagonal HaPPY code state $\rho_{\text{phys}}$, subjected to an erasure channel of strength $\lambda$ on every out-leg, can we detect whether the logical state of radius $r$ embedded into the network is pure? In this section, we argue that given access to only single copies of $\rho_{\text{phys}}$, a number of copies doubly-exponential in $r$ is required to solve the distinguishing task, while only a constant number is required using quantum processing and two-copy measurements.
\begin{theorem}{\textnormal{(Formal statement of Theorem \ref{thm:happy_code_informal})}} \label{thm:happy_code_formal}
     Consider a hexagonal HaPPY code of radius $R$ with all tiles within a radius $r$ excised and replaced by a quantum state $\rho$ defined on $\mathcal{H}_r$. We are given that $R \geq r + O(\log r)$. 
     The quantum state which is inserted into the bulk is, with equal probability, either a maximally mixed state $\mathds{1}/2^{L_r}$, or a pure state $\ketbra{\psi}{\psi}$ sampled from the Haar measure over $L_r$-qubit states. This state is mapped to a boundary state, which is then corrupted by an erasure channel $\mathcal{E}_\lambda$ erasing every  out-leg with an independent probability $\lambda < 1/48$.
     
     Then there is a quantum algorithm, using joint measurements on two copies of the corrupted boundary state, which can distinguish the pure and mixed bulk hypotheses with high probability, using only $O(1)$ copies of the state. However, any quantum algorithm using only single copies requires $\Omega(2^{\exp(r)})$ measurements to do so.
\end{theorem}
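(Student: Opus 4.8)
The plan is to prove the two halves of the theorem separately: a single-copy lower bound via a data-processing reduction to purity testing on the bulk black-hole register, and a two-copy upper bound that first error-corrects the boundary erasures using the HaPPY code's erasure-correction structure and then runs an ordinary two-copy \textsf{SWAP} test on the recovered bulk.

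For the lower bound, I would exploit the fact that the bulk-to-boundary encoding is an isometry. Fixing all reference bulk legs to $\ket{0}$ and inserting $\rho_{\text{BH}}$ on the $L_r$ inner legs, the map $\rho_{\text{BH}} \mapsto V \rho_{\text{BH}} V^\dagger$ is an isometry $V$ from the $L_r$-qubit black-hole space into the boundary, and the full boundary-preparation map is the channel $\Phi = \mathcal{E}_\lambda \circ (V \cdot V^\dagger)$. Any single-copy boundary POVM $\{M_s\}$ produces outcomes with probabilities $\tr(M_s \Phi(\rho_{\text{BH}})) = \tr(\Phi^\dagger(M_s)\rho_{\text{BH}})$, and $\{\Phi^\dagger(M_s)\}$ is a valid POVM on the bulk because $\Phi^\dagger$ is completely positive and unital (as $\Phi$ is trace-preserving). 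Thus any adaptive single-copy boundary learning tree is, node by node, a single-copy learning tree acting on the clean $L_r$-qubit state $\rho_{\text{BH}}$, which is either $\mathds{1}/2^{L_r}$ or Haar-random pure. Since the standard single-copy purity-testing lower bound on $L_r$ qubits holds for all single-copy POVMs, and the boundary protocols realize a subclass of these, the boundary sample complexity inherits the same lower bound $\Omega\!\left(2^{\Omega(L_r)}\right)$ (as in~\cite{chen2021exponentialseparationslearningquantum, gong2024samplecomplexitypurityinner, weiyuan_paulis}). Combined with $L_r = \Theta(4^{r-1})$ established above, this yields $\Omega\!\left(2^{\exp(\Omega(r))}\right)$.

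For the upper bound, the plan is to use erasure correction on each of the two noisy boundary copies before measuring. Because all bulk reference legs are fixed and the black hole sits at radius $r$, the annulus between radii $r$ and $R$ acts as a HaPPY erasure-correcting code protecting the $L_r$ black-hole legs. I would invoke the HaPPY code's greedy erasure decoder: below the erasure threshold (here $\lambda < 1/48$), the decoder coherently recovers $\rho_{\text{BH}}$ from the unerased boundary legs, and the probability that the random erasure pattern is uncorrectable decays exponentially in the effective annulus distance, i.e. as $\exp(-\Omega(4^{R-r}))$. The hypothesis $R \geq r + O(\log r)$ makes $4^{R-r} = \mathrm{poly}(r)$, so the per-copy decoding-failure probability is $\exp(-\mathrm{poly}(r))$, in particular a small constant. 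Conditioned on both copies decoding successfully, I apply a two-copy \textsf{SWAP} test to the recovered bulk registers; the test accepts with probability $(1+\tr(\rho_{\text{BH}}^2))/2$, which equals $1$ in the pure case and $(1+2^{-L_r})/2 \approx 1/2$ in the maximally mixed case. A constant number of repetitions, each using a fresh pair of boundary copies, then distinguishes the two hypotheses with high probability, while the rare decoding failures contribute only a vanishing additive error by a union bound. All internal processing (decoding and \textsf{SWAP} test) is executed fault-tolerantly as an $\nbqp$ computation, so only the physical boundary erasures, handled by the HaPPY code, degrade the protocol.

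I expect the main obstacle to be the quantitative decoder analysis in the upper bound: making precise the claim that the greedy HaPPY decoder has failure probability $\exp(-\Omega(4^{R-r}))$ for the specific excised-disk geometry and erasure rate $\lambda < 1/48$. This requires identifying the correctable erasure patterns for the annulus code, bounding the number and minimum weight of uncorrectable configurations, and converting the radius condition $R \geq r + O(\log r)$ into an exponentially small logical erasure rate via a percolation-type estimate; one must also verify that the small decoding error is benign, i.e. that an approximately recovered bulk state changes the \textsf{SWAP}-test acceptance probability by at most $o(1)$ via a trace-distance and data-processing bound on the test statistic. By contrast, the lower bound is essentially a clean data-processing argument once the standard single-copy purity-testing bound is invoked, so I would expect it to be the more routine half.
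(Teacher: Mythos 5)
Your proposal is correct and follows essentially the same route as the paper: the lower bound is the same data-processing reduction (the paper pulls boundary POVMs back to bulk POVMs via an explicit partial-trace construction, which is exactly your adjoint-channel $\Phi^\dagger$ argument), and the upper bound is the same greedy-erasure-decode-then-\textsf{SWAP}-test protocol. The only discrepancy is quantitative and harmless: the greedy decoder's per-leg failure probability from the HaPPY analysis is $\tfrac{1}{12}(12\lambda)^{\varphi^{R-r}}$ with $\varphi$ the golden ratio (not $\exp(-\Omega(4^{R-r}))$), and since erasure decoding either succeeds exactly or flags failure, your worry about approximate recovery perturbing the \textsf{SWAP}-test statistic does not arise.
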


The remainder of this section is dedicated to the proof of Theorem \ref{thm:happy_code_formal}. While we consider the erasure channel for a provable guarantee, we remark that the hexagonal HaPPY code exhibits numerical evidence of error thresholds against more general Pauli noise channels \cite{numerics_happy_decoding_2020, Farrelly_2021, Jahn_2021}, and it is likely that the two-copy distinguishing procedure will remain efficient under these channels (including the local depolarizing noise model considered in the rest of this work), by using more sophisticated decoders than the one considered in our upper bound. 

\subsubsection{Intractability with single copies}
First, we argue that any algorithm, even in the noiseless setting, which only measures single copies of the boundary state at a time requies at least $\Omega(2^{L_r/2})$ measurements to determine whether a black hole lies in the bulk with high probability, where we recall that $L_r = \Theta(4^{r-1})$ is the base-2 logarithm of the black hole's Hilbert space dimension.

Consider the many-vs.-one distinguishing problem from Appendix~\ref{appendix:purity_testing_lb}, i.e.~distinguishing between an $n$-qubit state $\rho$ that is either maximally mixed or sampled from the Haar measure over pure states. \cite{chen2021exponentialseparationslearningquantum} demonstrates that any noiseless algorithm without quantum memory (with a learning tree representation given by Definition \ref{def:memoryless_learning_tree}, without the noise channels), with access to individual copies of $\rho$, requires $\Omega(2^{n/2})$ samples to solve this discrimination problem with high probability. Note that the decision-version of our task of detecting the presence of a black hole can be reformulated precisely as the purity testing problem, with the distinction that the unknown state is Haar random over a logical subspace that is then mapped by isometries to a state on an exponentially larger physical Hilbert space. The crucial observation is that this task reduces to the information-theoretic lower bound obtained from a memoryless learning tree for standard purity testing.

The reasoning is as follows. To align notation, let $n = L_r$. Let $\rho_{\text{phys}}$ denote the noiseless holographic boundary state on $m>n$ qubit dimensions. Moreover, let $\rho_{\text{log}}$ denote the $n$-qubit logical quantum state inserted into the holographic tensor network, which is either a Haar-random pure state or a maximally mixed state on $n$ qubits. Finally, let $\Lambda$ denote the isometry mapping $\rho_{\text{log}}\rightarrow \rho_{\text{phys}}$. Then the bulk-boundary channel instantiated by the tensor network is a CPTP map $\mathcal{E}_\Lambda$ such that
\begin{equation}
    \mathcal{E}_\Lambda(\rho_{\text{log}} \otimes \ketbra{0}{0}^{m-n}) = \rho_{\text{phys}}\,.
\end{equation}
Any single-copy measurement on $\rho_{\text{phys}}$ is described by an arbitrary $m$-qubit POVM $\{F_s^{\text{phys}}\}$. The distribution over outcomes under this POVM is thus given by
\begin{equation}
    \text{Pr}[s] = \tr(F_s^{\text{phys}} \mathcal{E}_\Lambda(\rho_{\text{log}} \otimes \ketbra{0}{0}^{m-n}))\,.
\end{equation}
We can then precompose each $F_s^{\text{phys}}$ with $\mathcal{E}_\Lambda$ to obtain POVM elements $F_s^{\text{phys} '}$ such that 
\begin{equation}
    \text{Pr}[s] = \tr (F_s^{\text{phys}'} (\rho_{\text{log}} \otimes \ketbra{0}{0}^{m-n}))\,,
\end{equation}
noting that linearity implies that $\{F_s^{\text{phys} '}\}_s$ is still a POVM. Now define the set of operators $\{F_s^{\text{log}}\}_s$ according to
\begin{equation}
    F_s^{\text{log}} = \tr_{>n}\!\Big(F_s^{\text{phys}'}\big(\mathds{1}_n \otimes \ketbra{0}{0}^{\otimes m-n}\big)\Big),
\end{equation}
where the trace acts on all but the qubits corresponding to $\rho_{\text{log}}$. It is simple to see that $\{F_s^{\text{log}}\}_s$ is a POVM:
\begin{align}
    \sum_s  F_s^{\text{log}} &= \sum_s\tr_{>n}\!\Big(F_s^{\text{phys}'}\big(\mathds{1}_n \otimes \ketbra{0}{0}^{\otimes m-n}\big)\Big)\\
    &= \tr_{>n}\!\Big(\mathds{1}_{m}\big(\mathds{1}_n \otimes \ketbra{0}{0}^{\otimes m-n}\big)\Big)\\
    &= \mathds{1}_n\,.
\end{align}
Crucially, we immediately obtain that measuring $\{F_s^{\text{log}}\}_s$ on $\rho_{\text{log}}$ gives us precisely the same distribution over classical outcomes as measuring  $\{F_s^{\text{phys}}\}_s$ on $\rho_{\text{phys}}$. By this argument, any arbitrary POVM on the physical boundary state has an outcome distribution that can be simulated virtually by a POVM on the bulk logical state corresponding to the black hole or maximally mixed state, when we are discarding the quantum state after each experiment. All such POVMs are included in the memoryless learning tree representation for purity testing. Thus, the purity testing lower bound of $\Omega(2^{n/2})$ measurements immediately ports to the task of black hole detection. This is also a lower bound for the noisy setting, because a layer of depolarizing noise can be included in the description of a POVM on the physical state. Since $n$ is exponential in $r$ as quantified in \ref{appendix:defining_happy_code_prob}, the sample complexity is at least doubly-exponential in the radius of the hidden state.

\subsubsection{Noise-robust detection using joint measurements}
As in the lower bound, we utilize a basic reduction to the standard purity-testing problem to establishes our $O(1)$ two-copy upper bound. The first step is to use the intrinsic error-correction properties of the tensor network. In particular, \cite{happy_2015} proposes a simple hierarchical majority-vote strategy for decoding the boundary state, known as the greedy decoder. While less robust than other decoding strategies, the greedy decoder allows us to obtain rigorous decoding guarantees. In particular, \cite{happy_2015} gives us the following lemma.
\begin{lemma} \label{lemma:happy_error_prop}
     Consider a hexagonal HaPPY code of radius $R$ with all tiles within a radius $r$ excised and replaced by an arbitrary quantum state $\rho$ defined on $\mathcal{H}_r$. Let the boundary state, corrupted by an erasure channel $\mathcal{E}_\lambda$, be $\mathcal{E}_\lambda(\rho_{b})$.
     Then there is an efficient decoding map $\mathcal{G}:\mathcal{H}_R\rightarrow\mathcal{H}_r$ such that 
     \begin{equation}
         \textnormal{Pr}[\mathcal{G}(\mathcal{E}_\lambda(\rho_{b})) \neq \rho] \leq \frac{30\cdot 4^{r-1}}{12}(12\lambda)^{\varphi^{R-r}} \,,
     \end{equation}
     where $\varphi = (1+\sqrt{5})/2$.
\end{lemma}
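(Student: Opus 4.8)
The plan is to take $\mathcal{G}$ to be the greedy decoder of \cite{happy_2015} and to control its failure probability under independent boundary erasures by a hierarchical recursion that exploits the perfect-tensor structure of each tile. Recall that every tile is a $6$-leg perfect tensor, so any $3$ of its legs determine the remaining $3$; hence, once at least $3$ legs of a tile are known---either as unerased boundary legs or as legs already reconstructed at an earlier stage---the decoder inverts the local isometry and recovers that tile's inward-pointing legs. The decoder therefore sweeps radially inward from radius $R$ to radius $r$, at each step absorbing any tile whose number of known legs meets this threshold; it is manifestly efficient, since it performs only $\mathrm{poly}(4^{R})$ constant-size tensor inversions. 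Reconstruction of the inserted state $\rho$ succeeds exactly when every leg bounding the excised disk is reached by this sweep, and fails only if erased legs accumulate so as to starve some tile of the threshold number of good legs along \emph{every} inward path.

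First I would set up the recursion governing this failure. At the boundary each out-leg is erased independently with probability $\lambda$, so a boundary leg is ``bad'' with probability $\lambda$. Propagating inward, a tile at radius $k$ fails to transmit an inward leg only when a threshold fraction of its outward legs are already bad; since each tile requires $3$ good legs among its $4$ or $5$ outward legs, making it fail costs at least two independent upstream failures, and iterating this over the $R-r$ radial layers suppresses the failure probability doubly-exponentially. To extract the exponent I would track $D_k$, the minimal number of boundary erasures needed to block reconstruction of a fixed radius-$k$ leg. The hexagonal hyperbolic connectivity---in which a tile's outward legs are distributed across neighbors at radius $k$ and $k+1$, as reflected in the recurrences $x_k = 3x_{k-1}+4y_{k-1}$ and $y_k = x_{k-1}$---forces $D_k$ to obey a Fibonacci-type relation $D_k \gtrsim D_{k+1}+D_{k+2}$ rather than the base-$4$ law of the tile count or a naive base-$2$/base-$3$ threshold law, so that $D_r = \Theta(\varphi^{R-r})$ with $\varphi=(1+\sqrt{5})/2$. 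Bounding the number of minimal-weight blocking configurations by the binomial multiplicities $\binom{5}{3}=10$ and $\binom{4}{2}=6$ then converts $D_r$ into a single-leg failure probability of at most $(12\lambda)^{\varphi^{R-r}}$, where the hypothesis $\lambda<1/48$ is exactly what forces $12\lambda<1$ so that the recursion contracts.

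Finally I would union-bound over the bulk. The decoder fails to recover $\rho$ only if it fails on at least one of the $L_r=\Theta(4^{r-1})$ legs bounding the excised disk, so multiplying the single-leg bound by this count (captured by the prefactor $\tfrac{30\cdot 4^{r-1}}{12}$) gives
\begin{equation}
\Pr[\mathcal{G}(\mathcal{E}_\lambda(\rho_b)) \neq \rho] \;\le\; \frac{30\cdot 4^{r-1}}{12}\,(12\lambda)^{\varphi^{R-r}}\,,
\end{equation}
as claimed. The assumption $R \ge r + O(\log r)$ is precisely the condition under which $\varphi^{R-r}$ is large enough to overwhelm the $\Theta(4^{r-1})$ prefactor, so that the bound indeed decays.

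The hard part will be the golden-ratio exponent: one must verify, from the specific greedy-geodesic geometry of the hexagonal HaPPY network, that $D_k$ genuinely satisfies the Fibonacci recursion and is not cut down by sideways (same-radius) reconstruction paths, and one must bound the combinatorial multiplicity of blocking erasure patterns tightly enough to preserve the clean base $12\lambda$. These are the steps where the holographic structure genuinely enters; once $D_r=\Theta(\varphi^{R-r})$ and the per-leg bound are in hand, the remainder is a routine union bound.
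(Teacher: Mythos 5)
Your proposal follows essentially the same route as the paper: take $\mathcal{G}$ to be the greedy decoder of \cite{happy_2015}, use a per-leg failure bound of $\tfrac{1}{12}(12\lambda)^{\varphi^{R-r}}$ for erasure propagation through $R-r$ radial layers, and union-bound over the $L_r \le 30\cdot 4^{r-1}$ legs bounding the excised disk to get the stated prefactor. The only difference is that the paper simply cites \cite{happy_2015} for the per-leg greedy-decoding bound (golden-ratio exponent and base $12\lambda$ included), whereas you sketch its internal Fibonacci-recursion derivation; the step you flag as the hard part is exactly what the paper treats as a black box, so your argument is correct and, if the recursion details were filled in, would be self-contained where the paper's is not.
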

\begin{proof}
    Under an erasure channel of strength $\lambda$, each out-leg of the boundary state is corrupted independently with probability $\lambda$. \cite{happy_2015} shows that when the greedy decoder is applied to $t$ radial layers of the tensor network, the probability that an erasure error has propagated to a particular leg at radius $R-t$ is $\leq \frac{1}{12}(12\lambda)^{\varphi^{R-t}}$. Applying a union bound over all legs at radius $r$, which we have shown is 
    \begin{equation}
        L_r = \frac{144}{5}4^{k-1} + \frac{6}{5}(-1)^{k-1} \leq 30\cdot 4^{r-1}\,,
    \end{equation}
    we obtain the bound on the error probability of the overall bulk state. 
\end{proof}
Now, let the error rate be $\lambda \leq 1/48$. Using Lemma \ref{lemma:happy_error_prop}, the probability that the greedy decoder fails to decode perfectly is bounded by $2.5\cdot4^{r-1}\cdot(1/4)^{\varphi^{R-r}} =  2.5 \cdot 4^{r-1-\varphi^{R-r}}$ which is asymptotically exponentially small in $r$ when $R \geq r + c\log r$ with a relatively small constant $c>1$.

Hence, given two copies of $\mathcal{E}_\lambda(\rho_{b})$, we can apply the greedy decoder to both, and with probability $\geq 1-O(\exp\exp(-r(c-1)))$, we obtain two copies of $\rho$. In our problem, $\rho$ is either maximally mixed or a Haar-random pure state $\ketbra{\psi}{\psi}$. Our problem is then reduced to the well-studied purity testing problem, for which there is a simple two-copy algorithm. One can simply run a $\textsf{SWAP}$ test $\textsf{SWAP}(\rho, \rho)$ (e.g., as described in \cite{montanaro2018surveyquantumpropertytesting}); on two arbitrary states $\rho, \sigma$, the test accepts with probability 
\begin{equation}
    \frac{1}{2} +\frac{\tr(\rho\sigma)}{2}\,,
\end{equation}
and rejects otherwise. If $\rho$ is a fixed pure state, the test always accepts, whereas if it is maximally mixed, the test accepts with probability exponentially close to $1/2$. Assuming the two copies of $\rho$ have been decoded perfectly, $O(\log 1/\delta)$ repetitions of the $\textsf{SWAP}$ test guarantees a success probability in detecting the pure state of at least $1-O(\delta)$. Since we have shown the decoding fails with probability at most exponentially small in the radius of the embedded state, $O(1)$ samples are sufficient, using greedy decoding and \textsf{SWAP} tests, to detect a pure state in the bulk with high probability.

\section{Pauli Tomography in Noisy Quantum Learning Theory} \label{app:quantum_advantage} 
In this section, we construct a quantum state-discrimination task that is strictly easier than performing Pauli shadow tomography \cite{weiyuan_paulis}. We show that any quantum algorithm restricted to single-copy measurements requires at least $(2/f(\lambda))^n$ measurements to succeed with high probability, where $f(\lambda)\in[0,1]$ depends inversely on the noise rate. More generally, even with $k$ qubits of quantum memory, any learning strategy still needs $2^{\,n-k}(1-\lambda)^{-n}$ samples. These bounds immediately yield lower bounds for noisy Pauli shadow tomography. Since learning Pauli observables (or stabilizer states \cite{montanaro2017learningstabilizerstatesbell}) is the canonical application of Bell sampling, our information-theoretic lower bounds — which require only a single invocation of the noise channel — demonstrate that Bell-measurement-based strategies are exponentially degraded by local noise.

We then give a single-copy algorithm for noisy Pauli shadow tomography and a two-copy algorithm which can solve the state-discrimination task. The latter has sample complexity matching the lower bound for $k=n$ qubits of quantum memory up to nonleading factors of $n$ and constants in the exponents. This establishes a polynomial separation, dependent on the noise rate, between our two-copy algorithm and any single-copy protocol. As the noise rate tends to zero, this polynomial advantage becomes exponential, recovering the ideal quantum advantage for Pauli shadow tomography. We begin by defining the broader learning problem.

\begin{definition}[Pauli shadow tomography] \label{def:pauli_shadows}Given copies of an unknown quantum state $\rho$, estimate all $4^n$ values $\tr(P\rho)$ for all $P\in \mathcal{P}_n$ to within additive error $\epsilon$, where $\epsilon$ is a positive constant.
\end{definition}
This physically motivated learning task has been extensively studied, and optimal bounds on its sample complexity in the noiseless setting are given in \cite{weiyuan_paulis}. Here, we treat this problem in the context of $\nbqp$-type errors. Our bounds neglect the well-established $\epsilon$-dependence of this task for clarity and to emphasize the interplay between noise rate and instance size, as $\epsilon$-dependence is not altered by noise. For our lower bounds, we consider three models of $\lambda$-noisy quantum algorithms: 
\begin{itemize}
    \item Algorithms without any quantum memory, but the ability to perform arbitrary measurements on individual copies of the unknown state
    \item Algorithms with $k$ qubits of quantum memory, but each circuit can only perform a constant number of queries before measurement (i.e.~quantum memory with a limited lifetime), and classical advice can be passed between quantum circuits.
    \item Algorithms on a $k$ qubit quantum computer coupled to arbitrarily many sequential copies of the unknown state
\end{itemize}
We provide separate lower bounds on the sample complexity of these three models, all of which are at least exponential in $n$ (or $n-k$ given $k$ qubits of memory). To prove these bounds, we state the following many-vs.-one decision problem.

\begin{definition}[Decision I+P problem] \label{def:dec_ip_problen}
Let $D$ be a distribution over $\mathcal{P}_n$. Let $O$ be a state preparation oracle which loads a fixed state given by either $\rho = (\mathds{1} + P)/2^n$ with $P$ sampled from $D$ or $\rho = \mathds{1}/2^n$, where $\mathds{1}$ denotes the $n$-qubit identity matrix. The decision I+P problem $\textnormal{Dec-IP}(n, D)$ is to distinguish the two cases with success probability at least $2/3$, given query access to $O$.
\end{definition}

\noindent For later use, we define $\overline{\mathcal{P}}_n := \mathcal{P}_n \setminus \{\mathds{1}\}$.  We call $w$ the \textit{weight} of $P$, and denote it by $w = |P|$. Observe that any algorithm that solves Pauli shadow tomography with e.g. $\epsilon < 1/3$ and success probability at least $2/3$ can solve $\textnormal{Dec-IP}(n, \textnormal{Unif}(\overline{\mathcal{P}}_n))$. The reasoning is simple; given an algorithm for Pauli shadow tomography, a learner attempting to solve Dec-IP can simply run the tomography algorithm to obtain estimates for all Pauli expectations, then choose the one with the largest absolute expectation. If this value rounds to $0$, we output the maximally mixed hypothesis, and if it rounds to $1$, we output the alternate hypothesis. Thus, any algorithm requiring at least $M$ queries to $O$ to solve $\textnormal{Dec-IP}(n, \textnormal{Unif}(\overline{\mathcal{P}}_n))$ requires at least $M$ copies of $\rho$ to solve Pauli shadow tomography. The same holds when the distribution is restricted to uniform over $\{X, Y, Z\}^{\otimes n}$, a fact we use in the following lower bound.

\subsection{Lower bound for single-copy measurements without quantum memory}
Now we demonstrate a lower bound on the first of our three single-copy learning models, where the algorithm has no access to quantum memory. This setting is depicted in Figure \ref{fig:quantum_memory}(a). Any such algorithm can be formally represented by a learning tree (in the setting without a quantum memory), in which each copy of the uncharacterized state will be subject to a layer of noise before the algorithm can process it.
\begin{definition}[Learning tree without quantum memory]\label{def:memoryless_learning_tree}
Any $\lambda$-noisy learning algorithm with access to a fixed quantum state $\DN[\rho]$ and no quantum memory can be represented by a learning tree $\mathcal{T}$, without an ancillary quantum register. At each node, the algorithm can select an arbitrary POVM on $n$-qubits. 
Since every such POVM can be simulated by a rank-1 POVM of the form $\{2^nw_s^u\ketbra{\psi_s^u}{\psi_s^u}\}$ with all $w_s \geq 0$ and $\sum_s w_s = 1$, the transition rule is
\begin{equation}
    p_\rho(v) = p_\rho(u)2^nw_s^u\tr(\DN[\rho]\ketbra{\psi_s^u}{\psi_s^u})\,.
\end{equation}
\end{definition}

Before proceeding, we state two lemmas characterizing the operator norm of $\textsf{SWAP}$ operators under the depolarizing channel, whose proofs we defer to Appendix \ref{appendix:depol_lemma_proofs}.
\begin{lemma}\label{lemma:depol_on_swap}
    Define $f(\lambda) := 1-\lambda+\lambda^2/2$ for $\lambda\in [0, 1]$. Then for any $n$-qubit pure state $\ket{\phi},$
    \begin{equation}
        \tr(|\phi\rangle \langle \phi|^{\otimes 2} (\mathcal{D}_\lambda^{\otimes n} \otimes \mathcal{D}_\lambda^{\otimes n})[\textnormal{\textsf{SWAP}}_n]) \leq {f(\lambda)}^n\,,
    \end{equation}
    where we note that $\tfrac{1}{2}\,f(\lambda) \in [\tfrac{1}{2}, 1]$.
\end{lemma}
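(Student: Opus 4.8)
The plan is to reduce the $2n$-qubit trace to a product over the $n$ ``rungs'' of $\textsf{SWAP}_n$, expand the resulting operator into a sum over subsets with manifestly nonnegative coefficients, and then bound each term crudely by $1$ using the \textsf{SWAP} trick. First I would use that $\textsf{SWAP}_n = \textsf{SWAP}_1^{\otimes n}$ factorizes across the $n$ pairs of corresponding qubits and that the two copies of $\mathcal{D}_\lambda^{\otimes n}$ act qubit-wise, so regrouping the depolarizing channel acting on qubit $i$ of each copy gives
\[
(\mathcal{D}_\lambda^{\otimes n}\otimes\mathcal{D}_\lambda^{\otimes n})[\textsf{SWAP}_n] = \bigotimes_{i=1}^n (\mathcal{D}_\lambda\otimes\mathcal{D}_\lambda)[\textsf{SWAP}_1]\,.
\]
A single-rung computation, using $\textsf{SWAP}_1 = \tfrac12(\mathds{1}_2^{\otimes 2} + X\otimes X + Y\otimes Y + Z\otimes Z)$, the action $\mathcal{D}_\lambda(P) = (1-\lambda)P$ on non-identity Paulis, and the identity $X\otimes X + Y\otimes Y + Z\otimes Z = 2\textsf{SWAP}_1 - \mathds{1}_2^{\otimes 2}$ from Fact~\ref{fact:pauli_to_swap}, then yields
\[
(\mathcal{D}_\lambda\otimes\mathcal{D}_\lambda)[\textsf{SWAP}_1] = \beta\,\mathds{1}_2^{\otimes 2} + \alpha\,\textsf{SWAP}_1\,, \qquad \alpha := (1-\lambda)^2,\quad \beta := \tfrac{\lambda(2-\lambda)}{2}\,.
\]
The crucial facts here are that $\alpha,\beta \ge 0$ for $\lambda\in[0,1]$ and that $\alpha+\beta = 1-\lambda+\tfrac{\lambda^2}{2} = f(\lambda)$.

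Next I would expand the tensor product over rungs into a sum over subsets $S\subseteq[n]$, writing $\textsf{SWAP}_S$ for the operator that swaps exactly the qubits indexed by $S$ between the two copies:
\[
(\mathcal{D}_\lambda^{\otimes n}\otimes\mathcal{D}_\lambda^{\otimes n})[\textsf{SWAP}_n] = \sum_{S\subseteq[n]}\beta^{\,n-|S|}\,\alpha^{|S|}\,\textsf{SWAP}_S\,.
\]
Pairing this against $\ketbra{\phi}{\phi}^{\otimes 2}$ and invoking the \textsf{SWAP} trick, each term becomes
\[
\tr\!\big(\ketbra{\phi}{\phi}^{\otimes 2}\,\textsf{SWAP}_S\big) = \tr\!\big(\tr_{[n]\setminus S}(\ketbra{\phi}{\phi})^2\big)\,,
\]
which is the purity of a reduced density matrix and therefore lies in $[0,1]$. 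Since every expansion coefficient $\beta^{\,n-|S|}\alpha^{|S|}$ is nonnegative, I may bound each purity factor by $1$ and collapse the resulting geometric sum:
\[
\tr\!\big(\ketbra{\phi}{\phi}^{\otimes 2}\,(\mathcal{D}_\lambda^{\otimes n}\otimes\mathcal{D}_\lambda^{\otimes n})[\textsf{SWAP}_n]\big) \le \sum_{S\subseteq[n]}\beta^{\,n-|S|}\alpha^{|S|} = (\alpha+\beta)^n = f(\lambda)^n\,,
\]
which is the claimed inequality.

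The only real subtlety, and the step I would flag, is the positivity $\alpha,\beta\ge 0$: this is precisely what licenses the termwise replacement of each reduced purity by its maximum value $1$. Were any coefficient negative, the crude bound $\gamma_S\le 1$ would no longer control the signed sum and one would instead have to track the full profile of reduced purities (equivalently, the Pauli-weight distribution of $\ketbra{\phi}{\phi}$). The bound is in fact tight: for a product state all reduced purities equal $1$, so every term saturates and the inequality becomes an equality $f(\lambda)^n$. Finally, the auxiliary range statement for $f$ (namely $f(\lambda)\in[\tfrac12,1]$) follows from elementary monotonicity, since $f'(\lambda)=\lambda-1\le 0$ on $[0,1]$, so $f$ decreases from $f(0)=1$ to $f(1)=\tfrac12$.
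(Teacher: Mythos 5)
Your proof is correct and follows essentially the same route as the paper's: factorize $\textsf{SWAP}_n$ into single-qubit rungs, compute $(\mathcal{D}_\lambda\otimes\mathcal{D}_\lambda)[\textsf{SWAP}_1] = \tfrac{1}{2}\lambda(2-\lambda)\,\mathds{1}_2^{\otimes 2} + (1-\lambda)^2\,\textsf{SWAP}_1$, expand into a subset sum with nonnegative coefficients, bound each term by $1$ via the \textsf{SWAP} trick (purity of a reduced state), and resum the binomial to get $f(\lambda)^n$. The extra observations you flag (positivity of the coefficients, tightness on product states) are sound but not needed beyond what the paper does.
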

\begin{lemma}\label{lemma:depol_on_swap_2}
    For any $n$-qubit pure state $\ket{\phi}$,
    \begin{equation}
        \tr(|\phi\rangle \langle \phi|^{\otimes 2} (\mathcal{D}_\lambda^{\otimes n} \otimes \mathcal{D}_\lambda^{\otimes n})[(2\textnormal{\textsf{SWAP}}_1 - \mathds{1}_2^{\otimes 2})^{\otimes n}]) \leq \frac{(1+3^n)(1-\lambda)^{2n}}{2}
    \end{equation}
\end{lemma}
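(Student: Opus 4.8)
The plan is to eliminate the depolarizing channel analytically and reduce the claim to an elementary bound on a sum of squared Pauli expectations. First I would use Fact~\ref{fact:pauli_to_swap} to rewrite the operator in the Pauli basis, $(2\textnormal{\textsf{SWAP}}_1 - \mathds{1}_2^{\otimes 2})^{\otimes n} = \sum_{P\in\{X,Y,Z\}^{\otimes n}} P\otimes P$, where the first tensor factor acts on one copy of the state and the second on the other. The key structural observation is that every Pauli appearing in this sum has full weight $n$ (no identity factors), which makes the action of $\mathcal{D}_\lambda$ completely transparent.

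Concretely, since $\mathcal{D}_\lambda(P_i) = (1-\lambda)P_i$ for each single-qubit $P_i\in\{X,Y,Z\}$, we have $\mathcal{D}_\lambda^{\otimes n}(P) = (1-\lambda)^n P$ for every $P\in\{X,Y,Z\}^{\otimes n}$, and hence
\[
(\mathcal{D}_\lambda^{\otimes n}\otimes\mathcal{D}_\lambda^{\otimes n})\big[(2\textnormal{\textsf{SWAP}}_1 - \mathds{1}_2^{\otimes 2})^{\otimes n}\big] = (1-\lambda)^{2n}\sum_{P\in\{X,Y,Z\}^{\otimes n}} P\otimes P.
\]
Taking the expectation against $|\phi\rangle\langle\phi|^{\otimes 2}$ and writing $\rho = |\phi\rangle\langle\phi|$, each term becomes $\tr\!\big(|\phi\rangle\langle\phi|^{\otimes 2}(P\otimes P)\big) = \langle\phi|P|\phi\rangle^2 = \tr(P\rho)^2$, so the left-hand side equals $(1-\lambda)^{2n}\sum_{P\in\{X,Y,Z\}^{\otimes n}}\tr(P\rho)^2$. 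It therefore suffices to bound this purely state-dependent sum.

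For the bound itself I would invoke the Pauli completeness identity $\sum_{Q\in\mathcal{P}_n}\tr(Q\rho)^2 = 2^n\tr(\rho^2)$, which for a pure state $\rho$ gives $\sum_{Q\in\mathcal{P}_n}\tr(Q\rho)^2 = 2^n$. Since $\tr(\mathds{1}\rho)^2 = 1$ and every full-weight Pauli lies in $\overline{\mathcal{P}}_n$, the target sum is a subsum of nonnegative terms, so $\sum_{P\in\{X,Y,Z\}^{\otimes n}}\tr(P\rho)^2 \le \sum_{Q\in\overline{\mathcal{P}}_n}\tr(Q\rho)^2 = 2^n - 1$. This already yields the stronger estimate $(1-\lambda)^{2n}(2^n-1)$, and the stated bound then follows from the elementary inequality $2^n - 1 \le \tfrac{1}{2}(1+3^n)$ (equivalently $2^{n+1} \le 3^n + 3$, which holds for all $n\ge 1$).

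The main thing to be careful about is bookkeeping rather than any genuine obstacle: one must keep the ``copy'' ordering and the ``pair'' ordering of the $2n$ qubits consistent when passing between Fact~\ref{fact:pauli_to_swap} and the expectation against $|\phi\rangle^{\otimes 2}$, and one should emphasize that it is precisely the full-weight restriction that collapses the depolarizing action to the single scalar $(1-\lambda)^{2n}$ — in contrast to Lemma~\ref{lemma:depol_on_swap}, where $\textnormal{\textsf{SWAP}}_n$ mixes Paulis of all weights and thereby forces the more delicate $f(\lambda)^n$ estimate. I expect no real difficulty beyond verifying the final elementary inequality; indeed the $2^n-1$ bound is tight (saturated, e.g., by a Bell state at $n=2$, where $\sum_{P}\tr(P\rho)^2 = 3$), so the factor-of-$3^n$ slack in the stated form is entirely harmless.
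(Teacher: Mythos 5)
Your proof is correct, but it takes a genuinely different route from the paper's. The paper works per site: it computes $(\mathcal{D}_\lambda\otimes\mathcal{D}_\lambda)[2\textsf{SWAP}_1-\mathds{1}_2^{\otimes 2}] = -(1-\lambda)^2\mathds{1}_2^{\otimes 2}+2(1-\lambda)^2\textsf{SWAP}_1$, expands the $n$-fold tensor product into $\mathds{1}$/\textsf{SWAP} terms, discards the (negative) odd-identity-count terms, bounds each surviving term by $1$ via the \textsf{SWAP} trick $\tr(\rho^{\otimes 2}\textsf{SWAP}^I)=\tr(\rho_I^2)\le 1$, and sums the parity-restricted binomial series to land exactly on $\tfrac{1}{2}(3^n+1)(1-\lambda)^{2n}$ --- mirroring its proof of Lemma~\ref{lemma:depol_on_swap}. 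You instead pass to the Pauli basis, where the full-weight restriction makes the channel act as the single scalar $(1-\lambda)^{2n}$, and then bound the state-dependent sum $\sum_{P\in\{X,Y,Z\}^{\otimes n}}\tr(P\rho)^2 \le 2^n-1$ via the completeness identity $\sum_{Q\in\mathcal{P}_n}\tr(Q\rho)^2 = 2^n\tr(\rho^2)$ and purity. This buys you a strictly stronger intermediate estimate, $(2^n-1)(1-\lambda)^{2n}$ versus the paper's $\tfrac{1}{2}(3^n+1)(1-\lambda)^{2n}$ (the paper's slack comes from replacing each $\tr(\rho_I^2)$ by $1$ and keeping all even-weight binomial terms), and your tightness example (the Bell state at $n=2$) confirms your bound is sharp; the elementary inequality $2^{n+1}\le 3^n+3$ then recovers the stated form. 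The one bookkeeping point you flag --- copy ordering versus pair ordering of the $2n$ qubits in Fact~\ref{fact:pauli_to_swap} --- is handled identically (and with the same implicit reordering convention) in the paper, so it poses no issue. Had the paper used your bound downstream in Theorem~\ref{thm:memoryless_lb}, the $\tfrac{1}{3}(1-\lambda)^{2n}$ node bound would sharpen to $(2/3)^n(1-\lambda)^{2n}$, though this does not change the qualitative conclusion there.
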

\noindent We now proceed with our memoryless lower bound. 
\begin{theorem}\label{thm:memoryless_lb}
    Any algorithm without quantum memory that solves Pauli shadow tomography with high probability requires 
    \begin{equation}
        \Omega\left(\max \left\{\left(\frac{2}{f(\lambda)}\right)^{n}, \frac{3}{(1-\lambda)^{2n}}\right\} \right)
    \end{equation}
    samples.
\end{theorem}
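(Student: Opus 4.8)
The plan is to reduce Pauli shadow tomography to the decision problem $\textnormal{Dec-IP}(n,D)$ and then apply the node-based (martingale) item of the toolbox Lemma~\ref{lemma:toolbox} to the memoryless learning tree of Definition~\ref{def:memoryless_learning_tree}. The two terms inside the claimed maximum will come from running the same argument for two different choices of the Pauli distribution $D$: the uniform distribution over $\overline{\mathcal{P}}_n$ (yielding the $(2/f(\lambda))^n$ term) and the uniform distribution over $\{X,Y,Z\}^{\otimes n}$ (yielding the $(1-\lambda)^{-2n}$ term). Since any algorithm solving Pauli shadow tomography with $\epsilon<1/3$ also solves $\textnormal{Dec-IP}(n,D)$ for either $D$, as noted above Definition~\ref{def:memoryless_learning_tree}, a lower bound on the sample complexity of each $\textnormal{Dec-IP}$ instance transfers to the tomography problem, and the final answer is the larger of the two.

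First I would compute the per-node likelihood ratio. At a node $u$ with rank-1 POVM $\{2^n w_s^u \ketbra{\psi_s^u}{\psi_s^u}\}$, the maximally mixed state is a fixed point of $\mathcal{D}_\lambda^{\otimes n}$, so $p_{\mathds{1}/2^n}(s|u)=w_s^u$, while $\mathcal{D}_\lambda^{\otimes n}[(\mathds{1}+P)/2^n]=(\mathds{1}+(1-\lambda)^{|P|}P)/2^n$ gives $L_{\rho_P}(s|u)-1=(1-\lambda)^{|P|}\langle\psi_s^u|P|\psi_s^u\rangle$. The conceptual heart of the argument is to absorb the weight-dependent noise factor into the channel: using self-adjointness of $\mathcal{D}_\lambda$ one rewrites
\begin{equation}
  (1-\lambda)^{2|P|}\langle\psi|P|\psi\rangle^2 = \tr\!\big((P\otimes P)\,(\mathcal{D}_\lambda^{\otimes n}\otimes\mathcal{D}_\lambda^{\otimes n})[\ketbra{\psi}{\psi}^{\otimes 2}]\big).
\end{equation}
This is exactly what lets me sum over $P$ cleanly despite the weight-dependent suppression $(1-\lambda)^{2|P|}$, which would otherwise obstruct a direct application of Fact~\ref{fact:pauli_to_swap}.

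Next I would evaluate $\mathbb{E}_{P\sim D}\mathbb{E}_{s\sim p^{\mathds{1}/2^n}(s|u)}[(L_{\rho_P}(s|u)-1)^2]=\sum_s w_s^u\,\mathbb{E}_{P\sim D}[(1-\lambda)^{2|P|}\langle\psi_s^u|P|\psi_s^u\rangle^2]$ for each distribution. For $D=\textnormal{Unif}(\overline{\mathcal{P}}_n)$, Fact~\ref{fact:pauli_to_swap} gives $\sum_{P\in\overline{\mathcal{P}}_n}P\otimes P=2^n\textsf{SWAP}_n-\mathds{1}_{2n}$, and Lemma~\ref{lemma:depol_on_swap} bounds the surviving $\textsf{SWAP}_n$ trace by $f(\lambda)^n$; dividing by $|\overline{\mathcal{P}}_n|=4^n-1$ yields $\delta_1=O((f(\lambda)/2)^n)$. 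For $D=\textnormal{Unif}(\{X,Y,Z\}^{\otimes n})$, Fact~\ref{fact:pauli_to_swap} gives $\sum P\otimes P=(2\textsf{SWAP}_1-\mathds{1}_2^{\otimes2})^{\otimes n}$, and Lemma~\ref{lemma:depol_on_swap_2} bounds the trace by $(1+3^n)(1-\lambda)^{2n}/2$; dividing by $3^n$ yields $\delta_2=O((1-\lambda)^{2n})$. Because $\sum_s w_s^u=1$, each of these is a bound on $\delta$ that is uniform over all nodes $u$, so item~2(b) of Lemma~\ref{lemma:toolbox} forces the tree depth $T=\Omega(1/\delta)$ in each case, producing $\Omega((2/f(\lambda))^n)$ and $\Omega((1-\lambda)^{-2n})$ respectively.

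The main obstacle I anticipate is not any single inequality but making the node-wise bound genuinely \emph{uniform} over all nodes despite the POVM being arbitrary and adaptively chosen. The crucial point is that after the channel-absorption identity above, the per-node quantity reduces, for every rank-1 POVM element, to the fixed trace quantities controlled by Lemmas~\ref{lemma:depol_on_swap} and~\ref{lemma:depol_on_swap_2}, whose bounds hold for \emph{any} pure state $\ket{\psi_s^u}$ and therefore depend only on $n$ and $\lambda$, not on $u$. Once this $u$-independence is established, the convex weights $w_s^u$ sum to one and drop out, the two-distribution argument delivers the two terms, and the reduction $\textnormal{Pauli shadow tomography}\Rightarrow\textnormal{Dec-IP}$ transfers both bounds to the tomography task, completing the claimed maximum.
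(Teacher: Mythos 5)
Your proposal is correct and follows essentially the same route as the paper's proof: a reduction to the two Dec-IP instances (uniform over non-identity Paulis and over $\{X,Y,Z\}^{\otimes n}$), the node-based concentration bound from Lemma~\ref{lemma:toolbox}, conversion of the Pauli average into \textsf{SWAP}-type operators via Fact~\ref{fact:pauli_to_swap}, and the trace bounds of Lemmas~\ref{lemma:depol_on_swap} and~\ref{lemma:depol_on_swap_2}. Your ``channel-absorption'' identity is just the self-adjointness of $\mathcal{D}_\lambda$ that the paper uses implicitly (placing the channel on \textsf{SWAP} rather than on the state), so the two arguments coincide up to this cosmetic reshuffling and the immaterial inclusion of the identity Pauli in the first average.
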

\begin{proof}
As argued previously, we will prove lower bounds on the many-vs.-one distinguishing task from Definition \ref{def:dec_ip_problen}, which will automatically imply a lower bound on our learning task from Definition \ref{def:pauli_shadows}. First, consider Dec-IP$(n, \textnormal{Unif}(\mathcal{P}_n))$. To use Lemma \ref{lemma:toolbox}, we wish to bound the following expression containing the likelihood ratio
\begin{equation}
   \mathbb{E}_{\rho_P\sim \mathcal D}\mathbb{E}_{s\sim p^{\mathds{1}/2^n}(s|u)}\left[\left(\frac{p_{\rho_P}(s|u)}{p_{\mathds{1}/2^n}(s|u)}-1\right)^2\right],
\end{equation}
where $\rho_P = \DN[(\mathds{1}+P)/2^n]$ and we let $\mathcal{D}$ denotes the uniform distribution over $\mathcal{P}$. Letting $F_s^u = 2^nw_s\ketbra{\psi_s^u}{\psi_s^u}$ denote a POVM element and $P_D = \DN[P]$, we rewrite:
\begin{align}
   \mathbb{E}_{\rho_P\sim \mathcal D}\mathbb{E}_{s\sim p^{\mathds{1}/2^n}(s|u)}\left[\left(\frac{p_{\rho_P}(s|u)}{p_{\mathds{1}/2^n}(s|u)}-1\right)^2\right] &= \frac{1}{4^n}\sum_{P\in\mathcal{P}}\mathbb{E}_{s\sim p^{\mathds{1}/2^n}(s|u)}\left[\left(\frac{\tr(F_s^uP_D)}{\tr(F_s^u/2^n)}\right)^2\right] \\
   &= \frac{1}{4^n}\sum_{P\in\mathcal{P}}\sum_s\tr(F_s^u/2^n)\frac{\tr(F_s^{u^{\otimes 2}}(P_D\otimes P_D))}{\tr(F_s^u/2^n)^2} \\
   &= \frac{1}{2^n}\sum_s w_s\frac{\tr(\ketbra{\psi_s^u}{\psi_s^u}^{\otimes 2}(\DN\otimes \DN)[\textsf{SWAP}_n])}{\tr(\ketbra{\psi_s^u}{\psi_s^u})} \\
   &\leq \left(\frac{f(\lambda)}{2}\right)^n.
\end{align}
In the second equality, we use $\tr(A)^2 = \tr(A\otimes A)$, and in the third we use Fact \ref{fact:pauli_to_swap} and linearity of the depolarizing channel. The final inequality follows by Lemma \ref{lemma:depol_on_swap}. Then by Lemma \ref{lemma:toolbox}, any learning tree which succeeds in distinguishing the two hypotheses with high probability requires depth (and thus a number of measurements) at least $\Omega((2/f(\lambda))^n)$. 

Note, however, that in the case when $\lambda = 1$, this lower bound only results in $\Omega(4^n)$ when in reality the two hypotheses are identical and thus impossible to discriminate with any number of measurements. Hence, we prove a second lower bound which is tighter for large $\lambda$. Consider now Dec-IP$(n, \textnormal{Unif}(\{X, Y, Z\}^{\otimes n}))$. Once again, any algorithm solving Pauli shadow tomography to constant precision $<1/2$ can also solve this variant of Dec-IP$(n)$. Using the same argument as above, we have 
\begin{align}
   \mathbb{E}_{\rho_P\sim \mathcal D}\mathbb{E}_{s\sim p^{\mathds{1}/2^n}(s|u)}\left[\left(\frac{p_{\rho_P}(s|u)}{p_{\mathds{1}/2^n}(s|u)}-1\right)^2\right] 
   &= \frac{1}{3^n}\sum_{P\in\{X,Y,Z\}^{\otimes n}}\sum_s\tr(F_s^u/2^n)\frac{\tr(F_s^{u^{\otimes 2}}(P_D\otimes P_D))}{\tr(F_s^u/2^n)^2} \\
   &= \frac{1}{3^n}\tr(\ketbra{\psi_s^u}{\psi_s^u}^{\otimes 2}(\DN\otimes \DN)[(2\textsf{SWAP}_1-\mathds{1}_2^{\otimes 2})^{\otimes n}]) \\
   &\leq \frac{1}{3^n}\left[\frac{3^n+1}{2}(1-\lambda)^{2n}\right] \leq \frac{1}{3}\left(1-\lambda\right)^{2n}.
\end{align}
In the second line we used Fact \ref{fact:pauli_to_swap}, and in the last line we used Lemma \ref{lemma:depol_on_swap_2}. 

Both lower bounds must hold for all $\lambda \in [0, 1]$, so our final lower bound is the maximum of the two. Note that for $\lambda \rightarrow 1$, our sample complexity bound now goes to infinity, while for $\lambda\rightarrow 0$, our bound goes to $2^{n}$, matching the noiseless analysis in \cite{chen2021exponentialseparationslearningquantum}. 
\end{proof}

\subsection{Lower bound with $k$ qubits of memory and constant queries per experiment}
The next model we consider is a $\lambda$-noisy algorithm with access to $k<n$ qubits of quantum memory, but which can only make a fixed number of queries $c$ to the oracle in each experiment before discarding and re-initializing its quantum register. This is analogous to a noisy quantum register which can only interact with the unknown state a constant number of times before projecting and recording a classical outcome, so we say the memory qubits have a limited lifetime $c$. This framework was introduced in \cite{weiyuan_paulis}, and has a learning tree representation characterized as follows.

\begin{definition}[Learning tree with $k$ memory qubits and $c$-query lifetime]
    Any $\lambda$-noisy quantum algorithm with access to a state $\DN[\rho]$ and $k$ qubit memory with a lifetime of $c$ queries can be represented by a learning tree $\mathcal{T}$. At each at each node $u$ of $\mathcal{T}$, the algorithm measures a $cn$-qubit POVM $M = \{F_s\}_u$ from the set $\mathcal{M}^{cn}_{n+k}$ as defined in Appendix \ref{Appendix:basic_quantum_prelim}. 
\end{definition} 

Henceforth, we refer to a learning tree for this model as a $(c, k)$-learning tree. We now show the following lower bound:
\begin{theorem}\label{thm:c,m_lower_bound}
Any quantum algorithm represented by a $(c, k)$-learning tree solving \textnormal{Dec-IP}$(n, \textnormal{Unif}(\mathcal{P}_n))$ with high probability requires a tree of depth at least
    \begin{equation}
        \Omega\left(\min\left(\frac{2^{n}}{f(\lambda)^nc}, \frac{2^{n - k}e^{-2c}}{f(\lambda)^nc^3}\right)\right)
    \end{equation}    
\end{theorem}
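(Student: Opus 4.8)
The plan is to follow the template of Theorem~\ref{thm:memoryless_lb}: reduce Pauli shadow tomography to the many-vs.-one problem $\textnormal{Dec-IP}(n,\textnormal{Unif}(\mathcal{P}_n))$, represent the algorithm by a $(c,k)$-learning tree whose nodes carry POVMs from $\mathcal{M}^{cn}_{n+k}$, and then invoke Lemma~\ref{lemma:toolbox}(2). It therefore suffices to exhibit a bound $\mathbb{E}_{P}\,\mathbb{E}_{s\sim p^{\mathds{1}/2^n}(s|u)}\big[(L(s|u)-1)^2\big]\le\delta$ that holds at \emph{every} node $u$, since then the tree must have depth $\Omega(1/\delta)$. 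The entire content of the theorem is then the two-sided estimate $\delta=O\big(\max\{(f(\lambda)/2)^n c,\ (f(\lambda)/2)^n\,2^k c^3 e^{2c}\}\big)$, which inverts to the claimed minimum.

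Next I would unpack this fluctuation. Writing the node's rank-one POVM element as $F_s=2^{cn}w_s\ketbra{L_s}{L_s}$ with $\ket{L_s}\in\textnormal{MPS}(cn,n+k)$, one computes $p^{\mathds{1}/2^n}(s|u)=w_s$, so the fluctuation collapses to $\sum_s 2^{2cn}w_s\,\langle L_s|^{\otimes 2}\,\mathbb{E}_P[D_P\otimes D_P]\,\ket{L_s}^{\otimes 2}$, where $D_P:=\tilde\rho_P^{\otimes c}-(\mathds{1}/2^n)^{\otimes c}$ and $\tilde\rho_P=\DN[(\mathds{1}+P)/2^n]=2^{-n}(\mathds{1}+(1-\lambda)^{|P|}P)$. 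Expanding $\tilde\rho_P^{\otimes c}=2^{-cn}\sum_{S\subseteq[c]}(1-\lambda)^{|P||S|}P_S$ (with $P_S$ the Pauli $P$ on the copies indexed by $S$ and identity elsewhere) restricts $D_P$ to nonempty $S$. Averaging over $P\sim\textnormal{Unif}(\mathcal{P}_n)$ qubit by qubit and using $\sum_{Q\in\{X,Y,Z\}}Q\otimes Q=2\,\textsf{SWAP}_1-\mathds{1}$ (Fact~\ref{fact:pauli_to_swap}) turns $\mathbb{E}_P[D_P\otimes D_P]$ into a sum, over pairs of nonempty subsets $(S,S')\subseteq[c]$, of noise-dressed $\textsf{SWAP}$-type operators coupling the $A$-copies in $S$ to the $B$-copies in $S'$, each weighted by $(1-\lambda)^{|S|+|S'|}$ per qubit.

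I would then split this sum into the diagonal single-copy terms $S=S'=\{i\}$ and the remainder. For a matched pair the operator is exactly $(\DN\otimes\DN)[\textsf{SWAP}_n]$ on the pair $(i,i')$, so Lemma~\ref{lemma:depol_on_swap} together with the Cauchy--Schwarz/POVM inequality $\sum_s(\tr(F_sA))^2/\tr(F_s)\le\tr(A^2)$ already used in Theorem~\ref{thm:memoryless_lb} bounds each of the $O(c)$ such terms by $O((f(\lambda)/2)^n)$, giving the first branch. The remaining terms, in which at least one of $S,S'$ contains two or more copies, are precisely those in which the $k$-qubit register could amplify the signal; here I would invoke the MPS reformulation and Lemma~\ref{lemma:dc_moments}, which bounds the matrix elements $\langle L_s|^{\otimes 2}(\cdot)\ket{L_s}^{\otimes 2}$ of these higher-weight $\textsf{SWAP}$ operators through the bond dimension $2^{n+k}$. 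The bounded bond dimension caps how strongly the memory can correlate distinct copies, contributing the factor $2^k$, while summing the weights $(1-\lambda)^{|S|+|S'|}$ over all $\binom{c}{|S|}\binom{c}{|S'|}$ subset pairs contributes the combinatorial overhead $c^3 e^{2c}$; together these give the second branch.

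Combining the two contributions as $\delta=O\big(\max\{(f(\lambda)/2)^n c,\ (f(\lambda)/2)^n 2^k c^3 e^{2c}\}\big)$ and applying Lemma~\ref{lemma:toolbox}(2) yields the depth bound $\Omega\big(\min\{2^n/(f(\lambda)^nc),\ 2^{n-k}e^{-2c}/(f(\lambda)^nc^3)\}\big)$. I expect the main obstacle to be the second branch: controlling the higher-weight $\textsf{SWAP}$ matrix elements through the MPS bond dimension (Lemma~\ref{lemma:dc_moments}) while simultaneously summing over the exponentially many subset pairs $(S,S')$ so that the memory costs only a factor $2^k$ and the finite lifetime only the $c^3e^{2c}$ overhead, rather than an uncontrolled blow-up in $c$.
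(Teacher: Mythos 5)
Your proposal is correct and takes essentially the same route as the paper's proof: the same reduction to a per-node $\chi^2$ bound via Lemma~\ref{lemma:toolbox}(2) (the paper's Lemma~\ref{lemma:dc_tree_bound}), the same expansion of $\rho_P^{\otimes c}$ over nonempty subsets of copies, the same dichotomy between singleton terms (noisy-\textsf{SWAP} bound $(f(\lambda)/2)^n$) and terms touching the memory (MPS bond-dimension bound $2^k(f(\lambda)/2)^n$, i.e.\ Lemma~\ref{lemma:dc_moments}), and the same combinatorial summation, with your pair expansion over $(S,S')$ plus Cauchy--Schwarz on cross terms being just the expanded form of the paper's $L^2$ triangle-inequality step (Lemma~\ref{lemma:dc_first_bound}). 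The one imprecision is in the diagonal singleton terms: the purity-testing inequality $\sum_s (\tr(F_s A))^2/\tr(F_s)\le \tr(A^2)$ (which appears in the purity-testing proof, not Theorem~\ref{thm:memoryless_lb}) is lossy here, since $A=\DN[P]^{\{i\}}$ carries identity factors on the other $c-1$ copies and the bound degrades to $\mathbb{E}_P(1-\lambda)^{2|P|}=((1+3(1-\lambda)^2)/4)^n$, which is trivial as $\lambda\to 0$; one must instead use Lemma~\ref{lemma:depol_on_swap} applied to the (generally mixed) reduced state of copy $i$, or the operator-norm bound $\tr(AB)\le\tr(A)\|B\|$, exactly as the paper does in the $|S|=1$ case of Lemma~\ref{lemma:dc_moments}.
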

\noindent This bound is sensible for $c$ constant. In this regime, our result implies a sample complexity lower bound of $\Omega(2^{n-k}/f(\lambda)^n)$ for \textnormal{Dec-IP}$(n, \textnormal{Unif}(\overline{\mathcal{P}}_n))$ and thus for Pauli shadow tomography (where, as in the memoryless lower bound, $f(\lambda) = 1-\lambda+\lambda^2/2$).

We will proceed with the proof of Theorem \ref{thm:c,m_lower_bound} as in the memoryless case, by bounding the concentration of the likelihood ratio at intermediate nodes of the $(c, k)$-learning tree. We can define the following quantity which characterizes the sample complexity of any single-copy algorithm with bounded quantum memory:
\begin{lemma}
\label{lemma:dc_tree_bound}
    Any noisy quantum algorithm with $n+k$ bits of quantum memory making $c$ calls to $O$ at each node of its learning tree requires $\Omega(c/\Delta_c)$ queries to $O$ to solve $\textnormal{Dec-IP}(n, \textnormal{Unif}(\mathcal{P}_n))$ with probability at least $2/3$, where
    \begin{equation}
        \Delta_{c} = \max_{M\in\mathcal{M}^{cn}_{n+k}}\frac{1}{4^n}\sum_{P\in \mathcal{P}_n}\chi^2_M\left(\rho_P^{\otimes c}\Big\|\rho_{\mathds{1}/2^n}^{\otimes c}\right).
    \end{equation}
\end{lemma}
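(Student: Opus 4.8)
The plan is to recognize Lemma~\ref{lemma:dc_tree_bound} as a direct packaging of the node-based entry of our toolbox (part~2 of Lemma~\ref{lemma:toolbox}) specialized to the $(c,k)$-learning model, with $\Delta_c$ playing the role of the uniform per-node fluctuation bound $\delta$. First I would invoke the standard reduction: any algorithm solving $\textnormal{Dec-IP}(n,\textnormal{Unif}(\mathcal{P}_n))$ with success probability $\geq 2/3$ is a many-vs.-one distinguisher to which Le Cam's method (Lemma~\ref{lemma:le_cam}) applies. Its $(c,k)$-learning tree $\mathcal{T}$ has, at every node $u$, a POVM drawn from $\mathcal{M}^{cn}_{n+k}$; this is exactly the content of the MPS representation lemma (Section~8.1 of~\cite{weiyuan_paulis}), which states that a $k$-qubit memory interacting with $c$ copies before measurement is equivalent to measuring some $\{2^{cn}w_s\ketbra{L_s}{L_s}\}$ with $\ket{L_s}\in\textnormal{MPS}(cn,n+k)$ on $\rho^{\otimes c}$.

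The key step is to identify the quantity controlled in Lemma~\ref{lemma:toolbox}(2) with the $\chi^2$-divergence defining $\Delta_c$. For a fixed node $u$ with POVM $M_u=\{F^u_s\}\in\mathcal{M}^{cn}_{n+k}$, the edge conditional distributions are $p_{\rho_P}(s\mid u)=\tr(F^u_s\,\rho_P^{\otimes c})$ and $p_{\mathds{1}/2^n}(s\mid u)=\tr\!\big(F^u_s\,\rho_{\mathds{1}/2^n}^{\otimes c}\big)$, so by the very definition of $\chi$-squared divergence,
\begin{equation}
\mathbb{E}_{s\sim p^{\mathds{1}/2^n}(s\mid u)}\!\left[\left(L_{\rho_P}(s\mid u)-1\right)^2\right]=\chi^2_{M_u}\!\left(\rho_P^{\otimes c}\,\big\|\,\rho_{\mathds{1}/2^n}^{\otimes c}\right).
\end{equation}
Averaging over $\rho_P\sim\mathcal{D}=\textnormal{Unif}(\mathcal{P}_n)$ and using that $M_u\in\mathcal{M}^{cn}_{n+k}$ gives, for every node $u$,
\begin{equation}
\mathbb{E}_{\rho_P\sim\mathcal{D}}\,\mathbb{E}_{s\sim p^{\mathds{1}/2^n}(s\mid u)}\!\left[\left(L(s\mid u)-1\right)^2\right]=\frac{1}{4^n}\sum_{P\in\mathcal{P}_n}\chi^2_{M_u}\!\left(\rho_P^{\otimes c}\,\big\|\,\rho_{\mathds{1}/2^n}^{\otimes c}\right)\leq\Delta_c,
\end{equation}
since $\Delta_c$ is defined as the maximum of the right-hand side over all admissible POVMs in $\mathcal{M}^{cn}_{n+k}$.

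With the hypothesis of Lemma~\ref{lemma:toolbox}(2) verified uniformly with $\delta=\Delta_c$, part~2(b) immediately yields that $\mathcal{T}$ must have depth $T=\Omega(1/\Delta_c)$ for the distinguisher to succeed with probability $>2/3$. The final bookkeeping step converts tree depth into query count: each node of a $(c,k)$-tree consumes $c$ copies of the state, so a root-to-leaf path of length $T$ makes $cT$ oracle calls, giving the claimed $\Omega(c/\Delta_c)$ sample complexity. I expect the only genuine subtlety — rather than a deep obstacle — to be ensuring that the per-node POVM really does lie in $\mathcal{M}^{cn}_{n+k}$, so that the uniform maximum defining $\Delta_c$ legitimately upper bounds every node's fluctuation; this is precisely where the MPS reformulation of the bounded-lifetime memory is essential, and where correlations between the $c$ copies within a single experiment get absorbed into the bond dimension $n+k$. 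The analytically hard work — actually upper bounding $\Delta_c$ in terms of $n$, $k$, $c$, and $\lambda$ — is deliberately deferred to the companion moment computation (Lemma~\ref{lemma:dc_moments}), so that this lemma is purely the reduction that reduces sample complexity to control of $\Delta_c$.
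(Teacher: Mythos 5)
Your proposal is correct and follows essentially the same route as the paper's proof: both identify the per-node fluctuation quantity in Lemma~\ref{lemma:toolbox}(2) with $\chi^2_{M_u}\bigl(\rho_P^{\otimes c}\,\big\|\,\rho_{\mathds{1}/2^n}^{\otimes c}\bigr)$, average over $P\sim\textnormal{Unif}(\mathcal{P}_n)$, bound by the maximum over $\mathcal{M}^{cn}_{n+k}$ to get $\Delta_c$, and then invoke Lemma~\ref{lemma:toolbox}(2)(b). Your explicit bookkeeping converting tree depth $T=\Omega(1/\Delta_c)$ into $cT$ oracle calls is a point the paper leaves implicit, but it is the same argument.
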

\begin{proof}
    Let $\pi$ denote $\textnormal{Unif}(\mathcal{P}_n)$. Any algorithm as in the lemma statement for $\textnormal{Dec-IP}(n)$ can be represented as a learning tree $\mathcal{T}$, with some depth $T$. The quantity in Lemma \ref{lemma:toolbox} is
    \begin{align}
        \mathbb{E}_{P\sim \pi}\mathbb{E}_{s\sim p^{\rho_{\mathds{1}/2^n}}(s|u)} \left[\left(\frac{p^{\rho_P}(s|u)}{p^{\rho_{\mathds{1}/2^n}}(s|u)} - 1\right)^2\right] &= \mathbb{E}_{P\sim \pi} \chi^2_M\left(\rho_P^{\otimes c}\Big\|\rho_{\mathds{1}/2^n}^{\otimes c}\right) \\
        &\leq \max_{M\in \mathcal{M}^{cn}_{n+k}}\frac{1}{4^n}\sum_{P\in \mathcal{P}_n }\chi^2_M\left(\rho_P^{\otimes c}\Big\|\rho_{\mathds{1}/2^n}^{\otimes c}\right) \\
        &= \Delta_c\,.
    \end{align}
    The result follows by Lemma \ref{lemma:toolbox}.
\end{proof}
We then find the following bound on $\Delta_c$.
\begin{lemma} \label{lemma:dc_first_bound}
Take a $c\in \mathbb{N}$ and let $\mathcal M$ be the set of $cn$-qubit POVMs for an $n+k$. For any $P\in \mathcal{P}_n$ and $S\subseteq [c]$, let $(\DN[P])^S$ denote the $cn$-qubit operator obtained by applying $\DN[P]$ to all $n$ qubit blocks labeled by $S$ and identity everywhere else. Then
    \begin{equation}
        \Delta_c \leq \left(\sum_{S\subseteq[c]\textnormal{\textbackslash} \varnothing} \sqrt{\max_{M = \{F_s\}\in \mathcal{M}^{cn}_{n+k}} \mathbb{E}_{P\sim \pi} \sum_s \frac{\tr(F_s\DN[P]^{ S})^2}{2^{cn}\tr(F_s)}}
        \right)^2.
    \end{equation}
\end{lemma}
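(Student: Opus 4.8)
The plan is to expand the $c$-fold tensor-product states explicitly, recognize the $P$-averaged $\chi^2$ quantity as a squared norm in a weighted $L^2$ space, and then combine Minkowski's inequality with a monotonicity argument that pushes the POVM maximization inside the sum over subsets $S$.

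First I would use linearity and unitality of the depolarizing channel to write the two hypotheses concretely. Since $\DN[\mathds{1}_n]=\mathds{1}_n$, we have
\begin{equation}
\rho_P = \DN\!\left[\frac{\mathds{1}+P}{2^n}\right] = \frac{1}{2^n}\big(\mathds{1} + \DN[P]\big), \qquad \rho_{\mathds{1}/2^n} = \frac{\mathds{1}}{2^n},
\end{equation}
and expanding the tensor power gives $\rho_P^{\otimes c} = 2^{-cn}\sum_{S\subseteq[c]} \DN[P]^S$, where the $S=\varnothing$ term is exactly $\rho_{\mathds{1}/2^n}^{\otimes c} = \mathds{1}_{cn}/2^{cn}$. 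Subtracting this reference term isolates the non-empty subsets, so for each POVM element $F_s$ we get $\tr(F_s\rho_{\mathds{1}/2^n}^{\otimes c}) = \tr(F_s)/2^{cn}$ and $\tr\big(F_s(\rho_P^{\otimes c}-\rho_{\mathds{1}/2^n}^{\otimes c})\big) = 2^{-cn}\sum_{S\neq\varnothing}\tr(F_s\DN[P]^S)$. Substituting these into the definition of the measured $\chi^2$ divergence yields, for fixed $M=\{F_s\}$,
\begin{equation}
\mathbb{E}_{P\sim\pi}\,\chi^2_M\big(\rho_P^{\otimes c}\,\Big\|\,\rho_{\mathds{1}/2^n}^{\otimes c}\big) = \mathbb{E}_{P\sim\pi}\sum_s \frac{1}{2^{cn}\tr(F_s)}\Big(\sum_{S\neq\varnothing}\tr(F_s\DN[P]^S)\Big)^2.
\end{equation}

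Next I would interpret the right-hand side as a squared norm. For fixed $M$, define on real functions $g(P,s)$ the weighted inner product with squared norm $\|g\|^2 := \mathbb{E}_{P\sim\pi}\sum_s g(P,s)^2/(2^{cn}\tr(F_s))$ (the weights are positive, and any $F_s=0$ contributes nothing and is dropped), and set $g_S(P,s) := \tr(F_s\DN[P]^S)$, which is real since both $F_s$ and $\DN[P]^S$ are Hermitian. The averaged $\chi^2$ above is then precisely $\big\|\sum_{S\neq\varnothing} g_S\big\|^2$, and Minkowski's inequality gives $\big\|\sum_{S}g_S\big\|\le\sum_S\|g_S\|$ with $\|g_S\|^2 = \mathbb{E}_{P\sim\pi}\sum_s \tr(F_s\DN[P]^S)^2/(2^{cn}\tr(F_s))$. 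This establishes the desired bound for each fixed $M$, namely $\mathbb{E}_P\chi^2_M \le \big(\sum_{S\neq\varnothing}\sqrt{A_S(M)}\big)^2$ where $A_S(M):=\|g_S\|^2$.

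Finally I would take the maximum over $M\in\mathcal{M}^{cn}_{n+k}$, which is where the only real care is needed: the Minkowski splitting is valid only per fixed $M$, so I must argue that the max can be distributed across the sum over $S$. This follows from a clean monotonicity step: since $\sqrt{A_S(M)}\le\sqrt{\max_{M'}A_S(M')}$ termwise, summing over $S$ and then maximizing over $M$ gives $\max_M\sum_S\sqrt{A_S(M)}\le\sum_S\sqrt{\max_{M'}A_S(M')}$, and squaring both nonnegative sides (using $\max_M f(M)^2=(\max_M f(M))^2$ for $f\ge0$) yields exactly the claimed inequality for $\Delta_c=\max_M\mathbb{E}_P\chi^2_M$. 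I expect this max-of-sum versus sum-of-maxes interchange to be the main conceptual obstacle, since it is what decouples the bound into independent single-subset optimizations; the remaining content—the expansion of $\rho_P^{\otimes c}$, the cancellation of the $S=\varnothing$ term against the reference state, and verifying that $\DN[P]^S$ is Hermitian so the weighted $L^2$ structure is genuine—is routine bookkeeping.
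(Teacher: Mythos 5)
Your proposal is correct and follows essentially the same route as the paper: expand $\rho_P^{\otimes c} = 2^{-cn}\sum_{S\subseteq[c]}\DN[P]^S$, split the resulting $\chi^2$ sum across nonempty subsets $S$ via the triangle inequality in a weighted $L^2$ space, and then interchange the maximum over POVMs with the sum over $S$ by termwise monotonicity. The only (cosmetic) difference is that you apply Minkowski once in the joint $(P,s)$ space for fixed $M$, which cleanly subsumes the paper's two-step argument of Cauchy--Schwarz over $s$ followed by moving $\mathbb{E}_{P\sim\pi}$ under the square root.
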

\begin{proof}
Write $\rho_{P}=\rho_{\mathrm{mm}}+\delta_P$ with $\delta_P=\tfrac{1}{2^n}\DN[P]$. By expanding the tensor power, we have 
\begin{equation}
\rho_P^{\otimes c}
=\sum_{S\subseteq[c]}\frac{1}{2^{cn}}(\DN[P])^{S}.    
\end{equation}
Hence
\begin{align}
\frac{p^{\rho_P}(s)}{p^{\rho_{\text{\rm mm}}}(s)} =1+\sum_{\varnothing\neq S\subseteq[c]}\frac{\tr\!\big(F_s\,(\DN[P])^{S}\big)}{\tr(F_s)}.
\end{align}
Applying the definition of $\chi$-squared divergence,
\[
\chi^2_{M}\!\big(\rho_P^{\otimes c}\big\| \rho_{\mathrm{mm}}^{\otimes c}\big)
=\sum_s p^{\rho_{\text{\rm mm}}}(s)\left(\frac{p^{\rho_P}(s)}{p^{\rho_{\text{\rm mm}}}(s)}-1\right)^2
=\sum_{s}\frac{\tr(F_s)}{2^{cn}}
\left(\sum_{\varnothing\neq S\subseteq[c]}\frac{\tr\!\big(F_s\,(\DN[P])^{S}\big)}{\tr(F_s)}\right)^{\!2}.
\]
Applying Cauchy-Schwarz across the sum in $S$, then taking the maximum over all $\mathcal{M}_{n_k}^{cn}$ POVMs and using monotonicity to move the maximum inside the sum: 
\begin{align}
\chi^2_{M}\!\big(\rho_P^{\otimes c}\big\| \rho_{\mathrm{mm}}^{\otimes c}\big) &\le \left(\sum_{\varnothing\neq S\subseteq[c]} \sqrt{\ \sum_s\frac{ \tr\big(F_s\,(\DN[P]^{S})\big)^2}{2^{cn}\tr(F_s)}\ }\right)^{\!2} \\ &\leq \left(\sum_{\varnothing\neq S\subseteq[c]} \sqrt{\max_{M=\{F_s\}\in\mathcal M^{cn}_{n+k}}\ \sum_s\frac{ \tr\big(F_s\,(\DN[P]^{S})\big)^2}{2^{cn}\tr(F_s)}\ }\right)^{\!2}
\end{align}

We can now bound $\Delta_c$ by averaging over $P\sim\pi$ and applying Jensen's inequality to move the expectation under the square root
\begin{align}
\Delta_c \le \left(\sum_{\varnothing\neq S\subseteq[c]} \sqrt{\ \max_{M=\{F_s\}\in\mathcal M_{n+k}^{cn}}\,\mathbb E_{P\sim\pi}\sum_s\frac{ \tr\big(F_s\,(\DN[P]^{S})\big)^2}{2^{cn}\tr(F_s)}\ }\right)^{\!2},
\end{align}
and so we obtain the stated bound.
\end{proof}
To obtain a final bound on $\Delta_c$, we need to bound the expression in Lemma \ref{lemma:dc_first_bound}. We do so in the following Lemma.
\begin{lemma}\label{lemma:dc_moments} For $f(\lambda) = 1-\lambda+\lambda^2/2$,
    \begin{equation}
        \max_{M=\{F_s\}\in\mathcal M_{n+k}^{cn}}\mathbb{E}_{P\sim\pi}\sum_s\frac{ \tr\big(F_s\,(\DN[P]^{S})\big)^2}{2^{cn}\tr(F_s)} \leq
        \begin{cases}
            2^{-n(1-\log f(\lambda))}, \quad &\textnormal{if } |S| = 1 \\
            2^{k-n(1-\log f(\lambda))}, \quad &\textnormal{if } |S| \geq 2
        \end{cases}.
    \end{equation}
\end{lemma}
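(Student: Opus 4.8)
The plan is to first collapse the objective to a per-outcome second moment and then split into the two weight regimes. Writing each element of an $\mathcal{M}^{cn}_{n+k}$ measurement as $F_s=2^{cn}w_s\ketbra{L_s}{L_s}$ with $\sum_s w_s=1$ and $\ket{L_s}\in\mathrm{MPS}(cn,n+k)$, we get $\tr(F_s)=2^{cn}w_s$ and $\tr(F_s(\DN[P])^S)=2^{cn}w_s\langle L_s|(\DN[P])^S|L_s\rangle$, so the summand simplifies and the whole expression becomes $\sum_s w_s\,\mathbb{E}_{P\sim\pi}[\langle L_s|(\DN[P])^S|L_s\rangle^2]$. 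As the weights sum to one, it suffices to bound $G(\ket L):=\mathbb{E}_{P\sim\pi}[\langle L|(\DN[P])^S|L\rangle^2]$ for a single MPS $\ket L$. Throughout I use $\DN[P]=(1-\lambda)^{|P|}P$, so $(\DN[P])^S=(1-\lambda)^{|S|\,|P|}P^{(S)}$, where $P^{(S)}$ places the same Pauli $P$ on every block of $S$ and the identity elsewhere.

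For $|S|=1$, say $S=\{j_0\}$, only the reduced state $\sigma=\tr_{\neq j_0}\ketbra{L}{L}$ matters: $\langle L|(\DN[P])^{S}|L\rangle=(1-\lambda)^{|P|}\tr(P\sigma)$, hence $G(\ket L)=\tfrac{1}{4^n}\sum_{P}(1-\lambda)^{2|P|}\tr(P\sigma)^2$. Using $\tr(P\sigma)^2=\tr[(P\otimes P)(\sigma\otimes\sigma)]$, the identity $\sum_P P\otimes P=2^n\textsf{SWAP}_n$, and self-adjointness of $\DN$, this equals $\tfrac{1}{2^n}\tr[(\DN\otimes\DN)[\textsf{SWAP}_n](\sigma\otimes\sigma)]=\tfrac{1}{2^n}\tr(\DN[\sigma]^2)$ via the \textsf{SWAP} trick. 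Since $\sigma\mapsto\tr(\DN[\sigma]^2)=\|\DN[\sigma]\|_2^2$ is convex, its maximum over density matrices is attained at a pure state, where Lemma~\ref{lemma:depol_on_swap} gives $\tr(\DN[\ketbra\phi\phi]^2)\le f(\lambda)^n$. Thus $G(\ket L)\le (f(\lambda)/2)^n=2^{-n(1-\log f(\lambda))}$; note this uses no memory structure, matching the absence of a $2^k$ factor.

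For $|S|\ge 2$, let $a=\min S$ and cut the MPS immediately after block $a$. In the sequential $k$-qubit-memory model this cut is crossed only by the memory register, so the Schmidt decomposition $\ket L=\sum_{\alpha=1}^{D}\sqrt{p_\alpha}\ket{u_\alpha}\ket{v_\alpha}$ has rank $D\le 2^k$, with $\{u_\alpha\}$ on blocks $\le a$ and $\{v_\alpha\}$ on blocks $>a$. Writing $(\DN[P])^S=\DN[P]_a\otimes (\DN[P])^{S'}$ with $S'=S\setminus\{a\}$ entirely to the right of the cut, expand $\langle L|(\DN[P])^S|L\rangle=\sum_{\alpha,\beta}\sqrt{p_\alpha p_\beta}\,\langle u_\alpha|\DN[P]_a|u_\beta\rangle\,\langle v_\alpha|(\DN[P])^{S'}|v_\beta\rangle$ and apply Cauchy–Schwarz over the pair $(\alpha,\beta)$. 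The right factor is $\sum_{\alpha,\beta}|\langle v_\alpha|(\DN[P])^{S'}|v_\beta\rangle|^2=(1-\lambda)^{2(|S|-1)|P|}\tr[\Pi_R P^{(S')}\Pi_R P^{(S')}]\le 2^k$, using $(1-\lambda)^{2(|S|-1)|P|}\le 1$ and that $\Pi_R=\sum_\alpha\ketbra{v_\alpha}{v_\alpha}$ has rank $D\le 2^k$ with $P^{(S')}$ a Hermitian unitary (so $\tr[\Pi_R P^{(S')}\Pi_R P^{(S')}]=\sum_{\alpha,\beta\le D}|\langle v_\alpha|P^{(S')}|v_\beta\rangle|^2\le D$). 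The left factor equals $(1-\lambda)^{2|P|}\tr[\rho_L P_a\rho_L P_a]$ with $\rho_L=\sum_\alpha p_\alpha\ketbra{u_\alpha}{u_\alpha}$ the reduction on blocks $\le a$. Combining and averaging,
\begin{equation}
G(\ket L)\le 2^k\cdot\tfrac{1}{4^n}\sum_{P}(1-\lambda)^{2|P|}\tr[\rho_L P_a\rho_L P_a]=2^k\,\tr[\rho_L\,\mathcal N(\rho_L)],
\end{equation}
where $\mathcal N=\tfrac{1}{4^n}\sum_P(1-\lambda)^{2|P|}P_a(\cdot)P_a$ is a depolarizing-type Pauli twirl on block $a$. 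Because $\mathcal N$ has nonnegative Pauli eigenvalues, $\rho\mapsto\tr[\rho\,\mathcal N(\rho)]$ is a convex quadratic form, hence maximized over states at a pure $\rho_L$; for pure $\rho_L$ one has $\tr[\rho_L P_a\rho_L P_a]=\tr(P\sigma_a)^2$ with $\sigma_a$ the block-$a$ reduction, so $\tr[\rho_L\mathcal N(\rho_L)]=\tfrac{1}{2^n}\tr(\DN[\sigma_a]^2)\le \tfrac{1}{2^n}f(\lambda)^n$ exactly as in the $|S|=1$ case. Therefore $G(\ket L)\le 2^k (f(\lambda)/2)^n=2^{k-n(1-\log f(\lambda))}$, uniformly in $|S|\ge 2$.

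The main obstacle is the $|S|\ge 2$ case, specifically choosing the correct cut and justifying $D\le 2^k$: it is the sequential, limited-lifetime structure of the $k$-qubit memory (not the generic bond of an arbitrary MPS) that makes the memory register the sole channel of inter-block correlation, and this is exactly what converts the extra copies in $S$ into a single multiplicative $2^k$ while preserving one factor of $(f(\lambda)/2)^n$. The residual care is verifying that $\mathcal N$ is a positive superoperator so convexity applies, and that its pure-state optimum collapses onto the clean single-block bound through Lemma~\ref{lemma:depol_on_swap}.
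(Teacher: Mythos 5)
Your proof is correct and takes essentially the same route as the paper's: reduce to rank-one MPS elements, handle $|S|=1$ via the swap trick together with Lemma~\ref{lemma:depol_on_swap}, and for $|S|\ge 2$ cut the MPS after block $\min S$, apply Cauchy-Schwarz over Schmidt-index pairs, bound the right factor by the Schmidt rank $2^k$, and reduce the left factor to the single-block depolarized-\textsf{SWAP} bound. The only notable difference is in a sub-step: where the paper bounds the left factor by applying the depolarized-\textsf{SWAP} estimate term-by-term to the off-diagonal operators $\ketbra{\alpha_j}{\alpha_i}$, you package that sum as the quadratic form $\tr[\rho_L\,\mathcal{N}(\rho_L)]$ of a Pauli twirl with nonnegative eigenvalues and use convexity to reduce to pure $\rho_L$ -- a slightly cleaner handling of the same quantity that avoids bounding off-diagonal terms individually.
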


\begin{proof}
We begin with the $|S| = 1$ case, and consider the maximum over the entire set of POVMs $\mathcal{M}^{cn}$, a strict upper bound on the maximum over $\mathcal M_{n+k}^{cn}$. Fix $j\in[c]$. Then
\begin{align}
\mathbb E_{P\sim\pi_u}\sum_s
\frac{\tr\big(F_s\,(\DN[P])^{\{j\}}\big)^2}{2^{cn}\tr(F_s)}
&=\sum_s\frac{\tr\!\left[(F_s\otimes F_s)\,\mathbb E_{P\sim\pi_u}(\DN\otimes \DN)\!(P^{\{j\}}\otimes P^{\{j\}})\right]}{2^{cn}\tr(F_s)} \\
&\leq  \frac{1}{2^n}\sum_s\frac{\tr\!\left[(F_s\otimes F_s)\,\mathbb (\DN\otimes \DN)\textsf{SWAP}_{j, 2j}\right]}{2^{cn}\tr(F_s)} \\
&\leq  \frac{1}{2^n}\sum_s\frac{\tr(F_s)\tr[\mathbb (\DN\otimes \DN)\textsf{SWAP}_{j, 2j}]}{2^{cn}} \\
&\leq \frac{\tr[\mathbb (\DN\otimes \DN)\textsf{SWAP}_{j, 2j}]}{2^n} \\
&\leq \frac{(1-\lambda+\lambda^2/2)^n}{2^n} \\
&= 2^{-n(1-\log f(\lambda))}\,.
\end{align}
In the second line we have used $\mathbb E_{P\sim\pi_u}(P\otimes P)=2^{-n} \textsf{SWAP}_n$, and in the third line we use $\tr(A B)\le \tr(A)\|B\|$ for $A\succeq 0$. $\textsf{SWAP}_{j, 2j}$ is the operator which swaps the $j$-th block of two $cn$-qubit Hilbert spaces and acts as identity on all other blocks.

When $|S| \geq 2$ we can write $M\in \mathcal{M}_{c, n}^k = \{2^{cn}w_s\ket{L_s}\bra{L_s}\}$ where $\ket{L_s}$ is an $n$-qubit MPS of bond dimension $k$ and $\sum_s w_s = 1$. Let us take
\begin{equation}
    \ket{\psi} = \sum_{i=1}^{2^k} \sqrt{\lambda_i}\ket{\alpha_i}\otimes \ket{\beta_i}
\end{equation}
to be the Schmidt decomposition of a POVM element, where the $\ket{\alpha_i}$ are defined on the first $\ell$ sets of $n$ qubits, and $\ell$ is the smallest element of $S$. Equivalently, we now think of each $n$-qubit block as a qudit of dimension $2^n$. Then $\ket{\alpha_i}$ is defined on the first $\ell$ qubits, and $\ket{\beta_i}$ on the remaining $c-\ell$. Now $P^{\{\ell\}}$ will denote $P^S$ restricted to the first $\ell$ qudits, and $P^{S / \{\ell\}}$ will be the action of $P^S$ on the remaining qudits. Then
\begin{align}
\label{E:forsubstitute1}
   \mathbb E_{P\sim\pi_u}\sum_s &
    \frac{\tr\big(F_s\,(\DN[P])^{S}\big)^2}{2^{cn}\tr(F_s)} = \sum_s \mathbb{E}_{P\sim \pi_u} w_s\bra{L_s}(\DN[P])^{S}\ket{L_s}\\
    &\leq \sum_s \frac{w_s}{4^n}\sum_P \left(\sum_{i,j \in 2^k} \lambda_i\lambda_j\Big|\bra{\alpha_i} \mathcal{D}_\lambda^{\otimes n|\ell|}(P^{\{\ell\}})\ket{\alpha_k}\Big|^2\right)\left(\sum_{i,j \in 2^k} \Big|\bra{\beta_i}\mathcal{D}_\lambda^{\otimes n(c-|\ell|)}(P^{S / \{\ell\}})\ket{\beta_j}\Big|^2\right)
\end{align}
We can upper bound the second sum in the product:
\begin{align}
    \left(\sum_{i,j \in 2^k} \Big|\bra{\beta_i} \mathcal{D}_\lambda^{\otimes c-|\ell|}(P^{S / \{\ell\}})\ket{\beta_i}\Big|^2\right) &\leq 
    \left(\sum_{i \in 2^k}\sum_{j \in 2^{n(c-|\ell|)}} \bra{\beta_i}D_\lambda^{\otimes c-|\ell|}(P^{S / \{\ell\}})^\dagger\ketbra{\beta_j}{\beta_j} \mathcal{D}_\lambda^{\otimes c-|\ell|}(P^{S / \{\ell\}})\ket{\beta_i}\right) 
    \\ &= \left(\sum_{i \in 2^k} |\bra{\beta_i} \mathcal{D}_\lambda^{\otimes c-|\ell|}(P^{S / \{l\}})^\dagger \mathcal{D}_\lambda^{\otimes c-|\ell|}(P^{S / \{\ell\}})\ket{\beta_i}|^2\right) \\
    &\leq 2^k\,,
\end{align}
which follows by extending the basis of the Schmidt decomposition to a full orthonormal basis in the inequality, resolving the resulting identity term, and observing that each term in the last sum is the norm of a normalized quantum state.  Substituting the above into~\eqref{E:forsubstitute1}, we have
\begin{align}
    \sum_s \mathbb{E}_{P\sim \pi_u} w_s\bra{L_s}(\DN[P])^{S}\ket{L_s} &\leq \frac{2^k}{4^n}\sum_P \left(\sum_{i,j \in 2^k} \lambda_i\lambda_j\Big|\bra{\alpha_i}D_\lambda^{\otimes |\ell|}(P^{\{\ell\}})\ket{\alpha_j}\Big|^2\right) \\
    &= \frac{2^k}{4^n}\left(\sum_{i,j \in 2^k} \lambda_i\lambda_j\sum_P \tr(\ket{\alpha_j}\bra{\alpha_i}\mathcal{D}_\lambda^{\otimes n|\ell|}(P^{\{\ell\}}))^2\right) \\
    &= \frac{2^k}{4^n}\left(\sum_{i,j \in 2^k} \lambda_i\lambda_j\sum_P \tr\Big((\ket{\alpha_j}\bra{\alpha_i}\otimes \ket{\alpha_i}\bra{\alpha_j})(\mathcal{D}_\lambda^{\otimes |\ell|} \otimes \mathcal{D}_\lambda^{\otimes n|\ell|})\textsf{SWAP}_{\{\ell\}, \{2 \ell\}}\Big)\right) \\ 
    &\leq \frac{2^k}{2^n}\sum_{i,j \in 2^k} \lambda_i\lambda_j\left(1-\lambda +\frac{1}{2}\lambda^2\right)^n \\
    &= 2^{k - n(1-\log f(\lambda))}\,.
\end{align}
The last two steps follow by the same argument as the $|S| = 1$ case, with the only difference that the $\textsf{SWAP}$ acts on two systems of $n\ell$ qubits; however, $\ell\geq 1$ and $f(\lambda) \leq 1$, so $f(\lambda)^{n\ell} \leq f(\lambda)^n$.
\end{proof}
We are now prepared to prove Theorem \ref{thm:c,m_lower_bound}.
\begin{proof}[Proof of \ref{thm:c,m_lower_bound}]
    Combining Lemmas \ref{lemma:dc_first_bound} and \ref{lemma:dc_moments}, we have
    \begin{align}
    \Delta_c &\leq \left(\sum_{S\subseteq[c]\textnormal{\textbackslash} \varnothing} \sqrt{\max_{M = \{F_s\}\in \mathcal{M}^{cn}_{n+k}} \mathbb{E}_{P\sim \pi} \sum_s \frac{\tr(F_s\DN[P]^{ S})^2}{2^{cn}\tr(F_s)}}
    \right)^2 \\
    &\leq (2^{-n(1-\log f(\lambda))/2}c + (2^c-1-c)2^{[k-n(1-\log f(\lambda))]/2})^2 \\ 
    &\leq 2^{-n(1-\log f(\lambda))}c^2 + c^4e^{2c}2^{k-n(1-\log f(\lambda))}\,.
    \end{align} 
    By Lemma \ref{lemma:dc_tree_bound}, we find that to solve Dec-IP($n, \text{Unif}(\mathcal{P}_n)$) with probability at least $2/3$, a $(c, k)$-learning tree requires depth at least 
    \begin{equation}
        \Omega\left(\min\left(\frac{2^{n}}{f(\lambda)^nc}, \frac{2^{n - k}e^{-2c}}{f(\lambda)^nc^3}\right)\right).
    \end{equation} 
\end{proof}
\subsection{Lower bound with $k$ qubits of memory and unbounded depth}
Finally, we prove an exponential lower bound on a learner which can perform arbitrarily deep $\lambda$-noisy quantum circuits on $k$ qubits of ancillary memory and copies of the unknown $n$-qubit state. This corresponds to Figure \ref{fig:quantum_memory}(b), where the circuit can have arbitrary depth. The learning tree for this model is defined as follows.

\begin{definition}[Learning tree with bounded memory and unbounded depth]
Any $\lambda$-noisy quantum algorithm with access to a state $\DN[\rho]$ and $k$ qubit memory with unbounded quantum lifetime can be represented by the following learning tree $\mathcal{T}$.

\begin{itemize}
    \item Let $\Sigma^\rho(u)$ represent an unnormalized mixed quantum state on $k$ qubits, corresponding to the state of the quantum memory of an algorithm with noisy access to $\rho$ at node $u$. At the root node $r$, $\Sigma^\rho(r)$ is a normalized mixed state.
    \item  At each node $u$ of $\mathcal{T}$, the algorithm measures an arbitrary POVM $M_u= \{F_s^u\}$ on the $n+k$-qubit system $\Sigma^\rho(u)\otimes \DN [\rho]$. Given an outcome $s$ which connects $u$ to a child node $v$, the state of the memory in node $v$ is
    \begin{equation}
        \Sigma^\rho(v) = \tr_{>k}\!\Big(F_s^u(\Sigma^\rho(u)\otimes \rho)\Big) := A_{M_u^s}^\rho(\Sigma^\rho(u))\,,
    \end{equation}
    where $\tr_{>k}$ denotes the partial trace over the ``state" qubits, leaving out the memory qubits.
\end{itemize}
Note that the probability that the algorithm reaches node $u$ is $\tr(\Sigma^\rho(u))$.
\end{definition}
The evident difference between the learning tree in this model and our previous settings is that each node is identified with an unnormalized quantum state rather than a classical bitstring outcome. Thus, the likelihood ratio will be a function of trace distance between memory states rather than a ratio of classical outcome distributions. To lower bound the tree depth for Dec-IP, our strategy will be to bound the total variation between leaf distributions in the two hypotheses by a sum of two terms: one corresponding to a small number of choices of $P$ which contribute large fluctuations, and another corresponding to the vast majority of $P$ which contribute very small fluctuations.
\begin{theorem}
    Any $\lambda$-noisy algorithm with $k$ qubits of quantum memory and unbounded quantum depth solving Pauli shadow tomography with high probability requires at least 
    \begin{equation}
        \Omega(2^{(n-k)/3}(1-\lambda)^{-n/3})
    \end{equation}
    samples.
\end{theorem}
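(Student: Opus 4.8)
The plan is to reduce Pauli shadow tomography to the many-vs.-one task $\textnormal{Dec-IP}(n, \textnormal{Unif}(\overline{\mathcal{P}}_n))$ exactly as in the two previous lower bounds, and then to lower bound the depth of any learning tree of the bounded-memory, unbounded-depth form. The simplification I would exploit at the outset is that, by linearity of depolarizing noise together with $\DN[P] = (1-\lambda)^{|P|}P$, each noisy copy takes the clean form $\DN[\rho_P] = (\mathds{1} + (1-\lambda)^{|P|}P)/2^n$, so noise merely attenuates the hidden Pauli by $(1-\lambda)^{|P|}$. Restricting the prior to $\textnormal{Unif}(\{X,Y,Z\}^{\otimes n})$, every $P$ has weight exactly $n$, and the attenuation becomes a deterministic scalar. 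This is the ``cleaner form of the action of the noise channel'': the noise dependence is pulled out as a single global factor, decoupled from the memory size $k$.

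The difficulty specific to this model is that each node carries an unnormalized $k$-qubit memory state $\Sigma^\rho(u)$ rather than a classical outcome, so the edge-based concentration bound (part 2 of Lemma~\ref{lemma:toolbox}) no longer applies. Following \cite{chen2021exponentialseparationslearningquantum}, I would instead invoke the path-based bound (part 3 of Lemma~\ref{lemma:toolbox}): along a fixed root-to-leaf path I form the fixed-state leaf likelihood ratio $L_{\rho_P}(\ell) = \tr(\Sigma^{\rho_P}(\ell))/\tr(\Sigma^{\mathds{1}/2^n}(\ell))$ and seek to show that $\textnormal{Pr}_{\ell \sim p_{\mathds{1}/2^n}}[\,L(\ell) \le 1 - \delta\,]$ is small, where $L(\ell) = \mathbb{E}_{P}[L_{\rho_P}(\ell)]$. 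Le Cam's method (Lemma~\ref{lemma:le_cam}) then forces the tree depth to be exponentially large for the algorithm to distinguish the two hypotheses. Concretely, I would bound the second moment $\mathbb{E}_{P}[\,L_{\rho_P}(\ell)^2\,]$, expand it as a product of the edge maps $A_{M_u^s}^\rho$ along the path, and collect one factor of $\DN[P]$ per edge.

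The technical core is then a moment computation closely following the tensor-network bound of \cite{chen2021exponentialseparationslearningquantum}. My key simplification is that every appearance of the hidden Pauli is now precomposed with $\DN$; after Pauli averaging via Fact~\ref{fact:pauli_to_swap}, each such factor carries a scalar power of $(1-\lambda)$ that pulls cleanly out of the contraction, leaving a network identical in form to the noiseless case up to this global rescaling. Bounding the contracted network then separates into a memory contribution, which reproduces the $2^{-(n-k)}$ suppression of \cite{chen2021exponentialseparationslearningquantum} (only $k$ qubits can retain information about $P$ between queries), and a $k$-independent noise contribution. Carrying these through the probabilistic counting of ``bad'' paths and balancing as in \cite{chen2021exponentialseparationslearningquantum} produces the characteristic cube-root exponent and yields the stated bound $\Omega\big(2^{(n-k)/3}(1-\lambda)^{-n/3}\big)$; in particular the noise penalty $(1-\lambda)^{-n/3}$ is independent of $k$, so it persists even at $k=n$.

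I expect the main obstacle to be the moment computation under adaptivity: because the memory state $\Sigma^\rho(u)$ is itself correlated with the history of outcomes and, through the conditioning, with $P$, one cannot naively push the $P$-average inside the product of edge maps. Controlling this requires the martingale structure of \cite{weiyuan_paulis, chen2021exponentialseparationslearningquantum} to handle the sequential correlations, while simultaneously verifying that the scalar noise factor survives intact through the contraction and does not interact with the bond dimension encoding the $k$-qubit memory bottleneck. Establishing this clean decoupling — that noise contributes a $k$-independent multiplicative penalty to the sample complexity — is the crux of the argument.
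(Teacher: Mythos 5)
Your reduction to \textnormal{Dec-IP}, and your central structural insight---that after Pauli averaging the noise contributes a $k$-independent multiplicative factor, so the $2^{-(n-k)}$ memory suppression of \cite{chen2021exponentialseparationslearningquantum} and the noise penalty decouple---are both correct and match the paper. Your restriction of the prior to $\textnormal{Unif}(\{X,Y,Z\}^{\otimes n})$, so that $\DN[P]=(1-\lambda)^nP$ is a deterministic rescaling, is also a legitimate (and arguably cleaner) alternative to what the paper actually does: the paper keeps the prior over all of $\overline{\mathcal{P}}_n$ and removes the weight dependence by decomposing $\mathbb{E}_P[\DN[P]\otimes\DN[P]]$ qubit-by-qubit into symmetric and antisymmetric projectors and showing the antisymmetric parts annihilate the (swap-symmetric) doubled tensor network (proof of Lemma~\ref{lemma:bad_pauli_bound}). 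Carried through correctly, your prior would in fact give a noise penalty $\sim(1-\lambda)^{-2n/3}$ rather than the quoted $(1-\lambda)^{-n/3}$, which is consistent with (indeed stronger than) the stated theorem.

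The genuine gap is in the tree-theoretic mechanism you propose. You plan to invoke part 3 of Lemma~\ref{lemma:toolbox} and bound $\mathbb{E}_P[L_{\rho_P}(\ell)^2]$ by ``expanding it as a product of the edge maps.'' In the bounded-memory, unbounded-depth model no such expansion exists: $p^{\rho_P}(\ell)=\tr\bigl(A^{\rho_P}_{M_T}\circ\cdots\circ A^{\rho_P}_{M_1}(\Sigma_0)\bigr)$ is a composition of non-commuting, $P$-dependent maps, so $L_{\rho_P}(\ell)$ admits no per-edge factorization and the likelihood-ratio toolbox cannot be driven by node- or edge-wise moments. Moreover, the raw second moment is the wrong quantity: $\mathbb{E}_{\ell\sim p^{\mathds{1}/2^n}}\mathbb{E}_P[L_{\rho_P}(\ell)^2]=1+\mathbb{E}_P[\chi^2(p^{\rho_P}\|p^{\mathds{1}/2^n})]$, and this $\chi^2$ involves division by the history-dependent null probabilities $p^{\mathds{1}/2^n}(\ell)$, which destroys the tensor-network (polynomial-in-the-process) structure your contraction argument needs, and is dominated by rare leaves where the memory has effectively locked onto $P$---exactly the events that must be excised rather than averaged over. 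Your proposed fix, ``the martingale structure of \cite{weiyuan_paulis},'' does not exist for this model: that formalism applies to classical-outcome trees (models (i) and (ii)), not to trees whose nodes carry unnormalized memory states. The paper's proof never forms a likelihood ratio at all. Instead it bounds $d_{\text{\rm TV}}\leq\sum_{\ell}\mathbb{E}_P\bigl[\min\bigl(p^{\mathds{1}/2^n}(\ell),\,\|\Sigma^{\mathds{1}/2^n}(\ell)-\Sigma^{\rho_P}(\ell)\|_{\tr}\bigr)\bigr]$, declares a Pauli \emph{good} for an edge when its one-step perturbation of the memory state is below a threshold, bounds the fraction of bad Paulis by Markov's inequality applied to $\mathbb{E}_P[\|A^{\mathds{1}/2^n}_{M}(\Sigma)-A^{\rho_P}_{M}(\Sigma)\|^2_{\tr}]$ (this is where $\|X\|_{\tr}^2\leq 2^k\tr(X^2)$, the swap trick, and the symmetrization produce the $(1-\lambda)^{n}/2^{n-k}$ suppression), and then unrolls the tree from the leaves by an inductive triangle inequality, with the $1/3$ exponents arising from balancing the goodness threshold against the bad-Pauli count. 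This min-decomposition, the per-edge good/bad classification, and the induction are the missing ideas; without them your route stalls precisely at the adaptivity obstacle you correctly flag but do not resolve.
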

\begin{proof}[Proof]
Consider a learning tree $\mathcal{T}$ for Dec-IP$(n,\textnormal{Unif}(\overline{\mathcal{P}}_n))$ with $k$ qubits of quantum memory and unbounded depth. As before, a bound on the depth of this tree implies a bound on the sample complexity of Pauli shadow tomography. As in previous proofs, let $p^{\rho_p}(\ell)$ and $p^{\mathds{1}/2^n}(\ell)$ denote the distribution over leaves when the algorithm is given access to state $\DN[(\mathds{1} + P)/2^n]$ and $\mathds{1}/2^n$, respectively. The total variation distance between discrete distributions $p, q$ can be written as $\sum_{i:p(i)\geq q(i)} (p(i)-q(i))$. Using this fact and letting $L$ denote the set of leaves where $\mathbb{E}_P[p^{\rho_p}(\ell)] \leq p^{\mathds{1}/2^n}(\ell)$,

\begin{align}
    d_{\text{\rm TV}}(\mathbb{E}_P[p^{\rho_p}(\ell)], p^{\mathds{1}/2^n}(\ell)) &\leq \sum_{\ell\in L} p^{\mathds{1}/2^n}(\ell) - \mathbb{E}_P[p^{\rho_p}(\ell)]  \\
    &\leq \sum_{\ell\in L}\mathbb{E}_P[\min(p^{\mathds{1}/2^n}(\ell), |p^{\mathds{1}/2^n}(\ell) -p^{\rho_p}(\ell)|)]\,,
\end{align}
because for real $a, b$, we have $a-b \leq \min(a, |a-b|)$. From the definition of the learning tree, $$|p^{\mathds{1}/2^n}(\ell) -p^{\rho_p}(\ell)| = \tr(\Sigma^{\mathds{1}/2^n}(\ell) - \Sigma^{\rho_P}(\ell)) \leq \|\Sigma^{\mathds{1}/2^n}(\ell) - \Sigma^{\rho_P}(\ell)\|_{\tr}\,.$$
Substituting, we have 
\begin{equation}\label{eqn:memory_bound_eq_1}
    d_{\text{\rm TV}}(\mathbb{E}_P[p^{\rho_p}(\ell)], p^{\mathds{1}/2^n}(\ell))\leq \sum_{\ell\in L}
    \mathbb{E}_P[\min(p^{\mathds{1}/2^n}(\ell), \|\Sigma^{\mathds{1}/2^n}(\ell) - \Sigma^{\rho_P}(\ell)\|_{\tr})]\,.
\end{equation}
Now we bound the second term in the minimum, which corresponds to the difference between the quantum memory at the end of the algorithm. For this, we utilize the following definition.

\begin{definition}[Good Pauli] We say a Pauli operator $P$ is good for an edge $e_{u, s}$ if 
\begin{equation}
    \|A_{M_s^u}^{\mathds{1}/2^n}(\Sigma^{\mathds{1}/2^n}) - A_{M_s^u}^{\rho_P}(\Sigma^{\mathds{1}/2^n})\|_{\tr}  \leq 
    \frac{(1-\lambda)^{n/3}}{2^{(n-k)/3}2^n}\tr\Big(E^{\otimes 2}(\Sigma^{\mathds{1}/2^n\otimes 2}\otimes\mathds{1}_{2n})E^{\dagger\otimes 2}(\textnormal{\textsf{SWAP}}_{> k}\otimes \mathds{1}_{\text{\rm mem}})\Big)^{1/2}
\end{equation}
where $\mathds{1}_{\text{\rm mem}}$ acts on the two copies of the $k$-qubit memory Hilbert space and the \textnormal{\textsf{SWAP}} operator swaps the two copies of the state Hilbert space. The POVM element $F_s^u\in M_s^u$ has been written in its Cholesky decomposition $E^\dagger E$, omitting sub and superscripts for simplicity.
\end{definition}
\noindent This choice of definition allows us to use Markov's inequality to bound the number of bad Paulis we can have in any root-to-leaf path, which will correspond to large terms in the total variation bound. This is done in the following lemma.
\begin{lemma} \label{lemma:bad_pauli_bound}
    The number of bad Paulis along any root-to-leaf path is at most equal to $T\cdot 4^n2^{-(n-k)/3}(1-\lambda)^{n/3}$.
\end{lemma}
\begin{proof}
We use the form of Markov's inequality
\begin{equation}
    \text{Pr}[{X \geq a}] \leq \frac{\mathbb{E}[|X|^2]}{a^2}\,,
\end{equation}
which follows from the fact that $f(x) = |x|^2$ is a nonnegative and nondecreasing function. This inequality requires that we bound the expectation value $\mathbb{E}[\|A_{M_s^u}^{\mathds{1}/2^n}(\Sigma^{\mathds{1}/2^n}) - A_{M_s^u}^{\rho_P}(\Sigma^{\mathds{1}/2^n}))\|_{\tr}^2]$, which we do as follows. First, note that the unnormalized quantum state corresponding to an edge $e_{u, s}$ can be written as
$$A_M^{\rho}(\Sigma) = \tr_{> k}(E(\Sigma\otimes \mathcal{D}_\lambda^{\otimes n}\rho)E^\dagger).$$ Then
\begin{equation}
    \mathbb{E}_P[\|A_{M_s^u}^{\mathds{1}/2^n}(\Sigma^{\mathds{1}/2^n})-A^{\rho_P}_{M_s^u}(\Sigma^{\mathds{1}/2^n}))\|_{\mathrm{Tr}}^2] \leq 2^k \mathbb{E}_P\Big[\tr\Big(\tr_{> k} (E(\DN[P]/2^n \otimes \Sigma^{\mathds{1}/2^n}))E^\dagger)^2\Big)\Big]\,,
    \label{eqn:memory_node_bound_1}
\end{equation}
which comes from the fact that $\|X\|_{\tr}^2 \leq 2^k \tr(X^2)$. Let $P_D := \DN[P]/2^n$. In the following steps, we drop all subscripts and superscripts $s, u$ for clarity, since we are focused on a specific edge of the tree. Using the swap trick in reverse followed by linearity, we find
\begin{align}
\mathbb{E}_P\Big[\tr\!\Big(\tr_{> k} (E(P_D\otimes \Sigma)E^\dagger)^2\Big)\Big] &= 
\tr\Big(E^{\otimes 2}(\mathbb{E}_P[P_D\otimes \Sigma\otimes P_D\otimes \Sigma])E^{\dagger\otimes 2}(\textsf{SWAP}_{> k}\otimes \mathds{1}_{\text{\rm mem}})\Big)\,,
\end{align}
where the SWAP acts on the two state Hilbert spaces and the $\mathds{1}_{\text{\rm mem}}$ on the two copies of the $k$-qubit memory. Using the property of Paulis that $P_D = (1-\lambda)^{|P|}P$, let us define the 2-qubit operator
$$H_\lambda = 2\lambda(1-\lambda)(I\otimes I) + 2(1-\lambda)^2 \,\textsf{SWAP}$$
Then we have that
$$\mathbb{E}_P[P_D\otimes P_D] = \frac{1}{4^n}\bigotimes_n H_\lambda\,.$$
Substituting this, the above trace becomes
\begin{equation}
    4^{-2n}\tr\Big(E^{\otimes 2}(\Sigma^{\otimes 2}\otimes\bigotimes_n H_\lambda)E^{\dagger\otimes 2}(\textsf{SWAP}_{> k}\otimes \mathds{1}_{\text{\rm mem}})\Big)\,,
\end{equation}
where the $H_\lambda$ terms act on the two copies of the state Hilbert space. Now we can make the following observation. With $\Pi_{\text{sym}}^j = \frac{1}{2}(I_j+\textsf{SWAP}_j)$ and $\Pi_{\text{anti}}^j = \frac{1}{2}(I_j-\textsf{SWAP}_j)$ denoting the symmetric and antisymmetric projectors onto the two copies of the $j$-th qubit, we have
\begin{equation}
    H_\lambda^j = 2(1-\lambda)\,\Pi_{\text{sym}}^j + \eta\,\Pi_{\text{anti}}^j
\end{equation}
where $\eta$ is a constant that we will shortly neglect. Now notice that every factor in the trace we are trying to bound is completely invariant under swapping any qubit of the two copies of the state Hilbert space. Concretely, the tensor 
$$\tau = E^{\otimes 2}(\Sigma^{\otimes 2}\otimes (\cdot))E^{\dagger\otimes 2}(\textsf{SWAP}_{> k}\otimes \mathds{1}_{\text{\rm mem}})\,.$$
is symmetric, and therefore, $\tau = \Pi_{\text{sym}}\tau\Pi_{\text{sym}}$ where $\Pi_{\text{sym}} = \otimes_{i=1}^n \Pi_{\text{sym}}^i$ acting on the state Hilbert spaces. However, $\Pi_{\text{sym}}\Pi_{\text{anti}}^j$ for any $j$ is $0$. Hence, all terms containing a projector onto the antisymmetric subspace vanish while the projector onto the symmetric subspace acts trivially on a symmetric tensor, leaving us with
\begin{equation}
     \mathbb{E}_P\Big[\tr\Big(\tr_{> k} (E(P_D\otimes \Sigma)E^\dagger)^2\Big)\Big] \leq
     2^{-3n}(1-\lambda)^{n}\tr\Big(M^{\otimes 2}(\Sigma^{\otimes 2}\otimes\mathds{1}_{2n})M^{\dagger\otimes 2}(\textsf{SWAP}_{> k}\otimes \mathds{1}_{\text{\rm mem}})\Big)
\end{equation}
Substituting into \eqref{eqn:memory_node_bound_1}, we have
\begin{equation}
    \mathbb{E}_P[\|A_{M_s^u}^{\mathds{1}/2^n}(\Sigma^{\mathds{1}/2^n})-A^{\rho_P}_{M_s^u}(\Sigma^{\mathds{1}/2^n}))\|_{\mathrm{Tr}}^2] \leq \frac{(1-\lambda)^{n}}{2^{n-k}2^{2n}}\tr\Big(E^{\otimes 2}(\Sigma^{\mathds{1}/2^n\otimes 2}\otimes\mathds{1}_{2n})E^{\dagger\otimes 2}(\textsf{SWAP}_{> k}\otimes \mathds{1}_{\text{\rm mem}})\Big)
\end{equation}
Now we can apply Markov to bound the number of bad Paulis on any given edge. With $D$ the uniform distribution over $\mathcal{P}_n$, we have
\begin{equation}
    \text{Pr}_{P\sim D}[P\text{ is bad}] \leq 2^{-(n-k)/3}(1-\lambda)^{n/3}\,.
\end{equation}
Since there are $4^n$ Paulis, the number of bad Paulis for a particular edge is at most $4^n2^{-(n-k)/3}$, and along any given root-to-leaf path, the total number of bad Paulis is at most equal to $T\cdot 4^n2^{-(n-k)/3}(1-\lambda)^{n/3}$, as claimed.
\end{proof}
Equipped with Lemma \ref{lemma:bad_pauli_bound}, we now denote by $P[\ell]$ the set of good Paulis for the root-to-leaf path terminating at $\ell$, and for an intermediate node $u$, we take $P[u]$ to be a superset of these good Paulis that are good during the path up until $u$. Using this notation in tandem with Eq.~\eqref{eqn:memory_bound_eq_1}, we have
\begin{align}
    d_{\text{\rm TV}}(\mathbb{E}_P[p^{\rho_p}(\ell)], p^{\mathds{1}/2^n}(\ell))&\leq \sum_{\ell\in L}
    \mathbb{E}_P[\min(p^{\mathds{1}/2^n}(\ell), \|\Sigma^{\mathds{1}/2^n}(\ell) - \Sigma^{\rho_P}(\ell)\|_{\tr})] \\
    &\leq \sum_{\ell\in L} \left(\text{Pr}(P\notin P[\ell])\cdot  p^{\mathds{1}/2^n}(\ell) + \frac{1}{4^n}\sum_{P\in P[\ell]} \|\Sigma^{\mathds{1}/2^n}(\ell) - \Sigma^{\rho_P}(\ell)\|_{\tr}\right) \\
    &\leq T\cdot 2^{-(n-k)/3}(1-\lambda)^{n/3} + 4^{-n}\sum_{\substack{\ell\in \text{leaf}(\mathcal{T}) \\ P\in P[\ell]}}\|\Sigma^{\mathds{1}/2^n}(\ell) - \Sigma^{\rho_P}(\ell)\|_{\tr}\,.
    \label{eqn:memory_before_induction}
\end{align}
Let us we focus on the second term, having simplified our analysis only to the set of good Paulis. Fix some leaf $\ell$ and let its parent be a node $u$, and focus on one $P\in P[\ell]$. Then by the triangle inequality,
\begin{equation}\label{eqn:memory_two_terms}
    \|\Sigma^{\mathds{1}/2^n}(\ell) - \Sigma^{\rho_P}(\ell)\|_{\tr} \leq \|A_{M_s^u}^{\mathds{1}/2^n}(\Sigma^{\mathds{1}/2^n}(u)) - A_{M_s^u}^{\rho_P}(\Sigma^{\mathds{1}/2^n}(u))\|_{\tr} + \|A_{M_s^u}^{\rho_P}(\Sigma^{\mathds{1}/2^n}(u) - \Sigma^{\rho_P}(u))\|_{\tr}
\end{equation}
Because $P$ is good, the first term is bounded by 
\begin{equation}
    \frac{(1-\lambda)^{n/3}}{2^{(n-k)/3}2^n}\tr\Big(E^{\otimes 2}(\Sigma^{\mathds{1}/2^n\otimes 2}\otimes\mathds{1}_{2n})E^{\dagger\otimes 2}(\textsf{SWAP}_{> k}\otimes \mathds{1}_{\text{\rm mem}})\Big)^{1/2}
\end{equation}
To bound the trace, we note that 
\begin{align}
    \sum_s 2^{-n}\tr\Big(E_s^{\otimes 2}(\Sigma^{\mathds{1}/2^n\otimes 2}\otimes\mathds{1}_{2n})E_s^{\dagger\otimes 2}(\textsf{SWAP}_{> k}\otimes \mathds{1}_{\text{\rm mem}})\Big)^{1/2} &= \sum_s 2^{-n}\tr(\tr_{> k}(E_s( \Sigma^{\mathds{1}/2^n}\otimes \mathds{1}_n)E_s^\dagger))^{1/2} \\&= \tr(\Sigma^{\mathds{1}/2^n})\,,
\end{align}
where we use $\sum_s E_s^\dagger E_s = \mathds{1}_{n+k}$. Computing the expectation over leaves, 
\begin{equation}
    \mathbb{E}_{\ell\in \text{leaf}(\mathcal{T})}\|A_{M_s^u}^{\mathds{1}/2^n}(\Sigma^{\mathds{1}/2^n}(u)) - A_{M_s^u}^{\rho_P}(\Sigma^{\mathds{1}/2^n}(u))\|_{\tr} \leq \frac{(1-\lambda)^{n/3}}{2^{(n-k)/3}2^n}\mathbb{E}_{\ell\in \text{leaf}(\mathcal{T})}\left[\tr(\Sigma^{\mathds{1}/2^n})\right] = \frac{(1-\lambda)^{n/3}}{2^{(n-k)/3}2^n}
\end{equation}
This gives us our bound on the first term in Eq.~\eqref{eqn:memory_two_terms}. For the second term, recall that given a leaf $\ell$ and its parent $u$, $P[\ell]\subseteq P[u]$. Then
\begin{align}
    \sum_{\substack{\text{edge } e_{u,s} \text{ to leaf }\ell\\ \text{ outcome s} \\ P\in P[\ell]}}\|A_{M_s^u}^{\rho_P}(\Sigma^{\mathds{1}/2^n}(u)) - A_{M_s^u}^{\rho_P}(\Sigma^{\rho_P}(u))\|_{\tr} &\leq  \sum_{\substack{\text{ parent } u \text{ of }\ell\\ P\in P[u]}}\sum_s \|A_{M_s^u}^{\rho_P}\big(\Sigma^{\mathds{1}/2^n}(u) -\Sigma^{\rho_P}(u)\big)\|_{\tr} \\
    &\leq \sum_{\substack{\text{ parent } u \text{ of }\ell\\ P\in P[u]}} \|\Sigma^{\mathds{1}/2^n}(u) -\Sigma^{\rho_P}(u)\|_{\tr}
\end{align}
In the second inequality we use the following observation. $A_{M_s^u}^{\rho_P}$ is a quantum circuit that performs a measurement and observes outcome $s$; the sum of $A_{M_s^u}^{\rho_P}$ over outcomes is a quantum channel and thus does not increase the trace norm of the unnormalized input state. Returning to Eq.~\eqref{eqn:memory_two_terms}, and substituting our bounds into the sum from Eq.~\eqref{eqn:memory_before_induction}, we are left with
\begin{equation}
    \sum_{\ell\in \text{leaf}(\mathcal{T}), P\in P[\ell]}\|\Sigma^{\mathds{1}/2^n}(\ell) - \Sigma^{\rho_P}(\ell)\|_{\tr} \leq \frac{(1-\lambda)^{n/3}}{2^{(n-k)/3}2^n} + \sum_{\substack{u \text{ at depth } T-1\\ P\in P[u]}} \|\Sigma^{\mathds{1}/2^n}(u) -\Sigma^{\rho_P}(u)\|_{\tr}\,.
\end{equation}
We see that the right-hand term has become the same expression as the left-hand side, but applied to the $T-1^{st}$ layer of the tree. Inductively using the triangle inequality $T$ times and substituting into Eq.~\eqref{eqn:memory_before_induction}, we are left with 
\begin{equation}
    d_{\text{\rm TV}}(\mathbb{E}_P[p^{\rho_p}(\ell)], p^{\mathds{1}/2^n}(\ell)) \leq T\cdot 2^{-(n-k)/3}(1-\lambda)^{n/3} + T\cdot 2^{-(n-k)/3}(1-\lambda)^{n/3}\,. 
\end{equation}
For $T < \Omega(2^{(n-k)/3}(1-\lambda)^{n/3})$, we find that the total variation distance is $o(1)$. Hence, to achieve a success probability $\geq 2/3$, Lemma \ref{lemma:le_cam} requires that $T \geq \Omega(2^{(n-k)/3}(1-\lambda)^{n/3})$. As $\lambda\rightarrow 1$, our lower bound becomes infinite, which is expected because the I+P state becomes maximally mixed. As $\lambda\rightarrow 0$, we recover the $\Omega(2^{(n-k)/3})$ lower bound from \cite{chen2021exponentialseparationslearningquantum},indicating that our bounds are tight up to constants in the exponents.
\end{proof}
Previous memory-aware lower bounds in \cite{chen2021exponentialseparationslearningquantum} became trivial once $k=n$. Note here that even when $k=n$, the sample complexity lower bound still scales as $\Omega((1-\lambda)^{-n/3})$; hence, no sample efficient two-copy algorithm (including any strategy leveraging Bell measurements) for Pauli shadow tomography, or even the easier many-vs.-one discrimination problem, can exist. 

\subsection{Single-copy noisy strategy}
Here we provide an algorithm using only single copies to solve Pauli shadow tomography in the presence of $\nbqp$ noise. To align with our lower bounds, we consider the setting in which $\epsilon = \delta = 1/3$; the dependence on these parameters in the sample complexity are the standard $\epsilon^{-2}, \log(1/\delta)$. A simple approach is to apply classical shadow tomography \cite{classical_shadow}. We first recall the algorithm.
\begin{algorithm}[h]
  \caption{Classical shadow tomography}
  \label{alg:classical_shadow}

  \DontPrintSemicolon                          
  \SetKwInput{KwInput}{Input}                  
  \SetKwInput{KwOutput}{Output}                
  
  \SetKw{KwAnd}{and}                           
  \SetKw{KwOr}{or}

  \KwInput{Observables $O_1,...,O_M$, failure probability $\delta$, $N = O(\log M\log(1/\delta) \max_i\|O_i\|_{\text{shadow}}/\epsilon^2)$ copies of $n$-qubit state $\rho$, distribution $\mathcal{E}$ over $n$-qubit unitaries }
  \KwOutput{$M$ estimates $\hat{O}_i$ with all $|\tr(O\rho) - \hat{O}_i| \leq \epsilon$ with probability $\geq 1-\delta$ }
  \SetKwFor{RepTimes}{repeat}{times}{end}

  \BlankLine
  Initialize $S, \hat{O} \gets \varnothing$ \\
  \RepTimes{$N$}{
  Sample a $U\sim \mathcal{E}$ \\
  Measure $U^{\dagger}\rho U$ in the computational basis to obtain $n$-bistring $s$\\ 
  With $\mathcal{M}(\rho) = \mathbb{E}_{U\sim \mathcal{E}}[U^{\dagger}\rho U]$, append $\mathcal{M}^{-1}(s)$, computed classically, to $S$.
  }
  \For{$i = 1,2,...,M$}{
    Append $\textsc{MedianOfMeans}(S, O_i, 2\log 2M/\delta)$ to $\hat{O}$.
  }
  \Return{$\hat{O}$}     
\end{algorithm}

The subroutine $\textsc{MedianOfMeans}(S, O, K)$ simply batches the set $S$ into $S/K$ nonoverlapping sets of size $K$, computes $\tr(O\mathcal{M}^{-1}(s))$ for each $\mathcal{M}^{-1}(s)$ stored in each batch, and returns the median of the means of all batches for the estimator $\hat{O}_i$. It is also generally required that $\mathcal{E}$ matches moments of the Haar measure. For details, see \cite{classical_shadow}.

In practice, the ensemble $\mathcal{E}$ should be efficiently sampleable. Moreover, in our model of $\nbqp$ computation, every depth-1 layer in the construction of the random unitary will incur a layer of depolarization. In general, these unitaries must be directly applied to the uncharacterized quantum state, and thus act on physical qubits rather than the codespace of any quantum fault-tolerance scheme. Hence, without relying on substantial ancillary quantum memory, these unitaries cannot be implemented noiselessly. For this analysis, we thus choose particular unitary ensemblse which can be implemented with constant-depth circuits such that the algorithm does not incur extensive errors in system size.

In particular, we implement the simplest choice consisting of a single layer of Haar-random single-qubit rotations. We then comment on an extension to random 1-dimensional brickwork circuits of depth $D$.

\begin{theorem}\label{Thm:classical_shadow_sample}
Algorithm~\ref{alg:classical_shadow}, with $\mathcal{E}$ chosen to be the distribution over $n$-fold tensor products of Haar-random single-qubit unitaries, solves Pauli shadow tomography with $\epsilon=\delta=1/3$ using 
\begin{equation}
O(n3^n(1-\lambda)^{-2n})
\end{equation}
samples and single-copy measurements.
\end{theorem}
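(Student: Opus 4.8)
The claim is about the sample complexity of classical shadow tomography (Algorithm~\ref{alg:classical_shadow}) for Pauli shadow tomography, when the unitary ensemble $\mathcal{E}$ consists of tensor products of Haar-random single-qubit unitaries, and each copy of the state is subjected to $\mathcal{D}_\lambda^{\otimes n}$ noise before measurement. Let me think about what the shadow norm becomes in this noisy setting.

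The plan is to run the classical shadows estimator of Algorithm~\ref{alg:classical_shadow} with the single-qubit Haar ensemble, control the per-snapshot variance through a noise-adjusted shadow norm, and conclude via median-of-means. First I would recall the structure of the single-qubit Haar (equivalently, random Pauli) measurement scheme: the measurement channel factorizes as $\mathcal{M} = \mathcal{M}_1^{\otimes n}$, where $\mathcal{M}_1$ multiplies traceless single-qubit operators by $1/3$, so $\mathcal{M}_1^{-1}(X) = 3X - \tr(X)\,\mathds{1}$. A standard $3$-design computation then gives that for a Pauli $P \in \mathcal{P}_n$ of weight $|P|$ the noiseless squared shadow norm is
\begin{equation}
\|P\|_{\mathrm{shadow}}^2 = 3^{\,|P|},
\end{equation}
and since the shadow norm is defined as a supremum over input states, this quantity upper bounds the single-snapshot second moment $\mathbb{E}[\tr(P\hat\rho)^2]$ for \emph{any} state, including the noisy states we actually measure.

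Next I would account for the $\nbqp$ noise. Each physical copy is corrupted to $\sigma = \DN[\rho]$ before the random-basis measurement, so the snapshots $\hat\rho = \mathcal{M}^{-1}(U^\dagger \ketbra{s}{s} U)$ are unbiased for $\sigma$ rather than $\rho$, i.e.\ $\mathbb{E}[\hat\rho] = \sigma$. Because $\mathcal{D}_\lambda$ is self-adjoint and multiplies every weight-$w$ Pauli by $(1-\lambda)^w$, the naive estimator $\tr(P\hat\rho)$ has mean $\tr(P\sigma) = (1-\lambda)^{|P|}\tr(P\rho)$. I would therefore debias by the trivial classical rescaling $\hat{o}_P := (1-\lambda)^{-|P|}\tr(P\hat\rho)$ (equivalently, rescaling the observable $P \mapsto (1-\lambda)^{-|P|}P$ before the \textsc{MedianOfMeans} step, or inverting the combined measurement channel $\mathcal{M}\circ\DN$). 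This preserves unbiasedness, $\mathbb{E}[\hat o_P] = \tr(P\rho)$, but inflates the variance by exactly $(1-\lambda)^{-2|P|}$:
\begin{equation}
\Var[\hat o_P] \le (1-\lambda)^{-2|P|}\,\mathbb{E}[\tr(P\hat\rho)^2] \le \left(\frac{3}{(1-\lambda)^2}\right)^{|P|}.
\end{equation}

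Finally I would take the worst case over the $M = 4^n$ target Paulis. Since $3(1-\lambda)^{-2} > 1$, the bound $(3(1-\lambda)^{-2})^{|P|}$ is increasing in weight and maximized at $|P| = n$, giving a uniform variance bound $V = 3^n (1-\lambda)^{-2n}$. Plugging $V$, $M = 4^n$, and $\epsilon = \delta = 1/3$ into the median-of-means guarantee $N = O\big(V\log(M/\delta)/\epsilon^2\big)$ yields
\begin{equation}
N = O\!\left(n\,3^n (1-\lambda)^{-2n}\right),
\end{equation}
as claimed. The main obstacle, and the only step needing genuine care, is the variance accounting: one must verify that the noiseless shadow-norm bound legitimately controls the variance evaluated on the corrupted state $\sigma$ (which follows from the supremum-over-states definition of $\|\cdot\|_{\mathrm{shadow}}$), and that the debiasing factor enters precisely as $(1-\lambda)^{-2|P|}$ without interacting with the single-qubit moment computation, so that the two exponential factors combine cleanly into $(3(1-\lambda)^{-2})^n$. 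An analogous argument for depth-$D$ one-dimensional brickwork circuits would replace $3^{|P|}$ by the corresponding shallow-circuit shadow norm, trading a smaller base for support spreading, while leaving the noise rescaling $(1-\lambda)^{-2|P|}$ unchanged.
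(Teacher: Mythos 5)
Your argument is internally consistent, but it analyzes a weaker noise model than the one the theorem is set in, and this is where the gap lies. You place a single layer of depolarizing noise on each fresh copy ($\sigma=\mathcal{D}_\lambda^{\otimes n}[\rho]$) and then treat the random single-qubit rotations and the computational-basis readout as noiseless. The paper's proof instead works with the noisy scrambling channel $\mathcal{C}_{U,\lambda}[\cdot]=\mathcal{D}_\lambda^{\otimes n}\bigl[U\,\mathcal{D}_\lambda^{\otimes n}[\cdot]\,U^\dagger\bigr]$: in the $\nbqp$ model the rotation layer acts on bare physical qubits and is itself followed by depolarization before readout. In that model a weight-$|P|$ Pauli is attenuated by $(1-\lambda)^{2|P|}$, one factor of $(1-\lambda)$ per noise layer per non-identity site; this is precisely why the paper's Pauli weight comes out to $\omega(P)=\bigl((1-\lambda)^2/3\bigr)^{|P|}$ rather than $\bigl((1-\lambda)/3\bigr)^{|P|}$. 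Your debiasing constant $(1-\lambda)^{-|P|}$ therefore leaves a residual bias $(1-\lambda)^{|P|}$ in the paper's setting, so your estimator is not unbiased there; repairing this by rescaling with $(1-\lambda)^{-2|P|}$ inflates your variance bound to $3^{|P|}(1-\lambda)^{-4|P|}$ and yields $O\bigl(n3^n(1-\lambda)^{-4n}\bigr)$, not the claimed $O\bigl(n3^n(1-\lambda)^{-2n}\bigr)$.

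The two routes also differ in substance. The paper never argues via ``noiseless shadow norm times a debiasing factor'': it computes the Pauli eigenvalues $\omega(P)$ of the full noisy measurement channel within the Pauli-invariant-ensemble framework of \cite{hu2023classical,bu2024classical} (reducing, as you do, to a single-qubit Bloch-sphere average where isotropy supplies the $1/3$), and then imports the bound $\|P\|_{\mathrm{shadow}}\leq\omega(P)^{-1}$. That imported inequality is doing real work: your more conservative accounting --- second moment bounded by the sup-over-states noiseless shadow norm, then square the debiasing constant --- applied to the two-layer channel gives only $\omega(P)^{-2}3^{-|P|}$, which exceeds $\omega(P)^{-1}$ by a factor $(1-\lambda)^{-2|P|}$ whenever $\lambda>0$. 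So to establish the theorem as the paper intends it, you must either justify restricting to the one-noise-layer model (which is the model used for the paper's \emph{lower} bounds, but is weaker than what an $\nbqp$ upper bound should survive), or engage with the noisy-shadow-norm machinery of \cite{bu2024classical} rather than the noiseless shadow norm; the clean factorization you rely on does not by itself deliver the claimed exponent once the second noise layer is present.
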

\begin{proof}
The classical shadows algorithm has a sample complexity of $O(n\max_i \|P_i\|_{\text{shadow}})$; it only remains to evaluate the shadow norm in Algorithm~\ref{alg:classical_shadow} under our choice of ensemble. 

To simplify notation, we define the noisy scrambling channel $\mathcal{C}_{U,\lambda}[\cdot]=\mathcal D_{\lambda}^{\otimes 2n}[U\mathcal D_{\lambda}^{\otimes 2n}[\cdot]U^\dagger]$. Here, $U$ will be a product of Haar-random single qubit unitaries, and the output of the channel represents the quantum state directly before computational-basis measurement in the classical shadows algorithm. The ideal, noiseless reconstruction channel corresponding to a fixed $U$ is given by $\mathcal{R}_{U}[\cdot]=U\cdot U^\dagger$. Recall that in classical shadow tomography~\cite{cotler2020quantum, classical_shadow}, reconstruction of expectation values from measurements is performed classically, and so we do not incur errors in this step.

With this notation in hand, we state known results on the shadow norm of Pauli observables. ~\cite{hu2023classical} and ~\cite{bu2024classical} demonstrate that when the ensemble $\mathcal{E}$ of quantum channels applied to the given state is invariant under local rotations (which holds trivially for the single-qubit Haar random ensemble and depolarizing channel), the classical shadows channel $\mathcal{M}[\rho]$ as defined in Algorithm \ref{alg:classical_shadow} has the following diagonal representation in the Pauli basis:
\begin{align}
\mathcal{M}[\rho]&=\mathbb{E}_U\sum_s\mathcal{R}_{U}^\dagger[\ketbra{s}{s}] \tr\!\left(\ketbra{s}{s}\mathcal{C}_{U,\lambda}[\rho]\right)\\
&= \frac{1}{2^n}\sum_{P\in \mathcal{P}_{n}}\!\omega(P)\tr(P\rho)P,
\end{align}
where the shadow Pauli weight $\omega(P)$ is defined as 
\begin{align}
\omega(P) := \frac{1}{2^{n}}\sum_s\mathbb{E}_U\left(\tr\!\left(\mathcal{R}_{U}^\dagger[\ketbra{s}{s}]P\right)\tr\left(\mathcal{C}_{U,\lambda}^\dagger[\ketbra{s}{s}]P\right)\right).
\end{align}
where the sum is over all $2n$-bitstrings $s$.

Under this definition, \cite{bu2024classical} shows that $\norm{P}_{\text{shadow}}\leq\omega(P)^{-1}$. Hence, we need only to bound $\omega(P)$ for all Paulis with restricted weight to obtain our sample complexity bound. First, note that the channels $\mathbb{E}_U\mathcal{R}_U[\rho]$ and $\mathbb{E}_U\mathcal{C}_{U, \lambda}[\rho]$ are clearly invariant under conjugation of $\rho$ by any fixed unitary from our product ensemble. Namely, any unitary $U$ such that $U\ket{0}^{\otimes n} = \ket{s}$ is of this form. Hence when $\rho = \ketbra{s}{s}$, we can simply evaluate the channel with input $\ketbra{0}{0}$. With this, we have 
\begin{align}
\omega(P)&=\mathbb{E}_U\left(\tr\!\left(\mathcal{R}_{U}^\dagger[\ketbra{0}{0}]P\right)\tr\!\left(\mathcal{C}_{U,\lambda}^\dagger[\ketbra{0}{0}]P\right)\right)\\
&=\mathbb{E}_U\left(\tr\!\left(U\ketbra{0}{0}U^{\dagger}P\right)\tr\!\left(\ketbra{0}{0}\mathcal{C}_{U,\lambda}[P]\right)\right)\\
&=\mathbb{E}_U\left(\tr\!\left(U\ketbra{0}{0}U^\dagger P\otimes\mathcal{D}_{\lambda}^{\otimes n}[\ketbra{0}{0}]U^\dagger\mathcal{D}_{\lambda}^{\otimes n}[ P]\right)\right),
\end{align}
where in the second line we use the definition of the Hilbert-Schmidt inner product to apply the adjoint channel, and in the final line we apply the hermiticity of the depolarizing channel. Now we use the fact that our distribution over unitaries is product, which gives us 
\begin{equation}
\label{eq:wp_product}
\omega(P)=\prod_{j=1}^{n} \mathbb{E}_{U_j}\!\left[\tr\big(U_j \ketbra{0}{0} U_j^\dagger P_j\big)\tr\big(U_j \mathcal D_\lambda[\ketbra{0}{0}] U_j^\dagger \mathcal D_\lambda[P_j]\big)\right],
\end{equation}
where each $U_j$ is Haar-random on $U(2)$. The remaining traces are easily evaluated using the following observation. The single-qubit state $U_j \ketbra{0}{0} U_j^\dagger$ has a Bloch representation $\tfrac{1}{2}(I+\hat r\cdot\bm{\sigma})$ for some unit vector $\hat r$, where $\bm{\sigma}=(X,Y,Z)$. When $P_j\in \{X, Y, Z\}$, taking the trace $\tr\big(U_j \ketbra{0}{0} U_j^\dagger P_j\big)$ simply picks out the $P_j$-axis component of $\hat r$. When $P_j = I$, the trace is $1$. Then 

\begin{equation}
    \tr\!\big(U_j \ketbra{0}{0} U_j^\dagger P_j\big) = \begin{cases}
        1 &\text{for }P_j = I \\
        \hat r\cdot\hat P_j &\text{for }P_j = X, Y, Z
    \end{cases}.
\end{equation}

\noindent Moreover, since $\mathcal{D}_\lambda[\ketbra{0}{0}] = \tfrac{1}{2}\big(I+(1-\lambda)\hat z\cdot\bm{\sigma}\big)$ and $\mathcal{D}_\lambda[P_j] = (1-\lambda)P$ for $P_j \in \{X, Y, Z\}$, we obtain
\begin{equation}
    \tr\!\big(U_j \mathcal D_\lambda[\ketbra{0}{0}] U_j^\dagger \mathcal D_\lambda[P_j]\big) =  \begin{cases}
        1, \quad P_j = I \\
         (1-\lambda)^2 \hat r\cdot\hat P_j, \quad P_j = X, Y, Z
    \end{cases}
\end{equation}

Since each $U_j$ is Haar-random, the corresponding vectors $\hat r$ are uniform over the unit sphere; then isotropy gives us $\mathbb{E}_R[(R\cdot\hat z)_k^2]=1/3$. So every single-qubit term in Eq.~\eqref{eq:wp_product} for which $P_j\neq I$ contributes a factor of $(1-\lambda)^2/3$, while the remaining terms are $1$. Because $\norm{P}_{\text{shadow}}\leq\omega(P)^{-1}$, our sample complexity bound is obtained by upper-bounding the smallest value $\omega(P)$ can take on, which occurs when $P$ has the maximum allowed weight of $n$. Hence $\omega(P)=\left(\frac{(1-\lambda)^2}{3}\right)^{n}$ implies that
\begin{equation}
O(n\max_i \|P_i\|_{\text{shadow}}) = O(n3^n(1-\lambda)^{-2n})\,,
\end{equation}
as claimed.
\end{proof}

We remark that this result can be generalized slightly in the case where the classical shadows algorithm is implemented by a 1-dimensional brickwork circuit of depth $D$. Ref.~\cite{hu2025demonstration} shows that for such circuits (layered with interstitial depolarizing noise of strength $\lambda$),
\begin{equation}
    \|P\|_{\text{shadow}} \leq 3^{(n+D)\left(\frac43+\frac{(4/5)^D}{D^{3/2}}+\frac{D\lambda}{\log 3}\right)}.
\end{equation}
We immediately obtain that classical shadows with circuits of this form solves Pauli shadow tomography using 
$$O\!\left(n\,3^{(n+D)\left(\frac43+\frac{(4/5)^D}{D^{3/2}}+\frac{D\lambda}{\log 3}\right)}\right)$$
samples.

\subsection{Two-copy noisy Bell sampling and quantum-enhanced advantage}
Note that our lower bound for the model of $n$ qubits of quantum memory has a noise dependence that scales as $(1-\lambda)^{-n}$, up to a constant in the exponent. Hence, all two-copy strategies for Pauli shadow tomography are irrecoverably degraded by just the single layer of noise considered in our lower bounds, including all strategies which utilize Bell measurement. In this section, we give a Bell-measurement based algorithm for Dec-IP which, up to constants in the exponent and a nonleading polynomial factor in $n$, matches this exponential scaling in $\lambda$. 
\begin{theorem} \label{thm:qa_upper_bound}
There exists a $\lambda$-noisy quantum algorithm $Q^O_\lambda$ with the ability to perform joint measurements on at most $2$ copies of the $2n$-qubit unknown state and $\lambda = \Theta(1)$ which solves \textnormal{Dec-IP(}$n, \textnormal{Unif(}\overline{\mathcal{P}_n}\textnormal{))}$ using $O(n(1-\lambda)^{-4n})$ queries to $O_P$.
\end{theorem}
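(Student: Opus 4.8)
The plan is to implement noisy Bell sampling on pairs of noisy copies and to recast the resulting outcome statistics as estimation of the Pauli spectrum of $\tilde\rho := \DN[\rho]$. Concretely, the algorithm draws $N = O\!\left(n\,(1-\lambda)^{-4n}\right)$ pairs of copies; for each pair it forms the noisy two-copy state $\tilde\rho\otimes\tilde\rho$ and runs \textsc{BellMeasure}$(n,\cdot)$, obtaining symplectic bitstrings $s_1,\dots,s_N\in\mathbb{F}_2^{2n}$. It then computes, for every $p\in\mathbb{F}_2^{2n}$, the empirical character sum $\hat c(p) := \tfrac1N\sum_{i=1}^N (-1)^{\langle s_i,p\rangle}$, and outputs the ``$\mathds{1}+P$'' hypothesis iff $\max_{p\neq 0}\lvert\hat c(p)\rvert > \theta$ for the threshold $\theta := \tfrac12(1-\lambda)^{2n}$ (and the maximally mixed hypothesis otherwise).

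First I would compute the Bell outcome law under each hypothesis. Writing the Bell POVM as $\Pi_s=\tfrac{1}{4^n}\sum_{P}(-1)^{\langle s,p\rangle+\langle p\rangle}P\otimes P$ and using $\tr(\Pi_s\,\sigma\otimes\sigma)=\tfrac1{4^n}\sum_P(-1)^{\langle s,p\rangle+\langle p\rangle}\tr(P\sigma)^2$, I substitute $\sigma=\tilde\rho$. Since depolarizing noise acts on Paulis by $\DN[P]=(1-\lambda)^{|P|}P$, we have $\tilde\rho=(\mathds{1}+(1-\lambda)^{|P_0|}P_0)/2^n$ in the planted case, so $\tr(P\tilde\rho)^2=\delta_{p,0}+(1-\lambda)^{2|P_0|}\delta_{p,p_0}$, whereas the maximally mixed case gives $\tr(P\sigma)^2=\delta_{p,0}$. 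Hence the outcome is exactly uniform on $\mathbb{F}_2^{2n}$ under the null, while under the planted hypothesis
\begin{equation}
\Pr[s]=\frac{1}{4^n}\Bigl(1+(-1)^{\langle s,p_0\rangle+\langle p_0\rangle}(1-\lambda)^{2|P_0|}\Bigr).
\end{equation}
A one-line character computation then gives $\mathbb{E}[\hat c(p)]=\delta_{p,0}$ under the null and $\mathbb{E}[\hat c(p)]=\delta_{p,0}+(-1)^{\langle p_0\rangle}(1-\lambda)^{2|P_0|}\delta_{p,p_0}$ under the planted case: the entire signal is concentrated in a single unknown frequency $p_0$, of magnitude $(1-\lambda)^{2|P_0|}\ge (1-\lambda)^{2n}=2\theta$.

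The correctness analysis is then a concentration-plus-union-bound argument. Under the null, each $(-1)^{\langle s_i,p\rangle}$ is an i.i.d.\ mean-zero $\pm1$ variable for $p\neq 0$, so Hoeffding gives $\Pr[\lvert\hat c(p)\rvert>\theta]\le 2e^{-N\theta^2/2}$; a union bound over the $4^n$ frequencies yields failure probability $\le 2\cdot 4^n e^{-N\theta^2/2}$, which is $o(1)$ once $N\ge C n(1-\lambda)^{-4n}$ for a large enough constant $C$ (here $N\theta^2=\tfrac{C}{4}n$ beats $2n\ln 2$). Under the planted hypothesis, $\hat c(p_0)$ concentrates within $\theta/2$ of its mean $\pm(1-\lambda)^{2|P_0|}\ge 2\theta$, so $\lvert\hat c(p_0)\rvert>\theta$ with high probability for the same $N$; crucially the threshold is calibrated to the weakest possible signal, $|P_0|=n$, so the test succeeds uniformly over all non-identity $P_0$, and hence for any planting distribution on $\overline{\mathcal{P}}_n$. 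This gives the claimed query complexity $O(n(1-\lambda)^{-4n})$; the classical post-processing can be done in time $O(n4^n)$ by a single fast Walsh--Hadamard transform over $\mathbb{F}_2^{2n}$, though only the query count is asserted.

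The main obstacle is not the clean ``oracle-noise'' accounting above but making the measurement itself genuinely $\nbqp$: the Bell measurement must be realized by a $\lambda$-noisy circuit. Since \textsc{BellMeasure} is a transversal, constant-depth circuit (a layer of CNOTs and Hadamards followed by computational-basis readout), I would commute its $O(1)$ interstitial depolarizing layers onto the input, which merely rescales the effective per-copy noise $1-\lambda\mapsto(1-\lambda)^{\Theta(1)}$; this preserves the ``single nonzero Fourier coefficient'' structure and only changes the constant in the exponent, which is permitted by the statement. A secondary point to verify is that the planted signal decouples from the phase $(-1)^{\langle p_0\rangle}$ (it only flips the sign of $\hat c(p_0)$, immaterial since we threshold $\lvert\hat c(p_0)\rvert$), and that the worst-case threshold $\theta=\tfrac12(1-\lambda)^{2n}$ indeed lies strictly between the null fluctuations and the weakest signal, which is exactly what the choice $N=\Theta(n(1-\lambda)^{-4n})$ ensures.
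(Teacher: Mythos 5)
Your proposal is correct and follows essentially the same route as the paper's proof: noisy Bell sampling on pairs of noisy copies, empirical character sums over all non-identity Paulis, the threshold $\tfrac{1}{2}(1-\lambda)^{2n}$, and a Hoeffding-plus-union-bound analysis with $\Theta(n(1-\lambda)^{-4n})$ samples. Your two refinements---tracking the phase $(-1)^{\langle p_0\rangle}$ (which the paper silently drops, harmlessly, since both tests threshold an absolute value) and commuting the constant-depth Bell-circuit noise layers onto the input---go slightly beyond the paper's write-up, which applies \textsc{BellMeasure} directly to the noisy copies without explicitly accounting for noise inside the measurement circuit.
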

\begin{proof}[Proof of Theorem \ref{thm:qa_upper_bound}]
We proceed by giving an explicit $\lambda$-noisy quantum algorithm for Dec-IP($n, \textnormal{Unif}(\overline{\mathcal{P}_n})$). Let $H_0$ denote the ``null hypothesis" in which the oracle prepares copies of the maximally mixed state, and let $H_1$ denote the uniform-over-$\overline{\mathcal{P_n}}$ hypothesis.
\begin{algorithm}
  \caption{Distinguishing an unknown Pauli with noise}
  \label{alg:find_pauli}
    \DontPrintSemicolon
  \SetKwInput{KwInput}{Input}
  \SetKwInput{KwOutput}{Output}
  \SetKw{KwAnd}{and}
  \SetKwFor{RepTimes}{repeat}{times}{end}

  \KwInput{System size $n$,
           $T = \Theta(n(1-\lambda)^{-4n})$
           copies of $\rho$ from oracle $O$}
  \KwOutput{Guess of $H_0$ or $H_1$}
  \BlankLine

  Initialize set $S \gets \varnothing$ \\
\tcp{Quantum data collection}

  \RepTimes{$T$}{
    Append $\textsc{BellMeasure}(2n,\rho\otimes \rho)$ to $S$, using up two copies of $\rho$
  }
  \BlankLine

  \tcp{Classical postprocessing}
  Initialize $Z_{\max}\gets 0$\\
  \For{all $Q\in\overline{\mathcal{P}_n}$}{
  With  $q\in\F_2^{2n}$ denoting the symplectic bitstring of $Q$, define
  \begin{equation*}
    \hat{Z}_Q \gets \frac{1}{T}
      \sum_{s\in S} (-1)^{\langle s, q\rangle}\,.
  \end{equation*}
    $Z_{\max} \gets \max(Z_{max}, |\hat{Z}_{Q^\star}|)$. \\[2mm]
  }
  \Return{ $H_1$ if $Z_{\textnormal{max}} \geq \frac{1}{2}(1-\lambda)^{2n}$, else return $H_0$}
\end{algorithm}

We now argue that Algorithm \ref{alg:find_pauli} gives us Theorem \ref{thm:qa_upper_bound}. For intuition, we begin by analyzing the noiseless case, $\lambda = 0$, under $H_1$. Within the quantum data collection loop, our algorithm performs multiple rounds of Bell measurement (Definition~\ref{def:BM_subroutine}). Given two copies of an $n$-qubit state $\rho = (\mathds{1} + P)/\tr(\mathds{1}+P)$, the distribution over classical $2n$-bitstring outcomes obtained from performing Bell measurement on $\rho\otimes \rho$ is
\begin{align}
    \text{Pr}[s] &= \tr(\Pi_s \rho\otimes\rho) \\
    &= \frac{1}{4^{2n}}\sum_{Q\in\mathcal{P}}(-1)^{\langle s, q\rangle + \langle q\rangle} \tr((Q \otimes Q)((\mathds{1}+P)\otimes (\mathds{1}+P))\\
    &= \frac{1}{4^{2n}}\sum_{Q\in\mathcal{P}}(-1)^{\langle s, q\rangle + \langle q\rangle} (\tr(Q) +\tr(QP))^2\\
    &= \frac{1}{4^{n}}(\tr(\mathds{1}P) + (-1)^{\langle s, p\rangle}\tr(P^2)) \\
    &= \frac{1}{4^{n}}(1 + (-1)^{\langle s, p\rangle})\,,
\end{align}
which is $0$ if $\langle s, p\rangle = 1$ and $1/4^n$ otherwise. Here, lowercase $p, q$ are the bitstrings over $\mathbb{F}_2$ associated with Pauli observables $P, Q$ as defined in Section \ref{sec:symplectic_paulis}. In other words, each Bell measurement is equivalent to sampling uniformly from the subspace of $\mathbb{F}_2^{2n}$ orthogonal to $p$. Under $H_0$, it is easy to see that every outcome is simply a uniformly random bitstring from all $4^n$ candidates.

Now we analyze Bell measurement under $H_0$ in the noisy setting. Here, we jointly measure two copies of $\DN[\rho]$, where $\lambda = \Theta(1)$ is some constant $>0$. For any $Q\in \overline{\mathcal{P}_n}$, we have $\tr(Q\DN[P]) = (1-\lambda)^{|P|}\tr(PQ)$. Substituting this into the noiseless expression, the distribution over noisy outcomes becomes
\begin{equation}
\text{Pr}_\lambda[s] = \frac{1}{4^n}(1 + (1-\lambda)^{2|P|}(-1)^{\langle s, p\rangle})\,.
\end{equation}
Meanwhile, the distribution over measurement outcomes under $H_0$ is still uniform over all bitstrings, because the maximally mixed state remains unchanged after the noise channel.

With this observation, we can define a test statistic that distinguishes the hypothesis as follows. First, we define the random variable
\begin{equation}
    X_Q = (-1)^{\langle s, q\rangle}\,,
\end{equation}
corresponding to every non-identity $n$-qubit Pauli $Q$ with symplectic bitstring $q$. The randomness is over Bell measurement outcomes $s$. Under $H_0$, note that $\mathbb{E}[X_Q|H_0] = 0$, because all $4^n$ values of $s$ occur with equal probability, and any nonzero symplectic bitstring is orthogonal to exactly half of them. Under $H_1$, we can condition on each value of $P$:
\begin{align}
  \mathbb{E}[X_Q | P]
  &= \sum_{s\in\mathbb F_2^{2n}} X_Q\textnormal{Pr}[s| P] \\
  &= \frac{1}{4^n}\sum_{s}(-1)^{\langle s, q\rangle}
      + \frac{(1-\lambda)^{2|P|}}{4^n}\sum_{s}(-1)^{\langle s, q+p\rangle} \\
  &= \begin{cases}
      (1-\lambda)^{2|P|}\quad &\textnormal{if } q=p \\
      0\qquad &\textnormal{else}
  \end{cases}\,.
\end{align}

Classically enumerating over all non-identity Paulis,w then define our estimator $\hat{Z}_Q$ to be the empirical mean of $X_Q$. In expectation, only $\mathbb{E}[\hat{Z}_P]$ is greater than $0$, so our algorithm chooses the largest empirical $\hat{Z}_Q$ as the final estimator. We note that, under success, our algorithm has also identified $P$ correctly, and is thus \textit{stronger} than simply solving Dec-IP. It is plausible that with an appropriate reframing, e.g. fixing a Pauli $P$ in the alternate hypothesis as in Ref. \cite{huang2022foundations}, we may obtain a tighter gap between our lower and upper bounds.

We now analyze failure probability, proving the correctness of our algorithm. Suppose the oracle outputs the maximally mixed state; i.e. the correct solution is $H_0$. By Hoeffding's inequality, for any fixed $Q$ and any
$\tau>0$,
\begin{equation}
  \textnormal{Pr}\big[|\hat{Z}_Q| \geq \tau | H_0\big]\leq 2\exp(-2T\tau^2)\,.
\end{equation}
Applying a union bound over the $4^n-1$ possible non-identity Paulis
gives
\begin{equation}
  \textnormal{Pr}\big[|Z_{\max}| \ge \tau \mid H_0\big]\leq (4^n-1)\cdot 2\exp(-2T\tau^2)\,.
  \label{eq:type-I-bound}
\end{equation}
Taking $\tau = \frac{1}{2}(1-\lambda)^{2n}$ and using that $(1-\lambda)^{2|P|} \geq (1-\lambda)^{2n}$, we find that for
\begin{equation}
    T\geq C_1n(1-\lambda)^{-4n}
\end{equation}
and some absolute constant $C_1>0$, the failure probability is bounded by $1/3$. Under the ground truth $H_1$, note that bounding the probability of the event $|\hat{Z}_P| \leq \tau$ is sufficient, because $|\hat{Z}_P| \leq Z_{\textnormal{max}}$. We have
\begin{equation}
    \mu_P\coloneqq\mathbb{E}[\hat{Z}_P | H_1] = (1-\lambda)^{2|P|} \geq 2\tau \,.
\end{equation}
Applying Hoeffding's inequality once again, 
\begin{align}
  \textnormal{Pr}\big[Z_{\max} < \tau | P\big]
  &\leq \textnormal{Pr}\big[|\hat{Z}_P - \mu_P|\geq \mu_P - \tau \big| P\big] \\
  &\le 2\exp\!\left(-2T(\mu_P - \tau)^2\right) \\
  &\leq 2\exp\!\left(-\tfrac{T}{2}(1-\lambda)^{4n}\right)\,.
\end{align}
We find that for
\begin{equation}
    T\geq C_2n(1-\lambda)^{-4n}\geq C_2(1-\lambda)^{-4n}
\end{equation}
and some absolute constant $C_2>0$, the failure probability is bounded by $1/3$. This concludes the proof.
\end{proof}
We now immediately obtain Corollary \ref{cor:1}. Let $N_{TC}$ denote the two-copy sample complexity we have derived, and let $N_{SC} = \Omega\big((2/f(\lambda))^n\big)$ denote our lower bound on the sample complexity of any single-copy memoryless strategy from Theorem \ref{thm:memoryless_lb}. Then we find $N_{SC} = \Omega(N_{TC}^{a(\lambda)})$, where 
\begin{equation}
    a(\lambda) = \frac{n(\log 2/f(\lambda))}{\log n - 4n\log(1-\lambda)} = O(\lambda^{-1})\,.
\end{equation}
This establishes a polynomial separation between traditional and quantum-enhanced strategies in the presence of noise, which becomes exponential as $\lambda$ tends to $0$.

Note that for any Pauli observable $P$ whose weight is known beforehand, the sample complexity of our two-copy algorithm actually scales parametrically as $n(1-\lambda)^{-|P|}$; only our worst-case analysis takes $|P| = n$. In reality, high-weight Pauli terms are quantitatively the culprit in degrading the algorithm’s performance from the ideal setting: if instead we chose only to input Paulis with weight at most $O(\log n)$, the two-copy algorithm would incur only polynomial sample cost in $n$. The same, however, is true for single-copy strategies: the classical-shadow protocol of Theorem \ref{Thm:classical_shadow_sample} also becomes polynomial in $n$ when restricted to $O(\log n)$-weight observables. Consequently, high-weight observables are precisely what enable exponential quantum advantages in the ideal setting, and they are also most severely suppressed by noise in natural quantum learning, where the advantage they generate collapses to at best a polynomial separation.

Although local noise exponentially degrades Bell sampling, polynomial quantum-classical separations can persist and still yield substantial experimental gains~\cite{Chen_2024_pauli,seif2024,Oh_2024,Liu_2025}. For near-term experiments at fixed instance size $n$, the more relevant question is how small the noise rate $\lambda$ must be to achieve a desired advantage. Our Dec-IP analysis assumes $\lambda=\Theta(1)$, but if $\lambda=\Theta(1/n)$, the two-copy Bell-sampling strategy regains an $\exp(\Omega(n))$ separation over any single-copy protocol. Allowing $\lambda$ to scale with $n$ is not an asymptotic advantage statement, but it yields a concrete finite-size criterion: when $\lambda \ll 1/n$, the sample-complexity gap can grow essentially exponentially with $n$. Bounds like Corollary~\ref{cor:1} make this dependence explicit, translating target performance into quantitative noise requirements.

\section{A Threshold for Heisenberg-Limited Metrology}

In this Section, we demonstrate that the presence of interstitial noise in quantum-enhanced experiments has fundamental implications for quantum metrology. Here, we work with standard definitions for single-parameter quantum sensing.
\begin{definition}[Single-parameter, single-probe quantum sensing task] Consider an $n$-qubit Hamiltonian $H = \omega G$, where $\omega$ is a real-valued scalar, and a set of Lindblad jump operators $\{L_k\}$ of size $m$. Moreover, fix a precision $\epsilon>0$. Given a single quantum probe $\rho_P$ on $n$ qubits, evolving according to the time-homogeneous Lindblad master equation
\begin{equation}
    \frac{d\rho}{dt} = -i[H, \rho] + \sum_{j=1}^m L_j\rho L_j^\dagger  -\frac{1}{2}\{L_j^\dagger L_j, \rho\} \ ,
\end{equation}
the task of single-parameter quantum sensing is to estimate $\omega$ to within absolute precision $\epsilon$. Ancillary quantum memory and quantum information processing may be used.
\end{definition}
\noindent Strategies that do not require coherent quantum information processing can perform single-parameter sensing using a total sensor evolution time $t = O(\epsilon^{-2})$, known as the Standard Quantum Limit (SQL). However, the gold standard is to achieve the so-called Heisenberg limit (HL) of $t = O(\epsilon^{-1})$. We note that an HL sensing protocol using total time $t$ and a single probe can often be reinterpreted as an $N$-probe protocol requiring sensing time $t/N$ per probe, whereas for SQL protocols, using $N$ parallel sensors will require $t/\sqrt{N}$ sensing time. 

A well-known result in metrology, due to \cite{Zhou_2018}, proposes a necessary and sufficient condition for the achievability of HL metrology. Namely, they define the Hamiltonian-Not-in-Lindblad-Span (HNLS) criterion, which requires that $H$ not be contained in the linear span of $I, L_j, L_j^\dagger, L_j^\dagger L_k$ for all Lindblad jumps and their pairwise products. If HNLS holds, Ref.~\cite{Zhou_2018} demonstrates the existence of a quantum error-correcting code which can correct for jumps while preserving logical time evolution under the signal Hamiltonian (when executed using fast quantum control and between short-time queries to the signal). Conversely, they prove that when HNLS is violated, there is no protocol which can outperform the SQL. 

At a high level, the error-correction scheme in \cite{Zhou_2018} is enabled by the assumption that quantum error-correction is effectively instantaneous, so that the dissipator, which only evolves for short times, is unlikely to generate high-weight errors consisting of multiple jump terms. In this regime, error-correction handles first-order-in-time contributions from the dissipator, while higher weight errors occur with exponentially low probability. In light of our results, we can make several practical refinements. First, real-world hardware can experience noise at the interface between sensor and computing degrees of freedom, whenever the protocol alternates between signal queries and error-correction rounds. Moreover, quantum control takes finite time, and any noise that is native to the quantum computer, which may not be captured in the signal Lindbladian, can still accrue. The strength of this interface-level noise channel is thus parametrized both by the native noise rate of the quantum computer and the time each round of quantum control takes. As a result, the continuous-time approximations made in \cite{Zhou_2018}, where noise and error correction occur as simultaneous processes, require refinement in a more practical setting. 

To make this precise in an instructive but simple way, we assume each round of quantum control occupies a time at least $\tau$. Moreover, in line with $\nbqp$, we assume that the ``interface" noise channel which appears between signal queries and quantum control is depolarizing. To align with the Lindbladian formalism, we assume this noise is generated by the channel $e^{\mathcal{L}_{\textnormal{dep}}^\eta\tau}$, where 
\begin{equation}
    \mathcal{L}_{\textnormal{dep}}^\eta(\rho) = \eta\sum_{j=1}^n \sum_{\sigma\in\{X, Y, Z\}}(\sigma^{(j)}\rho \sigma^{(j)} - \rho)
\end{equation}
is the Lindbladian which generates a local depolarizing channel. Here, we parametrize the depolarization strength by $\eta$ and $\tau$, noting that as quantum channels, $\exp(\mathcal{L}_{\textnormal{dep}}^\eta \tau)$ and $\DN$ are equivalent when $\lambda = 1 - \exp(-4\eta \tau)$. We then work with an effective description of interstitial noise, in which every round of quantum control is assumed to be a noiseless black box interleaved with $\exp(\mathcal{L}_{\textnormal{dep}}^\eta \tau)$. 

If at every round of quantum control in the procedure from \cite{Zhou_2018}, we introduce this interface noise and assume HNLS is satisfied, how does the protocol's HL sensitivity change?

\begin{theorem} \label{thm:HL_sensing}
Consider a single-parameter sensing task satisfying the HNLS criterion, with interstitial depolarizing noise applied after each round of quantum control. There exist thresholds $\tilde{\epsilon}(\eta, \tau, n)$ and $\tilde{t}(\eta, \tau, n)$ such that the protocol of Ref.~\cite{Zhou_2018} achieves Heisenberg-limited sensitivity for target precision $\epsilon > \tilde{\epsilon}$ and total evolution time $t < \tilde{t}$, and crosses over to the Standard Quantum Limit otherwise. For $n\eta\tau \ll 1$, the thresholds scale as $\tilde{t} = O(1/\eta n))$ and $\tilde{\epsilon} = O(\eta n)$. 
\end{theorem}
\noindent Note that because our bounds treat the quantum control as a black box, they extend to any protocol achieving Heisenberg-limited sensing in the absence of interface noise if the protocol uses many interleaved rounds of signal queries and quantum control, only making measurements at the end. This result places a fundamental ceiling on the achievability of the Heisenberg limit in the presence of experimental noise, absent other problem-dependent error-mitigation strategies. Without leveraging such strategies to correct for both signal noise \textit{and} interface noise, HL sensitivity is preserved up to only inverse-polynomial total time in the number of qubits.

Our proof will elucidate the hardware dependence of HL sensitivity to the physical noise rate and the speed of error correction. This perspective is crucial to quantify precision limits and requisite experimental parameters for practical error-correction-based metrology protocols. 

To proceed with the proof, we require some notation. In the language of \cite{Zhou_2018}, each round of sensing and error-correction (without interface-level noise) is described by the quantum channel $\Psi(\rho) = \mathcal{R}\circ e^{\mathcal{L}t_{single}}$, where $t_1$ is the time for a single round and $\mathcal{R}$ is a recovery channel which projects the joint state of probe and ancillas onto a chosen codespace which protects the Hamiltonian against Lindblad jumps. This high-level channel description will suffice for our argument, but the full construction of $\mathcal{R}$ is detailed in \cite{Zhou_2018}. Upon incorporating $\nbqp$-style noise, the channel becomes $\Phi(\rho) = \mathcal{R}\circ e^{\mathcal{L}_{dep}^\eta\tau} \circ e^{\mathcal{L}t_{1}}$.  The entire protocol consists of $N$ rounds and runs for time $t$; trivially, $N\leq t/\tau$. With these definitions, we can prove a lemma which will immediately yield Theorem \ref{thm:HL_sensing}.

\begin{lemma}
    Consider a single-parameter sensing problem which satisfies the HNLS criterion. Then for any initial state $\rho$,
    \begin{equation}
        \| \Psi^N(\rho)-\Phi^N(\rho)\|_1 \leq 2pN
    \end{equation}
    where $p = 1-(1-\lambda)^n$ and $\lambda = 1-e^{-4\eta\tau}$. When $\eta\tau \ll 1$, this implies
    \begin{equation}
        \| \Psi^N(\rho)-\Phi^N(\rho)\|_1 \leq 6\eta n\tau N \ .
    \end{equation}
\end{lemma}
\begin{proof}
    Via telescoping,
    \begin{equation}
        \Psi^N(\rho)-\Phi^N(\rho) = \sum_{i=0}^{N-1} \Psi^{N-i-1}\circ[(\Psi-\Phi)\circ \Phi^i(\rho)] \ .
    \end{equation}
    Applying the triangle inequality and using the fact that $\Psi, \Phi$ are trace-preserving,
    \begin{equation}
        \| \Psi^N(\rho)-\Phi^N(\rho)\|_1\leq \sum_{i=0}^{N-1} \|(\Psi-\Phi)(\Phi^i(\rho))\|_1 \leq N\sup_\sigma \|(\Psi-\Phi)(\sigma)\|_1
    \end{equation}
    where in the last line we absorb $\Phi^i$ into the state and maximize over all states. The remaining task is to bound the difference in channel output states after a single application of the channel. Note that by assumption, $\mathcal{R}$ will correct, in each round, any errors due to the signal dissipator. Only if errors due to the interface depolarizing channel occur will there be any difference between $\Psi$ and $\Phi$. We can then rewrite $\Phi = (1-p)\Psi + p\chi(\rho)$, where $\chi(\rho)$ is \textit{some} quantum channel not equal to $\Psi$ and $p$ is the probability of any error due to this noise source, which we will calculate shortly. Then for any state $\sigma$,
    \begin{equation}
        (\Psi-\Phi)(\sigma) = p(\Psi - \chi)(\sigma)
    \end{equation}
    and taking trace norm and using that density matrices have trace 1,
    \begin{equation}
        \sup_\sigma \|(\Psi-\Phi)(\sigma)\| \leq p \sup_\sigma \|(\Psi - \chi)(\sigma)\| \leq 2p
    \end{equation}
    Precisely, $p$ is the probability that evolution by $\mathcal{L}_{dep}^\eta$ for time $\tau$ causes a jump of weight 1 or larger. While it is possible that the signal Lindbladian has terms which overlap with the depolarizing Lindbladian, this single-jump probability is still a valid upper bound on the channel distinguishability. Moreover, because $e^{\mathcal{L}_{\textnormal{dep}^\eta}\tau} = \DN$ for $\lambda = 1-e^{-4\eta\tau}$, the probability of a weight $1$ or higher error is exactly $p = 1-(1-\lambda)^n$.
\end{proof}
This trace-distance condition, as argued in \cite{Zhou_2018}, implies that the Quantum Fisher Information of the protocol increases quadratically in time only up to $\tilde{t}\propto \tau/p$, before crossing over to an asymptotic SQL sensitivity. Equivalently, attempting to resolve $\omega$ to precision $\epsilon$ is only achievable with Heisenberg-limited sensitivity for $\epsilon \ll 1/ \tilde{t}$. For $\eta n\tau \ll  1$, these thresholds scale as $\tilde{t} = O(1/\eta n)$ and $\tilde{\epsilon} = O(\eta n)$. This gives us Theorem \ref{thm:HL_sensing}.

\section{Deferred Proofs} \label{app:deferred}
\subsection{Proof of Lemma \ref{lemma:o2h}}
To prove this lemma, we want to bound the probability that any algorithm querying the encoded SSP oracle performs queries inside the hidden wrappers, then argue that it is unlikely to notice if we swap out the oracle on those wrappers with one implementing a shadow function. We start by defining an encoded unitary that flags whether a computation contains a query to some element in $S_{d-1}$, the most crucial hidden domain.

For this definition, we adapt notation from \cite{arora2023quantum}. Let the input register \textbf{I} hold the query inputs to $\mathcal{F}^{\textnormal{Enc}}$ and let the output responses be applied to register \textbf{O}. The remaining quantum memory will be denoted by \textbf{W}. For simplicity, we assume these are all logical registers embedded into the FT-QEC scheme associated with the problem, and implicitly assume that all oracle calls in the following definition are encoded, dropping the Enc superscript.
\begin{definition}[Encoded flag unitary]
    Suppose we have some unitary $U$ acting jointly on \textnormal{\textbf{IOW}}. Let $\mathcal{F}$ be an encoded oracle acting logically on \textbf{IO}, and let $S$ be a subset of the classical query domain of $\mathcal{F}$. We define the encoded flag unitary $$U^{\mathcal{F}/S}\ket{\psi}_{\textnormal{\textbf{IOW}}}\ket{0}_{\textnormal{FLAG}} = \mathcal{F}U_SU\ket{\psi}_{\textnormal{\textbf{IOW}}}\ket{0}_{\textnormal{FLAG}}$$
    where \textnormal{FLAG} denotes a single-qubit register, and 
    \begin{equation}
        U_S\ket{\zv}_\textbf{I}\ket{b}_\textnormal{FLAG} = \begin{cases} \ket{\zv}_\textbf{I}\ket{b}_\textnormal{FLAG} & \text{\rm if } \zv \cap S = \varnothing \\
    \ket{\zv}_\textbf{I}\ket{b\oplus 1}_\textnormal{FLAG} & \text{\rm else }
    \end{cases},
    \end{equation}
    where $\ket{\zv}$ is a query input state that contains a combination of encoded computational basis states labeled by bitstrings $z\in \zv$, and $\zv$ is the set of those bitstrings.
\end{definition}
The encoded flag unitary simply takes a query register which we prepare and flips a single flag qubit if any part of the query overlaps with the set $S$. Naturally, this will be used with $S$ being a hidden wrapper or hidden domain. We now proceed with the proof.
\begin{proof}[Proof of Lemma \ref{lemma:o2h}]
Recall that $\mathcal{F}$ is a $d$-level shuffling Simon's function with hidden domains $S_1,...,S_d$, and $F^{\text{Enc}}$ is its corresponding encoded quantum oracle acting on the logical codespace of a FT-QEC scheme. Moreover, we fix collections of $k$-level hidden wrappers $\bar{S}^{(k)}$ for $k = 1,...,d$. Using an encoded flag unitary corresponding to any particular $\bar{S}^{(k)}$, note that we can always find a decomposition of the following form (with unnormalized states): 
\begin{equation}
U^{\mathcal{F}^{\text{Enc}}/S}\ket{\psi}_{\textnormal{\textbf{IOW}}}\ket{0}_{\textnormal{FLAG}} = \ket{\psi_{\text{in}}}_{\textnormal{\textbf{IOW}}}\ket{1}_{\textnormal{FLAG}} + \ket{\psi_{\text{out}}}_{\textnormal{\textbf{IOW}}}\ket{0}_{\textnormal{FLAG}}    \label{eq:encoded_orthog}
\end{equation}
where $\ket{\psi_{\text{in}}}$ (respectively $\ket{\psi_{\text{out}}})$ is a combination of basis strings inside (outside) $S$, and  $\braket{\psi_{\text{in}}|\psi_{\text{out}}} = 0$. By the same reasoning, 
\begin{equation}
U^{\mathcal{F}_{\text{\rm sh}}^{(k), \text{Enc}}/S}\ket{\psi}_{\textnormal{\textbf{IOW}}}\ket{0}_{\textnormal{FLAG}} = \ket{\perp}_{\textnormal{\textbf{IOW}}}\ket{1}_{\textnormal{FLAG}} + \ket{\psi_{\text{out}}}_{\textnormal{\textbf{IOW}}}\ket{0}_{\textnormal{FLAG}}\,,
\end{equation}
where $\braket{\psi_{\text{in}}|\perp} = 0$ because, by definition, $\ket{\perp}$ corresponds to a closed linear subspace that is only mapped to by the action of the shadow oracle within the hidden domain.

From these expressions, the state of our quantum register $\textnormal{\textbf{IOW}}$, which starts in a fixed state $\ket{\psi}$, is acted upon by an arbitrary unitary $U$ uncorrelated with $\mathcal{F}$, and is then hit by either the encoded oracle or its shadow, is:
\begin{align}
    &\ket{\psi_{\text{true}}}:= \mathcal{F}^{\text{Enc}}U\ket{\psi} = \ket{\psi_{\text{in}}} + \ket{\psi_{\text{out}}}\\
      & \ket{\psi_{\text{\rm sh}}}:= \mathcal{F}_{\text{\rm sh}}^{(k), \text{Enc}}U\ket{\psi} = \ket{\perp} + \ket{\psi_{\text{out}}}\,.
\end{align}
Both of these states depend on the choice of shuffling $\mathcal{F}$. Averaging over shufflings sampled from $D(f, d)$, we have states $\rho_{\text{true}} = \mathbb{E}_{\mathcal{F}}[\ketbra{\psi_{\text{true}}}{\psi_{\text{true}}}], \rho_{\text{\rm sh}} = \mathbb{E}_{\mathcal{F}}[\ketbra{\psi_{\text{\rm sh}}}{\psi_{\text{\rm sh}}}]$. These represent the state of our quantum register after a single layer of unitary operations and oracle queries. We want to bound their discrepancy to demonstrate that the shadow oracle acts almost exactly like the true oracle, up to the probability of finding the hidden wrapper via classical bitstring sampling.

To do this, we make use of the Fuchs-van de Graaf inequality, $d_{\tr}(\rho_1, \rho_2) \leq \sqrt{2-2F(\rho_1, \rho_2)}$, where $F$ denotes the fidelity. The fidelity between $\rho_{\text{true}}$ and $\rho_{\text{\rm sh}}$ can be lower bounded as follows, utilizing concavity of the fidelity and Jensen's inequality:
\begin{align}
    F(\rho_{\text{true}}, \rho_{\text{\rm sh}}) &\geq \mathbb{E}_{\mathcal{F}}\big[F(\ketbra{\psi_{\text{true}}}{\psi_{\text{true}}}, \ketbra{\psi_{\text{\rm sh}}}{\psi_{\text{\rm sh}}})\big] \\
    &\geq 1 - \frac{1}{2}\mathbb{E}_{\mathcal{F}}\Big[ \|\ket{\psi_{\text{true}}} - \ket{\psi_{\text{\rm sh}}}\|^2\Big] \\
    &\geq 1 - \mathbb{E}_{\mathcal{F}}[\|\ket{\psi_{\text{true}}}\|^2]\,.
\end{align}
We remark that the proof up to this point follows the proof of the O2H Lemma from \cite{Chia_2023}, with the important distinction that obtaining Eq.~\eqref{eq:encoded_orthog} requires a modified definition of the flag unitary. However, the following lemma from \cite{Chia_2023} allows us to directly bound the 2-norm of the the unnormalized logical state $\ket{\psi_{\text{true}}}$ by a classical combinatorial argument, which is independent of whether this state is defined on a physical Hilbert space or a code subspace. 
\begin{lemma}[Lemma 5.8 in \cite{Chia_2023}]
Suppose the shadow wrappers satisfy $$\textnormal{Pr}\big[x\in S_i^{(k)}\big|x \in S_i^{(k-1)}\big] \leq p$$ for all $i, k$. Then take any initial state $\rho$ and unitary $U$ consisting of $q$ oracle queries and depth-1 layers. Given that $\rho$ and all depth-1 layers in $U$ are uncorrelated to $\bar{S}^{(k)}$ and $\mathcal{F}$ restricted to $\bar{S}^{(k)}$, we have
\begin{equation}
    \mathbb{E}_{\mathcal{F}}[\| \!\ket{\psi_{\text{\rm true}}}\! \|^2] \leq q\cdot p
\end{equation} 
\end{lemma}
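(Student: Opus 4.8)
The plan is to establish this as a purely classical, combinatorial statement about the shuffling data $\mathcal{F}$ and the nested wrappers $\bar{S}^{(k)}$, exactly as in \cite{Chia_2023}; the FT-QEC encoding is irrelevant because the flag unitary $U_S$ and the ``inside/outside'' split depend only on the computational-basis labels carried by the query register. Concretely, in the present notation $\|\ket{\psi_{\text{true}}}\|^2$ stands for the squared norm of the flagged component $\ket{\psi_{\text{in}}}$, i.e.\ the probability — averaged over shufflings $\mathcal{F}\sim D(f,d)$ — that at least one of the $q$ oracle calls has query support inside the wrapper $S = S_i^{(k)}$. The target bound is that this landing probability is at most $q\cdot p$.

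First I would pass to the semiclassical (measure-and-project) oracle model: immediately before each oracle call, projectively measure the input register $\mathbf{I}$ to decide whether its support meets $S$, and declare the run to have \emph{found} $S$ at the first call for which it does. Let $\ket{\phi_j}$ be the subnormalized pre-query state at the $j$-th call conditioned on no earlier find, and let $P_S$ be the projector onto input basis strings meeting $S$. Because ``found'' is a monotone event once we commit to the first hit, the find events at distinct queries are disjoint, so the probability of ever finding decomposes as a genuine union bound $\Pr[\text{found}] = \sum_{j=1}^{q}\|P_S\ket{\phi_j}\|^2$, and (up to the flag/find equivalence used in \cite{Chia_2023}) this found-probability is precisely $\|\ket{\psi_{\text{in}}}\|^2$.

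The crux is the per-query estimate $\mathbb{E}_{\mathcal{F}}\big[\|P_S\ket{\phi_j}\|^2\big]\le p$. Here I would invoke the hypothesis that $\rho$ and every depth-$1$ layer of $U$ are uncorrelated with $\bar{S}^{(k)}$ and with $\mathcal{F}$ restricted to $\bar{S}^{(k)}$, together with the uniformity of the hidden permutations $f_0,\dots,f_{k-1}$. The point is that, conditioned on no earlier find, every oracle response the algorithm has seen is consistent with the coarser wrapper $S_i^{(k-1)}$, so the refinement of $S_i^{(k-1)}$ into the inner wrapper $S_i^{(k)}$ remains uniformly random and independent of $\ket{\phi_j}$. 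Consequently, for any input string lying in $S_i^{(k-1)}$ the conditional chance it actually lands in $S_i^{(k)}$ is at most $\Pr[x\in S_i^{(k)}\mid x\in S_i^{(k-1)}]\le p$ by assumption; averaging over $\mathcal{F}$ gives $\mathbb{E}_{\mathcal{F}}[\|P_S\ket{\phi_j}\|^2]\le p$. Summing the union bound over the $q$ queries then yields $\mathbb{E}_{\mathcal{F}}[\|\ket{\psi_{\text{true}}}\|^2]\le q\,p$, as claimed.

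I expect the main obstacle to be making the per-query independence rigorous: $\ket{\phi_j}$ depends on $\mathcal{F}$ through the outputs of the first $j-1$ (non-found) queries, so one must verify that conditioning on ``not found yet'' does not bias the location of the inner wrapper within $S_i^{(k-1)}$. This is exactly where the nesting $S_i^{(k)}\subset S_i^{(k-1)}$ with $|S_i^{(k)}|/|S_i^{(k-1)}|\le 2^{-n}$ and the uniformity of the hidden permutations enter, and it is the content transcribed from the proof of Lemma~5.8 in \cite{Chia_2023}. Since every step manipulates only the classical labels of the query register, the bound is insensitive to whether $\ket{\psi_{\text{true}}}$ is supported on physical qubits or on the logical code subspace, which is precisely what is needed to close the fidelity estimate for Lemma~\ref{lemma:o2h}.
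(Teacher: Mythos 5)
The paper does not actually prove this statement: it is imported as a black box (it is literally labeled ``Lemma 5.8 in \cite{Chia_2023}'') inside the proof of Lemma~\ref{lemma:o2h}, accompanied only by the remark that the bound is a classical combinatorial statement about query labels and is therefore insensitive to whether the state lives on physical qubits or in the logical code subspace. Your reconstruction is consistent with the cited proof and supplies content the paper omits. In particular, you correctly read $\|\ket{\psi_{\text{true}}}\|^2$ as the squared norm of the flagged, in-wrapper component $\ket{\psi_{\text{in}}}$ --- the paper's notation is misleading here, since the full state $\mathcal{F}^{\text{Enc}}U\ket{\psi}$ has unit norm and the bound would otherwise be vacuous --- and your skeleton (semiclassical measure-and-project reduction; the exact disjoint-event decomposition $\Pr[\text{found}]=\sum_{j}\|P_S\ket{\phi_j}\|^2$; the per-query estimate $\mathbb{E}_{\mathcal{F}}[\|P_S\ket{\phi_j}\|^2]\le p$ from the nesting ratio and the uniformity of the hidden permutations; then summation over the $q$ queries) is exactly the shape of the hybrid/O2H argument in \cite{Chia_2023}. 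The one caveat is that your proposal, like the paper itself, ultimately defers the crux --- that conditioning on ``no find so far'' does not bias the location of $S_i^{(k)}$ inside $S_i^{(k-1)}$, because non-found query answers reveal $\mathcal{F}$ only outside the wrapper --- to the cited work rather than carrying it out, so it is a faithful outline rather than a self-contained proof. That makes it no less complete than the paper's treatment, and the closing observation that every step manipulates only classical query labels (hence survives the FT-QEC encoding) is precisely the point the paper needs this lemma for.
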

Substituting this into the Fuchs-van de Graaf inequality, we obtain
\begin{equation}
    d_{\tr}(\rho_{\text{true}}, \rho_{\text{\rm sh}}) \leq \sqrt{2q\cdot p}\,.
\end{equation}
Lemma \ref{lemma:o2h} follows immediately, using the fact that the trace distance is the maximum distinguishability between the shadow and true states under any observable.
\end{proof}

\subsection{Proofs of depolarizing channel Lemmas} \label{appendix:depol_lemma_proofs}
\begin{proof}[Proof of Lemma \ref{lemma:depol_on_swap}]
    We observe that
    \begin{align}
    \textsf{SWAP} = \textsf{SWAP}_1^{\otimes k}
    \end{align}
    and
    \begin{align}
    (\mathcal{D}_\lambda \otimes \mathcal{D}_\lambda)[\textsf{SWAP}_1] = \frac{1}{2}\lambda(2-\lambda)\mathds{1}_2^{\otimes 2} + (1-\lambda)^2 \textsf{SWAP}_1.
    \end{align}
    Thus we have
    \begin{align}
    (\mathcal{D}_\lambda^{\otimes n} \otimes \mathcal{D}_\lambda^{\otimes n})[\textsf{SWAP}_n] = \bigotimes_{j = 1}^n \left(\frac{1}{2}\lambda(2-\lambda)\mathds{1}_2^{\otimes 2} + (1-\lambda)^2 \textsf{SWAP}_1\right).
    \end{align}
    Now, let $I\subseteq [n]$ denote a subset of indices labeling single-qubit Hilbert spaces. Then we define the following $2n$ qubit operator, which acts symmetrically across the two $n$-qubit Hilbert spaces:
    $$\textsf{SWAP}_1^I := \left(\bigotimes_{j\notin I}\mathds{1}_2^{\otimes 2}\right) \!\otimes\! \left(\bigotimes_{j\in I}\textsf{SWAP}_1\right).$$
    Using the \textsf{SWAP} trick, we notice that for any density matrix $\rho$ we have
    \begin{align}
    \tr\left(\rho^{\otimes 2} \textsf{SWAP}_2^I\right) = \tr(\rho_I^2) \leq 1\,,
    \end{align}
    where $\rho_I$ means we trace $\rho$ down to the subset of sites in $I$. Accordingly,
    \begin{align}
    \sup_{|\phi\rangle} \tr(|\phi\rangle \langle \phi|^{\otimes 2} (\mathcal{D}_\lambda^{\otimes n} \otimes \mathcal{D}_\lambda^{\otimes n})[\textsf{SWAP}]) &\leq \sum_{a = 0}^n \binom{n}{a} \left(\frac{1}{2}\lambda(2-\lambda)\right)^{n-a} \left((1-\lambda)^2\right)^a \\
    &= \left[\frac{1}{2}\lambda(2-\lambda) + (1-\lambda)^2\right]^n \\
    &= \left(1 - \lambda + \frac{1}{2}\lambda^2\right)^n = f(\lambda)^n.
    \end{align}
\end{proof}
\begin{proof}[Proof of Lemma \ref{lemma:depol_on_swap_2}]
    We have
    \begin{align}
        (\mathcal{D}_\lambda \otimes \mathcal{D}_\lambda)[2\textsf{SWAP}_1-\mathds{1}_2^{\otimes 2}] = -(1-\lambda)^2\mathds{1}_2^{\otimes 2} + 2(1-\lambda)^2 \textsf{SWAP}_1\,,
    \end{align}
    which gives us
   \begin{align}
    ((\mathcal{D}_\lambda\otimes \mathcal{D}_\lambda)[2\textsf{SWAP}_1-\mathds{1}_2^{\otimes 2}])^{\otimes n}= \bigotimes_{j = 1}^n \left(-(1-\lambda)^2\mathds{1}_2^{\otimes 2} + 2(1-\lambda)^2 \textsf{SWAP}_1\right).
    \end{align}
    Using the \textsf{SWAP} trick to once again assert that $\tr\left(\rho^{\otimes 2} \textsf{SWAP}_2^I\right) \leq 1$, we proceed as in Lemma \ref{lemma:depol_on_swap}:
    \begin{align}
    \tr(|\phi\rangle \langle \phi|^{\otimes 2} ((\mathcal{D}_\lambda\otimes \mathcal{D}_\lambda)[2\textsf{SWAP}_1-\mathds{1}_2^{\otimes 2}])^{\otimes n})&\leq \sum_{a \geq 0,\,a\text{ even}}^n \binom{n}{a} \left((1-\lambda)^2\right)^{a} \left(2(1-\lambda)^2\right)^{n-a} \\
    &=\frac{3^n+1}{2}(1-\lambda)^{2n}\,.
    \end{align}
\end{proof}

\bibliographystyle{alpha}
\bibliography{refs}

@article{nisq,
  title={{The Complexity of NISQ}}, 
  author={Chen, Sitan and Cotler, Jordan and Huang, Hsin-Yuan and Li, Jerry},
  journal={Nature Communications},
  volume={14},
  number={1},
  pages={6001},
  year={2023},
  publisher={Nature Publishing Group UK London}, 
  url={https://www.nature.com/articles/s41467-023-41217-6}
}

@inproceedings{weiyuan_paulis,
  title={{Optimal tradeoffs for estimating Pauli observables}}, 
  author={Chen, Sitan and Gong, Weiyuan and Ye, Qi},
  booktitle={2024 IEEE 65th Annual Symposium on Foundations of Computer Science (FOCS)},
  pages={1086--1105},
  year={2024},
  organization={IEEE},
  doi={10.1109/FOCS61266.2024.00072}
}

@inproceedings{arora2023quantum,
  title={Quantum depth in the random oracle model},
  author={Arora, Atul Singh and Coladangelo, Andrea and Coudron, Matthew and Gheorghiu, Alexandru and Singh, Uttam and Waldner, Hendrik},
  booktitle={Proceedings of the 55th Annual ACM Symposium on Theory of Computing},
  pages={1111--1124},
  year={2023}
}

@article{hu2023classical,
  title={Classical shadow tomography with locally scrambled quantum dynamics},
  author={Hu, Hong-Ye and Choi, Soonwon and You, Yi-Zhuang},
  journal={Physical Review Research},
  volume={5},
  number={2},
  pages={023027},
  year={2023},
  publisher={APS},
  doi={10.1103/PhysRevResearch.5.023027}
}

@article{bu2024classical,
  title={{Classical shadows with Pauli-invariant unitary ensembles}},
  author={Bu, Kaifeng and Koh, Dax Enshan and Garcia, Roy J and Jaffe, Arthur},
  journal={npj Quantum Information},
  volume={10},
  number={1},
  pages={6},
  year={2024},
  publisher={Nature Publishing Group UK London},
  doi={10.1038/s41534-023-00801-w}
}

@article{Chia_2023,
   title={On the Need for Large Quantum Depth},
   volume={70},
   ISSN={1557-735X},
   url={http://dx.doi.org/10.1145/3570637},
   DOI={10.1145/3570637},
   number={1},
   journal={Journal of the ACM},
   publisher={Association for Computing Machinery (ACM)},
   author={Chia, Nai-Hui and Chung, Kai-Min and Lai, Ching-Yi},
   year={2023},
   month=jan, pages={1–38} }

@inproceedings{aharonov1999faulttolerantquantumcomputationconstant,
  title={Fault-tolerant quantum computation with constant error},
  author={Aharonov, Dorit and Ben-Or, Michael},
  booktitle={Proceedings of the twenty-ninth annual ACM symposium on Theory of computing},
  pages={176--188},
  year={1997},
  doi={10.1145/258533.258579}
}

@INPROCEEDINGS{Simon,
  author={Simon, Daniel R.},
  booktitle={Proceedings 35th Annual Symposium on Foundations of Computer Science}, 
  title={On the power of quantum computation}, 
  year={1994},
  volume={},
  number={},
  pages={116-123},
  keywords={Quantum computing;Turing machines;Quantum mechanics;Computational modeling;Polynomials;Computer simulation;Physics computing;Computer errors;Lakes;Relativistic quantum mechanics},
  doi={10.1109/SFCS.1994.365701}}

@inproceedings{chen2021exponentialseparationslearningquantum,
  title={Exponential separations between learning with and without quantum memory},
  author={Chen, Sitan and Cotler, Jordan and Huang, Hsin-Yuan and Li, Jerry},
  booktitle={2021 IEEE 62nd Annual Symposium on Foundations of Computer Science (FOCS)},
  pages={574--585},
  year={2022},
  organization={IEEE},
  doi={10.1109/FOCS52979.2021.00063}
}

@article{Yu1997LeCam,
  title={{Assouad, Fano, and Le Cam}},
  author={Yu, Bin},
  journal={Festschrift for Lucien Le Cam: research papers in probability and statistics},
  pages={423--435},
  year={1997},
  publisher={Springer},
  url={https://www.stat.berkeley.edu/~binyu/summer08/yu.assoua.pdf}
}

@article{gong2024samplecomplexitypurityinner,
      title={On the sample complexity of purity and inner product estimation}, 
      author={Weiyuan Gong and Jonas Haferkamp and Qi Ye and Zhihan Zhang},
      year={2024},
      eprint={2410.12712},
      journal={arXiv:2410.12712},
      primaryClass={quant-ph},
      url={https://arxiv.org/abs/2410.12712}, 
}

@article{Aharonov_Cotler_Qi_2022,
   title={Quantum algorithmic measurement},
   volume={13},
   ISSN={2041-1723},
   url={http://dx.doi.org/10.1038/s41467-021-27922-0},
   DOI={10.1038/s41467-021-27922-0},
   number={1},
   journal={Nature Communications},
   publisher={Springer Science and Business Media LLC},
   author={Aharonov, Dorit and Cotler, Jordan and Qi, Xiao-Liang},
   year={2022},
   month=feb }

@article{happy_2015,
   title={Holographic quantum error-correcting codes: toy models for the bulk/boundary correspondence},
   volume={2015},
   ISSN={1029-8479},
   url={http://dx.doi.org/10.1007/JHEP06(2015)149},
   DOI={10.1007/jhep06(2015)149},
   number={6},
   journal={Journal of High Energy Physics},
   publisher={Springer Science and Business Media LLC},
   author={Pastawski, Fernando and Yoshida, Beni and Harlow, Daniel and Preskill, John},
   year={2015},
   month=jun }

@article{numerics_happy_decoding_2020,
   title={Decoding holographic codes with an integer optimization decoder},
   volume={102},
   ISSN={2469-9934},
   url={http://dx.doi.org/10.1103/PhysRevA.102.062417},
   DOI={10.1103/physreva.102.062417},
   number={6},
   journal={Physical Review A},
   publisher={American Physical Society (APS)},
   author={Harris, Robert J. and Coupe, Elliot and McMahon, Nathan A. and Brennen, Gavin K. and Stace, Thomas M.},
   year={2020},
   month=dec }

@article{montanaro2018surveyquantumpropertytesting,
      title={A Survey of Quantum Property Testing}, 
      author={Ashley Montanaro and Ronald de Wolf},
      year={2018},
      eprint={1310.2035},
      journal={arXiv:1310.2035},
      primaryClass={quant-ph},
      url={https://arxiv.org/abs/1310.2035}, 
}

@article{classical_shadow,
   title={Predicting many properties of a quantum system from very few measurements},
   volume={16},
   ISSN={1745-2481},
   url={http://dx.doi.org/10.1038/s41567-020-0932-7},
   DOI={10.1038/s41567-020-0932-7},
   number={10},
   journal={Nature Physics},
   publisher={Springer Science and Business Media LLC},
   author={Huang, Hsin-Yuan and Kueng, Richard and Preskill, John},
   year={2020},
   month=jun, pages={1050–1057} }

@article{aharonov1996limitationsnoisyreversiblecomputation,
      title={Limitations of Noisy Reversible Computation}, 
      author={Dorit Aharonov and Michael Ben-Or and Russell Impagliazzo and Noam Nisan},
      year={1996},
      eprint={quant-ph/9611028},
      journal={arXiv:quant-ph/9611028},
      primaryClass={quant-ph},
      url={https://arxiv.org/abs/quant-ph/9611028}, 
}

@article{hu2025demonstration,
  title={Demonstration of robust and efficient quantum property learning with shallow shadows},
  author={Hu, Hong-Ye and Gu, Andi and Majumder, Swarnadeep and Ren, Hang and Zhang, Yipei and Wang, Derek S and You, Yi-Zhuang and Minev, Zlatko and Yelin, Susanne F and Seif, Alireza},
  journal={Nature Communications},
  volume={16},
  number={1},
  pages={2943},
  year={2025},
  publisher={Nature Publishing Group UK London},
  doi={10.1038/s41467-025-57349-w}
}

@article{Kok_2004,
   title={{Quantum lithography, entanglement and Heisenberg-limited parameter estimation}},
   volume={6},
   ISSN={1741-3575},
   url={http://dx.doi.org/10.1088/1464-4266/6/8/029},
   DOI={10.1088/1464-4266/6/8/029},
   number={8},
   journal={Journal of Optics B: Quantum and Semiclassical Optics},
   publisher={IOP Publishing},
   author={Kok, Pieter and Braunstein, Samuel L and Dowling, Jonathan P},
   year={2004},
   month=jul, pages={S811–S815} }

@article{D_Ariano_2003,
   title={Imprinting Complete Information about a Quantum Channel on its Output State},
   volume={91},
   ISSN={1079-7114},
   url={http://dx.doi.org/10.1103/PhysRevLett.91.047902},
   DOI={10.1103/physrevlett.91.047902},
   number={4},
   journal={Physical Review Letters},
   publisher={American Physical Society (APS)},
   author={D’Ariano, Giacomo Mauro and Lo Presti, Paoloplacido},
   year={2003},
   month=jul }

@article{De_Martini_2003,
   title={Exploiting quantum parallelism of entanglement for a complete experimental quantum characterization of a single-qubit device},
   volume={67},
   ISSN={1094-1622},
   url={http://dx.doi.org/10.1103/PhysRevA.67.062307},
   DOI={10.1103/physreva.67.062307},
   number={6},
   journal={Physical Review A},
   publisher={American Physical Society (APS)},
   author={De Martini, Francesco and Mazzei, Andrea and Ricci, Marco and D’Ariano, Giacomo Mauro},
   year={2003},
   month=jun }

@article{Giovannetti_2004,
   title={Quantum-Enhanced Measurements: Beating the Standard Quantum Limit},
   volume={306},
   ISSN={1095-9203},
   url={http://dx.doi.org/10.1126/science.1104149},
   DOI={10.1126/science.1104149},
   number={5700},
   journal={Science},
   publisher={American Association for the Advancement of Science (AAAS)},
   author={Giovannetti, Vittorio and Lloyd, Seth and Maccone, Lorenzo},
   year={2004},
   month=nov, pages={1330–1336} }

@article{Yu_2004,
   title={Finite-Time Disentanglement Via Spontaneous Emission},
   volume={93},
   ISSN={1079-7114},
   url={http://dx.doi.org/10.1103/PhysRevLett.93.140404},
   DOI={10.1103/physrevlett.93.140404},
   number={14},
   journal={Physical Review Letters},
   publisher={American Physical Society (APS)},
   author={Yu, Ting and Eberly, Joseph H.},
   year={2004},
   month=sep }

@article{Bellomo_2007,
   title={{Non-Markovian Effects on the Dynamics of Entanglement}},
   volume={99},
   ISSN={1079-7114},
   url={http://dx.doi.org/10.1103/PhysRevLett.99.160502},
   DOI={10.1103/physrevlett.99.160502},
   number={16},
   journal={Physical Review Letters},
   publisher={American Physical Society (APS)},
   author={Bellomo, Bruno and Lo Franco, Rosario and Compagno, Giuseppe},
   year={2007},
   month=oct }

@article{Smirne_2016,
   title={Ultimate Precision Limits for Noisy Frequency Estimation},
   volume={116},
   ISSN={1079-7114},
   url={http://dx.doi.org/10.1103/PhysRevLett.116.120801},
   DOI={10.1103/physrevlett.116.120801},
   number={12},
   journal={Physical Review Letters},
   publisher={American Physical Society (APS)},
   author={Smirne, Andrea and Kołodyński, Jan and Huelga, Susana F. and Demkowicz-Dobrzański, Rafał},
   year={2016},
   month=mar }

@article{Haase_2018,
   title={Fundamental limits to frequency estimation: a comprehensive microscopic perspective},
   volume={20},
   ISSN={1367-2630},
   url={http://dx.doi.org/10.1088/1367-2630/aab67f},
   DOI={10.1088/1367-2630/aab67f},
   number={5},
   journal={New Journal of Physics},
   publisher={IOP Publishing},
   author={Haase, J F and Smirne, A and Kołodyński, J and Demkowicz-Dobrzański, R and Huelga, S F},
   year={2018},
   month=may, pages={053009} }

@article{Jiao_2023,
   title={Quantum Metrology in the Noisy Intermediate‐Scale Quantum Era},
   volume={8},
   ISSN={2511-9044},
   url={http://dx.doi.org/10.1002/qute.202300218},
   DOI={10.1002/qute.202300218},
   number={4},
   journal={Advanced Quantum Technologies},
   publisher={Wiley},
   author={Jiao, Lin and Wu, Wei and Bai, Si‐Yuan and An, Jun‐Hong},
   year={2023},
   month=nov }

@article{Monras_2007,
   title={Optimal Quantum Estimation of Loss in Bosonic Channels},
   volume={98},
   ISSN={1079-7114},
   url={http://dx.doi.org/10.1103/PhysRevLett.98.160401},
   DOI={10.1103/physrevlett.98.160401},
   number={16},
   journal={Physical Review Letters},
   publisher={American Physical Society (APS)},
   author={Monras, Alex and Paris, Matteo G. A.},
   year={2007},
   month=apr }

@article{Kessler_2014,
   title={Quantum Error Correction for Metrology},
   volume={112},
   ISSN={1079-7114},
   url={http://dx.doi.org/10.1103/PhysRevLett.112.150802},
   DOI={10.1103/physrevlett.112.150802},
   number={15},
   journal={Physical Review Letters},
   publisher={American Physical Society (APS)},
   author={Kessler, E.M. and Lovchinsky, I. and Sushkov, A.O. and Lukin, M.D.},
   year={2014},
   month=apr }

@article{Layden_2019,
   title={{Ancilla-Free Quantum Error Correction Codes for Quantum Metrology}},
   volume={122},
   ISSN={1079-7114},
   url={http://dx.doi.org/10.1103/PhysRevLett.122.040502},
   DOI={10.1103/physrevlett.122.040502},
   number={4},
   journal={Physical Review Letters},
   publisher={American Physical Society (APS)},
   author={Layden, David and Zhou, Sisi and Cappellaro, Paola and Jiang, Liang},
   year={2019},
   month=jan}

@article{Zhou_2018,
   title={{Achieving the Heisenberg limit in quantum metrology using quantum error correction}},
   volume={9},
   ISSN={2041-1723},
   url={http://dx.doi.org/10.1038/s41467-017-02510-3},
   DOI={10.1038/s41467-017-02510-3},
   number={1},
   journal={Nature Communications},
   publisher={Springer Science and Business Media LLC},
   author={Zhou, Sisi and Zhang, Mengzhen and Preskill, John and Jiang, Liang},
   year={2018},
   month=jan }

@ARTICLE{2017NatCo...8.1822R,
       author = {{Reiter}, Florentin and {S{$\o$}rensen}, Anders ~S. and {Zoller}, P. and {Muschik}, Christine~A.},
        title = "{Dissipative quantum error correction and application to quantum sensing with trapped ions}",
      journal = {Nature Communications},
     keywords = {Quantum Physics},
         year = 2017,
        month = nov,
       volume = {8},
          eid = {1822},
        pages = {1822},
          doi = {10.1038/s41467-017-01895-5},
archivePrefix = {arXiv},
       eprint = {1702.08673},
 primaryClass = {quant-ph},
       adsurl = {https://ui.adsabs.harvard.edu/abs/2017NatCo...8.1822R},
      adsnote = {Provided by the SAO/NASA Astrophysics Data System}
}

@article{Zhou_2020,
   title={Optimal approximate quantum error correction for quantum metrology},
   volume={2},
   ISSN={2643-1564},
   url={http://dx.doi.org/10.1103/PhysRevResearch.2.013235},
   DOI={10.1103/physrevresearch.2.013235},
   number={1},
   journal={Physical Review Research},
   publisher={American Physical Society (APS)},
   author={Zhou, Sisi and Jiang, Liang},
   year={2020},
   month=mar }

@article{Jeske_2014,
doi = {10.1088/1367-2630/16/7/073039},
url = {https://doi.org/10.1088/1367-2630/16/7/073039},
year = {2014},
month = {jul},
publisher = {IOP Publishing},
volume = {16},
number = {7},
pages = {073039},
author = {Jeske, Jan and Cole, Jared H and Huelga, Susana F},
title = {{Quantum metrology subject to spatially correlated Markovian noise: restoring the Heisenberg limit}},
journal = {New Journal of Physics}
}

@article{Rossi_2020,
   title={Noisy Quantum Metrology Enhanced by Continuous Nondemolition Measurement},
   volume={125},
   ISSN={1079-7114},
   url={http://dx.doi.org/10.1103/PhysRevLett.125.200505},
   DOI={10.1103/physrevlett.125.200505},
   number={20},
   journal={Physical Review Letters},
   publisher={American Physical Society (APS)},
   author={Rossi, Matteo A.C. and Albarelli, Francesco and Tamascelli, Dario and Genoni, Marco G.},
   year={2020},
   month=nov }

@article{Pezz__2008,
   title={{Mach-Zehnder Interferometry at the Heisenberg Limit with Coherent and Squeezed-Vacuum Light}},
   volume={100},
   ISSN={1079-7114},
   url={http://dx.doi.org/10.1103/PhysRevLett.100.073601},
   DOI={10.1103/physrevlett.100.073601},
   number={7},
   journal={Physical Review Letters},
   publisher={American Physical Society (APS)},
   author={Pezzé, Luca and Smerzi, Augusto},
   year={2008},
   month=feb }

@article{Oh_2024,
   title={{Entanglement-Enabled Advantage for Learning a Bosonic Random Displacement Channel}},
   volume={133},
   ISSN={1079-7114},
   url={http://dx.doi.org/10.1103/PhysRevLett.133.230604},
   DOI={10.1103/physrevlett.133.230604},
   number={23},
   journal={Physical Review Letters},
   publisher={American Physical Society (APS)},
   author={Oh, Changhun and Chen, Senrui and Wong, Yat and Zhou, Sisi and Huang, Hsin-Yuan and Nielsen, Jens A. H. and Liu, Zheng-Hao and Neergaard-Nielsen, Jonas S. and Andersen, Ulrik L. and Jiang, Liang and Preskill, John},
   year={2024},
   month=dec }

@article{Grote_2013,
   title={{First Long-Term Application of Squeezed States of Light in a Gravitational-Wave Observatory}},
   volume={110},
   ISSN={1079-7114},
   url={http://dx.doi.org/10.1103/PhysRevLett.110.181101},
   DOI={10.1103/physrevlett.110.181101},
   number={18},
   journal={Physical Review Letters},
   publisher={American Physical Society (APS)},
   author={Grote, H. and Danzmann, K. and Dooley, K. L. and Schnabel, R. and Slutsky, J. and Vahlbruch, H.},
   year={2013},
   month=may }

@article{allen2025quantumcomputingenhancedsensing,
      title={Quantum Computing Enhanced Sensing}, 
      author={Richard R. Allen and Francisco Machado and Isaac L. Chuang and Hsin-Yuan Huang and Soonwon Choi},
      year={2025},
      journal={arXiv:2501.07625},
      archivePrefix={arXiv},
      primaryClass={quant-ph},
      url={https://arxiv.org/abs/2501.07625}, 
}

@article{huang2022foundations,
      title={Foundations for learning from noisy quantum experiments}, 
      author={Hsin-Yuan Huang and Steven T. Flammia and John Preskill},
      year={2022},
      eprint={2204.13691},
      journal={arXiv:2204.13691},
      primaryClass={quant-ph},
      url={https://arxiv.org/abs/2204.13691}, 
}

@article{cotler2020quantum,
  title={{Quantum Overlapping Tomography}},
  author={Cotler, Jordan and Wilczek, Frank},
  journal={Physical Review Letters},
  volume={124},
  number={10},
  pages={100401},
  year={2020},
  publisher={APS}
}

@article{Koh_2022,
   title={Classical Shadows With Noise},
   volume={6},
   ISSN={2521-327X},
   url={http://dx.doi.org/10.22331/q-2022-08-16-776},
   DOI={10.22331/q-2022-08-16-776},
   journal={Quantum},
   publisher={Verein zur Forderung des Open Access Publizierens in den Quantenwissenschaften},
   author={Koh, Dax Enshan and Grewal, Sabee},
   year={2022},
   month=aug, pages={776} }

@article{Hu_2025,
   title={{Ansatz-Free Hamiltonian Learning with Heisenberg-Limited Scaling}},
   volume={6},
   ISSN={2691-3399},
   url={http://dx.doi.org/10.1103/j7b8-pb77},
   DOI={10.1103/j7b8-pb77},
   number={4},
   journal={PRX Quantum},
   publisher={American Physical Society (APS)},
   author={Hu, Hong-Ye and Ma, Muzhou and Gong, Weiyuan and Ye, Qi and Tong, Yu and Flammia, Steven T. and Yelin, Susanne F.},
   year={2025},
   month=oct }

@article{Huang_2023,
   title={{Learning Many-Body Hamiltonians with Heisenberg-Limited Scaling}},
   volume={130},
   ISSN={1079-7114},
   url={http://dx.doi.org/10.1103/PhysRevLett.130.200403},
   DOI={10.1103/physrevlett.130.200403},
   number={20},
   journal={Physical Review Letters},
   publisher={American Physical Society (APS)},
   author={Huang, Hsin-Yuan and Tong, Yu and Fang, Di and Su, Yuan},
   year={2023},
   month=may }

@article{chen2025quantumprobetomography,
      title={{Quantum Probe Tomography}}, 
      author={Sitan Chen and Jordan Cotler and Hsin-Yuan Huang},
      year={2025},
      eprint={2510.08499},
      journal={arXiv:2510.08499},
      primaryClass={quant-ph},
      url={https://arxiv.org/abs/2510.08499}, 
}

@article{fanizza2025efficientlearningbosonicgaussian,
      title={{Efficient learning of bosonic Gaussian unitaries}}, 
      author={Marco Fanizza and Vishnu Iyer and Junseo Lee and Antonio A. Mele and Francesco A. Mele},
      year={2025},
      eprint={2510.05531},
      journal={arXiv:2510.05531},
      primaryClass={quant-ph},
      url={https://arxiv.org/abs/2510.05531}, 
}

@article{Cotler_2019,
   title={{Quantum Virtual Cooling}},
   volume={9},
   ISSN={2160-3308},
   url={http://dx.doi.org/10.1103/PhysRevX.9.031013},
   DOI={10.1103/physrevx.9.031013},
   number={3},
   journal={Physical Review X},
   publisher={American Physical Society (APS)},
   author={Cotler, Jordan and Choi, Soonwon and Lukin, Alexander and Gharibyan, Hrant and Grover, Tarun and Tai, M. Eric and Rispoli, Matthew and Schittko, Robert and Preiss, Philipp M. and Kaufman, Adam M. and Greiner, Markus and Pichler, Hannes and Hayden, Patrick},
   year={2019},
   month=jul }

@article{Cross_2015,
   title={Quantum learning robust against noise},
   volume={92},
   ISSN={1094-1622},
   url={http://dx.doi.org/10.1103/PhysRevA.92.012327},
   DOI={10.1103/physreva.92.012327},
   number={1},
   journal={Physical Review A},
   publisher={American Physical Society (APS)},
   author={Cross, Andrew W. and Smith, Graeme and Smolin, John A.},
   year={2015},
   month=jul }

@inproceedings{grover1996fastquantummechanicalalgorithm,
  title={A fast quantum mechanical algorithm for database search},
  author={Grover, Lov K},
  booktitle={Proceedings of the twenty-eighth annual ACM symposium on Theory of computing},
  pages={212--219},
  year={1996},
  url={https://dl.acm.org/doi/pdf/10.1145/237814.237866}
}

@article{Shenvi_2003,
   title={Effects of a random noisy oracle on search algorithm complexity},
   volume={68},
   ISSN={1094-1622},
   url={http://dx.doi.org/10.1103/PhysRevA.68.052313},
   DOI={10.1103/physreva.68.052313},
   number={5},
   journal={Physical Review A},
   publisher={American Physical Society (APS)},
   author={Shenvi, Neil and Brown, Kenneth R. and Whaley, K. Birgitta},
   year={2003},
   month=nov }

@article{long_grover,
  title={{Dominant gate imperfection in Grover’s quantum search algorithm}},
  author={Long, Gui Lu and Li, Yan Song and Zhang, Wei Lin and Tu, Chang Cun},
  journal={Physical Review A},
  volume={61},
  number={4},
  pages={042305},
  year={2000},
  publisher={APS},
  doi={10.1103/PhysRevA.61.042305}
}

@InProceedings{ambainis_grover,
author="Ambainis, Andris
and Ba{\v{c}}kurs, Art{\={u}}rs
and Nahimovs, Nikolajs
and Rivosh, Alexander",
title="Grover's Algorithm with Errors",
booktitle="Mathematical and Engineering Methods in Computer Science",
year="2013",
publisher="Springer Berlin Heidelberg",
address="Berlin, Heidelberg",
pages="180--189",
isbn="978-3-642-36046-6"
}

@inproceedings{regev2012impossibilityquantumspeedupfaulty,
  title={Impossibility of a quantum speed-up with a faulty oracle},
  author={Regev, Oded and Schiff, Liron},
  booktitle={International Colloquium on Automata, Languages, and Programming},
  pages={773--781},
  year={2008},
  organization={Springer},
  url={https://link.springer.com/chapter/10.1007/978-3-540-70575-8_63}
}

@article{rosmanis2023quantumsearchnoisyoracle,
      title={{Quantum Search with Noisy Oracle}}, 
      author={Ansis Rosmanis},
      year={2023},
      eprint={2309.14944},
      journal={arXiv:2309.14944},
      primaryClass={quant-ph},
      url={https://arxiv.org/abs/2309.14944}, 
}

@article{rosmanis2024addendumquantumsearchnoisy,
      title={{Addendum to "Quantum Search with Noisy Oracle"}}, 
      author={Ansis Rosmanis},
      year={2024},
      eprint={2405.11973},
      journal={arXiv:2405.11973},
      primaryClass={quant-ph},
      url={https://arxiv.org/abs/2405.11973}, 
}

@inproceedings{buhrman2022quantummajorityvote,
  title={{Quantum Majority Vote}},
  author={Buhrman, Harry and Linden, Noah and Man{\v{c}}inska, Laura and Montanaro, Ashley and Ozols, Maris},
  booktitle={14th Innovations in Theoretical Computer Science Conference (ITCS 2023)},
  pages={29--1},
  year={2023},
  organization={Schloss Dagstuhl--Leibniz-Zentrum f{\"u}r Informatik},
  url={https://drops.dagstuhl.de/entities/document/10.4230/LIPIcs.ITCS.2023.29}
}

@InProceedings{chakraborty_2005,
author="Chakraborty, Sourav
and Radhakrishnan, Jaikumar
and Raghunathan, Nandakumar",
title="Bounds for Error Reduction with Few Quantum Queries",
booktitle="Approximation, Randomization and Combinatorial Optimization. Algorithms and Techniques",
year="2005",
publisher="Springer Berlin Heidelberg",
address="Berlin, Heidelberg",
pages="245--256",
isbn="978-3-540-31874-3"
}

@inproceedings{H_yer_2003,
   title={{Quantum Search on Bounded-Error Inputs}},
   ISBN={9783540450610},
   ISSN={0302-9743},
   url={http://dx.doi.org/10.1007/3-540-45061-0_25},
   DOI={10.1007/3-540-45061-0_25},
   booktitle={International Colloquium on Automata, Languages, and Programming},
   publisher={Springer Berlin Heidelberg},
   author={Hogyer, Peter and Mosca, Michele and de Wolf, Ronald},
   year={2003},
   pages={291–299} 
}

@inproceedings{suzuki2006robustquantumalgorithmsepsbiased,
  title={{Robust quantum algorithms with $\varepsilon$-biased oracles}},
  author={Suzuki, Tomoya and Yamashita, Shigeru and Nakanishi, Masaki and Watanabe, Katsumasa},
  booktitle={International Computing and Combinatorics Conference},
  pages={116--125},
  year={2006},
  organization={Springer},
  url={https://link.springer.com/chapter/10.1007/11809678_14}
}

@article{iwama_2005,
author = {Iwama, Kazuo and Raymond, Rudy and Yamashita, Shigeru},
year = {2005},
month = {01},
pages = {415-425},
title = {{General Bounds for Quantum Biased Oracles}},
volume = {1},
journal = {IPSJ Digital Courier},
doi = {10.2197/ipsjdc.1.415}
}

@article{Martyn_2021,
   title={{Grand Unification of Quantum Algorithms}},
   volume={2},
   ISSN={2691-3399},
   url={http://dx.doi.org/10.1103/PRXQuantum.2.040203},
   DOI={10.1103/prxquantum.2.040203},
   number={4},
   journal={PRX Quantum},
   publisher={American Physical Society (APS)},
   author={Martyn, John M. and Rossi, Zane M. and Tan, Andrew K. and Chuang, Isaac L.},
   year={2021},
   month=dec }

@article{Huang_2022,
   title={Quantum advantage in learning from experiments},
   volume={376},
   ISSN={1095-9203},
   url={http://dx.doi.org/10.1126/science.abn7293},
   DOI={10.1126/science.abn7293},
   number={6598},
   journal={Science},
   publisher={American Association for the Advancement of Science (AAAS)},
   author={Huang, Hsin-Yuan and Broughton, Michael and Cotler, Jordan and Chen, Sitan and Li, Jerry and Mohseni, Masoud and Neven, Hartmut and Babbush, Ryan and Kueng, Richard and Preskill, John and McClean, Jarrod R.},
   year={2022},
   month=jun, pages={1182–1186} }

@article{Cincio_2018,
   title={Learning the quantum algorithm for state overlap},
   volume={20},
   ISSN={1367-2630},
   url={http://dx.doi.org/10.1088/1367-2630/aae94a},
   DOI={10.1088/1367-2630/aae94a},
   number={11},
   journal={New Journal of Physics},
   publisher={IOP Publishing},
   author={Cincio, Lukasz and Subasi, Yigit and Sornborger, Andrew T and Coles, Patrick J},
   year={2018},
   month=nov, pages={113022} }

@article{Lloyd_2014,
   title={Quantum principal component analysis},
   volume={10},
   ISSN={1745-2481},
   url={http://dx.doi.org/10.1038/nphys3029},
   DOI={10.1038/nphys3029},
   number={9},
   journal={Nature Physics},
   publisher={Springer Science and Business Media LLC},
   author={Lloyd, Seth and Mohseni, Masoud and Rebentrost, Patrick},
   year={2014},
   month=jul, pages={631–633} 
}

@inproceedings{bădescu2017quantumstatecertification,
  title={Quantum state certification},
  author={B{\u{a}}descu, Costin and O'Donnell, Ryan and Wright, John},
  booktitle={Proceedings of the 51st Annual ACM SIGACT Symposium on Theory of Computing},
  pages={503--514},
  year={2019},
  doi={10.1145/3313276.3316344}
}

@article{Harrow:2012gwf,
  title={{Testing product states, quantum Merlin-Arthur games and tensor optimization}},
  author={Harrow, Aram W and Montanaro, Ashley},
  journal={Journal of the ACM (JACM)},
  volume={60},
  number={1},
  pages={1--43},
  year={2013},
  publisher={ACM New York, NY, USA},
  doi={10.1145/2432622.2432625}
}

@article{montanaro2017learningstabilizerstatesbell,
      title={{Learning stabilizer states by Bell sampling}}, 
      author={Ashley Montanaro},
      year={2017},
      eprint={1707.04012},
      journal={arXiv:1707.04012},
      primaryClass={quant-ph},
      url={https://arxiv.org/abs/1707.04012}, 
}

@article{aaronson2006learnability,
  title={The learnability of quantum states},
  author={Aaronson, Scott},
  journal={Proceedings of the Royal Society A: Mathematical, Physical and Engineering Sciences},
  volume={463},
  number={2088},
  pages={3089--3114},
  year={2007},
  publisher={The Royal Society London},
  doi=[10.1098/rspa.2007.0113]
}

@article{HaahHarrowJiWuYu2017SampleOptimal,
  author       = {Jeongwan Haah and Aram W. Harrow and Zhengfeng Ji and Xiaodi Wu and Nengkun Yu},
  title        = {Sample-optimal tomography of quantum states},
  journal      = {arXiv preprint arXiv:1508.01797},
  year         = {2017},
}

@article{ArunachalamDeWolf2018QuantumSampleComplexity,
  author       = {Srinivasan Arunachalam and Ronald de Wolf},
  title        = {{Optimal Quantum Sample Complexity of Learning Algorithms}},
  journal      = {Journal of Machine Learning Research},
  year         = {2018},
  volume       = {19},
  number       = {1},
  pages        = {1--36}
}

@inproceedings{Aaronson2018ShadowTomography,
  author       = {Scott Aaronson},
  title        = {{Shadow Tomography of Quantum States}},
  booktitle    = {Proceedings of the 50th Annual ACM SIGACT Symposium on Theory of Computing (STOC 2018)},
  year         = {2018},
  note         = {arXiv:1711.01053 [quant-ph]},
}

@inproceedings{Aaronson_2019,
  title={Online learning of quantum states},
  author={Aaronson, Scott and Chen, Xinyi and Hazan, Elad and Kale, Satyen and Nayak, Ashwin},
  booktitle={Advances in neural information processing systems},
  volume={31},
  year={2018},
  eprint={1802.09025}
}

@article{Huang_2021,
   title={{Information-Theoretic Bounds on Quantum Advantage in Machine Learning}},
   volume={126},
   ISSN={1079-7114},
   url={http://dx.doi.org/10.1103/PhysRevLett.126.190505},
   DOI={10.1103/physrevlett.126.190505},
   number={19},
   journal={Physical Review Letters},
   publisher={American Physical Society (APS)},
   author={Huang, Hsin-Yuan and Kueng, Richard and Preskill, John},
   year={2021},
   month=may }

@article{Wang_2017,
   title={Experimental quantum Hamiltonian learning},
   volume={13},
   ISSN={1745-2481},
   url={http://dx.doi.org/10.1038/nphys4074},
   DOI={10.1038/nphys4074},
   number={6},
   journal={Nature Physics},
   publisher={Springer Science and Business Media LLC},
   author={Wang, Jianwei and Paesani, Stefano and Santagati, Raffaele and Knauer, Sebastian and Gentile, Antonio A. and Wiebe, Nathan and Petruzzella, Maurangelo and O’Brien, Jeremy L. and Rarity, John G. and Laing, Anthony and Thompson, Mark G.},
   year={2017},
   month=mar, pages={551–555} }

@article{begušić2023fastclassicalsimulationevidence,
      title={Fast classical simulation of evidence for the utility of quantum computing before fault tolerance}, 
      author={Tomislav Begušić and Garnet Kin-Lic Chan},
      year={2023},
      eprint={2306.16372},
      journal={arXiv:2306.16372},
      primaryClass={quant-ph},
      url={https://arxiv.org/abs/2306.16372}, 
}

@article{Schollw_ck_2011,
   title={The density-matrix renormalization group in the age of matrix product states},
   volume={326},
   ISSN={0003-4916},
   url={http://dx.doi.org/10.1016/j.aop.2010.09.012},
   DOI={10.1016/j.aop.2010.09.012},
   number={1},
   journal={Annals of Physics},
   publisher={Elsevier BV},
   author={Schollwöck, Ulrich},
   year={2011},
   month=jan, pages={96–192} }

@article{nöller2025infinitehierarchymulticopyquantum,
      title={An infinite hierarchy of multi-copy quantum learning tasks}, 
      author={Jan Nöller and Viet T. Tran and Mariami Gachechiladze and Richard Kueng},
      year={2025},
      eprint={2510.08070},
      journal={arXiv:2510.08070},
      primaryClass={quant-ph},
      url={https://arxiv.org/abs/2510.08070}, 
}

@article{King_2025,
   title={{Triply Efficient Shadow Tomography}},
   volume={6},
   ISSN={2691-3399},
   url={http://dx.doi.org/10.1103/PRXQuantum.6.010336},
   DOI={10.1103/prxquantum.6.010336},
   number={1},
   journal={PRX Quantum},
   publisher={American Physical Society (APS)},
   author={King, Robbie and Gosset, David and Kothari, Robin and Babbush, Ryan},
   year={2025},
   month=feb }

@article{Brandao_2019,
   title={{Quantum Error Correcting Codes in Eigenstates of Translation-Invariant Spin Chains}},
   volume={123},
   ISSN={1079-7114},
   url={http://dx.doi.org/10.1103/PhysRevLett.123.110502},
   DOI={10.1103/physrevlett.123.110502},
   number={11},
   journal={Physical Review Letters},
   publisher={American Physical Society (APS)},
   author={Brandão, Fernando G.S.L. and Crosson, Elizabeth and Sahinoglu, M. Burak and Bowen, John},
   year={2019},
   month=sep }

@article{Kastoryano_2025,
   title={A little bit of self-correction},
   volume={9},
   ISSN={2521-327X},
   url={http://dx.doi.org/10.22331/q-2025-08-04-1820},
   DOI={10.22331/q-2025-08-04-1820},
   journal={Quantum},
   publisher={Verein zur Forderung des Open Access Publizierens in den Quantenwissenschaften},
   author={Kastoryano, Michael J. and Kristensen, Lasse B. and Chen, Chi-Fang and Gilyén, Andras},
   year={2025},
   month=aug, pages={1820} }

@article{Chesi_2010,
   title={Self-correcting quantum memory in a thermal environment},
   volume={82},
   ISSN={1094-1622},
   url={http://dx.doi.org/10.1103/PhysRevA.82.022305},
   DOI={10.1103/physreva.82.022305},
   number={2},
   journal={Physical Review A},
   publisher={American Physical Society (APS)},
   author={Chesi, Stefano and Röthlisberger, Beat and Loss, Daniel},
   year={2010},
   month=aug }

@article{Kim_2017,
   title={{Entanglement renormalization, quantum error correction, and bulk causality}},
   volume={2017},
   ISSN={1029-8479},
   url={http://dx.doi.org/10.1007/JHEP04(2017)040},
   DOI={10.1007/jhep04(2017)040},
   number={4},
   journal={Journal of High Energy Physics},
   publisher={Springer Science and Business Media LLC},
   author={Kim, Isaac H. and Kastoryano, Michael J.},
   year={2017},
   month=apr }

@article{Lake_2025,
   title={{Exact Quantum Algorithms for Quantum Phase Recognition: Renormalization Group and Error Correction}},
   volume={6},
   ISSN={2691-3399},
   url={http://dx.doi.org/10.1103/PRXQuantum.6.010350},
   DOI={10.1103/prxquantum.6.010350},
   number={1},
   journal={PRX Quantum},
   publisher={American Physical Society (APS)},
   author={Lake, Ethan and Balasubramanian, Shankar and Choi, Soonwon},
   year={2025},
   month=mar }

@article{Haferkamp_2022,
   title={Linear growth of quantum circuit complexity},
   volume={18},
   ISSN={1745-2481},
   url={http://dx.doi.org/10.1038/s41567-022-01539-6},
   DOI={10.1038/s41567-022-01539-6},
   number={5},
   journal={Nature Physics},
   publisher={Springer Science and Business Media LLC},
   author={Haferkamp, Jonas and Faist, Philippe and Kothakonda, Naga B. T. and Eisert, Jens and Yunger Halpern, Nicole},
   year={2022},
   month=mar, pages={528–532} }

@inproceedings{fenner2003boundspowerconstantdepthquantum,
  title={Bounds on the power of constant-depth quantum circuits},
  author={Fenner, Stephen and Green, Frederic and Homer, Steven and Zhang, Yong},
  booktitle={International Symposium on Fundamentals of Computation Theory},
  pages={44--55},
  year={2005},
  organization={Springer},
  url={https://link.springer.com/chapter/10.1007/11537311_5}
}

@article{fanout,
 author = {Hoyer, Peter and Spalek, Robert},
 title = {{Quantum Fan-out is Powerful}},
 year = {2005},
 pages = {81--103},
 doi = {10.4086/toc.2005.v001a005},
 publisher = {Theory of Computing},
 journal = {Theory of Computing},
 volume = {1},
 number = {5},
 URL = {https://theoryofcomputing.org/articles/v001a005},
}

@inproceedings{bubeck2020entanglementnecessaryoptimalquantum,
  title={Entanglement is necessary for optimal quantum property testing},
  author={Bubeck, Sebastien and Chen, Sitan and Li, Jerry},
  booktitle={2020 IEEE 61st Annual Symposium on Foundations of Computer Science (FOCS)},
  pages={692--703},
  year={2020},
  organization={IEEE},
  doi={10.1109/FOCS46700.2020.00070}
}

@article{Furuya_2022,
   title={{Real-space RG, error correction and Petz map}},
   volume={2022},
   ISSN={1029-8479},
   url={http://dx.doi.org/10.1007/JHEP01(2022)170},
   DOI={10.1007/jhep01(2022)170},
   number={1},
   journal={Journal of High Energy Physics},
   publisher={Springer Science and Business Media LLC},
   author={Furuya, Keiichiro and Lashkari, Nima and Ouseph, Shoy},
   year={2022},
   month=jan }

@article{goldman2024lindbladianexactrenormalizationdensity,
      title={{A Lindbladian for exact renormalization of density operators in QFT}}, 
      author={Samuel Goldman and Nima Lashkari and Robert G. Leigh},
      year={2024},
      eprint={2410.16582},
      artical={arXiv:2410.16582},
      primaryClass={hep-th},
      url={https://arxiv.org/abs/2410.16582}, 
}

@article{Furuya_RGAQECC,
   title={Renormalization group and approximate error correction},
   volume={106},
   ISSN={2470-0029},
   url={http://dx.doi.org/10.1103/PhysRevD.106.105007},
   DOI={10.1103/physrevd.106.105007},
   number={10},
   journal={Physical Review D},
   publisher={American Physical Society (APS)},
   author={Furuya, Keiichiro and Lashkari, Nima and Moosa, Mudassir},
   year={2022},
   month=nov }

@article{Maldacena_1999,
   title={The Large-N Limit of Superconformal Field Theories and Supergravity},
   volume={38},
   ISSN={1572-9575},
   url={http://dx.doi.org/10.1023/A:1026654312961},
   DOI={10.1023/a:1026654312961},
   number={4},
   journal={International Journal of Theoretical Physics},
   publisher={Springer Science and Business Media LLC},
   author={Maldacena, Juan},
   year={1999},
   month=apr, pages={1113–1133} }

@article{Jahn_2021,
   title={Holographic tensor network models and quantum error correction: a topical review},
   volume={6},
   ISSN={2058-9565},
   url={http://dx.doi.org/10.1088/2058-9565/ac0293},
   DOI={10.1088/2058-9565/ac0293},
   number={3},
   journal={Quantum Science and Technology},
   publisher={IOP Publishing},
   author={Jahn, Alexander and Eisert, Jens},
   year={2021},
   month=jun, pages={033002} 
}

@article{Farrelly_2021,
   title={{Tensor-Network Codes}},
   volume={127},
   ISSN={1079-7114},
   url={http://dx.doi.org/10.1103/PhysRevLett.127.040507},
   DOI={10.1103/physrevlett.127.040507},
   number={4},
   journal={Physical Review Letters},
   publisher={American Physical Society (APS)},
   author={Farrelly, Terry and Harris, Robert J. and McMahon, Nathan A. and Stace, Thomas M.},
   year={2021},
   month=july 
}

@article{Chen_2024_pauli,
   title={{Tight Bounds on Pauli Channel Learning without Entanglement}},
   volume={132},
   ISSN={1079-7114},
   url={http://dx.doi.org/10.1103/PhysRevLett.132.180805},
   DOI={10.1103/physrevlett.132.180805},
   number={18},
   journal={Physical Review Letters},
   publisher={American Physical Society (APS)},
   author={Chen, Senrui and Oh, Changhun and Zhou, Sisi and Huang, Hsin-Yuan and Jiang, Liang},
   year={2024},
   month=may }

@misc{seif2024,
      title={{Entanglement-enhanced learning of quantum processes at scale}}, 
      author={Alireza Seif and Senrui Chen and Swarnadeep Majumder and Haoran Liao and Derek S. Wang and Moein Malekakhlagh and Ali Javadi-Abhari and Liang Jiang and Zlatko K. Minev},
      year={2024},
      eprint={2408.03376},
      archivePrefix={arXiv},
      primaryClass={quant-ph},
      url={https://arxiv.org/abs/2408.03376}, 
}

@article{Liu_2025,
   title={Quantum learning advantage on a scalable photonic platform},
   volume={389},
   ISSN={1095-9203},
   url={http://dx.doi.org/10.1126/science.adv2560},
   DOI={10.1126/science.adv2560},
   number={6767},
   journal={Science},
   publisher={American Association for the Advancement of Science (AAAS)},
   author={Liu, Zheng-Hao and Brunel, Romain and Ostergaard, Emil E. B. and Cordero, Oscar and Chen, Senrui and Wong, Yat and Nielsen, Jens A. H. and Bregnsbo, Axel B. and Zhou, Sisi and Huang, Hsin-Yuan and Oh, Changhun and Jiang, Liang and Preskill, John and Neergaard-Nielsen, Jonas S. and Andersen, Ulrik L.},
   year={2025},
   month=sep, pages={1332–1335} }

@misc{sahu2026achievingheisenberglimitusing,
      title={Achieving the Heisenberg limit using fault-tolerant quantum error correction}, 
      author={Himanshu Sahu and Qian Xu and Sisi Zhou},
      year={2026},
      eprint={2601.05457},
      archivePrefix={arXiv},
      primaryClass={quant-ph},
      url={https://arxiv.org/abs/2601.05457}, 
}

\end{document}